\def\Vhrulefill{\leavevmode\leaders\hrule height 0.5ex depth \dimexpr0.4pt-0.5ex\hfill\kern0pt}
\newcommand{\Erdos}{Erd\H{o}s}
\newcommand{\Renyi}{R\'{e}nyi}
\newtheorem{theorem}{Theorem}[section]
\newtheorem{fact}{Fact}[section]
\newtheorem{conjecture}[theorem]{Conjecture}
\newtheorem{definition}[theorem]{Definition}
\newtheorem{lemma}[theorem]{Lemma}
\newtheorem{remark}[theorem]{Remark}
\newtheorem{proposition}[theorem]{Proposition}
\newcommand{\EE}{\mathbb{E}}
\newcommand{\NN}{\mathbb{N}}
\newcommand{\PP}{\mathbb{P}}
\newcommand{\QQ}{\mathbb{Q}}
\newcommand{\RR}{\mathbb{R}}
\DeclareSymbolFont{bbold}{U}{bbold}{m}{n}
\DeclareSymbolFontAlphabet{\mathbbold}{bbold}
\newcommand{\sG}{\mathcal{G}}
\newcommand{\sL}{\mathcal{L}}
\newcommand{\sS}{\mathcal{S}}
\newcommand{\One}{\mathbbold{1}}
\newcommand{\bi}{\mathrm{bi}}
\newcommand{\what}{\widehat}
\renewcommand{\epsilon}{\varepsilon}
\renewcommand{\Tilde}{\widetilde}
\newcommand{\LoSt}{\mathrm{LoSt}}
\newcommand{\adv}{\mathrm{adv}}
\newcommand{\rand}{\mathrm{rand}}
\newcommand{\Ex}{\mathop{\mathbb{E}}}  % puts symbols below
\newcommand{\MC}{\mathrm{MC}}
\newcommand{\dom}{\mathrm{dom}}
\newif\ifnotes
\newcommand{\eps}{\varepsilon}
\title{Computational hardness of detecting graph lifts and certifying lift-monotone properties of random regular graphs}
\author{Dmitriy Kunisky\thanks{Email: \texttt{dmitriy.kunisky@yale.edu}. Partially supported by ONR Award N00014-20-1-2335 and a Simons Investigator Award to Daniel Spielman.}\,}
\author{Xifan Yu\thanks{Email: \texttt{xifan.yu@yale.edu}. Partially supported by a Simons Investigator Award to Daniel Spielman.}\,}
\affil{Department of Computer Science, Yale University}
\date{April 25, 2024}
\begin{document}

\maketitle

\begin{abstract}
    We introduce a new conjecture on the computational hardness of detecting random lifts of graphs: we claim that there is no polynomial-time algorithm that can distinguish between a large random $d$-regular graph and a large random lift of a Ramanujan $d$-regular base graph (provided that the lift is corrupted by a small amount of extra noise), and likewise for bipartite random graphs and lifts of bipartite Ramanujan graphs.
    We give evidence for this conjecture by proving lower bounds against the local statistics hierarchy of hypothesis testing semidefinite programs.
    We then explore the consequences of this conjecture for the hardness of certifying bounds on numerous functions of random regular graphs, expanding on a direction initiated by Bandeira, Banks, Kunisky, Moore, and Wein (2021).
    Conditional on this conjecture, we show that no polynomial-time algorithm can certify tight bounds on the maximum cut of random 3- or 4-regular graphs, the maximum independent set of random 3- or 4-regular graphs, or the chromatic number of random 7-regular graphs.
    We show similar gaps asymptotically for large degree for the maximum independent set and for any degree for the minimum dominating set, finding that naive spectral and combinatorial bounds are optimal among all polynomial-time certificates.
    Likewise, for small-set vertex and edge expansion in the limit of very small sets, we show that the spectral bounds of Kahale (1995) are optimal among all polynomial-time certificates.
\end{abstract}

\thispagestyle{empty}

\clearpage

\tableofcontents

\thispagestyle{empty}

\clearpage

\section{Introduction}
\pagenumbering{arabic}

We study the problem of distinguishing a uniformly random $d$-regular graph, whose law we denote $\sG(n, d)$ for a graph on $n$ vertices ($dn$ must be even), from a graph formed as a \emph{random lift} of a base graph.
We also consider a bipartite variant of the same problem for the uniformly random $d$-regular bipartite graph, whose law we denote $\sG((\frac{n}{2}, \frac{n}{2}), d)$ ($n$ must be even).
Here and throughout, $d \geq 3$ is fixed and we consider the limit $n \to \infty$, with the limit $d \to \infty$ sometimes taken afterwards.

\begin{definition}[Random lift \cite{ALMR-2001-RandomLifts}]
    \label{def:random-lift}
    Suppose $H$ is a $d$-regular multigraph on $k$ vertices.
    We write $\sL_m(H)$ for the law of the random multigraph $G$ on $hm$ vertices formed as follows.
    View the $km$ vertices as divided into $k$ \emph{fibers} in correspondence with the vertices of $H$.
    For each non-loop edge of $H$, insert a uniformly random perfect matching between the corresponding fibers in $G$.
    For each loop of $H$, insert a uniformly random perfect matching of the vertices in the corresponding fiber in $G$ (if $H$ has loops, we assume $m$ is even).
\end{definition}
\noindent
This is a random version of a general notion of \emph{graph lift} which we define in Section~\ref{sec:prelim:lifts}.
This is an adequate working definition for our purposes; to fully agree with other literature on lifts we must be somewhat more careful, especially in the treatment of loops in $H$.
We also describe the more standard abstract framework for lifts in Section~\ref{sec:prelim:lifts}.

The main question that will motivate us is: for a fixed $d$-regular $H$, when is it possible to tell, with high probability as $m \to \infty$, whether a graph has been drawn from $\sG(km, d)$ or from $\sL_m(H)$?
Likewise, since when $H$ is bipartite then $G \sim \sL_m(H)$ is also bipartite, when in this case is it possible to tell whether a graph has been drawn from $\sG((\frac{km}{2}, \frac{km}{2}), d)$ or from $\sL_m(H)$?
Actually, as stated, both problems are easy: all eigenvalues of $H$ are also eigenvalues of any lift of $H$, while a large random general or bipartite $d$-regular graph does not have any particular number (other than the ``trivial'' eigenvalues $d$ and, in the bipartite case, $-d$) as an eigenvalue with high probability.
To fool a simple algorithm that examines the spectrum for these unusual regularities, we consider also adding a small amount of further noise to either graph distribution, say by changing a small fraction of edges while maintaining $d$-regularity (see Section~\ref{sec:prelim:noise-conjectures}).

It is unlikely that brittle algebraic algorithms of the above kind will work once we adopt such a noise model.
Still, there are natural algorithmic ideas that compute more robust spectral quantities.
In particular, it is natural to consider whether not particular eigenvalues but the \emph{extreme eigenvalues} of the graph will detect random lifts.

Let $\lambda_1(G) \geq \cdots \geq \lambda_n(G)$ be the ordered eigenvalues of the adjacency matrix of $G$.
Aside from the trivial eigenvalue(s), the typical width of the remaining spectrum of $G$ is known:
\begin{proposition}[Spectral gap of random regular graphs]
    \label{prop:eigenvalues-random-regular}
    The following hold as $n \to \infty$:
    \begin{itemize}
        \item If $G \sim \sG(n, d)$, then $\lambda_1(G) = d$, and with high probability $\lambda_2(G) = 2\sqrt{d - 1} + o(1)$ and $\lambda_n(G) = -2\sqrt{d - 1} + o(1)$ \cite{Nilli-1991-AlonBoppanaBound,Friedman-2003-AlonConjecture,Bordenave-2015-Friedman}.
        \item If $G \sim \sG((\frac{n}{2}, \frac{n}{2}), d)$, then $\lambda_1(G) = d$, $\lambda_n(G) = -d$, and with high probability $\lambda_2(G) = 2\sqrt{d - 1} + o(1)$ and $\lambda_{n-1}(G) = -2\sqrt{d - 1} + o(1)$ \cite{LS-1996-SpectraRegularGraphsBipartite,FL-1996-SpectraHypergraphs,BDH-2022-SpectralGapRandomBipartiteBiregular}.
    \end{itemize}
\end{proposition}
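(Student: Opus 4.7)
The plan is to handle the lower bound $\lambda_2(G) \geq 2\sqrt{d-1} - o(1)$ (and the bipartite analogs for $\lambda_2$ and $\lambda_{n-1}$) via an Alon--Boppana style deterministic argument, and then the upper bound $\lambda_2(G) \leq 2\sqrt{d-1} + o(1)$ via a high-moment method.

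For the lower bound, the strategy is purely combinatorial and works for \emph{any} $d$-regular graph with a sufficiently large diameter, which holds with high probability under $\sG(n,d)$. Pick a vertex $v$ and let $S_r$ denote the sphere of radius $r$ around $v$. For $r$ less than roughly half the girth at $v$, the ball around $v$ is a truncation of the infinite $d$-regular tree, so $|S_r| = d(d-1)^{r-1}$. Build a test vector $\bx$ supported on $S_0 \cup \cdots \cup S_R$ whose value on $S_r$ depends only on $r$, chosen to mimic the spherical function on the tree associated to spectral parameter $2\sqrt{d-1}$. A direct computation shows $\langle \bx, \bA \bx \rangle / \langle \bx, \bx\rangle \geq 2\sqrt{d-1}(1 - o_R(1))$, and by choosing two such vectors around far-apart vertices one obtains a two-dimensional subspace on which the Rayleigh quotient is large, giving $\lambda_2 \geq 2\sqrt{d-1} - o(1)$ after projecting off the all-ones eigenvector. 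The corresponding lower bound for $-\lambda_n$ in the non-bipartite case uses alternating signs on spheres.

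For the upper bound in the non-bipartite case, I would follow Bordenave's approach via the non-backtracking matrix $\bB$ of $G$. The Ihara--Bass identity relates nontrivial eigenvalues of $\bA$ to those of $\bB$, so it suffices to bound the spectral radius of $\bB$ restricted to the non-trivial invariant subspace by $\sqrt{d-1} + o(1)$. This is done by computing $\EE[\Tr((\bB^*)^k \bB^k)]$ for $k = \Theta(\log n)$ via a moment expansion over closed non-backtracking walks in $\sG(n,d)$, classifying walks by their excess (number of independent cycles), and showing that tangle-free walks, which dominate, contribute a matrix norm of $(\sqrt{d-1})^k \cdot \mathrm{poly}(k)$. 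Markov's inequality then yields the sharp threshold with high probability. For the bipartite case, one either applies the same moment method to the non-backtracking matrix on the configuration model for $\sG((\frac{n}{2},\frac{n}{2}),d)$, as in \cite{BDH-2022-SpectralGapRandomBipartiteBiregular}, or one uses the fact that any bipartite $d$-regular graph's adjacency spectrum is symmetric about $0$, so bounding $\lambda_2$ automatically controls $\lambda_{n-1}$.

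The main obstacle is obtaining the sharp constant $2\sqrt{d-1}$ rather than a merely $O(\sqrt{d})$ bound. The moment method loses constants unless one restricts attention to tangle-free or non-backtracking walks and carefully extracts the leading contribution; this is the technical heart of Friedman's original proof and of Bordenave's simplified treatment, and is where essentially all the work lies. The trivial eigenvalue statements ($\lambda_1 = d$ always, and $\lambda_n = -d$ in the bipartite case) are immediate from $d$-regularity and bipartiteness.
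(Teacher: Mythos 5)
This proposition is stated in the paper as a known background result and is not proved there; it is cited directly to Nilli, Friedman, Bordenave for the non-bipartite case and to Li--Sol\'e, Feng--Li, Brito--Dumitriu--Harris for the bipartite case. So there is no in-paper proof to compare against. Your sketch is a correct high-level account of what those references do: Alon--Boppana for the lower bounds on $\lambda_2$ and $|\lambda_n|$, and the moment method on the non-backtracking matrix (together with the Ihara--Bass correspondence between its spectrum and that of $A$) for the matching upper bound, with the bipartite case handled by the spectral symmetry and the adaptation of Bordenave's argument to the bipartite configuration model.

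One correction is needed in the lower-bound part. You assert that the Alon--Boppana argument ``works for any $d$-regular graph with a sufficiently large diameter,'' including the alternating-sign construction for $-\lambda_n$. That is correct for $\lambda_2$ (Nilli's bound is a function of the diameter alone), but it is false for $\lambda_n$: the alternating-sign test vector produces a Rayleigh quotient near $-2\sqrt{d-1}$ only when the ball of radius $R$ around the base vertex is bipartite, which in practice requires it to be a tree. A $d$-regular graph can have arbitrarily large diameter while every vertex lies on a short odd cycle, and then no such test vector exists and no deterministic bound of this kind on $\lambda_n$ holds --- e.g.\ the line graph of a large cubic graph is $4$-regular with arbitrarily large diameter, yet satisfies $\lambda_n \geq -2 > -2\sqrt{3}$. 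What rescues the argument for $G \sim \sG(n,d)$ is that w.h.p.\ $G$ has only a few short cycles and therefore contains a vertex whose $\Theta(\log n)$-neighborhood is a tree (a ``good'' vertex in the sense of Definition~\ref{def:bad} with $L,C$ taken to grow slowly with $n$); this local tree structure, not large diameter, is the hypothesis the construction actually uses, and is what you should invoke.
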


Since under reasonable noise models the \emph{spectral radius}---$\max\{|\lambda_2(G)|, |\lambda_n(G)|\}$ for non-bipartite graphs and $\max\{|\lambda_2(G)|, |\lambda_{n - 1}(G)|\}$ for bipartite ones---is a robust quantity not changed much by a small amount of noise, it is natural to ask a restriction of our question: for what $H$ does $G \sim \sL_m(H)$ have larger spectral radius than $G \sim \sG(n, d)$ or $\sG \sim \sG((\frac{n}{2}, \frac{n}{2}), d)$?

As mentioned above, with probability 1, every eigenvalue of $H$ is also an eigenvalue of a random lift $G \sim \sL_m(H)$.
Therefore, in the non-bipartite case, if $\max\{|\lambda_2(H)|, |\lambda_n(H)|\} > 2\sqrt{d - 1} + \eps$, then $\max\{|\lambda_2(G)|, |\lambda_n(G)|\} > 2\sqrt{d - 1} + \eps$ for $G \sim \sL_m(H)$ with probability 1 (and with high probability is greater than, say, $2\sqrt{d - 1} + \frac{\eps}{2}$ even if we add a sufficiently small amount of noise).
The analogous statement holds for bipartite graphs as well.
Our spectral radius test statistic thus detects graph lifts whenever $H$ fails to be \emph{Ramanujan}.

\begin{definition}[Ramanujan graph]
    Let $H$ be a $d$-regular graph.
    \begin{itemize}
    \item We call $H$ \emph{Ramanujan} if $\max\{|\lambda_2(G)|, |\lambda_n(G)|\} \leq 2\sqrt{d - 1}$.
    \item We call $H$ \emph{bipartite Ramanujan} if $H$ is bipartite and $\max\{|\lambda_2(G)|, |\lambda_{n-1}(G)|\} \leq 2\sqrt{d - 1}$.
    \end{itemize}
\end{definition}

The remarkable work of Bordenave and Collins on the spectra of graph lifts shows that this simple condition is in fact also sufficient.
\begin{proposition}[Spectral gap of random lifts \cite{BC-2019-EigenvaluesRandomLifts}]
    \label{prop:eigenvalues-lifts}
    Suppose $H$ is $d$-regular and $G \sim \sL_m(H)$.
    The following hold with high probability as $m \to \infty$.
    \begin{itemize}
    \item If $H$ is Ramanujan, then $\max\{|\lambda_2(G)|, |\lambda_n(G)|\} \leq 2\sqrt{d - 1} + o(1)$.
    \item If $H$ is bipartite Ramanujan, then $\max\{|\lambda_2(G)|, |\lambda_{n-1}(G)|\} \leq 2\sqrt{d - 1} + o(1)$.
    \end{itemize}
\end{proposition}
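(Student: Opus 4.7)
The plan is to use the non-backtracking trace method pioneered by Bordenave for $\sG(n,d)$ and adapted by Bordenave--Collins to random lifts. The key object will be the non-backtracking matrix $B_G$ on the $2dkm$ directed edges of $G$. Two features make this the right tool. First, $B_G$ admits a clean stochastic representation: for each edge $e$ of $H$, the block of $B_G$ indexed by $e$ is a random permutation matrix $P_e$ describing the matching chosen between (or within) the corresponding fibers, and these permutations are independent and uniform. Second, the Ihara--Bass relation $\lambda = \mu + (d-1)/\mu$ between eigenvalues of $B_G$ and $A_G$ converts the sharp non-backtracking bound $|\mu| \leq \sqrt{d-1} + o(1)$ into the desired adjacency bound $|\lambda| \leq 2\sqrt{d-1} + o(1)$.

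The first reduction is to observe that the spectrum of $B_H$ embeds into that of $B_G$ via the ``fiber-sum'' subspace $V_{\mathrm{triv}}$ of vectors constant on each fiber-edge; on $V_{\mathrm{triv}}$, $B_G$ acts as $B_H$, and by the (bipartite) Ramanujan hypothesis these carried-over eigenvalues are already within the target bound after applying Ihara--Bass. The problem therefore reduces to bounding $\|B_G|_{V_{\mathrm{triv}}^{\perp}}\|$, the ``new'' part of the spectrum introduced by the lift. Writing $B_G = \overline B + \tilde B$ where $\overline B$ is the conditional expectation of $B_G$ given the edge-labels of $H$ (so $\overline B$ reproduces the action of $B_H$ on $V_{\mathrm{triv}}$), it suffices to show that $\|\tilde B\| \leq \sqrt{d-1} + o(1)$ with high probability as $m \to \infty$.

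The main step will be a moment bound $\EE\,\Tr\bigl((\tilde B \tilde B^*)^k\bigr) \leq (d-1)^k \cdot (km)^{o(1)}$ for $k$ of order $\log(km)$, which yields the operator-norm estimate by Markov's inequality and a union bound. One expands the trace as a sum over closed walks of length $2k$ on $H$ decorated by tuples of entries of the $P_e$'s; by independence and cumulant identities for uniform permutations, only walks that traverse each edge of $H$ in carefully matched pairs contribute to leading order, and the enumeration of such walks produces the Kesten--McKay constant $\sqrt{d-1}$. To obtain this constant \emph{exactly} rather than something larger, the walks must be restricted to be \emph{tangle-free} (any sub-walk of length $\leq k$ contains at most one cycle), with the tangled remainder bounded by a separate path-decoupling argument that removes a single cycle and reduces to the tangle-free case.

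The principal obstacle is precisely this tangled/tangle-free decomposition. The enumeration of tangle-free walks has to be carried out uniformly over the base graph $H$ (not just on a bouquet, as in Bordenave's original proof of Friedman's theorem), which requires carefully tracking how cycles of $H$ interact with the permutation combinatorics; and the tangled remainder requires a delicate switching/resampling argument on the permutations to avoid the extra factor that a naive Schwarz bound would incur and that would spoil the constant. The bipartite case poses no new difficulty: $B_G$ and the Ihara--Bass identity respect the bipartition, and the bipartite Ramanujan hypothesis on $H$ controls the carried-over trivial eigenvalues in both halves of the spectrum simultaneously, so the same trace estimate on $V_{\mathrm{triv}}^{\perp}$ yields the stated bound on $\lambda_2$ and $\lambda_{n-1}$.
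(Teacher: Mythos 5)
This Proposition is not proved in the paper at all: it is quoted directly from Bordenave and Collins \cite{BC-2019-EigenvaluesRandomLifts}, so there is no ``paper's own proof'' to compare against. Your sketch is, as a high-level roadmap, a fair description of the non-backtracking machinery underlying their result, and you are right about the two structural pillars (the $B_H$-invariant fiber-constant subspace $V_{\mathrm{triv}}$ carrying the old spectrum, and the Ihara--Bass substitution $\lambda = \mu + (d-1)/\mu$ converting a $\sqrt{d-1}$ bound on new non-backtracking eigenvalues into a $2\sqrt{d-1}$ bound on new adjacency eigenvalues). You are also honest that the trace-method enumeration and the tangle-free/tangled decomposition are where essentially all the work lies.

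Two points worth flagging. First, what you describe is closer to a direct adaptation of Bordenave's proof of Friedman's theorem to lifts; the cited Bordenave--Collins paper actually proceeds through a more abstract route, proving strong asymptotic freeness of independent Haar-random permutation matrices (with arbitrary deterministic matrix coefficients) via a linearization trick and an operator-valued non-backtracking operator, and then reading off the spectral gap for $\sL_m(H)$ as a corollary. Either route is legitimate; the BC formulation buys uniformity over all base graphs $H$ at once and a cleaner conceptual statement, at the cost of heavier bookkeeping. Second, your definition of ``tangle-free'' (``any sub-walk of length $\leq k$ contains at most one cycle'') is not the standard one: the usual formulation is that every $\ell$-ball around a vertex (or half-edge) contains at most one cycle, with $\ell = \Theta(\log_{d-1} n)$, and the trace is taken of a truncated $\ell$-th power $B^{(\ell)}$ that only sums over tangle-free paths. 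That distinction matters, because it is the ball condition that makes the a priori bound on the number of tangled paths go through, and it is a property of the random graph that must itself be proven to hold with high probability. Your $\Tr((\tilde B \tilde B^*)^k)$ bound is also missing the expected linear-in-$n$ prefactor, although since $k = \Theta(\log n)$ this disappears after taking $2k$-th roots, so it does not affect the final constant. None of this constitutes a wrong approach, but the sketch compresses a substantial technical argument into a paragraph, and the hard steps you identify are indeed the hard steps of the original proof.

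Since the paper treats this as a black-box citation, reproducing the BC argument is neither required nor attempted in the text; if your goal is to match what the paper does, the correct move is simply to cite \cite{BC-2019-EigenvaluesRandomLifts} and note that for $d$-regular $H$ the universal cover is the $d$-regular tree with spectral radius $2\sqrt{d-1}$, so the ``new'' eigenvalues concentrate in $[-2\sqrt{d-1}-o(1), 2\sqrt{d-1}+o(1)]$, and the ``old'' eigenvalues are controlled by the (bipartite) Ramanujan hypothesis on $H$.
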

\noindent
Their results say much more---after removing the trivial eigenvalues, the entire spectrum of $G \sim \sL_m(H)$ looks like the spectrum of $H$ superimposed on the typical spectrum of a graph drawn from either $\sG(km, d)$ or $\sG((\frac{km}{2}, \frac{km}{2}), d)$, which in either case has the continuous Kesten-McKay distribution \cite{McKay-1981-EigenvalueRegularGraph} supported on $[-2\sqrt{d - 1}, 2\sqrt{d - 1}]$.
Thus, say in the non-bipartite case, when $H$ is not Ramanujan then, compared to the Kesten-McKay spectrum of $G \sim \sG(km, d)$, the spectrum of $G \sim \sL_m(H)$ will have outlier eigenvalues.
When $H$ is Ramanujan, the eigenvalues of $H$ (a finite number of eigenvalues while $km \to \infty$) lie ``hidden'' within the bulk of the Kesten-McKay distribution, so the spectra of $G \sim \sL_m(H)$ and $G \sim \sG(n, d)$ appear indistinguishable (informally speaking).

Our first goal in this paper is to give new and stronger evidence than the above discussion for the following conjecture, which may be viewed as saying that testing for graph lifts using the spectral radius is optimal and that the indistinguishability alluded to above in fact holds.

\begin{definition}[Strong detection]
    Suppose $\PP_n$ and $\QQ_n$ are sequences of probability measures over $\{0, 1\}^{N}$ for some $N = N(n) \in \NN$.
    We say that a function $f: \{0, 1\}^N \to \{\texttt{p}, \texttt{q}\}$ achieves \emph{strong detection} between these sequences if
    \begin{equation}
        \lim_{n \to \infty} \PP_n[f(G) = \texttt{q}] = \lim_{n \to \infty} \QQ_n[f(G) = \texttt{p}] = 0,
    \end{equation}
    that is, if the Type~I and II errors of $f$, viewed as a hypothesis testing procedure, both tend to zero.
\end{definition}

\begin{conjecture}[Hardness of detecting lifts; informal]
    \label{conj:hardness}
    Suppose $H$ is a $d$-regular graph on $k$ vertices.
    The following hold.
    \begin{itemize}
    \item If $H$ is Ramanujan, then there is no polynomial-time algorithm that achieves strong detection between the distributions $\sL_{m}(H)$ with a small amount of extra noise applied and $\sG(km, d)$ as $m \to \infty$.
    \item If $H$ is bipartite Ramanujan, then there is no polynomial-time algorithm that achieves strong detection between the distributions $\sL_{m}(H)$ with a small amount of extra noise applied and $\sG((\frac{km}{2}, \frac{km}{2}), d)$ as $m \to \infty$.
    \end{itemize}
\end{conjecture}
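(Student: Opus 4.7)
The plan is to support Conjecture~\ref{conj:hardness} by proving an unconditional lower bound against a powerful restricted class of polynomial-time tests, namely the \emph{local statistics} (LoSt) semidefinite programming hierarchy, following the template of Bandeira--Banks--Kunisky--Moore--Wein (BBKMW). At degree $D$ the LoSt SDP optimizes over pseudo-expectations consistent with the null law that match prescribed low-order subgraph-count statistics of the planted law; a matching lower bound amounts to showing that no degree-$\leq D$ polynomial in the adjacency entries strongly separates $\sG(km, d)$ (or its bipartite counterpart) from the noised $\sL_m(H)$, or equivalently, that the joint distribution of low-complexity graph observables agrees in both models up to $o(1)$.

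The central technical step is a joint-moment comparison of short closed-walk statistics $\Tr(\bA^{\ell})$ (equivalently, short cycle counts) under the two models. Under $\sG(km, d)$, short cycle counts are classically known to converge jointly to independent Poisson variables with explicit parameters $(d-1)^{\ell}/(2\ell)$ (and the analogous bipartite statement). For $G \sim \sL_m(H)$, a closed walk of length $\ell$ in $G$ projects to one in $H$, and after tracking coincidences of fiber edges one finds that the leading-order count is controlled by $\Tr(\bB_H^\ell)$, where $\bB_H$ is the non-backtracking operator of $H$. The (bipartite) Ramanujan hypothesis is, by the Ihara--Bass correspondence, precisely the statement that the non-trivial spectrum of $\bB_H$ is contained in the disc of radius $\sqrt{d-1}$, matching the bulk radius in the null model via Proposition~\ref{prop:eigenvalues-lifts}; the finitely many trivial eigenvalues of $\bB_H$ contribute only $O(1)$ extra closed walks per length, which are absorbed into the Poisson limit after centering. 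Parallel computations control cross-moments of $\Tr(\bA^{\ell_1}),\ldots,\Tr(\bA^{\ell_r})$, supplying the joint moment match needed to invoke LoSt duality.

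The noise plays a dual role. Algorithmically, it rules out brittle distinguishers that exploit exact algebraic coincidences, such as the fact that every integer eigenvalue of $H$ persists in the spectrum of an unnoised lift; analytically, a random $\eps$-switching (or analogous regularity-preserving) perturbation alters each short cycle count by $o(1)$ in $L^2$, so the joint moment match survives the noise, while the brittle tests are destroyed. Given moment matching, conversion to a LoSt lower bound follows the standard SDP duality worked out in the BBKMW framework, yielding the desired conditional hardness of strong detection for any polynomial-time algorithm that factors through low-degree graph statistics.

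The principal obstacle is making the non-backtracking moment computation \emph{quantitative enough} to reach degrees $D$ growing with $m$, as needed for credible evidence of polynomial-time hardness. Cross-moments of closed walks of length up to $c\log m$ require, for every combinatorial type of walk on a lift, bounding the probability of various edge and vertex coincidences between fibers; these generate error terms whose decay depends on a non-asymptotic form of the Bordenave--Collins estimates. Unlike the null model, where the configuration model provides independent matchings, moment computations on $\sL_m(H)$ must work directly with the uniform distribution on perfect matchings between fibers, which entangles walks traversing the same fiber edge multiple times --- and this is precisely the step in which the Ramanujan hypothesis on $H$ must be leveraged in its sharpest form.
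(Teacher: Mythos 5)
You are reviewing evidence for a conjecture, and like the paper, you aim to prove lower bounds against the local statistics hierarchy. However, your route diverges significantly from what the paper actually carries out, and there are genuine gaps in the logical structure.

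The central gap is the claimed equivalence: you assert that a LoSt degree-$D$ lower bound ``amounts to showing that no degree-$\leq D$ polynomial in the adjacency entries strongly separates'' the two laws, and that ``conversion to a LoSt lower bound follows the standard SDP duality worked out in the BBKMW framework.'' Neither claim is correct. The LoSt SDP is a feasibility problem: given $G_0$, find a pseudoexpectation $\widetilde{\mathbb{E}} : \RR[x]_{\leq D_x} \to \RR$ over the \emph{label variables} $x$ that is PSD, satisfies the hard Boolean/balance constraints, and matches the planted moments of joint polynomials $p(x, G)$. Proving the lower bound means \emph{exhibiting such a feasible pseudoexpectation on $G_0 \sim \mathbb{Q}$ with high probability}; this is not a duality statement, nor is it implied by matching the marginal distributions of $\Tr(\bA^\ell)$ between the two models. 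Your proposal never explains how to produce the pseudomoment matrix on the $x$ variables, which is the crux of the argument. In the paper, one constructs $\widetilde{P}^{(\lambda)}$ (the feasible Path Statistics matrix, essentially $g(A_G) + J/k$ for a carefully chosen polynomial $g$ built from transformed Chebyshev/orthogonal polynomials for the Kesten--McKay measure, repaired via a Gram-vector swapping argument), and then sets $Q = \frac{1}{k-1}\sum_{i: |\lambda_i| < d} \widetilde{P}^{(\lambda_i)} \otimes v_i v_i^\top + \frac{1}{k(k-1)} J_n \otimes (J_k - I_k)$ to get the pseudomoments. This explicit construction, its PSD verification by Schur complements, and the self-avoiding-walk-to-non-backtracking correction are what the lower bound consists of --- none of it is implied by moment matching of cycle counts.

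A secondary issue is scope: you aim at degrees $D$ growing with $m$, flagging this as the principal obstacle. The paper does not attempt this; it proves the result only for constant $D_G = D$ and, more importantly, only for $D_x = 2$ in the label degree (the open problems section explicitly calls higher $D_x$ an outstanding challenge). Your proposal also reaches for non-asymptotic Bordenave--Collins estimates and Ihara--Bass/non-backtracking-operator machinery on $\bB_H$; the paper's argument stays entirely at the level of non-backtracking polynomials $q_s(A_G)$ (scalar polynomials in the adjacency matrix, orthogonal under the Kesten--McKay measure) and uses Bordenave--Collins only qualitatively via Proposition~\ref{prop:eigenvalues-lifts}. Your closed-walk moment-matching program is closer to the low-degree likelihood ratio framework, which the paper explicitly names as a \emph{separate} open problem (Section 1.4, item 3) precisely because neither law is a product measure, making that approach technically hard. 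If you want to prove a LoSt lower bound in the spirit of BBKMW, you must produce the pseudomoment matrix; your proposal currently has no mechanism for doing so.
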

\noindent
The idea for Conjecture~\ref{conj:hardness} arose in the first author's collaboration with the other authors of \cite{BBKMW-2020-SpectralPlantingColoring}, and we are grateful for their permission and encouragement to pursue evidence for and consequences of the Conjecture independently.

We give more precise statements in Section~\ref{sec:prelim:noise-conjectures}.
We will fulfill our first goal by giving lower bounds in the \emph{local statistics hierarchy} of semidefinite programs (SDP), a natural analog of the sum-of-squares hierarchy for hypothesis testing problems proposed by \cite{banks2021local}.

Our second goal is to explore the consequences of the conjecture, if it holds.
These consequences will be derived through the strategy of \emph{quietly planting} solutions to various optimization problems over random graphs.
This argument will imply lower bounds for numerous algorithmic problems of \emph{certifying bounds} on functions of random $d$-regular graphs.
We sketch the idea and describe its consequences below after explaining our local statistics lower bounds.

\subsection{Main Results: Hardness of Detection for the Local Statistics Hierarchy} \label{sec:intro:local-statistics}

The \emph{local statistics (LoSt) hierarchy}, introduced by \cite{banks2021local}, is a sequence of increasingly powerful SDPs for solving hypothesis testing problems.
It is inspired by the sum-of-squares hierarchy, which is a similar construction for optimization problems \cite{Shor-1987-SumOfSquares,Nesterov-1998-SemidefiniteQuadratic,Parrilo-2000-Thesis,Lasserre-2001-GlobalOptimizationMoments}.
In particular, the LoSt hierarchy is based on the \emph{pseudocalibration} proof technique, first used to prove sum-of-squares lower bounds for the problem of computing the clique number of a random graph \cite{BHKKMP-2019-PlantedClique}.

For the sake of exposition, let us focus on the noiseless version of the problem of testing non-bipartite graphs, $\QQ = \QQ_n = \sG(km, d)$ versus $\PP = \PP_n = \sL_m(H)$, where $n = km$.
Let us view a graph drawn from either distribution as having vertices in $[n]$, drawn uniformly at random, and let $H$ have a fixed labelling of its vertices by $[k]$.
Then, under the lift model $G \sim \PP$, there is an underlying partition $\sigma: [n] \to [k]$ specifying the fibers of the lift.
We may describe this by Boolean variables $x_{u, i} = \One\{\sigma(u) = i\}$ for $u \in [n]$ and $i \in [k]$.
Let us also identify $G$ with its adjacency matrix, $G_{u, v} = \One\{u \sim v\} \in \{0, 1\}$.
We view $(x, G)$ as the output of $\PP$ rather than just $G$.

The main objects figuring in LoSt algorithms are low-degree polynomials jointly in these two families of variables, $p \in \RR[x, G]$.
One may show that sufficiently symmetric such polynomials are concentrated around their typical values, $p(x, G) \approx \mathbb{E}_{(x, G)\sim \mathbb{P} } [p(x, G)]$ (to an extent depending on the degree and structure of $p$).
An algorithm seeking to hypothesis test only observes $G$, not $x$.
The idea of LoSt is that this algorithm may try to find a collection of values of $x$ that are compatible with the values specified above.
If such exist, then the algorithm decides that $G$ was drawn from $\PP$, and if not, then it decides that $G$ was drawn from $\QQ$.
One could define an explicit heuristic for finding the $x$ that ``best fit'' the data $G$, such as maximum likelihood estimation.
But such a problem would likely be computationally intractable, since the $x$ are Boolean.
Instead, LoSt forms an SDP relaxation of the problem of deciding whether ``well-fitting'' values of $x$ exist or not, in a similar fashion to sum-of-squares relaxations.

To formulate this SDP, let us first describe the constraints that an admissible $x$ must satisfy.
We follow the presentation of \cite{BBKMW-2020-SpectralPlantingColoring}.
The variables $x_{u, i}$ and $G_{u, v}$ are Boolean, and exactly one of the $x_{u, i}$ is 1 for each $u$.
Thus:
\begin{align*}
    G^2_{u,v} - G_{u,v} &= 0 & & \text{for all } u \neq v \in [n] \\
    x^2_{u,i} - x_{u,i} &= 0 & & \text{for all } u \in [n], i \in [k], \\
    \sum_{i\in [k]}x_{u,i} - 1 &= 0 & & \text{for all } u \in [n].
\end{align*}
We denote by $\mathcal{I} \subset \RR[x, G]$ the ideal generated by the left-hand sides of these equations.

Moreover, $\mathbb{P}$ has the symmetry that the law of $(x,G) \sim \mathbb{P}$ is invariant under the simultaneous action of the symmetric group $S_n$ on $x$ and $G$ by permuting the vertex labels, where a permutation $\xi \in S_n$ acts on $x$ and $G$ by $x_{u,i} \mapsto x_{\xi(u),i}$ and $G_{u,v} \mapsto G_{\xi(u),\xi(v)}$. As a result, the expectation $\mathbb{E}_{(x, G) \sim \mathbb{P}}[p(x, G)]$ is constant on the orbits of this action, and we will restrict our attention to polynomials fixed by the action.
We refer to such polynomials as \emph{$S_n$-invariant}.

We are now ready to define the LoSt algorithm, with the one informality that we leave the meaning of ``$\approx$'' unspecified---ultimately it will be related to quantitative aspects of the concentration of the polynomials $p$ involved.

\begin{definition}[Local statistics; informal]
    \label{def:local-stats-informal}
    The \emph{degree-$(D_x,D_G)$ local statistics algorithm}, denoted $\mathrm{LoSt}(D_x, D_G)$, is the following feasibility SDP: given an input graph $G_0$, find $\Tilde{\mathbb{E}} : \mathbb{R}[x]_{\le D_x} \to \mathbb{R}$ satisfying:
    \begin{enumerate}
        \item (Positivity) $\Tilde{\mathbb{E}} [p(x)^2] \ge 0$ whenever $\deg p \le D_x / 2$.
        \item (Hard constraints) $\Tilde{\mathbb{E}} [p(x, G_0)] = 0$ for every $p \in \mathcal{I}_k$ with $\deg p \le D_x$.
        \item (Soft moment calibration; informal) $\Tilde{\mathbb{E}} [p(x,G_0)] \approx \mathbb{E}_{(x,G)\sim\PP} [p(x,G)]$ whenever $\deg_G p(x,G) \le D_G$, $\deg_x p(x,G) \le D_x$, and $p$ is $S_n$-invariant.
    \end{enumerate}
\end{definition}
\noindent
This program may be written as a feasibility SDP using the same transformation as is standard for the sum-of-squares hierarchy; see, e.g., \cite{Laurent-2009-SOS}.
Feasibility SDPs in general may be solved with the classical ellipsoid algorithm \cite{GLS-2012-GeometricAlgorithmsCombinatorialOptimization}.

\begin{remark}[O'Donnell's caveat]
    For the sum-of-squares hierarchy, \cite{ODonnell-2017-SOSNotAutomatizable} raised an important issue noting that, even when the SDPs generated are of polynomial size, they might not be solvable efficiently with the ellipsoid algorithm and relatives thereof if their solutions involve very large numbers.
    This was addressed for many instances of sum-of-squares relaxations of combinatorial optimization problems by \cite{RW-2017-BitComplexity}.
    It seems likely that LoSt SDPs suffer from a similar issue and that a parallel theory should be developed for them, but we do not pursue this direction here.
\end{remark}
\noindent
Our main result is as follows.

\begin{theorem}[Local statistics hardness; informal]
    \label{thm:local-statistics-informal}
    Let $H$ be a d-regular multigraph on $k$ vertices.
    Suppose that $H$ is not bipartite.
    Then, the following hold with high probability as $m \to \infty$.
    \begin{itemize}
        \item If $H$ is Ramanujan, then for any $D \geq 2$, $\LoSt(2,D)$ with any fixed error tolerance cannot distinguish $\mathcal{L}_m(H)$ and $\mathcal{G}(km, d)$ with high probability as $m \to \infty$, nor can it distinguish these when any amount of extra noise is added to $\sL_m(H)$.
        \item If $H$ is not Ramanujan, then there exists a constant $D$ and a fixed error tolerance such that the degree $\LoSt(2,D)$ can distinguish $\mathcal{L}_m(H)$ and $\mathcal{G}(km, d)$ with high probability as $m \to \infty$, and can still distinguish these distributions if a sufficiently small amount of extra noise is added to $\sL_m(H)$.
    \end{itemize}
    Suppose now that $H$ is bipartite.
    Then, the following hold with high probability as $m \to \infty$.
    \begin{itemize}
        \item If $H$ is bipartite Ramanujan, then for any $D \geq 2$, $\LoSt(2,D)$ with any fixed error tolerance cannot distinguish $\mathcal{L}_m(H)$ and $\mathcal{G}((\frac{km}{2}, \frac{km}{2}), d)$ with high probability as $m \to \infty$, nor can it distinguish these distributions when any amount of extra noise is added to $\sL_m(H)$.
        \item If $H$ is not bipartite Ramanujan, then there exists a constant $D$ and a fixed error tolerance such that the degree $\LoSt(2,D)$ can distinguish $\mathcal{L}_m(H)$ and $\mathcal{G}((\frac{km}{2}, \frac{km}{2}), d)$ with high probability as $m \to \infty$, and can still distinguish these distributions if a sufficiently small amount of extra noise is added to $\sL_m(H)$.
    \end{itemize}
\end{theorem}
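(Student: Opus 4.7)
The plan is to prove the two directions separately, with an explicit trace statistic for the possibility direction and pseudocalibration for the impossibility direction. For the \emph{possibility} direction (non-Ramanujan $H$), fix an eigenvalue $\mu$ of $H$ with $|\mu|>2\sqrt{d-1}$ (and $\mu\notin\{\pm d\}$ in the bipartite setting), and pick an even $\ell$ such that $\mu^\ell$ exceeds the $\ell$-th moment of the Kesten--McKay distribution by a constant factor. By Propositions~\ref{prop:eigenvalues-random-regular} and \ref{prop:eigenvalues-lifts}, the closed-walk count $\tr(G^\ell)$ is larger under $\sL_m(H)$ than under the null by roughly $\mu^\ell$, with fluctuations of smaller order. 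Since $\tr(G^\ell)$ is an $S_n$-invariant degree-$\ell$ polynomial in $G$ alone, the soft calibration condition at $D_G=\ell$ forces $\tilde{\mathbb{E}}[\tr(G_0^\ell)]\approx\mathbb{E}_\PP[\tr(G^\ell)]$, which contradicts the typical value of $\tr(G_0^\ell)$ for $G_0\sim\QQ$; the SDP is therefore infeasible on typical null samples. Stability under noise is automatic, since $\tr(G^\ell)$ shifts by only a lower-order amount under an $\eps$-fraction of random edge switches.

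For the \emph{impossibility} direction (Ramanujan $H$), because $D_x=2$ a pseudoexpectation $\tilde{\mathbb{E}}$ is encoded by a PSD matrix $M(G)$ carrying the values of $\tilde{\mathbb{E}}[1]$, $\tilde{\mathbb{E}}[x_{u,i}]$, and $\tilde{\mathbb{E}}[x_{u,i}x_{v,j}]$. I would construct $M$ by truncated pseudocalibration: pick an orthonormal basis $\{\chi_\alpha\}$ (with respect to $\QQ$) for low-degree functions of a $d$-regular adjacency matrix---indexed, for instance, by small non-backtracking walks or rooted subgraphs---and for each monomial $f$ of degree at most two define
\begin{equation*}
    \tilde{\mathbb{E}}[f(x)](G)\;=\;\sum_{|\alpha|\leq D_G}\mathbb{E}_\PP[f(x)\,\chi_\alpha(G)]\,\chi_\alpha(G).
\end{equation*}
By design, this matches all $S_n$-invariant $\PP$-moments of bidegree $(2,D_G)$, giving soft calibration; the symmetry of $\PP$ and the structure of the Boolean and partition ideal generators produce the hard constraints automatically.

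The heart of the argument is verifying $M(G)\succeq 0$ with high probability when $G\sim\QQ$. I would decompose $M$ into a ``mean piece'' (the projection onto the uniform-across-fibers pattern, which is PSD with a constant margin) plus a spectral correction whose $k\times k$ block structure is indexed by the eigenspaces of $H$. The operator norm of the correction can then be bounded using the non-backtracking and Ihara--Bass machinery underlying Proposition~\ref{prop:eigenvalues-lifts}: each eigenvalue $\mu$ of $H$ contributes a block whose spectral footprint has size $\approx|\mu|$, while the bulk of $G\sim\QQ$ sits inside $[-2\sqrt{d-1},2\sqrt{d-1}]$. The Ramanujan hypothesis on $H$ is precisely what guarantees every footprint fits within this bulk, leaving a constant positivity margin. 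The bipartite case is handled identically after using the bipartite Ramanujan bound and accounting for the extra trivial eigenvalue $-d$.

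The main obstacle I expect is this positivity verification: one must control the spectrum of a matrix whose entries are low-degree polynomials in a random $d$-regular adjacency matrix, matching the dangerous modes exactly to the eigenvalues of $H$. A secondary technicality is that the uniform $d$-regular distribution is not a product measure, so one must either construct $\{\chi_\alpha\}$ on the configuration model and transfer via contiguity, or use the non-backtracking basis directly. Robustness to the noise perturbation then follows because $M(G)$ is a fixed low-degree polynomial in $G$, so an $\eps$-fraction of random edge switches perturbs $M$ only in operator norm $O(\eps)$, which the constant Ramanujan margin absorbs.
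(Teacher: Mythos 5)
The \emph{possibility} direction of your proposal contains a concrete error. You claim that the soft calibration constraint $\tilde{\mathbb{E}}[\tr(G_0^\ell)]\approx\mathbb{E}_\PP[\tr(G^\ell)]$ contradicts the typical value of $\tr(G_0^\ell)$ under $\QQ$. But for any fixed $\ell$, the spectra of $\sL_m(H)$ and $\sG(n,d)$ agree to leading order: both have $n-O(1)$ eigenvalues following the Kesten--McKay bulk, and the only discrepancy comes from replacing $k-1$ Kesten--McKay samples by the nontrivial eigenvalues of $H$. Hence the difference in expected traces is $O(1)$ (a constant depending on $d$, $\ell$, and $\mathrm{spec}(H)$, not growing with $n$), while the error tolerance in the moment constraints scales as $\delta n^{\mathrm{cc}(\alpha)}=\delta n$ and the random fluctuations of $\tr(G^\ell)$ under either model are themselves $O(1)$. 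Any constraint involving a pure-$G$ polynomial is therefore vacuously satisfied by typical null samples, so the trace statistic alone cannot witness infeasibility. The paper's argument is essential to get around exactly this: it shows that a feasible pseudoexpectation yields a feasible ``pseudo-partition matrix'' $\tilde P=\sum_i Q_{i,i}\succeq\frac1kJ$ in a simplified Path Statistics SDP, and then a Chebyshev-polynomial argument produces $f\geq 0$ on $[-2\sqrt{d-1},2\sqrt{d-1}]$ with $\sum_{|\lambda_i|<d}f(\lambda_i)<0$; the inner product $\langle\tilde P-\frac1kJ,\ f(A_G)-\frac{f(d)}nJ\rangle\geq 0$ then contradicts the moment constraints on $\langle\tilde P,q_s(A_G)\rangle$. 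The $D_x=2$ structure, specifically the PSD constraint $\tilde P-\frac1kJ\succeq 0$, is what makes the test nonvacuous.

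The \emph{impossibility} direction is closer in spirit to the paper, and your ``mean piece plus spectral correction indexed by eigenspaces of $H$'' matches the paper's decomposition $Q=\frac{1}{k-1}\sum_{i:|\lambda_i|<d}\tilde P^{(\lambda_i)}\otimes v_iv_i^\top+\frac{1}{k(k-1)}J_n\otimes(J_k-I_k)$. However, you leave the positivity verification, which you correctly flag as the heart of the argument, essentially unaddressed: the assertion that ``each eigenvalue $\mu$ of $H$ contributes a block whose spectral footprint has size $\approx|\mu|$'' needs to be converted into an actual bound, and the untruncated pseudocalibration formula carries no built-in positivity guarantee. The paper instead constructs the ingredients $\tilde P^{(\lambda)}$ explicitly as solutions to a ``Symmetric Path Statistics SDP'': it builds matrices $g(A_G)$ from low-degree even polynomials $g_\lambda$ whose Kesten--McKay moments approximate $q_s(\lambda)$, corrects the $O(\log n)$ bad diagonal entries by a Gram-matrix swap, and verifies $\tilde P^{(\lambda)}-\frac1kJ\succeq 0$ directly via the polynomial's nonnegativity on the bulk. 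Positivity of the assembled pseudomoment matrix then reduces to a tensor product of PSD matrices and a Schur complement, avoiding the difficult spectral estimates your sketch would require. The two routes are conceptually similar but differ importantly in how much work is left implicit; it is worth being aware that making the pseudocalibration sketch rigorous here is precisely what the Path Statistics SDP intermediate step is designed to finesse.
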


\subsection{Main Results: Applications}

We now describe the consequences of Conjecture~\ref{conj:hardness}.
A valuable task for many interesting functions of a graph---the independence number, chromatic number, maximum cut, minimum dominating set, expansion, and so forth---is to \emph{certify bounds}.

\begin{definition}[Certification]
    Let $f(G)$ be a real-valued function of a $d$-regular graph $G$.
    We say that a function $c(G)$ (in our setting thought of as computed by some algorithm) \emph{certifies an upper bound} on $G$ if $f(G) \leq c(G)$ for all $G$, and \emph{certifies a lower bound} on $G$ if $f(G) \geq c(G)$ for all $G$.
\end{definition}
\noindent
The language may seem convoluted, but is helpful when we discuss the performance of algorithms over random $G$.
If $G \sim \sG(n, d)$, then the typical value of $f(G) \approx \EE f(G)$ may be small, say, but it may be hard to \emph{certify} a tight bound $c(G)$, one such that $\EE c(G) \approx \EE f(G)$, because a certification algorithm cannot just output, say, $c(G) \colonequals \EE f(G) + \eps$, but rather when given $G$ with an unusual structure atypical to $\sG(n, d)$ and making $f(G)$ very large must adapt and output a correspondingly larger bound.
In this sense, certification algorithms that perform well in the average case must perform a kind of \emph{implicit hypothesis testing} and detect whether they are being given typical or atypical instances for the distribution in question.

The prototypical strategy for designing certification algorithms is \emph{convex relaxation}, forming a linear or semidefinite program that provably bounds the independence number (or some other quantity), using a framework such as the aforementioned sum-of-squares hierarchy.
Unfortunately, convex relaxations, especially at higher and more powerful tiers of various hierarchies such as sum-of-squares, are notoriously technical to analyze.
In particular it is a long-outstanding challenge in this literature to prove lower bounds for certification over distributions with complex correlation structure like $\sG(n, d)$.
Our approach of \emph{quiet planting} is a technique for assessing the performance of \emph{all} efficient certification algorithms without directly studying convex optimization itself, conditional on a hardness assumption such as Conjecture~\ref{conj:hardness}.
This strategy has been employed for various other problems by \cite{BKW-2019-ConstrainedPCA,BKW-2020-PositivePCA,BBKMW-2020-SpectralPlantingColoring,Kunisky-2023-OptimalityGlauberDynamicsIsing}.

The relevance of the Conjecture is that all of the quantities above are \emph{lift-monotone}: for instance, the independence number, as a fraction of the number of vertices, can only increase after forming a graph lift.
So, a random lift of a Ramanujan graph $H$ with a large independent set---larger than is typically present in $G \sim \sG(n, d)$---always still has a large independent set.
Conversely, if an algorithm performs certification and usually outputs a tight bound on $G \sim \sG(n, d)$, then it could be used to distinguish $G \sim \sG(n = km, d)$ from $G \sim \sL_m(H)$: in the first case it would usually output a small bound, while in the second case it must output a large bound (since the actual independence number is large).
Thresholding the output of this algorithm would therefore effectively test between $\sG(n, d)$ and $\sL_m(H)$, contradicting the Conjecture.

The consequences of the quiet planting argument together with Conjecture~\ref{conj:hardness} are surprising: the existence of particular finite Ramanujan graphs with various extremal properties---large independent sets, large cuts, low expansion, and so forth---constitutes evidence that certifying a bound on the same quantity for random regular graphs is computationally hard.
This gives a powerful new technique for proving computational lower bounds for certification by merely exhibiting particular finite graphs.
In Section~\ref{sec:certification-theory}, we give some general theory around this strategy; for now, we restrict our attention to concrete examples.
We will discuss numerous quantities $f(G)$; in each case, we present a simple certification algorithm---often a spectral bound---and give results indicating how close to optimal this simple strategy must be.
We summarize these findings in Table~\ref{table:results}.

\paragraph{Maximum $t$-cut}
We first consider the following generalization of the max-cut problem to ``$t$-cuts'' or ``multisections'' for a graph $G = (V, E)$:
\begin{equation}
    \MC_t(G) \colonequals \max_{\kappa : V \to [t]} \frac{|\{\{u, v\} \in E: \kappa(u) \neq \kappa(v)\}|}{|E|} \in [0, 1].
    \label{eq:MC-def}
\end{equation}
$\MC_2(G)$ is just the fraction of edges cut by a maximum cut, for example.
A classical spectral certificate due to Hoffman \cite{Hoffman-1970-Eigenvalues} gives the bound
\begin{equation}
    \label{eq:hoffman-MCk}
    \MC_t(G) \leq \frac{t - 1}{t}\left(1 + \frac{|\lambda_n(G)|}{d}\right).
\end{equation}
The question of what bound polynomial-time certificates can achieve for a given $t$ was studied by \cite{BBKMW-2020-SpectralPlantingColoring}.
In particular, that work showed that, as $d \to \infty$ for fixed $t$, Hoffman's bound is optimal to leading order: by Proposition~\ref{prop:eigenvalues-random-regular}, it certifies $\MC_t(G) \leq \frac{t - 1}{t}(1 + \frac{2}{\sqrt{d}} + o(\frac{1}{\sqrt{d}}))$ with high probability, and, conditional on Conjecture~\ref{conj:hardness} (which is a generalization of their ``equitable stochastic block model conjecture''), no polynomial-time certification algorithm can improve on this asymptotically.
In contrast, the true value scales with high probability as $\MC_t(G) = \frac{t - 1}{t}(1 + \frac{P_*^{(t)}}{\sqrt{d}} + o(\frac{1}{\sqrt{d}}))$ for a number $P_*^{(t)} < 2$ related to the ground state energy of an associated spin glass model from statistical physics \cite{Sen-2018-OptimizationSparseHypergraph}.
Thus, as $d \to \infty$, there is a sizeable \emph{certification gap} between the true value of $\MC_t(G)$ and what polynomial-time certificates can achieve.\footnote{Since an algorithm with no runtime constraints can solve maximum $t$-cut as well as all of the combinatorial problems we discuss by brute force, a certification gap may be viewed a particular kind of \emph{information-computation gap}, describing a computational task that can be solved but only inefficiently.}

On the other hand, as discussed in Appendix A.2 of \cite{BBKMW-2020-SpectralPlantingColoring}, for finite $d$, their techniques are limited because they only work with a small class of Ramanujan graphs whose construction depends on divisibility relations between $d$ and $t$ (see our discussion in Section~\ref{sec:related}).
For instance, for $d = 3$ and $t = 2$ (maximum cut in random cubic graphs), their results only show that no polynomial-time certificate can bound $\MC_2(G) \leq \num{0.67}$ with high probability when $G \sim \sG(n, 3)$.
This result is actually vacuous, because it is known that with high probability $\MC_2(G) \geq \num{0.90}$.

Using our more flexible approach to quiet planting, we are able to resolve this issue and a similar one for $d = 4$, and show that there is a certification gap in both cases.

\begin{remark}[Statistical-to-computational gap in detecting lifts]
    Our proofs of these certification gaps also imply that, in some cases, it is information-theoretically possible to distinguish $\sL_m(H)$ from $\sG(n, d)$ or $\sG((\frac{n}{2}, \frac{n}{2}), d)$ while Conjecture~\ref{conj:hardness} implies that it is impossible to do so in polynomial time; that is, in some cases the problem of detecting lifts exhibits a \emph{statistical-to-computational gap}.
    This is because we may just solve, say, maximum $k$-cut by brute force and threshold the resulting value---if the typical value for a random lift is larger than the typical value for a uniformly random graph, then this test will detect the random lift with high probability.
\end{remark}

\begin{theorem}[Maximum cut, $d \in \{3, 4\}$]
    \label{thm:max-cut}
    For $G \sim \sG(n, 3)$, the following hold:
    \begin{enumerate}
        \item $\MC_2(G) \in [\num{0.906}, \num{0.925}]$ with high probability \cite{GL-2018-MaxCutSparseRandomGraphs,COLMS-2022-MaxCutRandomRegularIsing}.
        \item Hoffman's bound \eqref{eq:hoffman-MCk} certifies $\MC_2(G) \leq \num{0.971}$ with high probability.
        \item If Conjecture~\ref{conj:hardness} holds, then there is no polynomial-time algorithm certifying $\MC_2(G) \leq \num{0.944}$ with high probability.
    \end{enumerate}
    For $G \sim \sG(n, 4)$, the following hold:
    \begin{enumerate}
        \item $\MC_2(G) \in [\num{0.833}, \num{0.869}]$ with high probability \cite{DDSW-2003-MaxMinBisection34Regular,COLMS-2022-MaxCutRandomRegularIsing}.
        \item Hoffman's bound \eqref{eq:hoffman-MCk} certifies $\MC_2(G) \leq \num{0.933}$ with high probability.
        \item If Conjecture~\ref{conj:hardness} holds, then there is no polynomial-time algorithm certifying $\MC_2(G) \leq \num{0.875}$ with high probability.
    \end{enumerate}
\end{theorem}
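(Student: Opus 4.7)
The plan is to address each of the three numbered claims in turn. Parts~(1) are quoted directly from the cited literature. Parts~(2) follow by applying Hoffman's bound~\eqref{eq:hoffman-MCk} with $t = 2$ together with the high-probability estimate $|\lambda_n(G)| \leq 2\sqrt{d - 1} + o(1)$ from Proposition~\ref{prop:eigenvalues-random-regular}; numerically, $\tfrac{1}{2}(1 + \tfrac{2\sqrt{2}}{3}) \approx 0.9714$ for $d = 3$ and $\tfrac{1}{2}(1 + \tfrac{\sqrt{3}}{2}) \approx 0.9330$ for $d = 4$, and in each case the $o(1)$ is absorbed into the stated rounded upper bound.

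The substance lies in Parts~(3), which I would prove by quiet planting. The key observation is that $\MC_2$ is \emph{lift-monotone}: if $G$ is any lift of a $d$-regular multigraph $H$, then every non-loop edge of $H$ pulls back to a perfect matching of $m$ edges between two fibers, so any two-coloring of $V(H)$ extends by constant assignment along fibers to a two-coloring of $V(G)$ whose cut fraction is that of the underlying coloring of $H$ (with a lower-order correction from loops), yielding $\MC_2(G) \geq \MC_2(H) - o(1)$. Because $\MC_2$ is Lipschitz in the edge set, adding a small amount of noise to a lift only perturbs this bound by an amount controlled by the noise level. Combining this with Conjecture~\ref{conj:hardness} gives the general principle I would invoke: for any non-bipartite $d$-regular Ramanujan multigraph $H$ with $\MC_2(H) > \rho$, no polynomial-time algorithm $c$ can certify $\MC_2(G) \leq \rho$ with high probability over $G \sim \sG(km, d)$. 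Indeed, running such a $c$ on a noisy sample $G' \sim \sL_m(H)$ would force $c(G') \geq \MC_2(G') \geq \MC_2(H) - o(1) > \rho$, so thresholding $c$ at $\rho$ would strongly detect the noisy lift distribution against $\sG(km, d)$, contradicting the conjecture.

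The proof therefore reduces to exhibiting a non-bipartite $3$-regular Ramanujan multigraph $H_3$ with $\MC_2(H_3) > 0.944$ and a non-bipartite $4$-regular Ramanujan multigraph $H_4$ with $\MC_2(H_4) > 0.875$. This combinatorial construction is the main obstacle, and also the step that most directly profits from moving to the multigraph setting, since multi-edges and loops both enlarge the search space and let one trade a few spectrally ``cheap'' non-bipartite defects against a large cut. My plan is a mixture of brute-force enumeration over small $d$-regular multigraphs (computing eigenvalues and exact maximum cuts), adaptation of known algebraic Ramanujan families (for example, small Lubotzky-Phillips-Sarnak graphs or their variants when $d = 4$), and, as a fallback, perturbing bipartite Ramanujan graphs with near-perfect cut by inserting a controlled number of edges that break bipartiteness while preserving the spectral radius bound $2\sqrt{d-1}$. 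Once suitable $H_3$ and $H_4$ are identified, Parts~(3) follow immediately from the quiet planting reduction above.
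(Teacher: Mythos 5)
Your high-level reduction is exactly the one the paper uses: establish that $\MC_2$ is robustly lift-monotone (the paper states and proves this for general $\MC_t$ at the start of Section~\ref{sec:max-cut}), invoke the abstract quiet-planting machinery (Theorem~\ref{thm:certification} combined with Proposition~\ref{prop:ramanujan-bounds}), and then reduce Part~(3) to exhibiting non-bipartite $d$-regular Ramanujan multigraphs with suitably large 2-cut. Parts~(1) and~(2) are handled the same way in the paper. So the structure is right.

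The genuine gap is that the key combinatorial ingredient --- the actual Ramanujan multigraphs $H_3$ and $H_4$ --- is never produced. You correctly identify this as ``the main obstacle,'' but a plan to search for them by brute force, by adapting LPS-type constructions, or by ``perturbing bipartite Ramanujan graphs'' is not a proof; without a concrete $H$ in hand there is nothing to plug into the quiet-planting argument and the claimed numerical thresholds $0.944$ and $0.875$ have no source. The paper fills this gap by explicitly exhibiting the graphs in Figures~\ref{fig:d3-k2-example} and~\ref{fig:d4-k2-example}: a $12$-vertex $3$-regular multigraph (a near-bipartite graph with one edge replaced by a pair of loops) with spectral radius $\approx 2.825 < 2\sqrt{2}$ and $\MC_2 = 17/18$, and an $8$-vertex $4$-regular multigraph (modified $K_{4,4}$ with two edges replaced by pairs of loops) with spectral radius $\approx 3.236 < 2\sqrt{3}$ and $\MC_2 = 14/16 = 7/8$. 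Your ``fallback'' idea of modifying bipartite Ramanujan graphs while preserving the spectral bound is in fact the strategy the paper realizes, but the proof is incomplete until the specific graphs and their eigenvalue/cut verifications are supplied. One further small point: your argument needs $\MC_2(H_4)$ \emph{strictly} greater than $0.875$, while the paper's graph achieves $\MC_2(H_4) = 7/8$ exactly, so the quoted threshold $0.875$ should be read as ``any constant strictly below $7/8$'' (this is the $-\delta$ slack in Theorem~\ref{thm:certification}); the same remark applies to $17/18$ versus $0.944$, except there $17/18 \approx 0.9444 > 0.944$ so the rounded number is on the safe side.
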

\noindent
As mentioned earlier, our proof constructs concrete 3- and 4-regular Ramanujan graphs (see Figures~\ref{fig:d3-k2-example} and \ref{fig:d4-k2-example}, respectively).
These were found by computer-assisted experimentation; it seems likely that our methods can be extended to larger $d$.

\paragraph{Chromatic number}
A graph $G$ being $k$-colorable is the same as having $\MC_k(G) = 1$, or admitting a $k$-cut that cuts every edge.
Thus Hoffman's bound \eqref{eq:hoffman-chromatic} on $\MC_k(G)$ also yields a spectral lower bound certificate for $\chi(G)$, which works out to:
\begin{equation}
    \label{eq:hoffman-chromatic}
    \chi(G) \geq 1 + \left\lceil \frac{d}{|\lambda_n(G)|} \right\rceil.
\end{equation}
As above, the results of \cite{BBKMW-2020-SpectralPlantingColoring} imply that this is asymptotically optimal as $d \to \infty$.
The bound scales as $\frac{\sqrt{d}}{2}(1 + o(1))$, conditional on Conjecture~\ref{conj:hardness} no polynomial-time certificate can improve on this behavior, and again there is a large certification gap as the true chromatic number is known to scale as the much larger $\frac{d}{2\log d}$ \cite{FL-1992-IndependenceChromaticRandomRegular}.

Again for finite $d$ the results of \cite{BBKMW-2020-SpectralPlantingColoring} are suboptimal.
For example, Hoffman's bound is tight for $d \in \{3, 4, 5\}$, the analysis of \cite{BBKMW-2020-SpectralPlantingColoring} gives a certification gap for $d = 6$ (Hoffman's bound with high probability certifies $\chi(G) \geq 3$ and the results of \cite{BBKMW-2020-SpectralPlantingColoring} imply this is optimal conditional on Conjecture~\ref{conj:hardness}, while by the results of \cite{AM-2004-ColoringRandomRegular} the true chromatic number is with high probability $\chi(G) \in \{4, 5\}$), but fail already for $d = 7$.
We rectify this:
\begin{theorem}[Chromatic number, $d = 7$]
    \label{thm:chromatic}
    For $G \sim \sG(n, 7)$, the following hold:
    \begin{enumerate}
        \item $\chi(G) \geq 4$ with high probability \cite{AM-2004-ColoringRandomRegular}.
        \item Hoffman's bound \eqref{eq:hoffman-chromatic} certifies $\chi(G) \geq 3$ with high probability.
        \item If Conjecture~\ref{conj:hardness} holds, then there is no polynomial-time algorithm certifying $\chi(G) \geq 4$ with high probability.
    \end{enumerate}
\end{theorem}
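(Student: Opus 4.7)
Part (1) is quoted from \cite{AM-2004-ColoringRandomRegular}. Part (2) follows by plugging Proposition~\ref{prop:eigenvalues-random-regular} into \eqref{eq:hoffman-chromatic}: when $G \sim \sG(n, 7)$, we have $|\lambda_n(G)| = 2\sqrt{6} + o(1)$ with high probability, and since $7/(2\sqrt{6}) = 1.4289\ldots < 2$, Hoffman's bound certifies $\chi(G) \geq 1 + \lceil 7/|\lambda_n(G)| \rceil = 3$ with high probability.

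The substance is in (3), which I would attack via the quiet planting strategy. The plan has three steps. First, exhibit (by computer-assisted search, as is done for Theorem~\ref{thm:max-cut}) an explicit 7-regular non-bipartite graph $H$ on some moderate number of vertices $k$ that is simultaneously Ramanujan (all nontrivial eigenvalues lie in $[-2\sqrt{6}, 2\sqrt{6}]$) and 3-colorable. Both properties are verified numerically: one computes the spectrum of $H$ and produces an explicit proper coloring $\kappa : V(H) \to [3]$.

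Second, establish lift-monotonicity of the chromatic number: for any $G \sim \sL_m(H)$, the induced coloring $\tilde{\kappa}(v) \colonequals \kappa(\sigma(v))$, where $\sigma : V(G) \to V(H)$ is the fiber map, is proper because every non-loop edge of $G$ connects two distinct fibers whose colors differ under $\kappa$. Hence $\chi(G) \leq 3$ with probability~1 under $\sL_m(H)$; the general machinery of Section~\ref{sec:certification-theory} then ensures that under the small-noise perturbation of $\sL_m(H)$ invoked in Conjecture~\ref{conj:hardness}, we still have $\chi(G) \leq 3$ with probability bounded away from zero (or after a negligible edit).

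Third, reduce to the conjecture: suppose some polynomial-time algorithm $c$ certifies $\chi(G) \geq 4$ with high probability when $G \sim \sG(n, 7)$. Then the thresholded test $f(G) \colonequals \One\{c(G) \geq 4\}$ equals $1$ with high probability under $\sG(n, 7)$, but must equal $0$ with probability bounded away from zero under the noisy $\sL_m(H)$ (since on the event $\chi(G) \leq 3$, any valid lower-bound certificate must satisfy $c(G) \leq 3$). Standard amplification then yields a strong-detection test, contradicting the non-bipartite branch of Conjecture~\ref{conj:hardness}.

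The main obstacle is the construction in Step~1. A 7-regular graph sits already fairly close to the Ramanujan threshold $2\sqrt{6} \approx 4.90$, while 3-colorability forces a rigid tripartite-like structure in which each color class spans at most a third of the vertices and sees its degree $7$ entirely across a cut. Standard candidates (complete multipartite graphs $K_{a,a,a}$, which are $2a$-regular and hence cannot be 7-regular; small circulants; line graphs) fail either the degree or the Ramanujan condition, so a targeted computer search is needed, and one must also check that the witness is non-bipartite so that the appropriate branch of Conjecture~\ref{conj:hardness} applies. Once $H$ is in hand, Steps~2 and~3 are routine instantiations of the quiet-planting framework.
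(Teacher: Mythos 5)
Parts (1) and (2) are correct and match the paper's treatment. For Part (3), your overall strategy---quiet planting with a 3-colorable, 7-regular Ramanujan graph $H$ and reducing certification to detection---is exactly the paper's approach, and your instinct that a computer-assisted construction is required is right. One small note: the paper's witness (Figure~\ref{fig:d7-chrom-example}) is a 12-vertex 7-regular Ramanujan \emph{multigraph} built by extending $K_{3,3,3}$ with three outer vertices and doubled edges, so your search should allow multigraphs rather than restricting to simple circulants or similar.

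The real gap is in your Step 2. The chromatic number is not robust to arbitrary noise: adding even one edge inside a lifted color class can force a fourth color, so under the plain (non-respectful) noise operators of Conjecture~\ref{conj:hardness-formal} you cannot conclude $\chi(G) \leq 3$ even with probability bounded away from zero, and appealing generically to ``the machinery of Section~\ref{sec:certification-theory}'' sweeps exactly this under the rug. Concretely, $-\chi$ is \emph{not} robustly lift-monotone, so Condition~2 of Proposition~\ref{prop:ramanujan-bounds} fails and only Condition~3 (the $H$-respecting condition) applies. The paper's resolution, which you need, is to invoke the \emph{respectful} noise variant of the conjecture: since a proper coloring $\kappa$ of $H$ never assigns adjacent vertices the same color, respectful noise (which only adds edges between fibers of vertices adjacent in $H$) can never place an edge inside a lifted color class, so the lifted coloring stays proper with probability $1$. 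This also repairs the defect at the end of your Step 3: you only claimed $\chi(G) \leq 3$ ``with probability bounded away from zero'' under the planted model, which by itself does not contradict a conjectured impossibility of \emph{strong} detection (that would require the Type~II error of the thresholded certifier to tend to zero, and ``standard amplification'' does not help when you observe a single graph). Under respectful noise, $\chi(G) \leq 3$ holds deterministically over the planted distribution, so the thresholded certifier does achieve strong detection and the contradiction is immediate.
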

\noindent
As we discuss in Section~\ref{sec:pf:chromatic-number}, some extra care is needed in the choice of a formal version of Conjecture~\ref{conj:hardness} to use for this application, because the chromatic number is a less robust quantity than some of the others we consider; if we are not careful, indiscriminately changing edges in $G \sim \sL_m(H)$ can invalidate the coloring ``lifted from'' $H$.

\paragraph{Independence number}

As for cuts, to compare graphs of different sizes we work with the \emph{normalized independence number} $\what{\alpha}(G) \colonequals \alpha(G) / |V(G)|$, the fraction of vertices in a maximum independent set.
There is again a natural spectral upper bound certificate following from Hoffman's methods in \cite{Hoffman-1970-Eigenvalues} (see also \cite{Haemers-2021-HoffmanRatioBound} for this version), showing:
\begin{equation}
    \label{eq:hoffman-ind}
    \what{\alpha}(G) \leq \frac{|\lambda_n(G)|}{d + |\lambda_n(G)|}.
\end{equation}
As $d \to \infty$, Hoffman's bound scales as $\frac{2}{\sqrt{d}}(1 + o(1))$, while the true normalized independence number is known to scale as $\frac{2\log d}{d}(1 + o(1))$ \cite{FL-1992-IndependenceChromaticRandomRegular}.
In this case, there is no prior work that we are aware of on certifying bounds on the independence number (though \cite{JPRTX-2021-SOSSparseIndependentSet} prove sum-of-squares lower bounds for the somewhat similar though denser setting of \Erdos-\Renyi\ graphs with average degree polylogarithmic in $n$), and a certification gap is not known to exist either as $d \to \infty$ or for finite $d$.
We prove that it does in both settings:

\begin{theorem}[Independence number, $d \to \infty$]
    \label{thm:ind-set-asymp}
    If Conjecture~\ref{conj:hardness} holds,  then for any $\eps > 0$, there is no polynomial-time algorithm certifying that $\what{\alpha}(G) \leq \frac{2}{\sqrt{d}}(1 - \eps + o(1))$ with high probability when $G \sim \sG(n, d)$, with $o(1)$ referring to the limit as $d \to \infty$.
\end{theorem}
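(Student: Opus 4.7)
The plan is to deploy the quiet planting strategy sketched in the introduction. Fix $\eps > 0$. The crux of the argument is to exhibit, for each sufficiently large $d$, a $d$-regular Ramanujan graph $H_d$ on $k_d$ vertices with
\[
\what{\alpha}(H_d) \;\geq\; \frac{2}{\sqrt{d}}\bigl(1 - \eps/4\bigr),
\]
essentially saturating Hoffman's bound \eqref{eq:hoffman-ind}. Granted such a sequence, the independence number is lift-monotone: for $G \sim \sL_m(H_d)$, the preimage $\pi^{-1}(I) \subseteq V(G)$ of any independent set $I \subseteq V(H_d)$ under the fiber projection $\pi$ is independent in $G$ of size $m|I|$, so $\what{\alpha}(G) \geq \what{\alpha}(H_d)$ with probability one. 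Choosing the noise level in Conjecture~\ref{conj:hardness} sufficiently small relative to $d$ preserves this up to lower order, since each edge modification can create at most one new bad edge within the planted independent set and thus costs at most one vertex; so the perturbed graph $G'$ still satisfies $\what{\alpha}(G') \geq \frac{2}{\sqrt{d}}(1 - \eps/2)$ with high probability.

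The contradiction is then routine. Suppose for contradiction that there is a polynomial-time certificate $c(G) \geq \what{\alpha}(G)$ with $c(G) \leq \frac{2}{\sqrt{d}}(1 - \eps + o(1))$ with high probability on $\sG(n, d)$. Pick $d$ large enough that both the construction of $H_d$ succeeds and the $o(1)$ term is dominated, say $\frac{2}{\sqrt{d}}(1 - \eps + o(1)) < \frac{2}{\sqrt{d}}(1 - 2\eps/3)$. Then the test $G \mapsto \One\{c(G) > \frac{2}{\sqrt{d}}(1 - 2\eps/3)\}$ is a polynomial-time hypothesis test achieving strong detection between $\sG(k_d m, d)$ and the noisy $\sL_m(H_d)$ as $m \to \infty$: under the former it outputs $0$ with high probability by hypothesis, while under the latter it outputs $1$ with probability tending to $1$ by the previous paragraph. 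This contradicts Conjecture~\ref{conj:hardness}.

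The principal obstacle is the construction of $H_d$. A uniformly random $d$-regular graph has $\what{\alpha} = \Theta((\log d)/d)$, which is asymptotically too small by roughly a factor of $\sqrt{d}/\log d$, so some planted structure is essential. One natural approach is to split $V(H_d) = A \sqcup B$ with $|B|/|V(H_d)| \approx 2/\sqrt{d}$, insert a biregular bipartite Ramanujan graph between $A$ and $B$ with degrees forced by regularity (namely $d$ on the $B$-side and $\approx 2\sqrt{d}$ on the $A$-side, whose existence is guaranteed by the Marcus--Spielman--Srivastava construction), and then insert a Ramanujan graph on $A$ of the residual degree $d - 2\sqrt{d} + o(\sqrt{d})$. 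Verifying that the composite graph has second-largest eigenvalue magnitude at most $2\sqrt{d-1}$ is the chief technical step; this should follow from a quadratic-form / interlacing argument that decomposes any candidate eigenvector along $A$ and $B$ and exploits the spectral gap of each piece, with extra care needed because most of $A$'s edges leave $A$ so the bipartite cross-term dominates. An alternative approach is to start from an explicit algebraic Ramanujan seed (such as an LPS graph or the point-graph of a generalized quadrangle) of appropriate independence ratio and amplify via random lifts, using Proposition~\ref{prop:eigenvalues-lifts} to preserve Ramanujan-ness and lift-monotonicity to preserve $\what{\alpha}$.
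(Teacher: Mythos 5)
The quiet-planting framing and the lift-monotonicity argument in your first two paragraphs match the paper's logic. However, your proof has a genuine gap at exactly the place you flag as ``the principal obstacle'': the construction of the Ramanujan base graph $H_d$ with $\what{\alpha}(H_d) \gtrsim 2/\sqrt{d}$. Both of the constructions you sketch are left unverified, and the first one is not obviously correct. Gluing a biregular bipartite Ramanujan graph between $A$ and $B$ to an ordinary Ramanujan graph on $A$ does not by itself control $\lambda_2$ of the union --- spectra of graph unions do not add in any useful way, and your appeal to ``a quadratic-form / interlacing argument'' is a plan, not a proof. It is also unclear that the second alternative (LPS seeds or generalized-quadrangle point-graphs) produces the needed independence ratio $\sim 2/\sqrt{d}$ for infinitely many $d$.

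The paper avoids all of this by remembering that Definition~\ref{def:random-lift} allows $H$ to be a \emph{multigraph}. It takes $H = H^{(k,d)}$ to be the complete multigraph on $k$ vertices with every edge of multiplicity $\frac{d}{k-1}$ (so $A_H = \frac{d}{k-1}(J_k - I_k)$, with $(k-1) \mid d$). The non-trivial eigenvalues are all $-\frac{d}{k-1}$, so $H$ is Ramanujan as soon as $k \ge \frac{d}{2\sqrt{d-1}} + 1$, and $\what{\alpha}(H) = 1/k$ since any two vertices are adjacent. Taking $k$ minimal gives $\what{\alpha}(H) = \frac{2}{\sqrt{d}} - O(1/d)$. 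This, combined with Proposition~\ref{prop:ramanujan-bounds} (robust lift-monotonicity of $\what{\alpha}$, proved separately as Proposition~\ref{prop:ind-set-rlm}) and Theorem~\ref{thm:certification}, gives the result in a few lines with no spectral-gluing argument needed. In short: your reduction is correct, but you should replace the speculative composite construction with the elementary complete-multigraph base graph, which makes the Ramanujan verification trivial.
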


\begin{theorem}[Independence number, $d \in \{3, 4\}$]
    \label{thm:ind-set}
    For $G \sim \sG(n, 3)$, the following hold:
    \begin{enumerate}
        \item $\what{\alpha}(G) \in [\num{0.445}, \num{0.451}]$ with high probability \cite{Csoka-2016-IndependentSetsCutsLargeGirthRegular,Harangi-2023-ReplicaBoundsIndependenceRandomRegular}.
        \item Hoffman's bound \eqref{eq:hoffman-ind} certifies $\what{\alpha}(G) \leq \num{0.485}$ with high probability.
        \item If Conjecture~\ref{conj:hardness} holds, then there is no polynomial-time algorithm certifying $\what{\alpha}(G) \leq \num{0.458}$ with high probability.
    \end{enumerate}
    For $G \sim \sG(n, 4)$, the following hold:
    \begin{enumerate}
        \item $\what{\alpha}(G) \in [\num{0.404}, \num{0.412}]$ with high probability \cite{Csoka-2016-IndependentSetsCutsLargeGirthRegular,Harangi-2023-ReplicaBoundsIndependenceRandomRegular}.
        \item Hoffman's bound \eqref{eq:hoffman-ind} certifies $\what{\alpha}(G) \leq \num{0.464}$ with high probability.
        \item If Conjecture~\ref{conj:hardness} holds, then there is no polynomial-time algorithm certifying $\what{\alpha}(G) \leq \num{0.428}$ with high probability.
    \end{enumerate}
\end{theorem}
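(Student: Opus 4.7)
}

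The plan is to dispatch the three parts of each case in turn: items (1) are taken directly from the cited literature, items (2) follow from a short computation plugging the spectral estimate of Proposition~\ref{prop:eigenvalues-random-regular} into Hoffman's bound \eqref{eq:hoffman-ind}, and items (3) are the main content, to be proved by the quiet planting strategy outlined in the introduction combined with Conjecture~\ref{conj:hardness}. For the Hoffman calculation, I would simply observe that whp $|\lambda_n(G)| = 2\sqrt{d-1} + o(1)$, so \eqref{eq:hoffman-ind} certifies $\what{\alpha}(G) \leq \frac{2\sqrt{d-1}}{d + 2\sqrt{d-1}} + o(1)$, which evaluates to $\num{0.485}\ldots$ for $d=3$ and $\num{0.464}\ldots$ for $d=4$.

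For the hardness items (3), I would proceed by contraposition. Suppose there is a polynomial-time algorithm $c(G)$ certifying $\what{\alpha}(G) \leq \tau$ whp over $G \sim \sG(n, d)$, for $\tau < \what{\alpha}(H)$, where $H$ is a $d$-regular Ramanujan graph that I will exhibit below. Because the independence number is lift-monotone in the normalized sense (any independent set in $H$ pulls back to an independent set in every lift of the same relative size), we have $\what{\alpha}(G) \geq \what{\alpha}(H)$ for every $G$ in the support of $\sL_m(H)$, and the same remains true after adding a sufficiently small noise perturbation, since the relative loss in independent set size caused by changing an $o(1)$ fraction of edges is itself $o(1)$. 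The test that outputs $\texttt{p}$ iff $c(G) > \tau$ then strongly detects between $\sG(km, d)$ and the noisy version of $\sL_m(H)$, contradicting the relevant case of Conjecture~\ref{conj:hardness}. This reduces the problem to exhibiting concrete finite Ramanujan graphs with large normalized independence number.

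The technical heart is therefore the construction, for each $d \in \{3, 4\}$, of a connected $d$-regular Ramanujan graph $H_d$ with $\what{\alpha}(H_d) > \num{0.458}$ (for $d=3$) and $> \num{0.428}$ (for $d=4$). I would search small cases by computer, as the authors indicate, guided by the heuristic that graphs with many short even cycles and balanced bipartite-like structure tend to have simultaneously small $|\lambda_n|$ and large independent sets. For each candidate $H_d$ found, verification is routine: diagonalize the adjacency matrix (a finite, explicit computation) to check the Ramanujan condition $\max\{|\lambda_2|, |\lambda_n|\} \leq 2\sqrt{d-1}$, and find a certifying independent set of the required size, either by inspection or by solving the (small) integer program exactly. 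The Ramanujan graphs are then displayed as Figures referenced in the paper, with the independent set and the spectrum listed.

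The main obstacle I anticipate is the construction step itself: for small $d$, Ramanujan graphs are scarce and the independence number ceiling is tight, so finding $H_d$ with $\what{\alpha}(H_d)$ pushed above the chosen thresholds $\num{0.458}$ and $\num{0.428}$ may require a careful computer search over many candidate sizes and topologies, rather than any closed-form algebraic family (the LPS-style constructions used in the large-$d$ regime of \cite{BBKMW-2020-SpectralPlantingColoring} do not fit here). A secondary subtlety is picking the right formal noise model in Conjecture~\ref{conj:hardness}: unlike the chromatic number, the normalized independence number is robust to an $o(1)$ fraction of edge changes, so it suffices to use a straightforward edge-switching noise, and no extra care is needed of the kind required in the proof of Theorem~\ref{thm:chromatic}.
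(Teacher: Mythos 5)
Your overall plan matches the paper's: items (1) are cited, items (2) follow by plugging $|\lambda_n(G)| = 2\sqrt{d-1}+o(1)$ into Hoffman's bound, and items (3) are proved by quiet planting via Theorem~\ref{thm:certification} together with exhibiting explicit Ramanujan base graphs. Your argument also correctly handles the noise model — the normalized independence number is robustly lift-monotone (this is Proposition~\ref{prop:ind-set-rlm}), so the ordinary (non-respectful) noise suffices.

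However, there is a genuine gap in the $d=3$ case. You propose to find a simple $3$-regular (non-bipartite, since we need Ramanujan, not merely bipartite Ramanujan) graph $H$ with $\what{\alpha}(H) \geq 0.458$ and then invoke ordinary lift-monotonicity. The paper could not (or did not) do this: the Ramanujan base graph it uses for $d=3$ (Figure~\ref{fig:d3-k2-example}) is a \emph{multigraph with self-loops}, and its ordinary independence number only gives $\what{\alpha}(H) = \frac{5}{12} \approx 0.417 < 0.458$, because the looped vertex cannot be placed in an independent set of $H$. The paper reaches the threshold $0.458$ only by introducing a \emph{modified} quantity $\what{\alpha}^{\prime}(H)$, based on the observation that a vertex with a single self-loop lifts to a fiber whose induced subgraph is a perfect matching, and \emph{half} of that fiber can still be added to the lifted independent set. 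This refinement gives $\what{\alpha}^{\prime}(H) = \frac{5 + 1/2}{12} \approx 0.458$, which is what makes the threshold achievable. Relying on ordinary lift-monotonicity alone, as you propose, would force you to either find a simple $3$-regular Ramanujan graph with $\what{\alpha} \geq 0.458$ (which the paper evidently could not do after a computer search over small cases) or weaken the stated threshold. This extra idea — carefully tracking how much of a looped fiber survives in the lifted independent set, rather than discarding the whole fiber — is exactly "the fuller advantage of Theorem~\ref{thm:certification}" the paper emphasizes, and is the step your proposal is missing. For $d=4$, by contrast, your plan works as stated: the paper uses a loop-free $7$-vertex $4$-regular Ramanujan graph with $\what{\alpha}(H) = \frac{3}{7} \approx 0.428$, and ordinary lift-monotonicity suffices.
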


\paragraph{Domination number}
The \emph{domination number} of a graph is size of the smallest dominating set, a set of vertices such that every vertex not contained in the set is adjacent to a member of the set.
The \emph{normalized domination number} of a graph $G$ is the domination number divided by $|V(G)|$, which we denote $\dom(G)$.
The domination number admits a certificate even simpler than the Hoffman-type spectral bounds: in a $d$-regular graph $G$, any vertex in a dominating set can only ``cover'' $d+1$ vertices of $G$, namely itself and its $d$ neighbors.
Therefore, we have the ``trivial'' bound
\begin{equation}
    \label{eq:trivial-dom}
    \dom(G) \geq \frac{1}{d + 1}.
\end{equation}
Yet, our methods show that the trivial bound is optimal non-asymptotically for any fixed $d$ (in contrast to the results on the $d \to \infty$ limit for the problems discussed above, where there is a $o(1)$ error term as $d \to \infty$), and establish a certification gap for all sufficiently large $d$.
\begin{theorem}[Domination number, any $d$]
    \label{thm:dom}
    There exist absolute constants $C_1, C_2 > 0$ such that, for any $d \geq 3$ and $G \sim \mathcal{G}(n,d)$, the following hold:
    \begin{enumerate}
        \item $C_1 \frac{\log d}{d} \leq \dom(G) \leq C_2 \frac{\log d}{d}$ with high probability \cite{alon2010high}.
        \item The trivial bound \eqref{eq:trivial-dom} certifies $\dom(G) \geq \frac{1}{d + 1}$ always.
        \item If Conjecture~\ref{conj:hardness} holds, then, for any $\epsilon > 0$, there is no polynomial-time algorithm certifying $\dom(G) \geq \frac{1}{d + 1} + \eps$ with high probability.
    \end{enumerate}
\end{theorem}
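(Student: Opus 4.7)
The plan is to address the three items separately, with the real content residing in item~(3). Item~(1) is the main theorem of \cite{alon2010high}. For item~(2), in any $d$-regular graph each vertex of a dominating set ``covers'' at most $d+1$ vertices (itself and its $d$ neighbors), so any dominating set has size at least $n/(d+1)$. For item~(3) I would apply the general quiet planting reduction (to be developed in Section~\ref{sec:certification-theory}) to the planting graph $H \colonequals K_{d+1}$. This is $d$-regular and Ramanujan: its nontrivial eigenvalues are all $-1$, so the spectral radius after removing the trivial eigenvalue is $1 \leq 2\sqrt{d-1}$ for every $d \geq 2$. Moreover any single vertex is a dominating set of $H$, so $\dom(H) = 1/(d+1)$ and the trivial bound is tight for $H$.

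The second step is to observe that $\dom$ is \emph{lift-monotone}: if $S \subseteq V(H)$ is dominating, then for any $G \in \sL_m(H)$ the union $\widetilde{S}$ of the fibers above $S$ is a dominating set in $G$. Indeed, every $v \in V(G) \setminus \widetilde{S}$ lies in a fiber of some $h \notin S$, and any neighbor $s \in S$ of $h$ in $H$ supplies, via the matching between the two fibers, a neighbor of $v$ inside $\widetilde{S}$. Consequently $\dom(G) \leq \dom(H)$, so every $G \in \sL_m(K_{d+1})$ satisfies $\dom(G) \leq 1/(d+1)$. To absorb noise, note that any $d$-regularity-preserving local noise process changing a fraction $\eta$ of the edges touches the neighborhoods of at most $2\eta d n$ vertices; adjoining all such vertices to $\widetilde{S}$ yields a dominating set of the noisy graph of relative size at most $\tfrac{1}{d+1} + 2\eta d$. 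In particular, for any $\eps > 0$ we may choose $\eta = \eta(\eps, d)$ small enough that a noisy lift satisfies $\dom(G) \leq \tfrac{1}{d+1} + \eps/2$.

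Assuming these ingredients, I would close the argument by contradiction. Suppose some polynomial-time algorithm $c$ certifies $\dom(G) \geq \tfrac{1}{d+1} + \eps$ with high probability when $G \sim \sG(n, d)$; then $c(G) \geq \tfrac{1}{d+1} + \eps$ w.h.p.\ on $\sG((d+1)m, d)$, whereas on a noisy lift of $K_{d+1}$ we have $c(G) \leq \dom(G) \leq \tfrac{1}{d+1} + \eps/2 < \tfrac{1}{d+1} + \eps$ deterministically. Thresholding $c$ at $\tfrac{1}{d+1} + \eps$ therefore yields a polynomial-time strong detection test between $\sG((d+1)m, d)$ and $\sL_m(K_{d+1})$ with the chosen noise level, contradicting Conjecture~\ref{conj:hardness}.

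The main obstacle is not the construction---the complete graph $K_{d+1}$ is a remarkably convenient Ramanujan witness whose single-vertex dominating set is tight against the trivial bound for every $d \geq 3$---but rather aligning the noise-robustness calculation with the specific noise model fixed in the formal version of Conjecture~\ref{conj:hardness} in Section~\ref{sec:prelim:noise-conjectures}. Because $\dom$ changes by $O(1/n)$ per edge switch, any reasonable tunable local noise model suffices; the bookkeeping just has to let us shrink the noise level and the certification slack $\eps$ simultaneously, which is straightforward provided the conjecture is quantified as ``hard for \emph{some} sufficiently small positive noise intensity'' rather than at a fixed intensity.
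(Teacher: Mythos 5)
Your proposal is correct and matches the paper's proof in every essential respect: both use $H = K_{d+1}$ as the Ramanujan base graph (noting its nontrivial eigenvalues all equal $-1$), both establish robust lift-monotonicity of $\dom$ via the fiber argument and a bookkeeping step for the noise-corrupted vertices, and both close via a thresholding reduction from certification to detection under Conjecture~\ref{conj:hardness}. The only difference is stylistic: you unwind the argument directly, whereas the paper first proves a standalone proposition that $-\dom$ is robustly lift-monotone and then invokes the abstract Theorem~\ref{thm:certification} together with Proposition~\ref{prop:ramanujan-bounds}. Your closing caveat about noise quantification is already handled by the paper's framing: Conjecture~\ref{conj:hardness-formal} is asserted for every $\epsilon > 0$, so in particular for the small $\epsilon$ needed, and the definition of $M(f)$ builds the ``for all $\epsilon$ sufficiently small'' quantifier directly into the quantity.
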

\noindent
We note that our analysis of certification is optimal; establishing a certification gap for, say, $d = 3$ is a matter of more precisely establishing the scaling of the true domination number, which to the best of our knowledge is an open problem.

\paragraph{Vertex expansion} The small-set vertex expansion $\Phi_{\epsilon}^{v}$ of $G = (V, E)$ on $n$ vertices is the minimum vertex expansion among subsets of vertices of size at most $\epsilon n$:
\begin{equation}
    \Phi_{\epsilon}^{v}(G) \colonequals \min_{\substack{S \subseteq V \\ 1 \leq |S| \leq \epsilon n}} \frac{|\{u \in V: u \text{ has a neighbor in } S\}|}{|S|}
\end{equation}
A classical result of Kahale \cite{Kahale-1995-SpectralBoundExpansion} gives the following spectral bound on the vertex expansion. Let $\Tilde{\lambda}(G) \colonequals \max(\lambda_2(G), 2\sqrt{d - 1})$. Then, for an absolute constant $C > 0$,
\begin{equation}
    \Phi_{\epsilon}^{v}(G) \geq \frac{d}{2}\left(1 - \sqrt{1 - \frac{4(d - 1)}{\Tilde{\lambda}(G)^2}}\right)\left(1 - C \cdot \frac{\log d}{\log \frac{1}{\epsilon}}\right). \label{eq:kahale-vertex}
\end{equation}
This remains the state-of-the-art polynomial-time certificate for the vertex expansion of $d$-regular graphs. Using results on number-theoretic constructions of Ramanujan graphs from \cite{kamber2022combinatorics}, we give evidence that Kahale's bound is optimal in the limit of $\varepsilon \to 0$, for an infinite sequence of $d$.

\begin{theorem}[Small-set vertex expansion]
    \label{thm:vertex-exp}
    For $G \sim \mathcal{G}\left(\left(\frac{n}{2}, \frac{n}{2}\right), d\right)$, the following hold:
    \begin{enumerate}
        \item $\Phi_{\epsilon}^{v}(G) \ge d - 1 - O_d\left(\frac{1}{\log \frac{1}{\epsilon}}\right)$ with high probability \cite{hoory2006expander}.
        \item Kahale's bound \eqref{eq:kahale-vertex} certifies $\Phi_{\epsilon}^{v}(G) \geq \frac{d}{2}\left(1 - O_d \left(\frac{1}{\log \frac{1}{\epsilon}}\right)\right) - o(1)$ with high probability. Taking $\epsilon \to 0$, this bound scales as $\Phi_{\epsilon}^{v}(G) \ge \frac{d}{2} - o(1)$.
        \item If Conjecture~\ref{conj:hardness} holds and $d = q+1$ where $q$ is a prime power, then for any $\delta, \eps > 0$, there is no polynomial-time algorithm certifying $\Phi_{\varepsilon}^v(G) \ge \frac{d}{2} + \delta$.
    \end{enumerate}
\end{theorem}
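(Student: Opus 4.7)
Parts~(1) and~(2) are immediate: (1) is the cited result of~\cite{hoory2006expander}, and~(2) follows by substituting Proposition~\ref{prop:eigenvalues-random-regular} into Kahale's bound~\eqref{eq:kahale-vertex}. The main content is part~(3), which I plan to prove by the quiet planting strategy described in Section~\ref{sec:intro:local-statistics}, exploiting lift-monotonicity of small-set vertex expansion.

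The key observation is that the quantity $|\{u : u \text{ has a neighbor in } S\}|/|S|$ is \emph{exactly} preserved under graph lifts: for $S \subseteq V(H)$ and $\tilde{S} \colonequals \pi^{-1}(S) \subseteq V(G)$ with $G \sim \sL_m(H)$, Definition~\ref{def:random-lift} gives $|\tilde{S}| = m|S|$ and a vertex $w \in V(G)$ has a neighbor in $\tilde{S}$ if and only if $\pi(w)$ has a neighbor in $S$, so the ratio is unchanged. Therefore, if I can produce a bipartite $d$-regular Ramanujan graph $H$ on $k$ vertices containing a subset $S$ with $|S|/k \leq \epsilon$ and vertex expansion ratio at most $d/2 + \delta/4$, it follows that $\Phi_\epsilon^v(G) \leq d/2 + \delta/4$ deterministically for every lift $G$ of $H$. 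A short robustness argument---each edge change in the $d$-regular noise model can add at most $O(1)$ vertices to the neighborhood of $\tilde{S}$, so an $\eta$ fraction of edge changes perturbs the expansion ratio by $O(\eta/(|S|/k))$---shows the bound upgrades to $d/2 + \delta/2$ after noise of sufficiently small magnitude (depending on $|S|/k$ and $\delta$, not on $m$).

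Given such $H$, the quiet planting argument concludes as follows. Suppose toward contradiction that a polynomial-time algorithm outputs, for every input $G$, a valid certificate $c(G) \leq \Phi_\epsilon^v(G)$ with $c(G) \geq d/2 + \delta$ with high probability on $G \sim \sG((n/2,n/2), d)$. Then the test that returns ``uniform'' if $c(G) > d/2 + 3\delta/4$ and ``lifted'' otherwise achieves strong detection: under $\sG((n/2,n/2), d)$ it returns ``uniform'' with high probability, while under the noisy $\sL_m(H)$ one always has $c(G) \leq \Phi_\epsilon^v(G) \leq d/2 + \delta/2 < d/2 + 3\delta/4$. This contradicts the bipartite case of Conjecture~\ref{conj:hardness}.

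The main obstacle is the first step: producing a \emph{bipartite} Ramanujan graph with an arbitrarily small vertex subset whose expansion approaches $d/2$. I would extract this from the analysis of number-theoretic Ramanujan graphs in~\cite{kamber2022combinatorics}, which studies the tightness of Kahale's bound for LPS-type Ramanujan graphs and thus restricts naturally to degrees $d = q+1$ with $q$ a prime power. The task is to convert an asymptotic tightness statement of the form ``$\Phi^v_\epsilon(H) \to d/2$ as $\epsilon \to 0$'' into an effective statement furnishing, for each pair $(\epsilon, \delta)$, a specific bipartite LPS-type graph together with an explicit witnessing subset; natural candidates are small balls with respect to the Cayley/word metric on the underlying group. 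If~\cite{kamber2022combinatorics} provides only asymptotic rather than fully quantitative statements, the explicit analysis of such balls would need to be carried out directly.
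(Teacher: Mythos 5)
Your proposal follows essentially the same route as the paper: observe that small-set vertex expansion is (robustly) lift-monotone, reduce certification to hypothesis testing via quiet planting, and supply a bipartite Ramanujan graph with a small set of expansion close to $d/2$ from the Kamber--Kaufman construction. The paper formalizes the first step as a proposition on robust lift-monotonicity of $-\Phi_\epsilon^v$ and then applies Theorem~\ref{thm:certification} and Proposition~\ref{prop:ramanujan-bounds}; your noise-perturbation argument and your reduction via a thresholding test are the unpacked versions of the same logic.

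The one place where you hedge is whether \cite{kamber2022combinatorics} really supplies an effective, finite statement rather than an asymptotic one; in fact it does, and no further analysis of Cayley balls is needed. The paper quotes Theorem~1.1 of \cite{kamber2022combinatorics} (as Theorem~\ref{thm:kamber}): for every prime power $q$, there is an infinite family of $(q+1)$-regular bipartite Ramanujan graphs $H$ containing a subset $S \subseteq V(H)$ with $|S| = O(\sqrt{|V(H)|})$ and $|\partial_v S|/|S| = d/2$ exactly. Since $|S|/|V(H)| \to 0$ along the family, for any fixed $\epsilon > 0$ you can choose a large enough $H$ in the family with $|S|/|V(H)| \le \epsilon$, and then, exactly as you argue, the lifted set in any $G \sim \sL_m(H)$ has the same relative size and the same expansion ratio $d/2$. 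So your plan goes through as written once you note that the reference already furnishes the explicit finite witnesses, and the argument is complete; the remaining difference from the paper is purely presentational (you inline the robustness and reduction steps instead of invoking the abstract machinery of Section~\ref{sec:certification-theory}).
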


\paragraph{Edge expansion} The similar but distinct quantity of small-set edge expansion $\Phi_{\epsilon}^{e}$ is similarly defined as the minimum edge expansion among subsets of vertices of size at most $\epsilon n$:
\begin{equation}
    \Phi_{\epsilon}^{e}(G) \colonequals \min_{\substack{S \subseteq V \\ 1 \leq |S| \leq \epsilon n}} \frac{|\{e \in E: e \text{ has exactly one endpoint in } S\}|}{|S|}
\end{equation}
Kahale also gives the following spectral bound on this quantity.
For another absolute constant $C > 0$,
\begin{equation}
    \Phi_{\epsilon}^{e}(G) \geq d - \left(1 + \frac{\Tilde{\lambda}(G)}{2} + \sqrt{\frac{\Tilde{\lambda}(G)^2}{4} - (d-1)}\right)\left(1 + C \cdot \frac{\log d}{\log \frac{1}{\epsilon}}\right). \label{eq:kahale-edge}
\end{equation}
Again, we use results from \cite{kamber2022combinatorics} to give evidence that this bound is optimal in the limit of $\varepsilon \to 0$ for an infinite sequence of $d$.

\begin{theorem}[Small-set edge expansion]
    \label{thm:edge-exp}
    For $G \sim \mathcal{G}\left(\left(\frac{n}{2}, \frac{n}{2}\right), d\right)$, the following hold:
    \begin{enumerate}
        \item $\Phi_{\epsilon}^{e}(G) \ge d - 2 - O_d\left(\frac{1}{\log \frac{1}{\epsilon}}\right)$ with high probability \cite{hoory2006expander}.
        \item Kahale's bound \eqref{eq:kahale-edge} certifies $\Phi_{\epsilon}^{e}(G) \geq d - (\sqrt{d-1} + 1)\left(1 + O_d \left(\frac{1}{\log \frac{1}{\epsilon}}\right)\right) - o(1)$ with high probability. Taking $\epsilon \to 0$, this bound scales as $\Phi_{\epsilon}^{e}(G) \ge d - 1 - \sqrt{d-1} - o(1)$.
        \item If Conjecture~\ref{conj:hardness} holds and $d = q^2+1$ where $q$ is a prime power, then for any $\delta, \eps > 0$, there is no polynomial-time algorithm certifying $\Phi_{\varepsilon}^e(G) \ge d - 1 - \sqrt{d-1} + \delta$.
    \end{enumerate}
\end{theorem}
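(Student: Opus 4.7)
Parts 1 and 2 are immediate from the cited results of \cite{hoory2006expander} and the Kahale bound \eqref{eq:kahale-edge} applied with $\Tilde\lambda(G) = 2\sqrt{d-1} + o(1)$ from Proposition~\ref{prop:eigenvalues-random-regular}. The content is Part 3, which I would establish by the quiet planting strategy outlined in Section~\ref{sec:certification-theory}. The key ingredient is the family of number-theoretic bipartite Ramanujan graphs from \cite{kamber2022combinatorics}: for $d = q^2+1$ with $q$ a prime power, this family provides, for every $\eta > 0$, a bipartite Ramanujan $d$-regular graph $H$ containing a vertex subset $S \subseteq V(H)$ with $|S|/|V(H)| \leq \eta$ and edge boundary $|\partial_E S|/|S| \leq d - 1 - \sqrt{d-1} + \eta$. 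Given the target $\eps, \delta > 0$ in the theorem, I would take such an $H$ with $\eta \leq \min(\eps, \delta/4)$ to serve as the base graph.

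The main technical point is that edge expansion of specific sets is exactly preserved under lifts. If $G \sim \sL_m(H)$ with covering map $\pi: V(G) \to V(H)$, then for any $S \subseteq V(H)$ the preimage $\pi^{-1}(S)$ has size $m|S|$ and its edge boundary in $G$ consists of exactly the $m$ lifts of each edge in $\partial_E S$, so $|\partial_E \pi^{-1}(S)|/|\pi^{-1}(S)| = |\partial_E S|/|S|$. Moreover $|\pi^{-1}(S)|/|V(G)| = |S|/|V(H)| \leq \eps$, so with probability one $\Phi_\eps^e(G) \leq d - 1 - \sqrt{d-1} + \delta/4$ under $\sL_m(H)$. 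To accommodate the noise model of Conjecture~\ref{conj:hardness}, I observe that swapping a fraction $\gamma$ of edges alters $|\partial_E \pi^{-1}(S)|$ by at most $\gamma \cdot dm|V(H)|$, which changes the edge expansion ratio by at most $\gamma d / (|S|/|V(H)|)$; choosing the noise level sufficiently small relative to $|S|/|V(H)|$ and $\delta$ keeps the perturbed edge expansion below $d - 1 - \sqrt{d-1} + \delta/2$ with high probability.

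Now suppose, for contradiction, that a polynomial-time algorithm $\mathcal{A}$ certifies $\Phi_\eps^e(G) \geq d - 1 - \sqrt{d-1} + \delta$ with high probability on $G \sim \sG((n/2,n/2), d)$. Since $\mathcal{A}$ is a valid certifier, it always outputs a true lower bound, so on the noisy planted model $\sL_m(H)$ it outputs at most $d - 1 - \sqrt{d-1} + \delta/2$ with high probability. Thresholding $\mathcal{A}$ at $d - 1 - \sqrt{d-1} + 3\delta/4$ therefore achieves strong detection between $\sG((n/2,n/2),d)$ and the noisy $\sL_m(H)$, contradicting the bipartite Ramanujan case of Conjecture~\ref{conj:hardness}. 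This reduction is the direct edge-expansion analog of the arguments used for the other certification-gap theorems, and I expect it to fit into the general framework of Section~\ref{sec:certification-theory} with essentially no change.

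The main obstacle is extracting from \cite{kamber2022combinatorics} a sufficiently clean statement of small-set edge expansion that saturates the Moore-type bound $d - 1 - \sqrt{d-1}$ in the limit $\eta \to 0$, and verifying that their constructions provide genuinely bipartite Ramanujan base graphs (rather than merely one-sided spectral bounds on the non-bipartite double cover). Modulo that input, the rest is standard quiet planting, but the noise-robustness estimate requires enough care to confirm that the noise level permitted by Conjecture~\ref{conj:hardness} is still nonvanishing once one fixes $\eps, \delta$ and the base graph $H$ with $|S|/|V(H)| \leq \eta$.
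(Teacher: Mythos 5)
Your proposal is correct and follows essentially the same route as the paper: plant a random lift of the Morgenstern/Kamber--Kassabov bipartite Ramanujan graph with the small non-expanding set, argue that edge expansion of the lifted set is preserved exactly and degrades only slightly under noise, and reduce certification to detection. The paper packages the last two steps through the general machinery of Section~\ref{sec:certification-theory} (the proposition that $-\Phi_\epsilon^e$ is robustly lift-monotone, then Proposition~\ref{prop:ramanujan-bounds} and Theorem~\ref{thm:certification}), whereas you unroll the reduction by hand, but the content is identical; your worry about whether \cite{kamber2022combinatorics} gives genuinely bipartite Ramanujan base graphs is settled by their Theorem~1.7, which is exactly what the paper invokes.
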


\subsection{Related Work}
\label{sec:related}

\paragraph{Local statistics}
As we have mentioned, the local statistics hierarchy is heavily motivated by the pseudocalibration proof technique for sum-of-squares lower bounds, due to \cite{BHKKMP-2019-PlantedClique} and since extended to other settings by works such as \cite{GJJPR-2020-SK,PR-2020-MachinerySOS,JPRTX-2021-SOSSparseIndependentSet}.
Its original use in \cite{banks2021local} was as a more robust algorithm for detection under a version of the stochastic block model over regular graphs.
The later work \cite{BBKMW-2020-SpectralPlantingColoring} used local statistics to argue for the hardness of a special case of the problem we propose, which can also be viewed as detection in an ``equitable'' variant of the stochastic block model.

\paragraph{Quiet planting for certification lower bounds}
Earlier work, such as that on sum-of-squares relaxations for the planted clique problem culminating in \cite{BHKKMP-2019-PlantedClique}, somewhat conflated certification with hypothesis testing, since in that case a straightforward planted distribution (just adding a clique into a random graph) shows an optimal lower bound on certification.
Arguing as we do about the hardness of average-case certification by more carefully quietly planting unusual structures originates (to the best of our knowledge) in the approach of \cite{BKW-2019-ConstrainedPCA} to certifying bounds on certain Gaussian optimization problems.
This Gaussianity is essential to that approach of \emph{spectral planting}, which involves planting unusual near-eigenvectors in a random matrix by rotating its frame of eigenvectors, in contrast to merely adding a large low-rank perturbation.

One of the innovations of \cite{BBKMW-2020-SpectralPlantingColoring} is to implement a similar idea in the adjacency matrix of a random graph without spoiling the graph structure using random lifts.
In fact, as discussed there, random lifts may be seen as a natural generalization of the previous spectral planting approach because of the property we have mentioned that graph lifts preserve the eigenvalues of the base graph while adding further random eigenvalues around them.
A similar strategy for analyzing certification was used by \cite{BKW-2020-PositivePCA,KVWX-2023-LowDegreeColoringClique}, while \cite{Kunisky-2023-OptimalityGlauberDynamicsIsing} found that a quiet planting argument could also be used to argue about a sampling problem.\footnote{The work \cite{KVWX-2023-LowDegreeColoringClique} also includes an intriguing result, though specific to a different framework involving low-degree polynomial algorithms, showing a kind of ``completeness'' of quiet planting: whenever a certification or refutation task of the kind we are discussing is hard, then this hardness is witnessed by a hard hypothesis testing problem.}

Generally speaking, while it would be encouraging to also have direct lower bounds against certification or sampling algorithms in these situations, it appears that quiet planting arguments allow us to give evidence of computational hardness in situations that are otherwise inaccessible.
In our case of certification, proving lower bounds against the sum-of-squares hierarchy for, e.g., the maximum independent set in a random regular graph of constant degree is a long-standing and seemingly technically challenging open problem (see the discussion of open problems in \cite{JPRTX-2021-SOSSparseIndependentSet}) which our approach allows us to circumvent while still producing evidence of hardness.

\paragraph{Quiet planting with lifts of Ramanujan graphs}
We give a few more details about the specific approach of \cite{BBKMW-2020-SpectralPlantingColoring}.
Their technique may be viewed as the special case of ours that uses Ramanujan graphs that are complete graphs $H$ with every vertex having some number $a$ of loops and every edge between distinct vertices repeated some number $b$ times.
Thus their adjacency matrix is $(a - b)I + bJ$ and it is straightforward to determine when such a graph is Ramanujan.
The drawback of this restriction is that, if $H$ has $k$ vertices, we must have $d = a + (k - 1)b$, so for small $d$ there are very few options to draw from in this class.
Say for $d = 3$, there is just a single 3-regular Ramanujan $H$ of this class, which is $H = K_4$.
In contrast, our approach allows for any of the many known Ramanujan 3-regular graphs, or sporadic ones found by hand or by computer experiment, to be used for quiet planting.

\subsection{Open Problems}

We mention several open problems that our work motivates:
\begin{enumerate}
    \item It is an outstanding technical challenge to prove lower bounds against the local statistics hierarchy for $D_x > 2$.
    This is analogous to the difficulty (now overcome in some cases) for proving higher-degree lower bounds in the sum-of-squares hierarchy. The setting of detecting graph lifts is a natural testbed for techniques for this question.
    \item Similarly, it is an open problem to prove sum-of-squares lower bounds for arbitrary constant degrees of the hierarchy for any of the quantities considered here for random $d$-regular graphs with constant $d$ as $n \to \infty$.
    \item Other forms of evidence for Conjecture~\ref{conj:hardness} would also be encouraging.
    One interesting direction is to show that low-degree polynomials cannot detect lifts of Ramanujan graphs (see, e.g., \cite{KWB-2022-LowDegreeNotes} for a survey of this algorithmic framework). In that context, this raises another intriguing technical challenge because, unlike the vast majority of cases treated in this framework, neither of the distributions $\sG(n, d)$ nor $\sL_m(H)$ is a product measure.
    \item As we discuss in Section~\ref{sec:certification-theory}, our approach gives an abstract lower bound against certification up to the extremal value of a given quantity over Ramanujan graphs.
    This suggests two interesting questions: first, what is the largest cut, largest independent set, smallest chromatic number, and so forth over all Ramanujan graphs of a given degree?
    And second, are there algorithms (which would have to improve upon the Hoffman-type bounds we have discussed) that achieve these thresholds?
    \item In principle, our approach is not restricted to lift-monotone properties.
    As we discuss in Remark~\ref{rem:beyond-lift-monotone}, we may execute our strategy so long as we have some procedure for controlling how a quantity behaves under random lifts.
    While monotonicity is mostly enough for our purposes, prior work has studied this more precise question for independent sets, the chromatic number, and edge expansion \cite{ALM-2002-RandomLiftsIndependenceChromatic,AL-2006-RandomLiftsExpansion}, for instance.
    The applications we propose perhaps give a new motivation to revisit these matters.
    \item There are also many other quantities to which it would be natural to apply our tools and which are lift-monotone or some simple variation thereof.
We have chosen ones for which at least the true value for random regular graphs is relatively well-understood and mostly ones for which there is a known benchmark certification algorithm, but other, less widely studied functions of a graph that seem amenable to our methods include the logarithmic Sobolev constant~\cite{DSC-1996-LSIFiniteMarkovChains,FF-2023-SOSProofsLSI}, the $\lambda$-clustering coefficient~\cite{Bilu-2006-ExtensionsHoffmanBound}, the $\psi$-covering coefficient~\cite{Bilu-2006-ExtensionsHoffmanBound}, the $k$-independence number~\cite{ACT-2016-SpectralBoundsKIndependence}, and the size of the minimum independent dominating set~\cite{DW-2002-IndependentDominatingSetCubic}.
\end{enumerate}

\begin{sidewaystable}
    \begin{center}
    \begin{tabular}{lllll}
    \hline \\[-0.75em]
    \textbf{Problem} & $d$ & \textbf{True Value} & \textbf{Lower Bound} & \textbf{Certificate} \\[0.5em]
    \hline \\[-0.75em]
    Max $t$-cut & $\to \infty$ & $\frac{t - 1}{t}(1 + \frac{P_*^{(t)}}{\sqrt{d}})$ \cite{Sen-2018-OptimizationSparseHypergraph} & $\frac{t - 1}{t}(1 + \frac{2}{\sqrt{d}})$ \cite{BBKMW-2020-SpectralPlantingColoring} & $\frac{t - 1}{t}(1 + \frac{2}{\sqrt{d}})$ \cite{Hoffman-1970-Eigenvalues} \\
    Max 2-cut & 3 & $\in [\num{0.906}, \num{0.925}]$ \cite{GL-2018-MaxCutSparseRandomGraphs,COLMS-2022-MaxCutRandomRegularIsing} & 0.944 $^{(\text{n})}$ & 0.971 \\
    & 4 & $\in [0.833, 0.869]$ \cite{DDSW-2003-MaxMinBisection34Regular,COLMS-2022-MaxCutRandomRegularIsing} & 0.875 $^{(\text{n})}$ & 0.933
    \\[0.5em]
    \hline \\[-0.5em]
    Max independent set & $\to \infty$ & $\frac{2\log d}{d}$ \cite{FL-1992-IndependenceChromaticRandomRegular} & $\frac{2}{\sqrt{d}}$ \,$^{(\text{n})}$ & $\frac{2}{\sqrt{d}}$ \cite{Haemers-2021-HoffmanRatioBound} \\
     & 3 & $\in [\num{0.445}, \num{0.451}]$ \cite{Csoka-2016-IndependentSetsCutsLargeGirthRegular,Harangi-2023-ReplicaBoundsIndependenceRandomRegular} & 0.458 $^{(\text{n})}$ & 0.485 \\
     & 4 & $\in [\num{0.404}, \num{0.412}]$ \cite{Csoka-2016-IndependentSetsCutsLargeGirthRegular,Harangi-2023-ReplicaBoundsIndependenceRandomRegular} & 0.428 $^{(\text{n})}$ & 0.464
    \\[0.5em]
    \hline \\[-0.5em]
    Min coloring & $\to \infty$ & $\frac{d}{2\log d}$ \cite{FL-1992-IndependenceChromaticRandomRegular} & $\frac{\sqrt{d}}{2}$ \cite{BBKMW-2020-SpectralPlantingColoring} & $\frac{\sqrt{d}}{2}$ \cite{Hoffman-1970-Eigenvalues} \\
    & 7 & $\in \{4, 5, 6\}$ \cite{AM-2004-ColoringRandomRegular} & 3 $^{(\text{n})}$ & 3
    \\[0.5em]
    \hline \\[-0.5em]
    Min dominating set & any & $\Theta(\frac{\log d}{d})$ \cite{alon2010high} & $\frac{1}{d + 1}$ $^{(\text{n})}$ & $\frac{1}{d + 1}$ (folklore)
    \\[0.5em]
    \hline \\[-0.5em]
    Vertex expansion & $q+1$ & $d - 1$ \cite{hoory2006expander} & $\frac{d}{2}$ $^{(\text{n},\text{b})}$ & $\frac{d}{2}$ \cite{Kahale-1995-SpectralBoundExpansion} \\
    ($\epsilon n$ small-set) & & & &
    \\[0.5em]
    \hline \\[-0.5em]
    Edge expansion & $q^2+1$ & $d - 2$ \cite{hoory2006expander} & $d-1-\sqrt{d-1}$ $^{(\text{n},\text{b})}$ & $d-1-\sqrt{d-1}$ \cite{Kahale-1995-SpectralBoundExpansion} \\
    ($\epsilon n$ small-set) & & & &
    \\[0.75em]
    \hline
    \end{tabular}
    \end{center}
    \caption{A summary of our results on applications of Conjecture~\ref{conj:hardness} to specific certification problems over random $d$-regular graphs. All entries refer to the $n \to \infty$ limit, with high probability in this limit, and omit an additive $o_{n \to \infty}(1)$ term, and the $d \to \infty$ entries omit a multiplicative factor of $(1 + o_{d \to \infty}(1))$. The entries concerning expansion omit an additive $o_{\epsilon \to 0}(1)$ term. $q$ denotes a prime power. Results that are \textbf{n}ew in this paper are marked $(\text{n})$, and results about \textbf{b}ipartite graphs are marked $(\text{b})$. Numerical entries in the last column without a citation are obtained by evaluating the relevant Hoffman-type bound for specific $d$ and we do not cite the relevant work again.}
    \label{table:results}
\end{sidewaystable}

\section{Preliminaries}

Let us emphasize an important initial point: the \emph{only} time we will discuss multigraphs is when they appear as base graphs of random lifts.
All other distributions will always be assumed to be conditioned on their output being simple.
(Our claims should still hold without this conditioning, but we apply the conditioning to apply our results to classical problems about random simple $d$-regular graphs.)

\subsection{Notation}

We write $I_n$ and $J_n$ for the $n \times n$ identity and all-ones matrices, respectively, and $\boldsymbol{1}_n$ for the all-ones vector of length $n$.
We omit the subscripts for these when the dimension is clear from context.

\subsection{Lifts and Random Lifts}
\label{sec:prelim:lifts}

We review some background about lifts and their random versions, taking care to make a definition that is compatible with the work \cite{BC-2019-EigenvaluesRandomLifts} that will play an important technical role in our results.

\begin{definition}[$m$-lift]
    Let $H$ be a $d$-regular multigraph on $k$ vertices, and let $M$ be its adjacency matrix. A graph $G = (V, E)$ is an \emph{$m$-lift} of $H$ if $|V| = km$ and there exists a balanced partition $\sigma: V \to [k]$ (i.e., having $|\sigma^{-1}(i)| = m$ for each $i \in [k]$) such that:
    \begin{itemize}
        \item For every $i\in [k]$, $\sigma^{-1}(i)$ induces an $M_{i,i}$-regular graph on $G$.
        \item For every pair of $\{i,j\} \in \binom{[k]}{2}$, $\sigma^{-1}(i) \sqcup \sigma^{-1}(j)$ induces an $M_{i,j}$-regular bipartite graph on $G$, whose bipartition is the one into $\sigma^{-1}(i)$ and $\sigma^{-1}(j)$.
    \end{itemize}
\end{definition}

\begin{definition}[Fiber]
    In the above context, the subset $\sigma^{-1}(i) \subset V(G)$ is called the \emph{fiber} of $i \in V(H)$.
\end{definition}

We have given a definition of random lifts in Definition~\ref{def:random-lift} in the Introduction.
Let us outline how this fits into the framework often used for lifts in the literature (especially mathematics and random matrix theory literature).
It is more common to take $H$ to be a directed graph (in works such as \cite{ALMR-2001-RandomLifts}, choosing an arbitrary direction for each edge is mentioned).
The lifting procedure may then be viewed as associating a permutation in $S_m$ with each directed edge and inserting edges into the graph according to that permutation.
This allows random lifts to be described as operations on random permutation matrices, which is the perspective taken in random matrix theory works on random lifts like \cite{BC-2019-EigenvaluesRandomLifts}.

So far this setup is entirely equivalent to ours which did not mention directed graphs.
A nuance arises when dealing with self-loops, however.
To be more precise, the above framework also specifies a direction-reversing involution of the edges, $e \mapsto e^*$ with $e^{**} = e$.
The permutations $\sigma_e$ associated to the edges should then satisfy $\sigma_{e^*} = \sigma_e^{-1}$.
Again, this is helpful for describing the adjacency matrix: one may place the permutation matrix associated to $\sigma_e$ in the block associated to the directed edge $e$, and this involution-inversion rule will ensure that the resulting matrix is symmetric.

If $e$ is a loop in the sense that its source and terminal vertices are the same, but $e$ and $e^*$ are viewed as two different edges, then this protocol will place a random 2-regular graph on the fiber of this single vertex.
That is not what our Definition~\ref{def:random-lift} prescribed for loops, however.
In fact, this permutation framework allows for two different kinds of loops: when the source and terminal vertices of $e$ are the same, $e$ and $e^*$ can be two different directed edges, as we have just discussed, or \emph{the same} directed edge---the latter choice still satisfies the involution rule.
In this second situation, there is just one permutation $\sigma_e$ involved, and the involution-inversion property requires $\sigma_e = \sigma_e^{-1}$, i.e., that $\sigma$ is an involution (as a permutation).
This means that adding the edges corresponding to $\sigma_e$ places a perfect matching, or a random 1-regular graph, on the fiber of the one source and terminal vertex of $e$.

This is all to say that, in the setup of Definition~\ref{def:random-lift}, we treat all self-loops as the latter type of self-loop.
This choice is permitted by the setup of \cite{BC-2019-EigenvaluesRandomLifts} as well, and therefore Proposition~\ref{prop:eigenvalues-lifts} holds as stated.

\begin{remark}[Random regular graph vs.\ union of random matchings]
    The result of several self-loops in the base graph is a union of several uniformly random perfect matchings in a random lift.
    In general, the law of the union of $d$ random perfect matchings is similar but not identical to that of a random $d$-regular graph drawn from the configuration model (if both are conditioned to be simple).
    However, these models are \emph{contiguous} and thus the kinds of results we are working with transfer between them; see \cite{Janson-1995-RandomRegularGraphsContiguity}.
\end{remark}

\subsection{Noise Models and Precise Conjectures}
\label{sec:prelim:noise-conjectures}

\begin{definition}[Graph distance]
    For $d$-regular graphs $G$, $H$ on the same number of vertices, the distance $\Delta(G, H) = \frac{|E(G) \triangle E(H)|}{2n}$ is a metric on $d$-regular graphs of the same size, where $n = |V(G)| = |V(H)|$. This is equal to the minimum number of edges that need to be changed to transform $G$ into $H$, divided by the size of the vertex set.
\end{definition}

\begin{definition}[Random noise]
    Let $\eps \in (0, 1)$, and $G$ be a $d$-regular graph on $n$ vertices.
    We write $\sS_{\epsilon}^{\rand}G$ for the random graph that is formed by deleting $\lfloor \epsilon n \rfloor$ edges uniformly at random, and then adding back $\lfloor \epsilon n \rfloor$ edges uniformly at random conditional on the final graph being $d$-regular.
    We abuse the notation slightly and write $\sS_{\epsilon}^{\rand}\sG$ where $\sG$ is the law of a random graph for the law of the random graph drawn from $\sG$ to which random noise is then applied.
\end{definition}

\begin{definition}[Bipartite random noise]
    Let $\eps \in (0, 1)$, and $G$ be a bipartite $d$-regular graph on $n$ vertices.
    We write $\sS_{\epsilon}^{\rand, \bi}G$ for the random bipartite graph that is formed by deleting $\lfloor \epsilon n \rfloor$ edges uniformly at random, and then adding back $\lfloor \epsilon n \rfloor$ edges uniformly at random between the bipartition of $G$ conditional on the final graph being $d$-regular.
    We again abuse the notation slightly and write $\sS_{\epsilon}^{\rand, \bi}\sG$ where $\sG$ is the law of a random bipartite graph for the law of the random bipartite graph drawn from $\sG$ to which random noise is then applied.
\end{definition}

\begin{definition}[Respectful random noise]
    Let $\eps \in (0, 1)$.
    Let $H$ be a $d$-regular multigraph with adjacency matrix $M$.
    We write $\widetilde{\sS}_{\epsilon}^{\rand} \sL_m(H)$ for the random graph that is formed by drawing $G \sim \sL_m(H)$, and then changing $\lfloor \epsilon n \rfloor$ edges as in the definition of random noise, conditional on none of added edges going between the fibers of $i, j \in V(H)$ if $M_{ij} = 0$.
    We likewise define the respectful bipartite random noise $\widetilde{\sS}_{\epsilon}^{\rand, \bi} \sL_m(H)$.
\end{definition}

\begin{definition}[Adversarial noise]
    Let $\epsilon \in (0, 1)$. Let $H$ be a $d$-regular multigraph.
    We write $\sS_{\epsilon}^{\adv} \sL_m(H)$ for \emph{any} distribution formed by sampling $G \sim \sL_m(H)$, having an adversary (with access to additional independent random bits) draw a $G^{\prime}$ from a distribution $\sS_{\epsilon}^{\adv} G$ of $d$-regular graphs supported on those $G^{\prime}$ with $\Delta(G, G^{\prime}) \le \epsilon$, and outputting $G^{\prime}$.
    Formally speaking, $\sS_{\epsilon}^{\adv} \sL_m(H)$ is a \emph{set} of probability distributions, and when we write $G \sim \sS_{\epsilon}^{\adv} \sL_m(H)$, we mean that $G$ is drawn from any of these.
    In particular, when we say that a hypothesis testing problem is hard under adversarial noise, we mean that \emph{there exists} a choice of adversary that makes the problem hard.
\end{definition}

\begin{definition}[Respectful adversarial noise]
    In the setting of adversarial noise operators, if $H$ is a $d$-regular multigraph with adjacency matrix $M$, we write $\widetilde{\sS}_{\epsilon}^{\adv} \sL_m(H)$ for the same family of probability distributions subject to the additional constraint of the adversary never adding any edges between the fibers of $i, j \in V(H)$ if $M_{ij} = 0$.
\end{definition}

We will formulate several precise variations of our main conjecture allowing for different noise models.
These conjectures are adapted to different applications.

\begin{conjecture}
    \label{conj:hardness-formal}
    Let $H$ be a $d$-regular multigraph on $k$ vertices and $\epsilon > 0$.
    We propose the following conjectures:
    \begin{enumerate}
    \item If $H$ is Ramanujan, then there is no polynomial-time algorithm that achieves strong detection between $\sS_{\epsilon}^{\rand} \sL_m(H)$ and $\sG(n, d)$.
    The same holds with $\sS_{\epsilon}^{\rand}$ replaced by any of $\widetilde{\sS}_{\epsilon}^{\rand}$, $\sS_{\epsilon}^{\adv}$, or $\widetilde{\sS}_{\epsilon}^{\adv}$.
    \item If $H$ is bipartite Ramanujan, then there is no polynomial-time algorithm that achieves strong detection between $\sS_{\epsilon}^{\rand, \bi} \sL_m(H)$ and $\sG((\frac{n}{2}, \frac{n}{2}), d)$.
    The same holds with $\sS_{\epsilon}^{\rand, \bi}$ replaced by $\widetilde{\sS}_{\epsilon}^{\rand, \bi}$, $\sS_{\epsilon}^{\adv}$, or $\widetilde{\sS}_{\epsilon}^{\adv}$.
    \end{enumerate}
\end{conjecture}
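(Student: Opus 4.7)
The plan is to prove conditional hardness by a randomized polynomial-time reduction from a canonical average-case distinguishing problem, since unconditional hardness against all polynomial-time algorithms would require circuit lower bounds beyond current techniques. The most promising reduction source is detection in the sparse equitable stochastic block model near the Kesten-Stigum threshold, which is widely conjectured hard and which already appears as the special case of the conjecture in which $H$ is the complete multigraph on $k$ vertices decorated with uniform loop multiplicities (the regime handled by \cite{BBKMW-2020-SpectralPlantingColoring}). Granting that special case as a hypothesis, the goal is then to reduce lift detection for an arbitrary Ramanujan $H$ to lift detection for this canonical Ramanujan base graph, through a distribution-preserving ``fiber-rewiring'' gadget.

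Concretely, I would proceed in three stages. First, given a target $d$-regular Ramanujan $H$ on $k$ vertices and an input graph drawn from either side of the canonical source problem, produce a candidate $km$-vertex output by relabelling the canonical fibers according to a uniformly random bijection to $V(H)$ and then subtracting or adding auxiliary uniformly random bipartite / loop matchings between each fiber pair so that the induced inter-fiber multiplicities match the adjacency matrix of $H$. Second, under the null hypothesis, argue that a configuration-model contiguity or Markov-chain mixing argument identifies the resulting law with $\sG(km, d)$ (respectively $\sG((\frac{km}{2},\frac{km}{2}), d)$) up to $o(1)$ total variation, so that the reduction is null-preserving. Third, under the planted hypothesis, argue that the fiber structure survives the rewiring and that the output law is $o(1)$-close in total variation to $\sL_m(H)$ with the prescribed noise. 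Because $o(1)$ slack in total variation cannot degrade strong detection, any polynomial-time distinguisher for the reduction output would then distinguish the canonical source and violate its conjectured hardness. The bipartite variants would be handled analogously by working inside the bipartite double cover and using bipartite Ramanujan base graphs.

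The main obstacle is executing step three compatibly with the adversarial noise operators $\sS_{\epsilon}^{\adv}$ and $\widetilde{\sS}_{\epsilon}^{\adv}$: the adversary sees the reduction's output, so the gadget ``seams'' introduced by fiber-rewiring are a natural target for concentrated corruption that the reduction cannot anticipate. I would first attempt the random noise cases $\sS_{\epsilon}^{\rand}$ and $\widetilde{\sS}_{\epsilon}^{\rand}$, for which noise commutes with permutation-equivariant post-processing and the reduction analysis cleanly goes through; the adversarial cases would require an additional expander-mixing argument to show that no polynomially-large set of edges can encode fiber information surviving the rewiring. A second, more fundamental obstacle is that even the canonical source problem is itself only conjectured hard, so any such reduction yields conditional rather than absolute hardness; in the worst case the proof degrades to a statement of the form ``Conjecture~\ref{conj:hardness-formal} for general Ramanujan $H$ follows from Conjecture~\ref{conj:hardness-formal} in the complete-multigraph case.'' For this reason I would in parallel pursue independent evidence, strengthening the Local Statistics lower bounds of Section~\ref{sec:intro:local-statistics} with matching low-degree polynomial lower bounds under each of the four noise models, so that the conjecture is supported by multiple complementary algorithmic frameworks even where an unconditional reduction is unavailable.
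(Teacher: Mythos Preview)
The statement you are attempting to prove is a \emph{conjecture}, not a theorem; the paper does not prove it and does not claim to. What the paper does is accumulate evidence for it by proving matching lower bounds against the degree-$(2,D)$ Local Statistics hierarchy (Theorem~\ref{thm:local-statistics-informal} and its formal version Theorem~\ref{thm:local-statistics-noiseless}), showing that this particular family of SDP-based tests succeeds if and only if $H$ fails to be Ramanujan. There is no proof in the paper for you to compare against, and any unconditional proof of Conjecture~\ref{conj:hardness-formal} would entail complexity-theoretic separations that are well out of reach.

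Setting that aside and evaluating your reduction on its own terms: the fiber-rewiring gadget has a structural gap. To ``relabel the canonical fibers'' and then add or delete matchings so that inter-fiber multiplicities match the adjacency matrix of $H$, the reduction must know which vertices lie in which fiber. But the fiber partition $\sigma$ is precisely the hidden planted structure in the source problem; the distinguisher receives only the graph $G$, not $(x,G)$. If instead you choose a \emph{fresh} uniformly random balanced partition and rewire with respect to that, then under the planted source your random partition will be essentially uncorrelated with the true fibers, and the rewired output will not be close in total variation to $\sL_m(H)$: the planted matchings of the source lift will be scrambled across your new blocks rather than aligned with them. So step three fails not because of adversarial noise on the seams, but already at the level of random noise, and already before any noise is applied. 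A workable reduction would need to transport the \emph{unknown} planted partition intact, which is exactly what a distinguisher cannot do by hypothesis.
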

\noindent
We note that these conjectures admit an ordering by their strength: adversarial noise models include the corresponding random noise models as special cases of the choice of adversary and thus make weaker claims of computational hardness than any of the random noise models (both unipartite and bipartite), and non-respectful noise models make weaker claims of hardness than respectful noise models.
(At a high level, the more permissive the noise model the more we expect the associated problem to be hard.)

\begin{remark}
    The prior work \cite{BBKMW-2020-SpectralPlantingColoring} instead proposed applying noise by applying a random sequence of \emph{edge switchings} where a pair of edges $\{a, b\}, \{c, d\}$ is replaced by the pair $\{a, c\}, \{b, d\}$. This is a natural and more local procedure, but creates some difficulties in applying respectful versions of the noise operator; for instance, applying ``respectful switchings'' to a lift of a base graph (ones that do not introduce edges between the fibers of $i, j \in V(H)$ when $i$ and $j$ are not adjacent in $H$) will never create a graph that is not again a lift of the same graph $H$ (and thus that does not have eigenvalues exactly equal to those of $H$)  unless the base graph contains a 4-cycle.
\end{remark}

\section{Local Statistics Algorithm: Proof of Theorem~\ref{thm:local-statistics-informal}}

In this section, we will prove the hardness for local statistics (LoSt) algorithms stated in Theorem \ref{thm:local-statistics-informal} as supporting evidence for Conjecture \ref{conj:hardness-formal}.

In the following analysis, $H$ will be a connected $d$-regular multigraph on $k$ vertices (the case of disconnected $H$ is trivial since in this case $H$ has multiple eigenvalues equal to $d$).
For simplicity, we will denote $\mathbb{P}_n \colonequals \mathcal{L}_m(H)$ where $n = km$ and $\mathbb{Q}_n \colonequals \mathcal{G}(n,d)$ or $\mathbb{Q}_n \colonequals \mathcal{G}\left(\left(\frac{n}{2}, \frac{n}{2}\right),d\right)$ depending on whether $H$ is bipartite. Note that $\mathbb{P}_n$ are defined without any noise.
At the end of this section, we will argue that the LoSt algorithm is robust under the noise operators defined in the previous section.
In this section we will abbreviate ``with high probability'' to ``w.h.p.'', which always refers to the limit $n \to \infty$.

\subsection{Non-Backtracking Walks}
Let $A_G$ be the adjacency matrix for a $d$-regular multigraph $G$. A \emph{length-$s$ non-backtracking walk} on $G$ is a sequence $(v_0, v_1, \dots, v_s)$ of vertices that forms a walk on $G$ such that for every $i \in [s-1]$, $v_{i-1} \ne v_{i+1}$, i.e., the walk does not ``backtrack.''

The non-backtracking matrices are the matrices $A^{(s)}_G$ for $s \in \mathbb{N}$, whose $u,v$ entry counts the number of non-backtracking walks on $G$ from vertex $u$ to vertex $v$ of length $s$. For $d$-regular graphs, these matrices satisfy the following simple recurrence:
\begin{align*}
    A^{(0)}_G &= I, \\
    A^{(1)}_G &= A_G, \\
    A^{(2)}_G &= A^2_G - dI, \\
    A^{(s+1)}_G &= A_GA^{(s)}_G - (d-1)A^{(s-1)}_G \text{ for } s \ge 2.
\end{align*}

In particular, $A^{(s)}_G = q_s(A_G)$, for a sequence of monic univariate polynomials $q_s \in \mathbb{R}[z]$ called \emph{the non-backtracking polynomials}, which also form a sequence of orthogonal polynomials with respect to the Kesten-McKay measure:

\begin{definition}[Kesten-McKay measure]
    The \emph{Kesten-McKay measure} with parameter $d \geq 2$, denoted $\mu_{\mathrm{KM}(d)}$, is the probability measure supported on $[-2\sqrt{d - 1}, 2\sqrt{d - 1}]$ with density
    \begin{equation}
        \frac{d}{2\pi} \frac{\sqrt{4(d-1) - x^2}}{d^2 - x^2} \One\{|x| \leq 2\sqrt{d-1}\}.
    \end{equation}
\end{definition}

The following result is useful when dealing with inner products between non-backtracking matrices, whose proof can be found in Lemma 4.5 in \cite{banks2021local}.
\begin{proposition}
    Let $G$ be drawn from $\mathbb{P}$ or $\mathbb{Q}$. For any $s, t = O(1)$, with high probability,
    \begin{equation}
        \langle A_G^{(s)}, A_G^{(t)} \rangle = n \cdot \Ex_{\lambda \sim \mu_{\mathrm{KM}(d)}}[q_s(\lambda) q_t(\lambda)] + O(\log n).
    \end{equation}
\end{proposition}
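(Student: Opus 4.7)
The approach is to interpret the Hilbert--Schmidt inner product combinatorially and exploit the fact that non-backtracking walks of constant length only probe a constant-radius neighborhood of their starting vertex. Since $A_G^{(s)}$ and $A_G^{(t)}$ are symmetric,
\[
\langle A_G^{(s)}, A_G^{(t)} \rangle = \Tr(A_G^{(s)} A_G^{(t)}) = \sum_{u \in V(G)} N_{s,t}(G, u),
\]
where $N_{s,t}(G, u) := \sum_v (A_G^{(s)})_{uv}(A_G^{(t)})_{uv}$ counts pairs consisting of a length-$s$ and a length-$t$ non-backtracking walk starting at $u$ that end at a common vertex. Because $s, t$ are constants, $N_{s,t}(G, u)$ depends only on the ball $B_G(u, \max(s,t))$ and is uniformly bounded by some constant $C = C_{d,s,t}$.

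I would then partition $V(G)$ into \emph{good} vertices---those whose $\max(s,t)$-neighborhood is isomorphic to the corresponding ball in the infinite $d$-regular tree $T_d$---and \emph{bad} vertices (all others). For every good $u$, $N_{s,t}(G, u) = N_{s,t}(T_d, o)$ at the root $o$ of $T_d$. On the tree, every length-$s$ non-backtracking walk from $o$ ends at a unique vertex at distance exactly $s$; summing, $N_{s,t}(T_d, o)$ equals $d(d-1)^{s-1}$ when $s = t \geq 1$, equals $1$ when $s = t = 0$, and vanishes otherwise. This coincides with $\Ex_{\lambda \sim \mu_{\mathrm{KM}(d)}}[q_s(\lambda) q_t(\lambda)]$ because $\mu_{\mathrm{KM}(d)}$ is the spectral measure of $T_d$ at the root and the $q_s$ are the associated orthogonal polynomials. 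Hence
\[
\Bigl| \langle A_G^{(s)}, A_G^{(t)} \rangle - n \cdot \Ex_{\lambda \sim \mu_{\mathrm{KM}(d)}}[q_s(\lambda) q_t(\lambda)] \Bigr| \;\leq\; 2C \cdot \#\{\text{bad vertices of } G\}.
\]

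It then remains to show $\#\{\text{bad vertices}\} = O(\log n)$ w.h.p.\ under both $\PP_n$ and $\QQ_n$. A vertex is bad iff some cycle of length at most $2\max(s,t)$ is contained in its $\max(s,t)$-neighborhood, and each such short cycle contributes only $O(1)$ bad vertices, so it suffices to bound the total number of short cycles. For $\QQ_n$, the number of cycles of each fixed length in $\sG(n,d)$ or $\sG((n/2,n/2),d)$ is asymptotically Poisson with $O(1)$ mean (classical results of Bollob\'as and Wormald), and a standard second-moment / Markov argument yields the $O(\log n)$ high-probability bound. The main obstacle is the analogous statement for $\PP_n = \sL_m(H)$: short cycles in the lift arise either as lifts of short closed walks of $H$ or as new cycles formed across fibers by the independent random matchings, and upgrading the $O(1)$ expected count to an $O(\log n)$ high-probability bound requires a careful second-moment or Poisson-coupling argument adapted to the joint law of these matchings (cf.\ the cycle-counting techniques of \cite{ALMR-2001-RandomLifts}). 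Since this concentration step is the cited content of Lemma~4.5 of \cite{banks2021local} for $\QQ_n$, the needed extension to $\PP_n$ is mild rather than conceptually new.
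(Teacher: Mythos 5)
Your proposal is correct and reconstructs essentially the same argument that the paper defers to (Lemma~4.5 of \cite{banks2021local}, together with the ``bad vertex'' bound that appears here as Proposition~\ref{prop:bad-v}): decompose $\langle A_G^{(s)}, A_G^{(t)}\rangle$ by starting vertex, observe that good vertices have tree-like constant-radius neighborhoods so their contribution equals the Kesten--McKay moment $\EE_{\lambda\sim\mu_{\mathrm{KM}(d)}}[q_s(\lambda)q_t(\lambda)]$, bound the per-vertex contribution by a constant, and invoke the $O(\log n)$ bound on bad vertices. The concentration of the bad-vertex count for both $\PP_n$ and $\QQ_n$, which you correctly flag as the one point requiring care for the lift model, is exactly what the paper also takes as a black box (Proposition~\ref{prop:bad-v}), so there is no additional gap on your end relative to the paper.
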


The proof uses the idea that sparse random graphs do not have many short cycles, and most vertices are far from any short cycle. The precise definition and statement are as follows.

\begin{definition}\label{def:bad}
    Fix constants $L$ and $C$. We call a vertex in $G$ \emph{bad} if it is at most $L$ steps away from a cycle of length at most $C$. A vertex is \emph{good} if it is not bad.
\end{definition}

\begin{proposition}\label{prop:bad-v}
    Let $G$ be drawn from $\mathbb{P}$ or $\mathbb{Q}$. Then, with high probability, there are fewer than $O(\log n)$ bad vertices in $G$.
\end{proposition}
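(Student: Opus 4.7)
The plan is to control the number of bad vertices by first controlling the number of short cycles, and then observing that each short cycle only affects a constant number of vertices within its $L$-neighborhood.

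First I would upper bound the expected number of cycles of length at most $C$ in $G$ under both $\PP_n$ and $\QQ_n$. For $\QQ_n = \sG(n,d)$ (and its bipartite analog), this is classical: the number of $\ell$-cycles is asymptotically Poisson with parameter $\lambda_\ell = (d-1)^\ell/(2\ell)$, so $\EE[\#\text{cycles of length} \le C] = \sum_{\ell=3}^{C}\lambda_\ell + o(1) = O_{d,C}(1)$. For $\PP_n = \sL_m(H)$, I would count short cycles by projecting them to closed walks in $H$ and using the fact that, for a random lift, the preimage of a closed walk of length $\ell$ in $H$ contains an $\ell$-cycle in the lift only with probability $O(m^{-(\ell-c)})$ where $c$ is the number of distinct vertices visited; summing over the $O_{H,C}(1)$ many homotopy classes of short closed walks in $H$ gives $\EE[\#\text{cycles of length} \le C] = O_{H,C}(1)$. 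This is the standard moment calculation underlying the fact that random lifts of a fixed base graph have essentially the same short-cycle statistics as $\sG(n,d)$, and has been carried out in works on random lifts such as \cite{ALMR-2001-RandomLifts}.

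Next, Markov's inequality immediately gives that with high probability, the number of cycles of length at most $C$ in $G$ is at most $\log n$ (indeed $O(1)$, but the weaker bound suffices).

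Finally, I would estimate the size of the neighborhood of a short cycle. For any cycle $\gamma$ of length $\ell \le C$, the ball of radius $L$ around $V(\gamma)$ in $G$ contains at most
\begin{equation}
    \ell \cdot \bigl(1 + d + d(d-1) + \cdots + d(d-1)^{L-1}\bigr) \le C \cdot d(d-1)^L = O_{d,L,C}(1)
\end{equation}
vertices, since $G$ is $d$-regular. Hence the set of bad vertices is contained in the union of $L$-balls around the short cycles, and by a union bound its cardinality is at most $(\#\text{short cycles}) \cdot O(1) = O(\log n)$ with high probability. This yields the claimed bound.

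The main obstacle is the cycle-counting step under $\PP_n$: one has to verify that the random lift construction, despite being quite different from the configuration model, does not produce an unusually large number of short cycles. The cleanest path is to invoke (or redo) the short-cycle Poisson limit theorems from the random lifts literature, and then apply Markov. Once this expectation bound is in hand, everything else is a one-line union bound over a constant-size neighborhood.
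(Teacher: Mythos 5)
Your proof is correct, and it is the standard way to establish this kind of short-cycle dilution result. The paper does not actually print a proof of Proposition~\ref{prop:bad-v}; it states the fact and implicitly relies on the same short-cycle analysis from the random lifts and random regular graph literature (the same calculation underlies Lemma~4.5 of \cite{banks2021local} which the paper cites). Your three-step decomposition — (i) bound the expected number of cycles of length at most $C$ by a constant under both $\PP$ and $\QQ$, (ii) apply Markov at the diverging threshold $\log n$, (iii) observe that each short cycle's $L$-ball has $O_{d,L,C}(1)$ vertices and union-bound — is exactly the argument one would expect here, and each step is sound.

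One small clarification worth making explicit: in step (ii) your parenthetical ``(indeed $O(1)$, but the weaker bound suffices)'' could be misread. The number of short cycles is $O_P(1)$ (tight in probability), but it is \emph{not} w.h.p.\ bounded by any fixed constant — the limiting Poisson distribution puts positive mass on every nonnegative integer. The $\log n$ (or any $\omega(1)$) threshold is genuinely needed to get a probability tending to $1$; Markov at a constant threshold would give only a constant-probability bound. Your main argument uses the $\log n$ threshold correctly, so this is only a matter of phrasing, but it is the one spot where a careless reader could think a stronger statement holds than actually does.
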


In the analysis of the LoSt SDP, we actually need to work with occurrences of simple paths in the input graph, and thus we also need to define self-avoiding matrices.

A length-$s$ walk $(v_0, v_1, \dots, v_s)$ is self-avoiding if it corresponds to a simple path, i.e., all the vertices $v_i$ are distinct. The self-avoiding matrix $A_G^{\langle s \rangle}$ where $s \in \mathbb{N}$, is a matrix whose $u,v$ entries counts the number of self-avoiding walks on $G$ from vertex $u$ to vertex $v$ of length $s$. Clearly, $A_G^{\langle s \rangle}$ are different from $A_G^{(s)}$. Yet, as an easy corollary of Proposition \ref{prop:bad-v} above, these matrices are not that different, and for every fixed $d$ and $s$, w.h.p.~for $G$ drawn from $\mathbb{P}$ or $\mathbb{Q}$ they satisfy
\begin{equation}
    \| A_G^{\langle s \rangle} - A_G^{(s)}\|_{F}^2 \le O(\log n).
\end{equation}

\subsection{Path Statistics SDP}
We first study a simplified version of the Local Statistics SDP, which will be useful for analyzing the full Local Statistics SDP later on.

Recall the definitions in Section \ref{sec:intro:local-statistics} and Section \ref{sec:prelim:lifts}. Let $(x,G)$ be a random lift drawn from $\mathcal{L}_m(H)$ of $H$, where the set of variables $x_{u,i}, u\in [n], i\in [k]$ encodes the underlying partition or labelling $\sigma: [n] \to [k]$ associated to the fibers of the lift. We may alternatively regard $x$ as a collection of $k$ vectors $x_1, \dots, x_k \in \{0,1\}^n$, where $x_i$ is the indicator vector of $\sigma^{-1}(i)$. We define the partition matrix\footnote{We remark here that this definition of partition matrix is slightly different from that in \cite{BBKMW-2020-SpectralPlantingColoring}, and is used to simplify some expressions involved in the computation.} for the planted labelling as
\begin{align*}
    P \colonequals \sum_{i\in [k]} x_ix_i^\top.
\end{align*}

Recall $\sigma: [n] \to [k]$ is a balanced partition: every vertex belongs to one label class, and every label class has size $\frac{n}{k}$. The following facts are easy to see as a consequence.
\begin{fact}\label{fact:partition-matrix-1}
    The partition matrix $P$ satisfies the following properties:
    \begin{align*}
        P &\succeq \frac{1}{k}J, \\
        P_{u, u} &= 1 \text{ for all } u \in [n], \\
        \langle P, J \rangle &= \frac{n^2}{k}.
    \end{align*}
\end{fact}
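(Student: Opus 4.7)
The plan is to verify the three properties by direct computation, using only the fact that $\sigma$ is a balanced partition and that the $x_i$ are $0/1$ indicator vectors satisfying $\sum_{i\in [k]} x_i = \mathbf{1}_n$ with $\|x_i\|_1 = n/k$.

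First I would dispatch the two easy properties. For the diagonal entries, since exactly one coordinate of $(x_{u,1},\ldots,x_{u,k})$ equals $1$ and the rest vanish, we get $P_{u,u} = \sum_i x_{u,i}^2 = \sum_i x_{u,i} = 1$. For the inner product with $J$, note that $\langle x_i x_i^\top, J\rangle = (\mathbf{1}^\top x_i)^2 = (n/k)^2$ because each fiber has size $n/k$, so summing over $i \in [k]$ yields $\langle P, J\rangle = k \cdot (n/k)^2 = n^2/k$.

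The PSD inequality $P \succeq \frac{1}{k}J$ is the only content requiring an actual argument. I would prove it by testing against an arbitrary vector $y \in \RR^n$. Setting $a_i \colonequals x_i^\top y = \sum_{u \in \sigma^{-1}(i)} y_u$, the computation gives
\begin{equation*}
    y^\top P y = \sum_{i \in [k]} a_i^2, \qquad y^\top \tfrac{1}{k} J y = \tfrac{1}{k}\Big(\sum_{i \in [k]} a_i\Big)^2,
\end{equation*}
since $\sum_i a_i = \sum_u y_u = \mathbf{1}^\top y$ by the partition property. The desired inequality $\sum_i a_i^2 \geq \frac{1}{k}(\sum_i a_i)^2$ is then exactly Cauchy--Schwarz. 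Equivalently, one can observe the operator identity $\frac{1}{k} J = \frac{1}{k}\bigl(\sum_i x_i\bigr)\bigl(\sum_i x_i\bigr)^\top$ and apply convexity of the outer product map $v \mapsto vv^\top$ on PSD matrices to conclude $\frac{1}{k}\sum_i x_i x_i^\top \succeq \bigl(\frac{1}{k}\sum_i x_i\bigr)\bigl(\frac{1}{k}\sum_i x_i\bigr)^\top$, which rearranges to the claim.

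There is no real obstacle here; each assertion is one line of calculation. The only thing to be slightly careful about is invoking balancedness correctly when writing $\mathbf{1}^\top x_i = n/k$ in the second identity, and correctly bookkeeping the factor of $k$ versus $k^2$ when comparing $P$ to $\frac{1}{k}J$ in the convexity argument.
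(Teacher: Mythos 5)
Your proof is correct. The paper simply asserts Fact~\ref{fact:partition-matrix-1} as "easy to see" without supplying an argument, so there is no paper proof to compare against; your verification is the natural one. The two easy identities are right, and for the PSD inequality both of your routes work: the quadratic-form reduction to $\sum_i a_i^2 \geq \frac{1}{k}(\sum_i a_i)^2$ (Cauchy--Schwarz with the all-ones vector in $\RR^k$), and equivalently the operator-convexity of $v \mapsto vv^\top$ applied to the average $\frac{1}{k}\sum_i x_i = \frac{1}{k}\mathbf{1}_n$. The bookkeeping you flag at the end is handled correctly: convexity gives $\frac{1}{k}P \succeq \frac{1}{k^2}J$, which rescales to $P \succeq \frac{1}{k}J$ as claimed.
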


If moreover the base graph $H$ is bipartite, the lifted graph $G$ is again bipartite, and $\sigma: [n] \to [k]$ naturally induces a balanced bipartition of the vertices in $G$ from the bipartition of $H$. On the other hand, if the lifted graph $G$ is connected, there is a unique balanced bipartition of its vertex set so that $G$ is a bipartite graph on the bipartition.\footnote{To see this, note that if $G$ is connected then it has a spanning tree, so any bipartition of $G$ is also a bipartition of this tree. A simple inductive argument shows that any tree has a unique bipartition.} In this case, let us assume we have arranged the vertices according to the underlying bipartition, such that all the edges of $G$ are between vertices $\{1, 2, \dots, \frac{n}{2}\}$ and $\{\frac{n}{2} +1, \frac{n}{2} + 2, \dots, n\}$. This leads to another property of the partition matrix $P$.

\begin{fact}\label{fact:partition-matrix-2}
    If $H$ is a connected bipartite graph, arrange the vertices of $G$ so that all the edges of $G$ are between vertices $\{1, 2, \dots, \frac{n}{2}\}$ and $\{\frac{n}{2} +1, \frac{n}{2} + 2, \dots, n\}$. Then, the partition matrix $P$ satisfies the following properties:
    \begin{align*}
        P &\succeq \frac{1}{k} \begin{bmatrix}
        J_{n/2} & -J_{n/2}\\
        -J_{n/2} & J_{n/2}
    \end{bmatrix},\\
    P_{u,v} &= P_{v,u} = 0 \text{ for all } u \in \left\{1, 2, \dots, \frac{n}{2}\right\} \text{ and } v\in \left\{\frac{n}{2} +1, \frac{n}{2} + 2, \dots, n\right\}.
    \end{align*}
\end{fact}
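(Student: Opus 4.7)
The plan is to establish both properties directly from the definition $P = \sum_{i \in [k]} x_i x_i^\top$, using only the bipartite structure of $H$ and a single Cauchy--Schwarz bound. This is essentially the bipartite analog of Fact~\ref{fact:partition-matrix-1}, where the all-ones vector is replaced by the $\pm 1$ signing encoding the bipartition of $H$; no new input should be needed beyond that substitution.

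For the vanishing off-diagonal entries, I would first observe that because $H$ is a connected $d$-regular bipartite multigraph on $k$ vertices, its bipartition $V(H) = V_1(H) \sqcup V_2(H)$ has $|V_1(H)| = |V_2(H)| = k/2$; consequently the induced bipartition of $G \sim \sL_m(H)$ has sides $\bigcup_{i \in V_1(H)} \sigma^{-1}(i)$ and $\bigcup_{i \in V_2(H)} \sigma^{-1}(i)$, each of size $n/2$. Under the ordering stipulated in the hypothesis, this means that $u \in \{1, \dots, n/2\}$ and $v \in \{n/2+1, \dots, n\}$ lie in fibers over distinct vertices of $H$, so no single $\ell \in [k]$ satisfies $x_{u,\ell} = x_{v,\ell} = 1$, giving $P_{u,v} = \sum_\ell x_{u,\ell} x_{v,\ell} = 0$.

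For the PSD inequality, I would observe that the target matrix is rank one: it equals $w w^\top$, where $w = (\boldsymbol{1}_{n/2}^\top, -\boldsymbol{1}_{n/2}^\top)^\top \in \RR^n$. Defining $\epsilon \in \{\pm 1\}^k$ by $\epsilon_i = +1$ for $i \in V_1(H)$ and $\epsilon_i = -1$ for $i \in V_2(H)$, one verifies $w = \sum_{i \in [k]} \epsilon_i x_i$, using that each vertex of $G$ belongs to exactly one fiber. For arbitrary $z \in \RR^n$ and $a_i \colonequals \langle x_i, z \rangle$, the fiber disjointness gives $z^\top P z = \sum_i a_i^2$, while $\langle w, z \rangle = \sum_i \epsilon_i a_i$. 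Applying Cauchy--Schwarz to the sequences $(\epsilon_i)$ and $(a_i)$ yields $\langle w, z \rangle^2 \leq k \sum_i a_i^2 = k\, z^\top P z$, which rearranges to $z^\top \bigl(P - \tfrac{1}{k} w w^\top\bigr) z \geq 0$ for every $z$, the desired semidefinite inequality.

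I do not expect any serious obstacle. The only bookkeeping is the ordering convention of the hypothesis, which ensures that $w$ takes exactly the block form displayed on the right-hand side, and the reduction to the connected case of $H$, which was already made at the start of the section.
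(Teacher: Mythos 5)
Your proof is correct, and since the paper states this fact without proof (calling it a consequence that is "easy to see"), you are simply filling in the intended argument. Your approach is the natural bipartite analog of the $P \succeq \tfrac{1}{k}J$ claim in Fact~\ref{fact:partition-matrix-1}: write the right-hand side as $\tfrac{1}{k}ww^\top$ with $w = \sum_{i} \epsilon_i x_i$ for the $\pm 1$ signing $\epsilon$ of the bipartition of $H$, and apply Cauchy--Schwarz to $(\epsilon_i)_i$ and $(\langle x_i, z\rangle)_i$. The zero block follows because, $G$ being connected, the unique balanced bipartition coincides with the one induced by the fibers of $V_1(H) \sqcup V_2(H)$, which are of equal size $k/2$ by $d$-regularity of $H$; hence vertices on opposite sides of the chosen ordering never share a label. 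One small remark: the identity $z^\top P z = \sum_i a_i^2$ is immediate from $P = \sum_i x_i x_i^\top$ and does not actually need fiber disjointness (that is only needed to identify $\sum_i \epsilon_i x_i$ with $w$ and to get $P_{u,u}=1$), so the phrasing there is slightly misleading but not an error.
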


In the Path Statistics SDP, we will focus on the \emph{path statistics} given by the polynomials that count the number of non-backtracking walks on $G$ of a certain length with the same labels on the two endpoints. Note that this is exactly given by $\langle P, A^{(s)}_G\rangle = \langle P, q_s(A_G) \rangle$, where $A^{(s)}_G$ is the $s$th non-backtracking matrix of $G$.

Let $M$ be the adjacency matrix of $H$, and let $M = \sum_{i=1}^k \lambda_i v_iv_i^\top$ be its spectral decomposition. The following states that the path statistics in the planted model are strongly concentrated, which follows from Lemmas~\ref{lem:variance-partially-labelled-graph}, and~\ref{lem:path-M} which we will show later, and that, in the notation of that section, \[\langle P, A_G^{(s)} \rangle = \langle P, A_G^{\langle s\rangle} \rangle + O(\log n) = \sum_{i\in [k]} p_{(P_s, \{0,s\}, \{i,i\})}(x, G) + O(\log n).\]

\begin{lemma}\label{lem:concentration-PS}
    Suppose $G \sim \mathcal{L}_m(H)$, and let $P$ be the partition matrix associated with $G$. For every $s \in \mathbb{N}$ and increasing, nonnegative function $\Delta(n)$,
    \begin{align*}
        \mathbb{P}\left( \left|\langle P, A_G^{(s)}\rangle - \frac{n}{k}  \sum_{i=1}^k q_s(\lambda_i)\right| > \Delta(n) \right) \leq O\left( \frac{n}{\Delta(n)^2}\right),
    \end{align*}
    where $\lambda_1, \dots, \lambda_k$ are the eigenvalues of the adjacency matrix $M$ of $H$.
\end{lemma}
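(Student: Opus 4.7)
The plan is to prove concentration via Chebyshev's inequality, after first reducing $\langle P, A_G^{(s)}\rangle$ to a sum of path-statistic polynomials whose expectation and variance are controlled by the two lemmas cited just above the statement.

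First I would pass from the non-backtracking matrix to the self-avoiding matrix. By Proposition~\ref{prop:bad-v}, w.h.p.\ $G \sim \mathcal{L}_m(H)$ has only $O(\log n)$ bad vertices, and the entries of $A_G^{(s)} - A_G^{\langle s \rangle}$ vanish away from bad vertices while being bounded by a constant depending only on $d$ and $s$; combined with $|P_{u,v}|\le 1$ this yields
\[
\left|\langle P, A_G^{(s)} \rangle - \langle P, A_G^{\langle s \rangle} \rangle\right| = O(\log n) \quad \text{w.h.p.}
\]
Since the lemma is only interesting when $\Delta(n) = \omega(\sqrt{n})$, this deterministic-looking error can be absorbed into $\Delta(n)$ at the cost of changing the absolute constants.

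Next, using $P_{u,v} = \sum_{i\in[k]} x_{u,i}x_{v,i}$ I would expand
\[
\langle P, A_G^{\langle s \rangle} \rangle = \sum_{i\in[k]}\sum_{u,v \in [n]} x_{u,i} x_{v,i}\, (A_G^{\langle s\rangle})_{u,v} = \sum_{i\in [k]} p_{(P_s,\{0,s\},\{i,i\})}(x,G),
\]
where the right-hand side, as indicated in the paragraph preceding the lemma, is a sum of $k$ partially labelled graph statistics indexed by paths of length $s$ with both endpoints labelled $i$. Invoking Lemma~\ref{lem:path-M} to evaluate the expectation of each such statistic under $\mathcal{L}_m(H)$ should give
\[
\mathbb{E}\left[p_{(P_s,\{0,s\},\{i,i\})}(x,G)\right] = \frac{n}{k}\, (M^{(s)})_{i,i} + O(1),
\]
and summing over $i\in [k]$ (with $k$ constant) and using $M^{(s)} = q_s(M)$ together with the spectral decomposition $M = \sum_i \lambda_i v_iv_i^\top$ yields the centering value $\frac{n}{k}\sum_{i=1}^k q_s(\lambda_i)$ up to an $O(1)$ error.

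For the fluctuations, I would apply Lemma~\ref{lem:variance-partially-labelled-graph} to each summand $p_{(P_s,\{0,s\},\{i,i\})}$; this should give $\mathrm{Var}(p_{(P_s,\{0,s\},\{i,i\})}) = O(n)$, and hence by Cauchy--Schwarz on the cross terms, $\mathrm{Var}(\langle P, A_G^{\langle s\rangle}\rangle) = O(n)$. Chebyshev's inequality then delivers the claimed bound $O(n/\Delta(n)^2)$, after absorbing the $O(\log n)$ non-backtracking/self-avoiding gap and the $O(1)$ bias in the expectation. The main obstacle lies in Step~3, the variance estimate: it requires a careful accounting of pairs of length-$s$ simple paths in $G$, showing that overlapping pairs are a vanishingly small fraction of the total. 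This is the content of Lemma~\ref{lem:variance-partially-labelled-graph}, whose proof presumably exploits that in a random lift, the edge placement between distinct fibers is given by independent uniform matchings, so two paths that do not share a vertex are almost independent and therefore contribute essentially zero covariance once centred.
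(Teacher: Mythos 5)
Your proof is correct and follows exactly the outline the paper gives just before the lemma statement, which cites the same decomposition $\langle P, A_G^{(s)}\rangle = \langle P, A_G^{\langle s\rangle}\rangle + O(\log n) = \sum_{i\in[k]} p_{(P_s,\{0,s\},\{i,i\})}(x,G) + O(\log n)$, together with Lemma~\ref{lem:path-M} for the centering (via $\sum_i q_s(M)_{i,i} = \operatorname{Tr} q_s(M) = \sum_i q_s(\lambda_i)$) and Lemma~\ref{lem:variance-partially-labelled-graph} with $\mathrm{cc}(P_s)=1$ for the $O(n/\Delta(n)^2)$ tail. The one minor imprecision is that the $O(\log n)$ discrepancy from bad vertices is itself a random event (whose complement you must also charge to the tail), not a deterministic error to absorb; but this is harmless since the stated bound is vacuous for $\Delta(n)=O(\sqrt n)$ and the bad-vertex count has a Markov-type tail easily dominated by $O(n/\Delta(n)^2)$ in the nontrivial range.
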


Now we are ready ready to define the Path Statistics SDP.
Given an input graph $G_0$, the Path Statistics SDP attempts to find a ``pseudo-partition matrix" $\Tilde{P}$ satisfying Fact \ref{fact:partition-matrix-1} and Fact \ref{fact:partition-matrix-2} exactly, and Lemma~\ref{lem:concentration-PS} approximately up to some error.

\begin{definition}
    Suppose $H$ is not bipartite. The \emph{level-$D$ Path Statistics Algorithm with error tolerance $\delta > 0$} for distinguishing between $\mathcal{L}_m(H)$ and $\mathcal{G}(n, d)$ takes as input a $d$-regular graph $G_0$ and solves the following feasibility SDP: find $\Tilde{P}$ such that
    \begin{enumerate}
        \item $\Tilde{P}_{u,u} = 1$ for every $u \in V(G_0)$,
        \item $\langle \Tilde{P}, J\rangle = \frac{n^2}{k}$,
        \item $\langle \Tilde{P}, A_{G_0}^{(s)} \rangle \in \left(\frac{1}{k}\sum_{i=1}^k q_s(\lambda_i)\right) n + [-\delta n, \delta n] \text{ for all } 0 \le s \le D$, and
        \item $\Tilde{P} \succeq \frac{1}{k}J$.
    \end{enumerate}
\end{definition}

\begin{definition}
    Suppose $H$ is bipartite. Let $G_0$ be $d$-regular, connected and bipartite. Arrange its vertices so that all the edges are between $\{1, 2, \dots, \frac{n}{2}\}$ and $\{\frac{n}{2} +1, \frac{n}{2} + 2, \dots, n\}$.

    The \emph{level-$D$ Path Statistics Algorithm with error tolerance $\delta > 0$} for distinguishing between $\mathcal{L}_m(H)$ and $\mathcal{G}\left(\left(\frac{n}{2}, \frac{n}{2}\right), d\right)$ takes as input a bipartite $d$-regular graph $G_0$ and solves the following feasibility SDP: find $\Tilde{P}$ such that
    \begin{enumerate}
        \item $\Tilde{P}_{u,u} = 1$ for all $u \in V(G_0)$,
        \item $\langle \Tilde{P}, J\rangle = \frac{n^2}{k}$,
        \item $\Tilde{P}_{u,v} = \Tilde{P}_{v,u} = 0 \text{ for all } u \in \{1, 2, \dots, \frac{n}{2}\}$ and $v\in \{\frac{n}{2} +1, \frac{n}{2} + 2, \dots, n\}$,
        \item $\langle \Tilde{P}, A_{G_0}^{(s)} \rangle \in \left(\frac{1}{k}\sum_{i=1}^k q_s(\lambda_i)\right) n + [-\delta n, \delta n] \text{ for all } 0 \le s\le D$,
        \item $\Tilde{P} \succeq \frac{1}{k}J_n$ and $\Tilde{P} \succeq \frac{1}{k} \begin{bmatrix}
            J_{n/2} & -J_{n/2}\\
            -J_{n/2} & J_{n/2}
        \end{bmatrix}.$
    \end{enumerate}
\end{definition}

\begin{remark}
    \label{rem:bipartite-connected}
    In the bipartite version, we assume that the input $G_0$ is connected and bipartite. For a graph drawn from $\mathcal{L}_m(H)$, it is always bipartite, and with high probability it is connected if $H$ is connected (see Theorem 1 of \cite{ALMR-2001-RandomLifts}). Similarly, for a graph drawn from $\mathcal{G}\left(\left(\frac{n}{2}, \frac{n}{2}\right), d\right)$, it is always bipartite, and with high probability it is connected (as follows from Proposition~\ref{prop:eigenvalues-random-regular}, since a disconnected graph has trivial eigenvalue $d$ with multiplicity greater than 1).
\end{remark}

Now let us state versions of our main theorem for the Path Statistics SDP (rather than the Local Statistics SDP, which we will proceed to afterwards).
\begin{definition}
    The \emph{spectral radius} of a connected $d$-regular graph $G$ is $\rho(G) = \max_{i: |\lambda_i| < d} |\lambda_i|$.
\end{definition}

\begin{theorem}\label{thm:PS-main}
    Suppose $H$ is $d$-regular, connected and non-bipartite. If $\rho(H) > 2\sqrt{d-1}$, then there exist $D$ and error tolerance $\delta > 0$ at which the level-$D$ Path Statistics SDP can w.h.p.~distinguish $\mathcal{L}_m(H)$ from $\mathcal{G}(n,d)$. If $\rho(H) \le 2\sqrt{d-1}$, then no such $D$ and $\delta$ exist.

    Suppose $H$ is $d$-regular, connected and bipartite. If $\rho(H) > 2\sqrt{d-1}$, then there exist $D$ and error tolerance $\delta > 0$ at which the level-$D$ Path Statistics SDP can w.h.p.~distinguish $\mathcal{L}_m(H)$ and $\mathcal{G}\left(\left(\frac{n}{2}, \frac{n}{2}\right), d\right)$. If $\rho(H) \le 2\sqrt{d-1}$, then no such $D$ and $\delta$ exist.
\end{theorem}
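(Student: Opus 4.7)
For both directions and both versions of the SDP, I take $\tilde P = P$, the true partition matrix of the underlying lift. Facts~\ref{fact:partition-matrix-1} and~\ref{fact:partition-matrix-2} verify the diagonal, trace, off-block-zero (bipartite only), and PSD constraints exactly. The path-statistics constraint follows from Lemma~\ref{lem:concentration-PS} applied with $\Delta(n) = \delta n$, giving
\[
\left|\langle P, A_G^{(s)}\rangle - \tfrac{n}{k}\sum_i q_s(\lambda_i)\right| \le \delta n
\]
with probability $1 - o(1)$ for each fixed $s \le D$ and $\delta > 0$.

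\paragraph{Forward direction, $\rho(H) > 2\sqrt{d-1}$.}
In the null model I argue infeasibility. Any feasible $\tilde P$ decomposes uniquely as $\tilde P = \tfrac{1}{k}J + Q$ with $Q \succeq 0$, $\operatorname{tr}(Q) = n(1-1/k)$, and $Q\mathbf{1} = 0$ (the last from $\langle \tilde P, J\rangle = n^2/k$ combined with $\tilde P \succeq \tfrac{1}{k}J$). Expanding $Q$ in the eigenbasis of $A_{G_0}$ restricted to $\mathbf{1}^\perp$,
\[
\langle Q, A_{G_0}^{(s)}\rangle \;=\; \sum_{j \ge 2} q_s(\mu_j)\, \alpha_j, \qquad \alpha_j \ge 0,\ \sum_j \alpha_j = n(1 - 1/k).
\]
By Proposition~\ref{prop:eigenvalues-random-regular}, $\mu_j \in [-2\sqrt{d-1}-o(1), 2\sqrt{d-1}+o(1)]$ for $j \ge 2$ w.h.p.; the closed-form expression of the non-backtracking polynomials on this interval gives $|q_s(\mu)| \le C(s+1)(d-1)^{s/2}$ uniformly, so $|\langle Q, A_{G_0}^{(s)}\rangle| \le C'(s+1)(d-1)^{s/2}\, n$. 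Since $H$ has an eigenvalue $\lambda^\star$ with $|\lambda^\star| > 2\sqrt{d-1}$, the planted target value contains the term $\tfrac{n}{k} q_s(\lambda^\star) \sim \tfrac{n}{k} C''(\rho^\star)^s$ with $\rho^\star = \tfrac{1}{2}\bigl(|\lambda^\star| + \sqrt{(\lambda^\star)^2 - 4(d-1)}\bigr) > \sqrt{d-1}$. For $s$ large enough the target outstrips the null-model bound by a factor $\omega(1)$; taking $D = s$ and $\delta$ smaller than half the resulting gap gives w.h.p.\ infeasibility.

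\paragraph{Reverse direction, $\rho(H) \le 2\sqrt{d-1}$.}
I construct a feasible $\tilde P = \tfrac{1}{k}J + Q + E$ for the null model. Let $Q = \sum_{i \ge 2} \tfrac{n}{k n_i}\Pi_i + \eta(I - \tfrac{1}{n}J)$, where $\Pi_i$ is the orthogonal projector onto a cluster of $n_i$ eigenvectors of $A_{G_0}$ whose eigenvalues concentrate around the non-trivial eigenvalue $\lambda_i$ of $H$ (this is possible since each $\lambda_i$ lies in the Kesten--McKay support, which by Proposition~\ref{prop:eigenvalues-random-regular} coincides with the bulk spectrum of $A_{G_0}$ w.h.p.), and $\eta > 0$ is a small cushion ensuring strict positivity of $Q$ on $\mathbf{1}^\perp$. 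Weak convergence of the empirical spectral measure of $A_{G_0}$ to Kesten--McKay together with continuity of $q_s$ yields $\operatorname{tr}(Q) = n(1-1/k) + o(n)$ and $\langle Q, A_{G_0}^{(s)}\rangle = \tfrac{n}{k}\sum_{i \ge 2} q_s(\lambda_i) + o(n)$ for each fixed $s \le D$. The diagonal correction $E = \operatorname{diag}(e_u)$ with $e_u = 1 - 1/k - Q_{uu}$ enforces $\tilde P_{uu} = 1$ exactly; by quantitative eigenvector delocalization for random regular graphs, $\max_u |e_u| = O(n^{-1/2})$ w.h.p., so the cushion $\eta(I - \tfrac{1}{n}J)$ absorbs $E$ while keeping $\tilde P \succeq \tfrac{1}{k}J$, and $\sum_u e_u = 0$ preserves $\langle \tilde P, J\rangle = n^2/k$. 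The $O(\log n)$ bad vertices from Proposition~\ref{prop:bad-v} are the only ones where $(A_{G_0}^{(s)})_{uu}$ is nonzero for $s \ge 1$, and they contribute negligibly to $\langle E, A_{G_0}^{(s)}\rangle$.

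\paragraph{Main obstacle and bipartite case.}
The technical heart of the reverse direction is simultaneously enforcing the exact diagonal constraint, exact PSD, and the moment-matching conditions; the cushion-plus-delocalization strategy handles this cleanly at the cost of introducing $o(n)$ moment errors that fit comfortably inside the tolerance $\delta n$. The bipartite case is parallel: by Remark~\ref{rem:bipartite-connected} we may assume $G_0$ is connected and arranged so that the bipartition consists of the first and second halves of its vertex set, and we additionally include a projector onto the $-d$-eigenvector $\tfrac{1}{\sqrt{n}}(\mathbf{1}_{\text{top}} - \mathbf{1}_{\text{bot}})$ of the bipartite $A_{G_0}$ when $H$ has the eigenvalue $-d$; the extra off-block-zero condition and the second bipartite PSD inequality then follow because every $\Pi_i$ is built from eigenvectors that respect the bipartite spectral structure of $G_0$.
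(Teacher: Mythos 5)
Your completeness step and your forward (infeasibility) direction are essentially sound. In the forward direction you use a different and arguably more direct argument than the paper: instead of constructing a transformed Chebyshev polynomial $f$ with $f \ge 0$ on the Kesten--McKay support but $\sum_{i:|\lambda_i|<d} f(\lambda_i) < 0$ (the paper's route), you decompose $\tilde P = \tfrac1k J + Q$ with $Q\succeq 0$, $Q\mathbf 1 = 0$, $\operatorname{tr}(Q) = n(1-\tfrac1k)$, expand $\langle Q, A_{G_0}^{(s)}\rangle$ in the eigenbasis, and show the achievable range for a single large $s$ cannot reach the planted target because $q_s(\lambda^\star)$ grows like $(\rho^\star)^s$ with $\rho^\star > \sqrt{d-1}$. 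Taking $s$ even avoids sign cancellation and your bound $\sum_{j\ge 2} w_j^\top Q w_j = \operatorname{tr}(Q)$ is correct because $Q\mathbf 1 = 0$. In the bipartite case you should also note that $\tilde P \succeq \tfrac1k vv^\top$ together with $v^\top \tilde P v = n^2/k$ forces $\tilde P v = \tfrac{n}{k}v$, so the $-d$ eigenvalue contributes exactly $\tfrac{n}{k}q_s(-d)$ which cancels against the target; you wave at this but it deserves a sentence. Overall the forward direction is correct and buys a somewhat shorter proof, at the cost of not exhibiting an explicit SOS-style dual certificate $f$.

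The reverse (feasibility) direction, however, has a fatal gap. You set $\tilde P = \tfrac1k J + Q + E$ with $E = \operatorname{diag}(e_u)$ the additive correction needed to make the diagonal exact, and you argue a cushion $\eta(I - \tfrac1n J)$ inside $Q$ absorbs $E$. But this cannot work whenever $E \ne 0$. The constraints force $\sum_u e_u = 0$ (so that $\langle \tilde P, J\rangle = n^2/k$), hence $\mathbf 1^\top(Q+E)\mathbf 1 = 0$. The PSD constraint $\tilde P \succeq \tfrac1k J$ is exactly $Q + E \succeq 0$, and a PSD matrix with $\mathbf 1^\top(Q+E)\mathbf 1 = 0$ must satisfy $(Q+E)\mathbf 1 = 0$. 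Since $Q\mathbf 1 = 0$ by construction, this gives $E\mathbf 1 = e = 0$: the only PSD-compatible diagonal correction is the zero correction, and no amount of cushion along $\mathbf 1^\perp$ can fix a nonzero off-diagonal overlap between $\mathbf 1$ and $\mathbf 1^\perp$. So your $\tilde P$ is never a feasible SDP point unless the diagonal of $Q$ is already exactly $1-\tfrac1k$ at every vertex, which your eigenvector-cluster construction does not deliver. (Separately, the quantitative $\ell^\infty$-delocalization claim $\max_u|e_u| = O(n^{-1/2})$ for sub-clusters of eigenvectors of a random $d$-regular graph with $d$ fixed is not a known result you can cite, but the PSD contradiction above is fatal regardless.) The paper avoids this problem by building $\tilde P$ from non-backtracking matrices rather than spectral projectors: $g(A_{G_0}) - \tfrac{g(d)}{n}J$ has diagonal \emph{exactly} $\tfrac{k-1}{k} - \tfrac{g(d)}{n}$ at all good vertices because $(A^{(s)}_{G_0})_{uu}$ counts non-backtracking closed walks and vanishes for good $u$ and $s\ge 1$, and then repairs the $O(\log n)$ bad vertices by \emph{replacing} Gram vectors of the right norm while preserving $\sum_u \alpha_u$, which keeps PSDness and the $J$-inner-product constraint intact. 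An additive diagonal correction cannot be substituted for that Gram-vector-swapping step.
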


\begin{remark}
    In Theorem \ref{thm:PS-main}, we assume the base graph $H$ is connected. If $H$ is not connected, then any lift of $H$ is also disconnected, which makes it very simple to distinguish $\mathcal{L}_m(H)$ from $\mathcal{G}(n,d)$ or $\mathcal{G}\left(\left(\frac{n}{2}, \frac{n}{2}\right), d\right)$.
\end{remark}

\begin{proof}[Proof of Theorem \ref{thm:PS-main}]
    Recall that we use $\mathbb{P}$ to denote the ``planted" distribution of random lifts of $H$, and $\mathbb{Q}$ to denote the ``null" distribution of random $d$-regular graphs/random bipartite $d$-regular graphs. Let the spectrum of $H$ be the multiset $\{\{\lambda_1, \dots, \lambda_k\}\}$.

    By Fact \ref{fact:partition-matrix-1}, Fact \ref{fact:partition-matrix-2}, and Lemma~\ref{lem:concentration-PS}, the Path Statistics SDP is w.h.p.~feasible on $G$ drawn from $\mathbb{P}$. We will show the Path Statistics SDP w.h.p.\ is:
    \begin{itemize}
        \item infeasible on $G$ drawn from $\mathbb{Q}$ if $\rho(H) > 2\sqrt{d-1}$,
        \item feasible on $G$ drawn from $\mathbb{Q}$ if $\rho(H) \le 2\sqrt{d-1}$.
    \end{itemize} Since the analysis for the non-bipartite case and the bipartite case is almost identical, we will first consider the case of non-bipartite $H$ and in the end we will explain how to adapt the analysis to bipartite $H$.

    \paragraph{(1a) $H$ is not bipartite and $\rho(H) > 2\sqrt{d-1}$:}
    Let $\lambda := \rho(H)$. We will construct a degree~$D$ polynomial $f$ that is strictly positive on $[-2\sqrt{d-1}, 2\sqrt{d-1}]$ and satisfies $\sum_{i: |\lambda_i| < d} f(\lambda_i) < 0$. We may set $f$ to be a transformed Chebyshev polynomial of the first kind. Let us recall some properties of these polynomials, $T_s(x)$ for $s \in \NN$. We have $\deg(T_s) = s$, $|T_s(x)| \leq 1$ for $|x| \leq 1$, and, for every $\varepsilon > 0$ and large constant $B > 0$, there exists $S = S(B, \varepsilon)$ such that for every $T_s(x)$ of even degree $s \ge S$, $T_s(x) \ge B$ for $|x| \ge 1 + \varepsilon$.
    The last fact follows from the following identity (see, e.g., 18.5.1 of \cite{Olver-2010-NISTHandbook}): when $|x| > 1$, then
    \begin{equation}
        T_s(x) = \frac{1}{2}\left(\left(x - \sqrt{x^2 - 1}\right)^s + \left(x + \sqrt{x^2 - 1}\right)^s\right)
    \end{equation}
    In particular, for $s$ even, we have
    \begin{equation}
        T_s(x) \geq \frac{1}{2}\left(x + \sqrt{x^2 - 1}\right)^s \geq \frac{1}{2}x^s.
    \end{equation}

    Therefore, to produce the polynomial $f$ we wanted above, we can set $f(x) = -T_S(2\sqrt{d-1} \cdot x) + 2$, where $S$ is chosen to be an even degree large enough such that $\sum_{i: |\lambda_i| < d} f(\lambda_i) < 0$. Clearly, this construction satisfies $f$ being strictly positive on $[-2\sqrt{d-1}, 2\sqrt{d-1}]$.

    By Proposition \ref{prop:eigenvalues-random-regular}, when $G$ is drawn from $\mathbb{Q}$, w.h.p.\ $A_G - d J/n$ has eigenvalues in $[-2\sqrt{d-1} - o(1), 2\sqrt{d-1} + o(1)]$. Thus, $f(A_G - dJ/n) = f(A_G) - f(d)J/n \succeq 0$ by the assumption that $f$ is strictly positive on $[-2\sqrt{d-1}, 2\sqrt{d-1}]$, and thus nonnegative on $[-2\sqrt{d-1} - o(1), 2\sqrt{d-1} + o(1)]$ for sufficiently large $n$.

    Now, using this $f$, we show that the Local Statistics SDP is w.h.p.~infeasible for $G$ drawn from $\mathbb{Q}$. Suppose for the sake of contradiction that $\Tilde{P}$ is a feasible solution for the degree-$D$ Path Statistics SDP on input $G$ drawn from $\mathbb{Q}$, where $D \geq \deg(f)$. Recall that the non-backtracking polynomials $q_s$ are mutually orthogonal with respect to the inner product $\langle p, q\rangle_{\mathrm{KM}(d)} \colonequals \mathbb{E}_{\lambda \sim \mathrm{KM}(d)}[p(\lambda)q(\lambda)]$. Then, by rewriting \[ f = \sum_{s = 0}^D \frac{\langle f, q_s\rangle_{\mathrm{KM}(d)}}{\|q_s\|^2_{\mathrm{KM}(d)}}q_s \] in the basis of non-backtracking polynomials and applying the constraints $\Tilde{P}$ satisfies in the Path Statistics SDP, we have
    \begin{align*}
        0 &\le \left\langle \Tilde{P}, f(A_G) - \frac{f(d)}{n} J \right\rangle\\
        &\le \sum_{s = 0}^D \frac{\langle f, q_s\rangle_{\mathrm{KM}(d)}}{\|q_s\|_{\mathrm{KM}(d)}^2} \langle \Tilde{P}, q_s(A_G) \rangle - f(d) \frac{n}{k} + \delta |f(d)| n\\
        &\le \sum_{s=0}^S \frac{\langle f, q_s\rangle_{\mathrm{KM}(d)}}{\|q_s\|_{\mathrm{KM}(d)}^2} \left[\left(\frac{1}{k}\sum_{i=1}^k q_s(\lambda_i)\right) n + \delta n\right] - f(d) \frac{n}{k} + \delta |f(d)| n\\
        &= \left[\frac{1}{k} \sum_{i: |\lambda_i| < d} f(\lambda_i) + \delta \|f\|_{\mathrm{KM}(d)}^2  + \delta |f(d)|\right] n\\
        &< 0,
    \end{align*}
    when $\delta > 0$ is set to be small enough, since $\sum_{i: |\lambda_i| < d} f(\lambda_i) < 0$. This gives a contradiction and completes the proof.

    \paragraph{(1b) $H$ is not bipartite and $\rho(H) \le 2\sqrt{d-1}$:}
    Now we turn to the case when $\rho(H) \le 2\sqrt{d-1}$, and aim to show that the Path Statistics SDP is feasible on $G$ drawn from $\mathbb{Q}$. We set $\Tilde{P} \colonequals g(A_G) + J/k$ for an appropriately chosen polynomial $g$. We would like to construct a polynomial $g$ such that $g$ is strictly positive in $[-2\sqrt{d-1}, 2\sqrt{d-1}]$ and $\langle g, q_s\rangle_{\mathrm{KM}(d)} \in \frac{1}{k}\sum_{i: |\lambda_i| \le 2\sqrt{d-1}} q_s(\lambda_i) + [-\delta/4 , \delta/4 ]$ for $s = 0, 1, \dots, D$.

    By Theorem 5.6 in \cite{BBKMW-2020-SpectralPlantingColoring}, for any $\lambda \in [-2\sqrt{d-1}, 2\sqrt{d-1}]$, constant $D$, and $\delta > 0$, there exists an even\footnote{An even polynomial $g$ has nonzero coefficients only on even degrees, and thus is an even function. While this property is not used here, it is needed in part \textbf{(2)}.} polynomial $g_\lambda$ such that
    \begin{align*}
        \langle g_\lambda, q_s\rangle_{\mathrm{KM}(d)} &\in q_s(\lambda) + [-\delta/4 , \delta/4 ] \text{ for } s = 1, \dots, D,\\
        \langle g_\lambda, q_0\rangle_{\mathrm{KM}(d)} &= 1.
    \end{align*}
    Moreover, for any fixed $\delta > 0$, these polynomials have constant degrees that depend uniformly on $D$, satisfy $g_\lambda \ge 0$, and in particulalr $g_\lambda(d) \ge 0$. Thus, we may simply set $g = \frac{1}{k} \sum_{i:|\lambda_i| \le 2\sqrt{d-1}} g_{\lambda_i}$.

    By Lemma 5.16 in \cite{BBKMW-2020-SpectralPlantingColoring} and Lemma 4.5 in \cite{banks2021local}, when $G$ is drawn from $\mathbb{Q}$, w.h.p.
    \begin{align*}
        \langle g(A_G) , A_G^{(s)}\rangle = (n + O(\log n)) \langle g, q_s\rangle_{\mathrm{KM}(d)}.
    \end{align*}
    Thus, for every $0 \le s \le D$,
    \begin{align*}
        \left\langle g(A_G) + \frac{1}{k} J , A_G^{(s)}\right\rangle &= (n + O(\log n)) \langle g, q_s\rangle_{\mathrm{KM}(d)} + \frac{1}{k}\langle J, q_s(d) J/n\rangle\\
        &= \left(\frac{1}{k} \sum_{i: |\lambda_i| < d} q_s(\lambda_i)\right)n + \frac{1}{k} q_s(d)n + [-\delta n/4, \delta n/4] + o(n)
    \end{align*}
    So, $g(A_G) + J/k$ would satisfy the moment constraints w.h.p. We still need to verify the inner product constraint against $J$ and the diagonal constraints. A simple calculation verifies
    \begin{align*}
        \left\langle  g(A_G) + \frac{1}{k} J, J \right\rangle = \frac{n^2}{k} + g(d) n.
    \end{align*}
    For the diagonal entries,
    \begin{align*}
        g(A_G)_{u,u} &= \sum_{s = 0}^D \frac{\langle g, q_s\rangle_{\mathrm{KM}(d)}}{\|q_s\|_{\mathrm{KM}(d)}^2} (q_s(A_G))_{u,u}\\
        &= \sum_{s = 0}^D \frac{\langle g, q_s\rangle_{\mathrm{KM}(d)}}{\|q_s\|_{\mathrm{KM}(d)}^2} (A_G^{(s)})_{u,u} \\
        &= \frac{\frac{1}{k} \sum_{i: |\lambda_i| \le 2\sqrt{d-1}} 1 }{\|q_0\|_{\mathrm{KM}(d)}^2} (A_G^{(0)})_{u,u} \\
        &= \frac{k-1}{k},
    \end{align*}
    for all vertices $u$ that are not bad vertices as defined in Definition \ref{def:bad}, since for a good vertex $u$, $(A_G^{(0)})_{u,u} = 1$ and $(A_G^{(s)})_{u,u} = 0$ for $s \ge 1$. Thus, the diagonal constraints and inner product constraint against $J$ are not exactly satisfied. Let  \[ \Tilde{Y} = \frac{\frac{k-1}{k}}{\frac{k-1}{k} - \frac{g(d)}{n}}\left(g(A_G) - \frac{g(d)}{n}J\right). \] Note that this makes sure that, for all good vertices $u$, \[ \Tilde{Y}_{u,u} = \frac{\frac{k-1}{k}}{\frac{k-1}{k} - \frac{g(d)}{n}}\left(\frac{k-1}{k} - \frac{g(d)}{n}\right) = \frac{k-1}{k}. \] Moreover, $\Tilde{Y}$ satisfies

    \begin{align*}
        \langle \Tilde{Y}, A_G^{(s)} \rangle &= \frac{\frac{k-1}{k}}{\frac{k-1}{k} - \frac{g(d)}{n}} \left(\langle g(A_G), A_G^{(s)}\rangle- \frac{g(d)}{n} \langle g(A_G), J\rangle\right)\\
        &\in \frac{\frac{k-1}{k}}{\frac{k-1}{k} - \frac{g(d)}{n}} \left(\left(\frac{1}{k} \sum_{i: |\lambda_i| < d} q_s(\lambda_i)\right)n + [-\delta n/4, \delta n/4] + o(n)- g(d)^2\right)\\
        &\in \left(\frac{1}{k} \sum_{i: |\lambda_i| < d} q_s(\lambda_i)\right)n + [-\delta n/4, \delta n/4] + o(n),\\
        \langle \Tilde{Y}, J\rangle &= \frac{\frac{k-1}{k}}{\frac{k-1}{k} - \frac{g(d)}{n}} \left(\langle g(A_G), J \rangle - \frac{g(d)}{n} \langle J, J\rangle\right)\\
        &= \frac{\frac{k-1}{k}}{\frac{k-1}{k} - \frac{g(d)}{n}} \left(g(d)n - g(d)n\right)\\
        &= 0.
    \end{align*}

    Now, using the same trick as in Proposition 4.8 in \cite{banks2021local}, we may further modify $\Tilde{Y}$ into a desirable $Y$ to fix the diagonal entries of the bad vertices, while $Y$ still approximately satisfies the other constraints. Let the spectral decomposition of $A_G$ be $\sum_{i} \lambda_i v_iv_i^\top$. We note that due to the nonnegativity of $g$,
    \begin{align*}
        g(A_G) - \frac{g(d)}{n}J &= g(d) \frac{J}{n} + \sum_{i: |\lambda_i| < d} g(\lambda_i) v_iv_i^\top - \frac{g(d)}{n}J\\
        &= \sum_{i: |\lambda_i| \le 2 \sqrt{d-1}} g(\lambda_i) v_iv_i^\top
    \end{align*}
    has nonnegative eigenvalues and thus $\Tilde{Y} \succeq 0$.  We write $\Tilde{Y}$ as the Gram matrix of vectors $\alpha_1, \dots, \alpha_n \in \mathbb{R}^n$. Since $\Tilde{Y}_{u,u} = \frac{k-1}{k}$ for good vertices $u$, $\|\alpha_u\|^2 = \frac{k-1}{k}$. For the bad vertices, note that every entry of $g(A_G)$ is at most a constant depending on $g$, and thus $\|\alpha_u\| = O(1)$ for the rest of bad vertices. Recall from Proposition \ref{prop:bad-v} that w.h.p.~there are at most $O(\log n)$ bad vertices. Since $\langle \Tilde{Y}, J\rangle = 0$, we have $\sum_{u} \alpha_u = 0$. Let $\Gamma \subseteq [n]$ be the set of good vertices in $G$. Then,
    \begin{align*}
        \left\|\sum_{u \not\in \Gamma} \alpha_u\right\| = O(\log n)
    \end{align*}

    By removing at most $O(\log n)$ vertices from $\Gamma$, we may form a subset $\Gamma'$ of $\Gamma$, and choose a collection of vectors $\beta_u$ for $u\not\in \Gamma'$ such that
    \begin{align*}
        \|\beta_u\|^2 &= \frac{k-1}{k}\\
        \sum_{u \not\in \Gamma'} \beta_u &= \sum_{u \not\in \Gamma'} \alpha_u.
    \end{align*}
    Consider the new Gram matrix $Y$ formed by $\alpha_u$ for $u \in \Gamma'$ and $\beta_u$ for $u \not\in \Gamma'$. Clearly, all the diagonal entries of $Y$ are equal to $\frac{k-1}{k}$. Moreover, we have
    \begin{align*}
        \langle Y, J\rangle &= \left\|\sum_{u \in \Gamma'} \alpha_u + \sum_{u \not\in \Gamma'} \beta_u\right\|^2\\
        &= \left\|\sum_{u } \alpha_u\right\|^2\\
        &= 0, \\
        \langle Y, A_G^{(s)} \rangle &= \langle \Tilde{Y}, A_G^{(s)} \rangle + \langle Y -\Tilde{Y}, A_G^{(s)} \rangle \\
        &= \langle \Tilde{Y}, A_G^{(s)} \rangle + \left(2\sum_{u\not\in \Gamma', v\in \Gamma'} {A_G^{(s)}}_{u,v} \alpha_u^\top (\alpha_v - \beta_v) + \sum_{u \not\in \Gamma'}  {A_G^{(s)}}_{u,u} (\|\alpha_u\| - \|\beta_u\|)\right)\\
        &= \langle \Tilde{Y}, A_G^{(s)} \rangle + O(\log n),
    \end{align*}
    since w.h.p.~there are at most $O(\log n)$ vertices outside $\Gamma'$, there are at most a constant number of nonzero entries in each row of $A_G^{(s)}$, each of which has magnitude bounded by a constant, and all the vectors $\alpha_u, \beta_u$ involved have norm at most $1$. Finally, setting $\Tilde{P} = Y + J/k$, $\Tilde{P}$ clearly satisfies the PSD constraint as $Y \succeq 0$ is a Gram matrix. We may easily verify that $\Tilde{P}$ is a feasible solution to the level-$D$ Path Statistics SDP with error tolerance $\delta$:
    \begin{align*}
        \Tilde{P}_{u,u} &= Y_{u,u} + \frac{1}{k}\\
        &= 1,\\
        \langle \Tilde{P}, J \rangle &= \langle Y, J \rangle + \frac{n^2}{k}\\
        &= \frac{n^2}{k},\\
        \langle \Tilde{P}, A_G^{(s)} \rangle &= \langle Y, A_G^{(s)} \rangle + \left\langle \frac{J}{k}, A_G^{(s)} \right\rangle\\
        &= \langle \Tilde{Y}, A_G^{(s)} \rangle + \left\langle \frac{J}{k}, A_G^{(s)} \right\rangle + O(\log n)\\
        &\in \left(\frac{1}{k} \sum_{i: |\lambda_i| < d} q_s(\lambda_i)\right)n + \frac{1}{k}q_s(d)n + [-\delta n/4, \delta n/4] + o(n),
    \end{align*}
    completing the proof.

    \paragraph{(2) $H$ is bipartite:}
    Finally, we address how to adapt the previous analysis to the bipartite case.
    When $\rho(H) > 2\sqrt{d-1}$, we may show that the Path Statistics SDP is w.h.p.~infeasible when $G$ is drawn from $\mathbb{Q}$, the uniform distribution over bipartite $d$-regular graphs. In this case, by Proposition~\ref{prop:eigenvalues-random-regular}, w.h.p.~the nontrivial eigenvalues of $G$ fall in $[-2\sqrt{d-1} - o(1), 2\sqrt{d-1} + o(1)]$. Using the same polynomial $f$ constructed in part \textbf{(1a)}, we may show
            \[f\left(A_G - \frac{d}{n}J + \frac{d}{n}\begin{bmatrix}
            J_{n/2} & -J_{n/2}\\
            -J_{n/2} & J_{n/2}\end{bmatrix}\right) = f(A_G) - \frac{1}{n}f(d)J - \frac{1}{n}f(-d) \begin{bmatrix}
            J_{n/2} & -J_{n/2}\\
            -J_{n/2} & J_{n/2}
            \end{bmatrix} \succeq 0,\]
        where the vertices of $G$ are arranged in the order such that $\frac{1}{n}\begin{bmatrix}
            J_{n/2} & -J_{n/2}\\
            -J_{n/2} & J_{n/2}
            \end{bmatrix}$ is the orthogonal projection to the eigenvector corresponding to eigenvalue $-d$. Suppose $\Tilde{P}$ is a feasible solution to the Path Statistics SDP. Similarly to before, checking the inner product
            \[\left\langle \Tilde{P},  f(A_G) - \frac{f(d)}{n}J - \frac{ f(-d)}{n}\begin{bmatrix}
            J_{n/2} & -J_{n/2}\\
            -J_{n/2} & J_{n/2}
            \end{bmatrix} \right\rangle\]
            leads to a contradiction.

    When $\rho(H) \le 2\sqrt{d-1}$, we may show that the Path Statistics SDP is w.h.p.~feasible when $G$ is drawn from $\mathbb{Q}$, the uniform distribution over bipartite $d$-regular graphs. We may again use the same $g = \frac{1}{k} \sum_{i: |\lambda_i| \le 2\sqrt{d-1}} g_{\lambda_i}$ as constructed in \textbf{(1b)}. Note that $g$ is a sum of $k-2$ polynomials $g_{\lambda_i}$ rather than $k-1$ polynomials as in \textbf{(1b)}, because in the bipartite case the Ramanujan graph $H$ has two trivial eigenvalues $\pm d$. Consider the matrix \[g(A_G)- \frac{g(d)}{n}J - \frac{g(-d)}{n}\begin{bmatrix}
        J_{n/2} & -J_{n/2}\\
        -J_{n/2} & J_{n/2}
        \end{bmatrix}, \] which will be close to the final construction. Using the same argument as earlier, one may show that the diagonal entries are equal to $1$ for all good vertices. Importantly for the bipartite situation, since $g$ is a sum of $g_{\lambda_i}$, of all which are even polynomials, $g$ is again an even polynomial. In particular, $g(A_G)$ only contains even powers of $A_G$, and by the bipartiteness of the graph $G$, the entries $(u,v)$ of $g(A_G)$ corresponding to vertices in different parts of $G$ are equal to $0$.

        To show that the Path Statistics SDP is feasible, we again set $\Tilde{P}$ to be a slight modification of \[\frac{\frac{k-1}{k}}{\frac{k-1}{k} - g(d)n - g(-d)n}\left(g(A_G) - \frac{g(d)}{n}J - \frac{g(-d)}{n}\begin{bmatrix}
        J_{n/2} & -J_{n/2}\\
        -J_{n/2} & J_{n/2}
        \end{bmatrix}\right),\]
        following the analysis in \textbf{(1b)} by swapping out some vectors\footnote{In the bipartite situation, one needs additionally to make sure during this modification that all of the vectors corresponding to the vertices in one part of the bipartite graph $G$ are orthogonal to the vectors corresponding to the vertices in the other part. This makes sure the off-diagonal blocks of the final matrix are $0$.} in the Gram matrix representation of it, plus $\frac{1}{k}J + \frac{1}{k}\begin{bmatrix}
        J_{n/2} & -J_{n/2}\\
        -J_{n/2} & J_{n/2}
        \end{bmatrix}$.
\end{proof}

Lastly, before moving on to the Local Statistics Algorithm, let us define the Symmetric Path Statistics SDP parametrized by $\lambda$, which simply replaces all of the nontrivial eigenvalues that appear in the moment constraints of the Path Statistics SDP by $\lambda$.
\begin{definition}[The Symmetric Path Statistics SDP,  Non-Bipartite Version]
    The \emph{level-$D$ Symmetric Path Statistics Algorithm parametrized by $\lambda$ with error tolerance $\delta > 0$} on input a $d$-regular graph $G_0$ is the following feasibility SDP: find $\Tilde{P}$ such that
    \begin{enumerate}
        \item $\Tilde{P}_{u,u} = 1$ for every $u \in V(G_0)$,
        \item $\langle \Tilde{P}, J\rangle = \frac{n^2}{k}$,
        \item $\langle \Tilde{P}, A_{G_0}^{(s)} \rangle \in \left(\frac{k-1}{k}q_s(\lambda) + \frac{1}{k}q_s(d)\right) n + [-\delta n, \delta n] \text{ for all } 0 \le s \le D$,
        \item $\Tilde{P} \succeq \frac{1}{k}J$.
    \end{enumerate}
\end{definition}

\begin{definition}[The Symmetric Path Statistics SDP, Bipartite Version]
    Let $G_0$ be $d$-regular, connected and bipartite. Arrange its vertices so that all the edges are between $\{1, 2, \dots, \frac{n}{2}\}$ and $\{\frac{n}{2} +1, \frac{n}{2} + 2, \dots, n\}$.

    The \emph{level-$D$ Symmetric Path Statistics Algorithm parametrized by $\lambda$ with error tolerance $\delta > 0$} on input $G_0$ is the following feasibility SDP: find $\Tilde{P}$ such that
    \begin{enumerate}
        \item $\Tilde{P}_{u,u} = 1$ for every $u \in V(G_0)$,
        \item $\langle \Tilde{P}, J\rangle = \frac{n^2}{k}$,
        \item $\Tilde{P}_{u,v} = \Tilde{P}_{v,u} = 0 \qquad \forall u \in \{1, 2, \dots, \frac{n}{2}\}, v\in \{\frac{n}{2} +1, \frac{n}{2} + 2, \dots, n\}$,
        \item $\langle \Tilde{P}, A_{G_0}^{(s)} \rangle \in \left(\frac{k-2}{k}q_s(\lambda) + \frac{1}{k}q_s(d) + \frac{1}{k}q_s(-d)\right) n + [-\delta n, \delta n] \text{ for all } 0 \le s \le D$,
        \item $\Tilde{P} \succeq \frac{1}{k}J$ and $ \Tilde{P} \succeq \frac{1}{k} \begin{bmatrix}
            J_{n/2} & -J_{n/2}\\
            -J_{n/2} & J_{n/2}
        \end{bmatrix}.$
    \end{enumerate}
\end{definition}

It is straightforward to infer the following fact about the Symmetric Path Statistics SDP from the proof of Theorem \ref{thm:PS-main} above.

\begin{theorem}\label{thm:SPS-main}
    The following hold:
    \begin{enumerate}
    \item If $\lambda > 2\sqrt{d-1}$, then there exist $D$ and error tolerance $\delta > 0$ at which the level-$D$ (non-bipartite) Symmetric Path Statistics SDP parametrized by $\lambda$ is w.h.p.~infeasible for $G$ drawn from $\mathcal{G}(n,d)$. If $\lambda \le 2\sqrt{d-1}$, then for any $D$ and $\delta > 0$, the level-$D$ (non-bipartite) Symmetric Path Statistics SDP parametrized by $\lambda$ is w.h.p.~feasible for $G$ drawn from $\mathcal{G}(n,d)$.
    \item If $\lambda > 2\sqrt{d-1}$, then there exist $D$ and error tolerance $\delta > 0$ at which the level-$D$ (bipartite) Symmetric Path Statistics SDP parametrized by $\lambda$ is w.h.p.~infeasible for $G$ drawn from $\mathcal{G}\left(\left(\frac{n}{2}, \frac{n}{2}\right),d\right)$. If $\lambda \le 2\sqrt{d-1}$, then for any $D$ and $\delta > 0$, the level-$D$ (bipartite) Symmetric Path Statistics SDP parametrized by $\lambda$ is w.h.p.~feasible for $G$ drawn from $\mathcal{G}\left(\left(\frac{n}{2}, \frac{n}{2}\right),d\right)$.
    \end{enumerate}
\end{theorem}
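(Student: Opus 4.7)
The plan is to observe that Theorem~\ref{thm:SPS-main} follows from the arguments used to prove Theorem~\ref{thm:PS-main}, with only minor bookkeeping changes to the target moment values. The Symmetric Path Statistics SDP is structurally identical to the Path Statistics SDP, differing only in that the target moment $\frac{1}{k}\sum_{i=1}^k q_s(\lambda_i)$ is replaced by $\frac{k-1}{k}q_s(\lambda) + \frac{1}{k}q_s(d)$ in the non-bipartite version and by $\frac{k-2}{k}q_s(\lambda) + \frac{1}{k}q_s(d) + \frac{1}{k}q_s(-d)$ in the bipartite version. Conceptually, one may think of the symmetric SDP as corresponding to a fictitious spectrum consisting of $\lambda$ with multiplicity $k-1$ (or $k-2$) together with the trivial eigenvalue(s), and the previous arguments go through verbatim under this substitution.

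For the infeasibility direction when $\lambda > 2\sqrt{d-1}$, I would reuse the Chebyshev-based construction from part \textbf{(1a)} of the previous proof. Recall that that argument needed a polynomial $f$ strictly positive on $[-2\sqrt{d-1}, 2\sqrt{d-1}]$ such that $\sum_{i:|\lambda_i|<d} f(\lambda_i) < 0$; in the symmetric setting this constraint becomes $(k-1) f(\lambda) < 0$ (non-bipartite) or $(k-2) f(\lambda) < 0$ (bipartite), so it suffices to find $f$ with $f(\lambda) < 0$. The same choice $f(x) = -T_S(x/(2\sqrt{d-1})) + 2$ for a sufficiently large even $S$ works, because $T_S(y) \ge \tfrac{1}{2} y^S$ for $|y| > 1$ and even $S$ makes $f(\lambda)$ arbitrarily negative. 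In the bipartite case, I would additionally subtract the appropriate $\frac{f(-d)}{n}\bigl[\begin{smallmatrix} J_{n/2} & -J_{n/2} \\ -J_{n/2} & J_{n/2} \end{smallmatrix}\bigr]$ correction, exactly as in part \textbf{(2)} of the previous proof. Pairing the resulting PSD matrix against the hypothetical feasible $\tilde{P}$ and applying the moment constraints then yields the same contradiction.

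For the feasibility direction when $\lambda \le 2\sqrt{d-1}$, I would take $g \colonequals \frac{k-1}{k} g_\lambda$ in the non-bipartite case (and $g \colonequals \frac{k-2}{k} g_\lambda$ in the bipartite case), where $g_\lambda$ is the even polynomial provided by Theorem 5.6 of \cite{BBKMW-2020-SpectralPlantingColoring}, satisfying $\langle g_\lambda, q_s\rangle_{\mathrm{KM}(d)} \in q_s(\lambda) + [-\delta/4, \delta/4]$ for $s = 1, \dots, D$ and $\langle g_\lambda, q_0\rangle_{\mathrm{KM}(d)} = 1$. The construction of the candidate solution $\tilde{P}$, including the rescaling to force the diagonal entries to $1$, the modification of the Gram matrix on the $O(\log n)$ bad vertices to correct those diagonal entries exactly, and (in the bipartite case) the subtraction of $\frac{g(-d)}{n}\bigl[\begin{smallmatrix} J_{n/2} & -J_{n/2} \\ -J_{n/2} & J_{n/2} \end{smallmatrix}\bigr]$ followed by the addition of $\frac{1}{k}J + \frac{1}{k}\bigl[\begin{smallmatrix} J_{n/2} & -J_{n/2} \\ -J_{n/2} & J_{n/2} \end{smallmatrix}\bigr]$, proceeds exactly as in parts \textbf{(1b)} and \textbf{(2)} of the proof of Theorem~\ref{thm:PS-main}.

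The main obstacle, if it can be called one, is entirely cosmetic: one must carefully track the coefficients $\frac{k-1}{k}$, $\frac{k-2}{k}$, $\frac{1}{k}$ so that the moment targets match, and in the bipartite case one must ensure that the polynomial $g$ is even so that entries of $g(A_G)$ corresponding to pairs of vertices in opposite parts of the bipartition vanish. Both points are already addressed in the proof of Theorem~\ref{thm:PS-main}, so no new analytical ideas are required; the argument can be presented as a direct corollary with the substitutions spelled out.
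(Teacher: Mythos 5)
Your proposal is correct and matches the paper's own (essentially one-line) argument, which simply defers to the proof of Theorem~\ref{thm:PS-main} and notes that the Symmetric Path Statistics SDP is the special case where the nontrivial spectrum is $\lambda$ with multiplicity $k-1$ (or $k-2$ in the bipartite case). One minor note: you have implicitly corrected a typo in the paper's construction of $f$ — the paper writes $f(x) = -T_S(2\sqrt{d-1}\cdot x) + 2$, but as you correctly write, it should be $f(x) = -T_S\bigl(x/(2\sqrt{d-1})\bigr) + 2$ so that $f$ is bounded below by $1$ on $[-2\sqrt{d-1}, 2\sqrt{d-1}]$.
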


\subsection{Local Statistics SDP}

Recall that we have already defined the Local Statistics SDP with informal moment constraints in Definition \ref{def:local-stats-informal}.
Now we will make formal the moment constraints with a suitable notion of error tolerance.
Note that one goal in the definition of the Local Statistics SDP is to make sure that, for a graph $G$ drawn from the planted distribution $\mathbb{P}$, w.h.p.~the Local Statistics SDP is feasible.
Let us define a basis for the symmetric polynomials that appear in the moment constraints in the Local Statistics SDP.
For technical reasons, we will treat the bipartite case differently from the non-bipartite case.
We first start with the non-bipartite case, for which we can use existing results from \cite{BBKMW-2020-SpectralPlantingColoring}.

\subsubsection{Basis of Symmetric Polynomials: Non-Bipartite Case}

\begin{definition}
    A partially labelled graph $(\alpha, S, \tau )$ is a triple of a graph $\alpha$, a subset of distinguished vertices $S \subseteq V (\alpha)$, and a labelling $\tau : S \to [k]$. We say that a graph is fully labelled if $S = V (\alpha)$, and in this case write $(\alpha, \tau )$ for short. A homomorhism from $(\alpha, S, \tau )$ into a fully labelled graph $(G, \sigma)$ is a map $\phi : V (\alpha) \to V (G)$ that is a graph homormorphism from $\alpha$ to $G$ and agrees on labels for the subset $S$ of distinguished vertices. An occurrence of $(\alpha, S, \tau )$ in $(G, \sigma)$ is an injective homomorphism.

    For every partially labelled graph $(\alpha, S, \tau)$, there exists an associated polynomial in $\mathbb{R}[x, G]$ that counts the number of occurrences of it in a labelled graph $(G, \sigma)$:
    \begin{align*}
        p_{(\alpha, S, \tau)}(x, G) = \sum_{\phi: V(\alpha) \hookrightarrow V(G)} \prod_{\{u,v\} \in E(\alpha) } G_{\phi(u), \phi(v) } \prod_{i \in S} x_{\phi(i), \tau(i)}.
    \end{align*}
    Note that $\deg_x(p_{(\alpha, S, \tau)}) = |S|$ and $\deg_G(p_{(\alpha, S, \tau)}) = |E(\alpha)|$.
\end{definition}

Let $\mathbb{S}[x,G] \subseteq \mathbb{R}[x,G]$ be the space of polynomials that are fixed under the action of $S_n$, and similarly let $\mathbb{S}[x,G]_{\le D_x, D_G} \subseteq \mathbb{R}[x,G]_{\le D_x, D_G}$ be the space of corresponding polynomials with degree in $x$ and $G$ bounded by $D_x$ and $D_G$ respectively. The following lemma is easy to verify.

\begin{lemma}[\cite{BBKMW-2020-SpectralPlantingColoring}]\label{lem:basis}
    The set of polynomials $\{p_{(\alpha,S,\tau)}: |S| \le D_x, |E(\alpha)| \le D_G\}$ form a basis for $\mathbb{S}[x, G]_{\le D_x, D_G}$.
\end{lemma}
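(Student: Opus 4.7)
The plan is to set up an explicit bijection between $S_n$-orbits of squarefree monomials in $x$ and $G$ and isomorphism classes of partially labelled graphs, and then to recognise the orbit sums as rescalings of the $p_{(\alpha, S, \tau)}$. First, because the Boolean relations $x_{u,i}^2 - x_{u,i}$ and $G_{u,v}^2 - G_{u,v}$ and the partition relations $\sum_i x_{u,i} - 1$ all lie in the ambient ideal $\mathcal{I}$, every element of $\mathbb{R}[x, G]_{\le D_x, D_G}$ is equivalent modulo $\mathcal{I}$ to a linear combination of squarefree monomials of the form $\prod_{(u,i) \in T} x_{u,i} \prod_{\{u,v\} \in F} G_{u,v}$, where $T \subseteq [n] \times [k]$ has at most one entry per vertex (thanks to the partition relation), $|T| \le D_x$, and $|F| \le D_G$. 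Consequently a basis for the $S_n$-invariant subspace $\mathbb{S}[x,G]_{\le D_x, D_G}$ is given by sums over the $S_n$-orbits of such monomials. Each such monomial encodes a concrete occurrence of a partially labelled graph: set $V \subseteq [n]$ to be the vertices appearing in $T$ or $F$, let $\alpha = (V, F)$, let $S \subseteq V$ be the vertices appearing in $T$, and let $\tau: S \to [k]$ read off the label at each distinguished vertex. The $S_n$-action permuting vertex indices corresponds to varying the injection $V \hookrightarrow [n]$, so $S_n$-orbits of squarefree monomials are in bijection with isomorphism classes of partially labelled graphs $(\alpha, S, \tau)$ with $|S| \le D_x$ and $|E(\alpha)| \le D_G$.

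Second, I would identify the orbit sum corresponding to $(\alpha, S, \tau)$ with $p_{(\alpha, S, \tau)} / |\Aut(\alpha, S, \tau)|$, where $\Aut$ denotes label-preserving graph automorphisms of $(\alpha, S, \tau)$: the polynomial $p_{(\alpha, S, \tau)}$ sums over all injective label-respecting homomorphisms $V(\alpha) \hookrightarrow [n]$, and each squarefree monomial in its support is produced by exactly $|\Aut(\alpha, S, \tau)|$ such injections, so each $p_{(\alpha, S, \tau)}$ is a nonzero scalar multiple of the corresponding orbit sum. The orbit sums for distinct isomorphism classes are supported on disjoint sets of monomials, hence are linearly independent, and every $S_n$-invariant element of $\mathbb{R}[x,G]_{\le D_x, D_G}$ modulo $\mathcal{I}$ is a linear combination of them. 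The matching of degree constraints $\deg_x p_{(\alpha, S, \tau)} = |S|$ and $\deg_G p_{(\alpha, S, \tau)} = |E(\alpha)|$ is immediate from the defining formula, so the claimed indexing of the basis is correct.

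The argument is essentially formal, and the only substantive piece of bookkeeping is the computation of the automorphism factor in the second step, which must be matched carefully against the orbit-stabiliser count to avoid overcounting. No deeper obstacle arises, which is consistent with the lemma being quoted from \cite{BBKMW-2020-SpectralPlantingColoring} rather than re-proved here; the main value of walking through the argument is to confirm that the natural indexing of the basis by isomorphism classes of partially labelled graphs lines up exactly with the $(\alpha, S, \tau)$ that index the polynomials $p_{(\alpha, S, \tau)}$ appearing throughout the subsequent analysis.
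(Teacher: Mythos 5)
The paper does not prove this lemma; it is quoted from \cite{BBKMW-2020-SpectralPlantingColoring} with the remark ``The following lemma is easy to verify,'' so there is no in-text proof to compare against. Your argument, however, has a genuine gap in the independence step, and the gap in fact shows that ``basis'' is an overclaim in the statement as written. Your spanning argument is the right one, but it works modulo the ideal $\mathcal{I}$: reducing by $x_{u,i}^2 - x_{u,i}$, $G_{u,v}^2-G_{u,v}$, and $\sum_i x_{u,i}-1$ (and hence $x_{u,i}x_{u,j}\in\mathcal{I}$ for $i\ne j$) to squarefree monomials with at most one label per vertex means you are computing in $\RR[x,G]/\mathcal{I}$, and after averaging the reduced expression over $S_n$ (legitimate since $\mathcal{I}$ is $S_n$-stable) this correctly shows that the $p_{(\alpha,S,\tau)}$ span $\mathbb{S}[x,G]_{\le D_x,D_G}$ modulo $\mathcal{I}$. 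Your independence argument, by contrast, is an argument about disjoint monomial supports in $\RR[x,G]$, and it does not survive passage to the quotient.

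Concretely, the $p_{(\alpha,S,\tau)}$ are not independent modulo $\mathcal{I}$: writing $(\alpha_i,S_i,\tau_i)$ for the single vertex labelled $i$ and $(\emptyset,\emptyset,\emptyset)$ for the empty partially labelled graph,
\begin{equation*}
\sum_{i=1}^k p_{(\alpha_i,S_i,\tau_i)} - n\,p_{(\emptyset,\emptyset,\emptyset)} = \sum_{u\in[n]}\Bigl(\sum_{i\in[k]} x_{u,i}-1\Bigr)\in\mathcal{I},
\end{equation*}
a nontrivial dependence whenever $D_x\ge 1$. There are also dependencies already over $\RR[x,G]$: a single unlabelled isolated vertex gives $p=n$, a scalar multiple of $p_{(\emptyset,\emptyset,\emptyset)}=1$; your bijection avoids isolated unlabelled vertices by constructing $V(\alpha)$ from the monomial, but the lemma's indexing set as written allows them. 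And the set does not span $\mathbb{S}[x,G]_{\le D_x,D_G}$ \emph{before} quotienting either, since, e.g., $\sum_u x_{u,1}^2$ is invariant of degree $(2,0)$ but not a linear combination of squarefree polynomials. The statement that is actually correct---and all that the paper uses, since the pseudoexpectation $\widetilde{\mathbb{E}}$ annihilates $\mathcal{I}$---is that the $p_{(\alpha,S,\tau)}$ \emph{span} $\mathbb{S}[x,G]_{\le D_x,D_G}$ modulo $\mathcal{I}$, which your first paragraph essentially does establish once one adds the $S_n$-averaging step.
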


Now that we have a basis for $\mathbb{S}[x,G]$ the space of polynomials that are fixed under the action of $S_n$, the moment constraints in the Local Statistics SDP are equivalent to the moment constraints restricted to the basis elements, i.e.,
\begin{align*}
    \Tilde{\mathbb{E}}[p_{(\alpha, S, \tau)}(x, G_0)] \approx \mathbb{E}_{(x, G) \sim \mathbb{P}}[p_{(\alpha, S, \tau)}(x, G)] \,\, \text{ for all } \,\, (\alpha, S, \tau): |S|\le D_x, |E(\alpha)| \le D_G.
\end{align*}

We will then use the following results proved in \cite{BBKMW-2020-SpectralPlantingColoring} about the expectations of these basis polynomials. Recall we use $M$ to denote the $k \times k$ adjacency matrix of the base graph $H$. Following \cite{BBKMW-2020-SpectralPlantingColoring}, for a partially labelled graph $(\alpha,S,\tau)$ we define
\begin{align*}
    (M)_{\tau}^{(\alpha, S)} = \sum_{ \substack{\hat{\tau}: V(\alpha) \to [k]\\ \hat{\tau}|_{S} = \tau } } \frac{\prod_{v \in V(\alpha)} \prod_{i \in [k]} M_{\hat{\tau}(v), i}(M_{\hat{\tau}(v), i} - 1) \cdots (M_{\hat{\tau}(v), i} - \deg_{\hat{\tau}, i}(v) + 1) }{\prod_{\{u,v\}\in E(\alpha) } M_{\hat{\tau}(u), \hat{\tau}(v) } },
\end{align*}
where $\deg_{\hat{\tau}, i}(v)$ is the number of neighbors of $v$ in $\alpha$ that are mapped to group $i$ under the labelling $\hat{\tau}$. Note that this quantity is multiplicative on disjoint union $(\alpha_1 \sqcup \alpha_2, S_1 \sqcup S_2, \tau_1 \sqcup \tau_2)$. Let $\zeta(\alpha) = |V(\alpha)| - |E(\alpha)|$, and write $\text{cc}(\alpha)$ for the number of connected components of $\alpha$.

\begin{lemma}[\cite{BBKMW-2020-SpectralPlantingColoring}]\label{lem:planted-expectation}
    Let $(x, G)$ be drawn from the configuration model\footnote{See \cite[Section 5.4]{BBKMW-2020-SpectralPlantingColoring} for this caveat of working with the configuration model. The upshot is that high probability events transfer from one model to the other.} corresponding to random lift of the base graph $H$ whose adjacency matrix is $M$. For any partially labelled graph $(\alpha,S,\tau)$,
    \begin{align*}
        \mathbb{E}[p_{(\alpha, S, \tau)}(x,G)] = (n/k)^{\zeta(\alpha)} (M)_{\tau}^{(\alpha,S)} + O(n^{\zeta(\alpha) - 1}).
    \end{align*}
\end{lemma}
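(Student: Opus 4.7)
The plan is to expand $p_{(\alpha, S, \tau)}(x, G)$ by its defining sum, decompose the label-matching indicator over all extensions $\hat{\tau}: V(\alpha) \to [k]$ of $\tau$, and then use the configuration model half-edge structure of the random lift to evaluate the expectation. Writing
\[
\prod_{i \in S} x_{\phi(i), \tau(i)} = \sum_{\hat{\tau}: V(\alpha) \to [k],\ \hat{\tau}|_S = \tau} \prod_{v \in V(\alpha)} \One\{\sigma(\phi(v)) = \hat{\tau}(v)\}
\]
sorts the injective homomorphisms $\phi: V(\alpha) \hookrightarrow V(G)$ by their induced labeling $\hat{\tau} = \sigma \circ \phi$, reducing the problem to a sum over $\hat{\tau}$ of computations in which the labels of all of $V(\alpha)$ are fixed.

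For each fixed extension $\hat{\tau}$, I would first count the number of injective $\phi$ with $\sigma \circ \phi = \hat{\tau}$, which equals $\prod_{i \in [k]} m(m-1)\cdots(m - |\hat{\tau}^{-1}(i)| + 1) = m^{|V(\alpha)|}(1 + O(1/m))$ by independently choosing ordered distinct vertices from each fiber of size $m = n/k$. Next, conditional on $\sigma$, I would evaluate the edge product using the half-edge description of the configuration model: each vertex $v$ with $\hat{\tau}(v) = i$ carries $M_{ij}$ half-edges toward fiber $j$, so realizing the $\deg_{\hat{\tau}, j}(v)$ required edges from $v$ to its neighbors of $\hat{\tau}$-label $j$ amounts to selecting an ordered tuple of distinct such half-edges, giving a combinatorial factor of $M_{ij}(M_{ij} - 1) \cdots (M_{ij} - \deg_{\hat{\tau}, j}(v) + 1)$ for each pair $(v, j)$. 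Each of the $|E(\alpha)|$ designated pairs of half-edges, one pair per edge of $\alpha$, is then matched with probability $(M_{\hat{\tau}(u), \hat{\tau}(v)} m)^{-1}(1 + O(1/m))$ in the corresponding uniformly random matching. Assembling these pieces, the contribution of $\hat{\tau}$ is
\[
m^{|V(\alpha)|} \cdot \frac{\prod_{v \in V(\alpha)} \prod_{i \in [k]} M_{\hat{\tau}(v), i}^{\underline{\deg_{\hat{\tau}, i}(v)}}}{\prod_{\{u,v\} \in E(\alpha)} M_{\hat{\tau}(u), \hat{\tau}(v)} \cdot m^{|E(\alpha)|}} \cdot (1 + O(1/m)),
\]
and since $m^{|V(\alpha)| - |E(\alpha)|} = (n/k)^{\zeta(\alpha)}$, this is exactly the $\hat{\tau}$-summand of $(n/k)^{\zeta(\alpha)} (M)_\tau^{(\alpha, S)}$ plus a relative $O(1/m)$ error. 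Summing over the $O(1)$ many extensions $\hat{\tau}$ gives the advertised leading asymptotic with an additive error of order $n^{\zeta(\alpha) - 1}$.

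The main obstacle is the careful bookkeeping for loops of $H$, where both endpoints of an $\alpha$-edge may land in the same fiber and must draw half-edges from the same pool of $M_{ii} m$ half-edges; the matching probability per edge is still $(M_{ii} m)^{-1}(1 + O(1/m))$ and the falling-factorial count at each vertex is unchanged, but verifying this cleanly requires separating the cases $i \neq j$ and $i = j$ and being precise about how the paired-matching framework of \cite{BC-2019-EigenvaluesRandomLifts} handles self-loops (cf.\ the discussion in Section~\ref{sec:prelim:lifts}). A secondary concern is that ``degenerate'' configurations — in which two different designated half-edges accidentally coincide, or in which the compounding $O(1/m)$ corrections are themselves bookkept — contribute at a strictly lower order; each such coincidence forces one fewer free choice of vertex or half-edge and hence saves at least one power of $m$, which is what lets all such contributions collapse into the overall additive error $O(n^{\zeta(\alpha) - 1})$.
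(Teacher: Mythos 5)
Your argument is correct and follows essentially the same route as the paper's proof (given for the bipartite adaptation, which itself follows \cite{BBKMW-2020-SpectralPlantingColoring}): condition on the fixed balanced labelling $\sigma$, decompose the indicator $\prod_{i \in S} x_{\phi(i),\tau(i)}$ over extensions $\hat{\tau}$ of $\tau$ to all of $V(\alpha)$, count the $(n/k)^{|V(\alpha)|}(1 + O(1/m))$ label-compatible injections $\phi$, and compute the per-$\phi$ occurrence probability in the configuration model by pairing up designated half-edges, yielding the falling-factorial numerators and the $(M_{\hat\tau(u),\hat\tau(v)} m)^{-1}$ factors per edge. The paper outsources the occurrence-probability computation to the cited Lemma 5.12 (writing it as a ratio of factorials $(M_{ij}\frac{n}{k} - E_{\hat\tau(i),\hat\tau(j)}(\alpha))!/(M_{ij}\frac{n}{k})!$), whereas you re-derive the equivalent $(1 + O(1/m))$-approximate form from first principles; your remark about the self-loop case is also a valid point of care that the bipartite adaptation avoids altogether.
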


\begin{lemma}[\cite{BBKMW-2020-SpectralPlantingColoring}] \label{lem:expected-cycle}
    Let $(\alpha,S,\tau)$ be a partially labelled graph with $O(1)$ edges and at least $1$ cycle. Let $(x, G)$ be drawn from the random lift of the base graph $H$ whose adjacency matrix is $M$. Then, for any function $f(n) > 0$,
    \begin{align*}
        \mathbb{P}[p_{(\alpha,S,\tau)}(x, G) > f(n)] = O\left(\frac{n^{\text{cc}(\alpha) - 1}}{f(n)}\right)
    \end{align*}
\end{lemma}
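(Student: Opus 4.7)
The plan is a direct Markov's inequality argument, using Lemma~\ref{lem:planted-expectation} and the topological fact that any graph containing a cycle has $|E| \ge |V| - \mathrm{cc}(\alpha) + 1$. Since $p_{(\alpha, S, \tau)}(x,G)$ is a sum of products of $0/1$ indicators it is nonnegative, so Markov's inequality applies: it suffices to bound $\mathbb{E}[p_{(\alpha,S,\tau)}(x,G)]$.

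First, I would establish the key combinatorial inequality $\zeta(\alpha) \le \mathrm{cc}(\alpha) - 1$ whenever $\alpha$ contains at least one cycle. Decompose $\alpha$ into its connected components $\alpha_1, \dots, \alpha_c$ where $c = \mathrm{cc}(\alpha)$. Each component satisfies $|E(\alpha_j)| \ge |V(\alpha_j)| - 1$, with equality iff $\alpha_j$ is a tree. Since some $\alpha_j$ contains a cycle, that component has $|E(\alpha_j)| \ge |V(\alpha_j)|$, and summing gives $|E(\alpha)| \ge |V(\alpha)| - c + 1$, i.e.\ $\zeta(\alpha) = |V(\alpha)| - |E(\alpha)| \le c - 1$.

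Next, I would invoke Lemma~\ref{lem:planted-expectation}, which gives
\begin{equation*}
    \mathbb{E}[p_{(\alpha,S,\tau)}(x,G)] = (n/k)^{\zeta(\alpha)} (M)_{\tau}^{(\alpha,S)} + O(n^{\zeta(\alpha) - 1}).
\end{equation*}
Here $(M)_\tau^{(\alpha,S)}$ depends only on the fixed $O(1)$-sized objects $M$, $\alpha$, $S$, $\tau$ (and not on $n$), so it is an absolute constant. Combined with the bound $\zeta(\alpha) \le \mathrm{cc}(\alpha) - 1$ from the first step, we get
\begin{equation*}
    \mathbb{E}[p_{(\alpha,S,\tau)}(x,G)] = O\bigl(n^{\mathrm{cc}(\alpha) - 1}\bigr).
\end{equation*}

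Finally, applying Markov's inequality to the nonnegative random variable $p_{(\alpha,S,\tau)}(x,G)$ yields
\begin{equation*}
    \mathbb{P}\bigl[p_{(\alpha,S,\tau)}(x, G) > f(n)\bigr] \le \frac{\mathbb{E}[p_{(\alpha,S,\tau)}(x, G)]}{f(n)} = O\!\left(\frac{n^{\mathrm{cc}(\alpha)-1}}{f(n)}\right),
\end{equation*}
as claimed. There is no substantive obstacle here: the only thing to verify is that $(M)_\tau^{(\alpha,S)}$ is indeed $O(1)$, which follows because the sum defining it runs over a constant number of labellings $\hat\tau$ and each summand is a product/quotient of entries of the fixed matrix $M$, with the quotient well-defined because the corresponding edge of $\alpha$ would otherwise contribute a factor of $0$ to the expectation in any case.
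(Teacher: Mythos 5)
Your argument is correct and is exactly the approach the paper has in mind: it cites \cite{BBKMW-2020-SpectralPlantingColoring} for this lemma and elsewhere describes the proof as a ``simple application of Markov's inequality,'' which is what you give, combined with the standard topological fact that a cycle forces $\zeta(\alpha) = |V(\alpha)| - |E(\alpha)| \leq \mathrm{cc}(\alpha) - 1$ and the expectation bound from Lemma~\ref{lem:planted-expectation}. The only detail worth flagging is that Lemma~\ref{lem:planted-expectation} computes the expectation under the configuration model rather than the lift conditioned to be simple; since the probability of simplicity is bounded away from zero, this changes the bound only by a constant factor, and the paper itself relegates this to a footnote, so your proof is in line with the intended level of rigor.
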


\begin{lemma}[\cite{BBKMW-2020-SpectralPlantingColoring}] \label{lem:variance-partially-labelled-graph}
    Let $(\alpha, S, \tau) = \bigsqcup_{t \in \text{cc}(H)} (\alpha_t, S_t, \tau_t)$ be a partially labelled forest with $O(1)$ edges. Let $(x,G)$ be drawn from the random lift of the base graph $H$ whose adjacency matrix is $M$. Then, for any function $f(n) > 0$,
    \begin{align*}
        \mathbb{P}\left[\left|p_{(\alpha,S,\tau) }(x,G) - (n/k)^{\text{cc}(\alpha)} \prod_{t\in \text{cc}(\alpha)}(M)_{\tau_t}^{(\alpha_t,S_t)}\right| > f(n) \right] = O\left(\frac{n^{2\text{cc}(\alpha) - 1}}{f(n)^2}\right)
    \end{align*}
\end{lemma}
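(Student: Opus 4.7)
The plan is to use Chebyshev's inequality, after reducing the statement to a variance bound and absorbing the bias coming from comparing $p$ against its approximate mean rather than its exact mean. Since $\alpha$ is a forest, $\zeta(\alpha) = \text{cc}(\alpha)$, so Lemma~\ref{lem:planted-expectation} together with multiplicativity of $(M)^{(\alpha,S)}_\tau$ under disjoint unions gives that $\mathbb{E}[p_{(\alpha,S,\tau)}(x,G)]$ equals $(n/k)^{\text{cc}(\alpha)} \prod_t (M)^{(\alpha_t,S_t)}_{\tau_t}$ up to an additive error of $O(n^{\text{cc}(\alpha)-1})$. If $f(n)$ is much smaller than this bias the claimed bound is vacuous (the right-hand side then exceeds $1$); otherwise the bias can be absorbed via the triangle inequality, and it suffices to prove $\mathrm{Var}(p_{(\alpha,S,\tau)}(x,G)) = O(n^{2\text{cc}(\alpha)-1})$.

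For the variance I would expand $p_{(\alpha,S,\tau)}^2$ as a sum over ordered pairs of injective homomorphisms $\phi_1, \phi_2 : V(\alpha) \hookrightarrow V(G)$, and group these pairs according to their \emph{overlap pattern} $\sim$: the partition of $V(\alpha) \sqcup V(\alpha)$ whose blocks are the fibers of the combined map. Each admissible $\sim$ (admissible meaning identifications preserve $\tau$-labels on $S$) induces a partially labelled quotient graph $(\alpha', S', \tau')$, and pairs with overlap pattern $\sim$ biject with injective homomorphisms of $(\alpha', S', \tau')$ into $(G,\sigma)$. Using $G_{uv}^2 = G_{uv}$ to deduplicate coincident edges then gives $p_{(\alpha,S,\tau)}^2 = \sum_\sim p_{(\alpha',S',\tau')}$, so that Lemma~\ref{lem:planted-expectation} yields $\mathbb{E}[p^2] = \sum_\sim (n/k)^{\zeta(\alpha')} (M)^{(\alpha',S')}_{\tau'} + (\text{lower order})$.

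The key combinatorial observation is that $\zeta(\alpha') \leq 2\text{cc}(\alpha)$ with equality iff $\sim$ is trivial (giving $\alpha' = \alpha \sqcup \alpha$ and $(M)^{(\alpha',S')}_{\tau'} = \bigl((M)^{(\alpha,S)}_\tau\bigr)^2$), while every non-trivial $\sim$ satisfies $\zeta(\alpha') \leq 2\text{cc}(\alpha) - 1$. Indeed, $\alpha \sqcup \alpha$ is still a forest with $\zeta = 2\text{cc}(\alpha)$, and using the identity $\zeta = \text{cc} - \text{cy}$ (where $\text{cy}$ is the cycle rank), a single vertex identification strictly drops $\zeta$ by exactly $1$: it either merges two components or creates one new cycle, but it cannot cause edge deduplication (which would require at least two simultaneous identifications). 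Further identifications can only weakly decrease $\zeta$. Consequently the trivial overlap contributes $\mathbb{E}[p]^2 + O(n^{2\text{cc}(\alpha)-1})$ to $\mathbb{E}[p^2]$, and the finitely many non-trivial overlap patterns collectively contribute $O(n^{2\text{cc}(\alpha)-1})$, so $\mathrm{Var}(p) = \mathbb{E}[p^2] - \mathbb{E}[p]^2 = O(n^{2\text{cc}(\alpha)-1})$, as desired.

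The main obstacle is the careful combinatorial bookkeeping around the overlap patterns---verifying $\tau$-label consistency, tracking edge deduplications that reduce $|E(\alpha')|$ and may push $\zeta(\alpha')$ back up, and confirming that the leading $(n/k)^{2\text{cc}(\alpha)}$ contributions from the trivial overlap in $\mathbb{E}[p^2]$ really cancel those in $\mathbb{E}[p]^2$ up to the stated error. Additional technicalities involving the configuration model and base graphs $H$ with loops or multiedges follow established patterns from \cite{BBKMW-2020-SpectralPlantingColoring}.
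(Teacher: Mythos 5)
Your overall strategy---Chebyshev's inequality via a second-moment bound, decomposing $\mathbb{E}[p^2]$ over overlap patterns $\sim$ of pairs of injective homomorphisms, applying Lemma~\ref{lem:planted-expectation} to each resulting quotient graph $\alpha'$, and isolating the trivial overlap as the $\mathbb{E}[p]^2$ term---is exactly the argument the paper defers to by citing \cite{BBKMW-2020-SpectralPlantingColoring}, so at the level of approach you are on target. The $\tau$-label admissibility check and the observation $\zeta(\alpha') = 2\mathrm{cc}(\alpha)$ iff $\sim$ is trivial are also right.

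However, your justification of the key bound $\zeta(\alpha') \leq 2\mathrm{cc}(\alpha)-1$ for nontrivial $\sim$ has a genuine gap. You argue that the first vertex identification drops $\zeta$ by exactly one and that "further identifications can only weakly decrease $\zeta$," but the second assertion is false: a later identification \emph{can} trigger edge deduplications and thereby \emph{increase} $\zeta$. For example, take $\alpha = K_{1,2}$ (center $u$, leaves $w,w'$) and let $\sim$ identify all three pairs across the two copies. Sequentially, $\zeta$ goes $2 \to 1 \to 0 \to 1$, with the last identification raising $\zeta$ by one because two parallel edges are deduplicated. The conclusion $\zeta(\alpha')=1\leq 2\mathrm{cc}(\alpha)-1$ still holds, but your step-by-step monotonicity claim does not. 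The clean repair is already contained in the identity you invoked: $\zeta(\alpha') = \mathrm{cc}(\alpha') - \mathrm{cy}(\alpha') \leq \mathrm{cc}(\alpha')$, and since both $\phi_1,\phi_2$ are injective, $\sim$ is a partial matching between the two disjoint copies of $\alpha$, so any nontrivial identification necessarily merges two components of $\alpha \sqcup \alpha$ and gives $\mathrm{cc}(\alpha') \leq 2\mathrm{cc}(\alpha)-1$. This single inequality replaces the entire sequential analysis and sidesteps edge deduplication bookkeeping altogether.
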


We moreover need the following simple result about the polynomials corresponding to edgeless partially labelled graphs.

\begin{lemma}\label{lem:planted-edgeless}
    Let $(x, G)$ encode any lift of the base graph $H$ whose adjacency matrix is $M$. For any partially labelled graph $(\alpha,S,\tau)$ that is edgeless, i.e., if $\alpha$ is a collection of isolated vertices,
    \begin{align*}
        p_{(\alpha, S, \tau)}(x,G) = N_{(\alpha, S, \tau)},
    \end{align*}
    where $N_{(\alpha, S, \tau)} \colonequals \left[\prod_{i \in [k]} \frac{n}{k}\left(\frac{n}{k} - 1\right) \cdots \left(\frac{n}{k} -|\tau^{-1}(i)| + 1\right)\right](n - |S|)(n - |S| - 1) \cdots (n - |V(\alpha)| + 1)$.
\end{lemma}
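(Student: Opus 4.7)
\medskip
\noindent\textbf{Proof plan for Lemma~\ref{lem:planted-edgeless}.} The plan is to unwind the definition of $p_{(\alpha,S,\tau)}$ in the edgeless case and reduce to a straightforward combinatorial count that does not depend on any randomness. No concentration or asymptotic analysis is needed, because the output will turn out to be deterministic given that $(x,G)$ encodes a lift of $H$: only the balanced partition $\sigma:[n]\to[k]$ matters, and the total count depends only on the sizes $|\sigma^{-1}(i)| = n/k$.

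First, I would substitute the definition
\[
p_{(\alpha,S,\tau)}(x,G) \;=\; \sum_{\phi:V(\alpha)\hookrightarrow V(G)} \prod_{\{u,v\}\in E(\alpha)} G_{\phi(u),\phi(v)} \prod_{i\in S} x_{\phi(i),\tau(i)}
\]
and note that since $E(\alpha)=\emptyset$, the edge product is the empty product equal to $1$. Thus the polynomial collapses to $\sum_{\phi} \prod_{i\in S} x_{\phi(i),\tau(i)}$. Recalling $x_{u,j}=\One\{\sigma(u)=j\}$, the product is the indicator of the event $\sigma(\phi(i))=\tau(i)$ for every $i\in S$. Hence $p_{(\alpha,S,\tau)}(x,G)$ equals the number of injections $\phi:V(\alpha)\to[n]$ such that $\phi(i)\in\sigma^{-1}(\tau(i))$ for all $i\in S$.

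Next I would count these injections directly. For each color class $i\in[k]$, the $|\tau^{-1}(i)|$ elements of $S$ assigned color $i$ must be sent injectively into the fiber $\sigma^{-1}(i)$, which has size $n/k$; the number of ways to do so is the falling factorial $\frac{n}{k}(\frac{n}{k}-1)\cdots(\frac{n}{k}-|\tau^{-1}(i)|+1)$. Since the fibers are disjoint, choices across different $i$ multiply, giving the bracketed product in $N_{(\alpha,S,\tau)}$. After this, $|S|$ vertices of $[n]$ are used up, and the remaining $|V(\alpha)|-|S|$ unlabelled vertices of $\alpha$ may be mapped injectively into any of the $n-|S|$ leftover vertices of $[n]$, contributing the trailing falling factorial $(n-|S|)(n-|S|-1)\cdots(n-|V(\alpha)|+1)$. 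Multiplying these factors yields exactly $N_{(\alpha,S,\tau)}$, completing the proof.

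There is essentially no obstacle here: the lemma is a deterministic combinatorial identity, and the only minor care required is to confirm that the two counting stages (fiber-constrained placements for $S$, then unconstrained placements on $V(\alpha)\setminus S$) are independent and together enumerate exactly the injective maps satisfying the label constraint. This holds because fibers are disjoint and the second stage chooses from the complement of the first stage's image.
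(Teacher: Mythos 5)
Your proof is correct. The paper states Lemma~\ref{lem:planted-edgeless} without a proof, only remarking that it is a deterministic statement (independent of the randomness of $G$) because edgeless graphs produce polynomials that depend only on the $x$ variables. Your argument makes this precise in exactly the way one would expect: the edge product trivializes, the $x$-indicators reduce the sum to a count of injections respecting the fiber constraints on $S$, and the two-stage falling-factorial count (first placing $S$ fiber-by-fiber, then placing $V(\alpha)\setminus S$ into the remaining $n-|S|$ slots) gives the claimed $N_{(\alpha,S,\tau)}$; the two stages are multiplicative because any injection decomposes uniquely into its restrictions to $S$ and $V(\alpha)\setminus S$, and the latter ranges freely over the complement of the former's image.
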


In particular, the statement of Lemma~\ref{lem:planted-edgeless} does not involve the expectation under $\mathbb{P}$, but rather holds for an arbitrary lift of $H$, since these polynomials are only related to the $x$ variables and do not use the $G$ variables. We will later on enforce these hard constraints and will refer to them as \emph{label constraints}.

\subsubsection{Basis of Symmetric Polynomials: Bipartite Case}

Next let us consider the bipartite case. Recall in this case $\mathbb{P} = \mathcal{L}_m(H)$ and $\mathbb{Q} = \mathcal{G}\left(\left(\frac{n}{2}, \frac{n}{2}\right), d\right)$. Note that these two distributions are still invariant under the action of $S_n$, so one may still want to work with the basis we defined in the previous section. However, in this case, we can extract more information from a graph $G$ drawn from $\mathbb{P}$ and $\mathbb{Q}$ since w.h.p.~$G$ is bipartite and connected (see Remark~\ref{rem:bipartite-connected}), and it would be more convenient for the analysis to work with a different basis (together with slightly modified $\mathbb{P}$ and $\mathbb{Q}$). In particular, for a connected $d$-regular bipartite graph $G$, there exists a unique balanced bipartition of the vertex set $V(G)$ into $A \sqcup B$ such that all the edges of $G$ are between $A$ and $B$, and this bipartition can moreover be found in linear time.

Therefore, we will work with modified $\mathbb{P}$ and $\mathbb{Q}$ such that each distribution is conditioned on the event that $G$ is connected, and has all edges with one endpoint in $A \colonequals \{1, \dots, \frac{n}{2}\}$ and the other in $B \colonequals \{\frac{n}{2}+1, \dots, n\}$. The reason we assume that a fixed bipartition is that, once we recover the unique balanced bipartition of $G$, we may reindex its vertex set so that the edges respect this fixed bipartition. It is easy to check that this procedure results in the modified $\mathbb{P}$ and $\mathbb{Q}$ above. Denote the modified distributions as $\Tilde{\mathbb{P}}$ and $\Tilde{\mathbb{Q}}$. If one can distinguish $\mathbb{P}$ and $\mathbb{Q}$ w.h.p., one can also do so for $\Tilde{\mathbb{P}}$ and $\Tilde{\mathbb{Q}}$, and vice-versa.

Now, observe that the distribution of $(x,G)$ drawn from $\Tilde{\mathbb{P}}$ is invariant under the simultaneous action of the wreath product $S_{n/2} \wr \mathbb{Z}_2$: $((\xi_1, \xi_2), a) \in (S_{n/2} \times S_{n/2}) \times \mathbb{Z}_2$, which acts on $x$ and $G$ by permuting the vertices such that $\xi_1$ acts on the vertices in $A = \{1, \dots, \frac{n}{2}\}$, $\xi_2$ acts on the vertices in $B = \{\frac{n}{2} +1, \dots, n\}$, and $a \in \mathbb{Z}_2$ either swaps $A$ and $B$ if $a = 1$ or does nothing if $a = 0$.

By the same argument as before, we only need to focus on the polynomials that are fixed under this action of $S_{n/2} \wr \mathbb{Z}_2$. To this end, we slightly modify the definition of partially labelled graphs in the previous section, and define partially labelled bipartite graphs. We also assume that the vertices of the base graph $H$ are arranged so that all of the edges in $H$ are between $\{1, \dots, \frac{k}{2}\}$ and $\{\frac{k}{2} +1, \dots, k\}$.

\begin{definition}
    A partially labelled bipartite graph $(\alpha = ((V_1(\alpha), V_2(\alpha)), E(\alpha)), S = S_1 \sqcup S_2, \tau )$ is a triple of a bipartite graph $\alpha$, a subset of distinguished vertices $S$ with $S_1 \subseteq V_1 (\alpha)$ and $S_2 \subseteq V_2 (\alpha)$, and a labelling $\tau : S \to [k]$ such that $\tau$ maps vertices in $S_1$ to $\{1, \dots, \frac{k}{2}\}$ and maps vertices in $S_2$ to $\{\frac{k}{2} + 1, \dots, k\}$. We say that such a graph is fully labelled if $S = V (\alpha)$, and in this case write $(\alpha, \tau )$ for short. Let $G = ((V_1, V_2), E)$ be a bipartite graph. We say a map $f: V(\alpha) \to V(G)$ is partition-preserving if $f(V_1(\alpha)) \subseteq V_1$ and $f(V_2(\alpha)) \subseteq V_2$, or $f(V_1(\alpha)) \subseteq V_2$ and $f(V_2(\alpha)) \subseteq V_1$. A homomorhism from $(\alpha, S, \tau )$ into a fully labelled bipartite graph $(G = ((V_1, V_2), E), \sigma)$ is a map $\phi : V (\alpha) \to V (G)$ that is a partition-preserving graph homormorphism from $\alpha$ to $G$, and agrees on labels for the subset $S$ of distinguished vertices. An occurrence of $(\alpha, S, \tau )$ in $(G, \sigma)$ is an injective homomorphism.

    For every partially labelled bipartite graph $(\alpha, S, \tau)$, there exists an associated polynomial in $\mathbb{R}[x, G]$ that counts the number of occurrences of it in a labelled bipartite graph $(G = ((V_1, V_2), E), \sigma)$, where $(V_1, V_2) = (A, B)$ is the fixed partition $\{1, \dots, \frac{n}{2}\} \sqcup \{\frac{n}{2} + 1, \dots, n\}$ described above:
    \begin{align*}
        p_{(\alpha, S, \tau)}(x, G) = \sum_{\substack{\phi: V(\alpha) \hookrightarrow V(G)\\
        \text{partition-preserving}}} \prod_{\{u,v\} \in E(\alpha) } G_{\phi(u), \phi(v) } \prod_{i \in S} x_{\phi(i), \tau(i)}.
    \end{align*}
    Note that $\deg_x(p_{(\alpha, S, \tau)}) = |S|$ and $\deg_G(p_{(\alpha, S, \tau)}) = |E(\alpha)|$.
\end{definition}

Let $\Tilde{\mathbb{S}}[x,G] \subseteq \mathbb{R}[x,G]$ be the space of polynomials that are fixed under the action of $S_{n/2} \wr \mathbb{Z}_2$, and similarly let $\Tilde{\mathbb{S}}[x,G]_{\le D_x, D_G} \subseteq \mathbb{R}[x,G]_{\le D_x, D_G}$ be the space of corresponding polynomials with degree in $x$ and $G$ bounded by $D_x$ and $D_G$ respectively. Similar to Lemma~\ref{lem:basis}, one can show these polynomials form a basis for $\Tilde{\mathbb{S}}[x,G]_{\le D_x, D_G}$.

\begin{lemma}
    The set of polynomials $\{p_{(\alpha,S,\tau)}: \alpha \text{ bipartite}, |S| \le D_x, |E(\alpha)| \le D_G\}$ form a basis for $\Tilde{\mathbb{S}}[x, G]_{\le D_x, D_G}$.
\end{lemma}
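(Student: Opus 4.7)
The plan is to run the same orbit-counting argument that gives Lemma~\ref{lem:basis} in the non-bipartite case, now with the group $S_{n/2} \wr \mathbb{Z}_2$ in place of $S_n$ and with orbits of monomials identified with isomorphism classes of partially labelled bipartite graphs. The proof is essentially a transcription once the orbit structure is correctly matched to the definition of $p_{(\alpha, S, \tau)}$.

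First I would set up the orbit correspondence. Restricting to multilinear monomials (which is permissible modulo the hard constraints $x_{u,i}^2 = x_{u,i}$ and $G_{u,v}^2 = G_{u,v}$), each monomial $\prod_{i} x_{u_i, j_i} \prod_{\{v, w\} \in E_0} G_{v, w}$ in $\mathbb{R}[x, G]_{\le D_x, D_G}$ determines a partially labelled bipartite graph $(\alpha, S, \tau)$ together with a partition-preserving injection $\phi: V(\alpha) \hookrightarrow V(G)$: the vertices of $\alpha$ are the distinct indices appearing in the monomial, with $V_1(\alpha)$ and $V_2(\alpha)$ taken from whether each index lies in $A$ or $B$; the edges come from the $G$-variables; and $\tau$ records the labels from the $x$-variables. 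The labelling constraint (that $V_1$-vertices receive labels in $\{1, \ldots, k/2\}$ and $V_2$-vertices receive labels in $\{k/2+1, \ldots, k\}$) holds automatically up to renaming the parts of $\alpha$. Conversely, $(\alpha, S, \tau)$ together with a partition-preserving $\phi$ recovers the monomial.

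Next I would verify that the $S_{n/2} \wr \mathbb{Z}_2$-orbit of any such monomial is precisely the set of monomials indexed by all partition-preserving injections $\phi$ of a single fixed $(\alpha, S, \tau)$. The subgroup $(S_{n/2} \times S_{n/2}) \times \{0\}$ acts transitively on injections of a fixed type (Type 1: $\phi(V_1(\alpha)) \subseteq A$; Type 2: $\phi(V_1(\alpha)) \subseteq B$), while the $\mathbb{Z}_2$ generator interchanges Type 1 and Type 2 injections of the same $(\alpha, S, \tau)$. Hence the orbit sum is exactly $p_{(\alpha, S, \tau)}$ as defined, summing over both types. The spanning claim then follows: any $p \in \Tilde{\mathbb{S}}[x, G]_{\le D_x, D_G}$ has monomial coefficients constant on orbits, so $p$ is a linear combination of $p_{(\alpha, S, \tau)}$'s with $|S| \le D_x$ and $|E(\alpha)| \le D_G$. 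Linear independence follows because distinct isomorphism classes produce disjoint monomial supports.

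The one step requiring care is the action of the $\mathbb{Z}_2$ swap on labels. Since $\mathbb{Z}_2$ permutes only vertex indices of $G$ and does not touch the label set $[k]$, applying the swap to a Type 1 monomial for $(\alpha, S, \tau)$ gives a new monomial with the same labels but with the $V_1$-images now landing in $B$ and the $V_2$-images in $A$. It would be tempting to view this as a distinct partially labelled bipartite graph obtained by swapping the parts, but such a ``swap'' would violate the requirement that $V_1$-vertices receive labels in $\{1, \ldots, k/2\}$ and hence is not a valid object; instead the new monomial is correctly read off as the same $(\alpha, S, \tau)$ together with a Type 2 injection, which is precisely the second case built into the definition of $p_{(\alpha, S, \tau)}$. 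This is where I expect the bookkeeping to be easiest to get wrong; once checked, the remainder of the argument copies Lemma~\ref{lem:basis} verbatim.
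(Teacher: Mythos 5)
Your orbit-counting strategy is the right one, and your treatment of the $\mathbb{Z}_2$-swap (reading the swapped monomial as a Type~2 injection of the same $(\alpha,S,\tau)$ rather than as a new partially labelled bipartite graph) is correct. But the assertion that ``the labelling constraint $\ldots$ holds automatically up to renaming the parts of $\alpha$'' is false for general monomials, and this is a genuine gap. Take the monomial $x_{u,1}x_{v,k}$ with $u,v\in A$ distinct. By the definition of a partially labelled bipartite graph, the vertex carrying label $1$ must sit in $V_1(\alpha)$ (since $1\in\{1,\dots,k/2\}$ and $S_1\subseteq V_1$) while the vertex carrying label $k$ must sit in $V_2(\alpha)$; but any injection sending both to $A$ fails to be partition-preserving, so this monomial cannot be read off at all, and renaming the parts of $\alpha$ does not help. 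Consequently the $S_{n/2}\wr\mathbb{Z}_2$-invariant polynomial
\[
\sum_{\substack{u,v\in A\\ u\ne v}}x_{u,1}x_{v,k}\;+\;\sum_{\substack{u,v\in B\\ u\ne v}}x_{u,1}x_{v,k}
\]
lies in $\Tilde{\mathbb{S}}[x,G]_{\le 2,\,0}$ but appears in no $p_{(\alpha,S,\tau)}$ with bipartite $\alpha$, since every summand of such a $p_{(\alpha,S,\tau)}$ comes from a partition-preserving injection and thus never places two labelled vertices with labels from opposite halves of $[k]$ into the same part of $V(G)$.

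The lemma therefore has to be understood modulo the hard constraints, and the missing step in your proof is to show that such ``inconsistent'' monomials are eliminated by $\mathcal{I}_k$. This does follow from the constraints the paper lists: $x_{u,i}x_{v,j}=0$ for $u\in A$, $v\in B$, $i,j$ in the same half of $[k]$, together with $\sum_i x_{u,i}=1$ and Booleanity, force on the solution variety that all $A$-vertices receive labels from one half of $[k]$ and all $B$-vertices from the other, so the offending monomials vanish. Once you restrict to the monomials surviving this reduction, your reading-off procedure becomes well-defined and consistent, the orbit correspondence holds as you describe, and the spanning and linear-independence arguments go through. Note that the paper does not supply a proof of this lemma (it is stated to follow ``similar to Lemma~3.5''), so there is no written proof to compare against; but this reduction-modulo-the-ideal step is needed in any complete argument, and your proof as written asserts something false in its place.
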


Now that we have a basis for $\Tilde{\mathbb{S}}[x,G]$ the space of polynomials that are fixed under the action of $S_{n/2} \wr \mathbb{Z}_2$, the moment constraints in the Local Statistics SDP for distinguishing $\Tilde{\mathbb{P}}$ and $\Tilde{\mathbb{Q} }$ are equivalent to the moment constraints restricted to the basis elements, i.e.,
\begin{align*}
    \Tilde{\mathbb{E}}[p_{(\alpha, S, \tau)}(x, G_0)] \approx \mathbb{E}_{(x, G) \sim \mathbb{P}}[p_{(\alpha, S, \tau)}(x, G)] \,\, \text{ for all } \,\, (\alpha, S, \tau): |S|\le D_x, |E(\alpha)| \le D_G.
\end{align*}

Recall we use $M$ to denote the $k \times k$ adjacency matrix of the base graph $H$, whose vertices are arranged so that all the edges are between $\{1, \dots, \frac{k}{2}\}$ and $\{\frac{k}{2} +1, \dots, k\}$. We say a labelling $\tau: V(\alpha) \to [k]$ of a bipartite graph $\alpha = ((V_1(\alpha), V_2(\alpha)), E(\alpha))$ is \emph{partition-respecting} if $\tau(V_1(\alpha)) \subseteq \{1, \dots, \frac{k}{2}\}$ and $\tau(V_2(\alpha)) \subseteq \{\frac{k}{2} +1, \dots, k\}$, or $\tau(V_2(\alpha)) \subseteq \{1, \dots, \frac{k}{2}\}$ and $\tau(V_1(\alpha)) \subseteq \{\frac{k}{2} +1, \dots, k\}$. For a partially labelled bipartite graph $(\alpha,S,\tau)$ we define
\begin{align*}
    (M)_{\tau}^{(\alpha, S)} = \sum_{ \substack{\hat{\tau}: V(\alpha) \to [k]\\ \hat{\tau}|_{S} = \tau\\
    \text{partition-respecting}} } \frac{\prod_{v \in V(\alpha)} \prod_{i \in [k]} M_{\hat{\tau}(v), i}(M_{\hat{\tau}(v), i} - 1) \cdots (M_{\hat{\tau}(v), i} - \deg_{\hat{\tau}, i}(v) + 1) }{\prod_{\{u,v\}\in E(\alpha) } M_{\hat{\tau}(u), \hat{\tau}(v) } },
\end{align*}
where $\deg_{\hat{\tau}, i}(v)$ is the number of neighbors of $v$ in $\alpha$ that are mapped to group $i$ under the labelling $\hat{\tau}$. By analogous arguments to those in \cite{BBKMW-2020-SpectralPlantingColoring}, we can show the statements in Lemma~\ref{lem:planted-expectation}, Lemma~\ref{lem:expected-cycle}, and Lemma~\ref{lem:variance-partially-labelled-graph} still hold for $\Tilde{\mathbb{P}}$, since we may reuse the calculations from Lemma~5.13 in \cite{BBKMW-2020-SpectralPlantingColoring} and compute the expectation of $p_{(\alpha, S, \tau)}$ under the configuration model corresponding to $\Tilde{\mathbb{P}}$ by only summing over partition-preserving injections and partition-respecting labellings. We omit the proofs for Lemma~\ref{lem:expected-cycle} and Lemma~\ref{lem:variance-partially-labelled-graph} for $\Tilde{\mathbb{P}}$, which are identical to those for the nonbipartite case given in \cite{BBKMW-2020-SpectralPlantingColoring} and are simple applications of Markov's inequality. We give the adapted proof of Lemma~\ref{lem:planted-expectation} for $\Tilde{\mathbb{P}}$ below:
\begin{proof}[Proof of Lemma~\ref{lem:planted-expectation} for $\Tilde{\mathbb{P}}$]
    Instead of working with the distribution $\Tilde{\mathbb{P}}$ over random lifts of $H$ directly, we will work with configuration model corresponding to $\Tilde{\mathbb{P}}$, which has the properties that the probability that the drawn graph is simple is bounded away from $0$ as $n\to \infty$, and that the conditional distribution on the event of the drawn graph being simple is identical to $\Tilde{\mathbb{P}}$. See \cite[Section 5.4, Claim 5.11]{BBKMW-2020-SpectralPlantingColoring} for the definition of configuration model and this simple argument. In particular, if some event holds with high probability for $\Tilde{\mathbb{P}}$, it also holds with high probability for the corresponding configuration model, and vice versa.

    Fix a balanced labelling $\sigma: [n] \to [k]$ such that $\sigma^{-1}(i)$ have the same size and moreoever, $\sigma(\{1, \dots, \frac{n}{2}\}) = \{1, \dots, \frac{k}{2}\}$ and $\sigma(\{\frac{n}{2} + 1, \dots, n\}) = \{\frac{k}{2} + 1, \dots, k\}$. Let $G$ be drawn from the configuration model corresponding to random lifts of $H$ with $\sigma$ being the underlying fixed partition.

    Fix a partition-respecting $\hat{\tau}: V(\alpha) \to [k]$ that is an extension of $\tau: S \to [k]$. If $\phi: V(\alpha) \hookrightarrow [n]$ is a partition-preserving injection that agrees with the labels, i.e., $\hat{\tau}(u) = \sigma(\phi(u))$ for all $u\in V(\alpha)$, then, using standard calculations for the configuration model \cite[Lemma 5.12]{BBKMW-2020-SpectralPlantingColoring}, we may compute the probability that $\phi$ is an occurrence of $(\alpha, S, \tau)$ in $G$ as
    \begin{align*}
        &\mathbb{P}[\phi \text{ is an occurrence of } (\alpha, S, \tau)]\\
        &= \prod_{i \le \frac{k}{2} < j} \left(\frac{(M_{i,j} \cdot \frac{n}{k} - E_{\hat{\tau}(i), \hat{\tau}(j)}(\alpha))!}{(M_{i,j} \cdot \frac{n}{k})!}\prod_{\substack{v \in V(\alpha):\\ \hat{\tau}(v) = i}} \frac{M_{i,j}!}{(M_{i,j} - \deg_{\hat{\tau}, j}(v))!} \right)\\
        &= \left(k/n\right)^{|E(\alpha)|} \frac{\prod_{v \in V(\alpha)} \prod_{i \in [k]} M_{\hat{\tau}(v), i}(M_{\hat{\tau}(v), i} - 1) \cdots (M_{\hat{\tau}(v), i} - \deg_{\hat{\tau}, i}(v) + 1)}{\prod_{\{u,v\}\in E(\alpha)} M_{\hat{\tau}(u), \hat{\tau}(v)}} + O(n^{-|E(\alpha)|-1}).
    \end{align*}
    Furthermore, we notice that $\phi$ is automatically partition-preserving since it agrees with the labels of a partition-respecting extension $\hat{\tau}$ of $\tau$. Moreover, whenever $\hat{\tau}$ is not partition-respecting, the probability above is $0$ for any injection $\phi$ that agrees with the labels of $\hat{\tau}$.

    Finally, for a fixed partition-respecting $\hat{\tau}$, the number of partition-preserving injections $\phi: V(\alpha) \hookrightarrow [n]$ that agrees with the labels is $(n/k)^{|V(\alpha)|} + O(n^{|V(\alpha)| - 1})$. Lastly, we observe that the analysis above does not depend on the specific choice $\sigma$. Thus, we conclude that for $(x, G)$ drawn from configuration model corresponding to $\Tilde{\mathbb{P}}$,
    \begin{align*}
        &\mathbb{E}[p_{(\alpha, S, \tau)}(x,G)]\\
        &= \mathbb{E}_{\sigma} \mathbb{E}[p_{(\alpha, S, \tau)}(x,G) \vert \sigma]\\
        &= \sum_{\substack{\hat{\tau}: V(\alpha) \to [k]\\ \hat{\tau}|_S = \tau \\ \text{partition-respecting} }} (n/k)^{|V(\alpha)| - |E(\alpha)|}\frac{\prod_{v \in V(\alpha)} \prod_{i \in [k]} M_{\hat{\tau}(v), i} \cdots (M_{\hat{\tau}(v), i} - \deg_{\hat{\tau}, i}(v) + 1)}{\prod_{\{u,v\}\in E(\alpha)} M_{\hat{\tau}(u), \hat{\tau}(v)}} \\
        &\hspace{1cm} + O(n^{|V(\alpha)|-|E(\alpha)|-1})\\
        &= (n/k)^{\zeta(\alpha)} M_{\tau}^{(\alpha, S)} + O(n^{\zeta(\alpha) - 1})
    \end{align*}
    as desired.
\end{proof}

Similar to Lemma~\ref{lem:planted-edgeless}, we have the following result about the polynomials corresponding to edgeless partially labelled bipartite graphs.

\begin{lemma}\label{lem:planted-edgeless-bipartite}
    Let $(x, G)$ encode any lift of the bipartite base graph $H$ whose adjacency matrix is $M$. For any partially labelled bipartite graph $(\alpha = ((V_1(\alpha), V_2(\alpha)), E(\alpha)),S = S_1 \sqcup S_2,\tau)$ that is edgeless,
    \begin{align*}
        p_{(\alpha, S, \tau)}(x,G) = N_{(\alpha, S, \tau)},
    \end{align*}
    where
    \begin{align*}
        N_{(\alpha, S, \tau)} &\colonequals \left[\prod_{i \in [k]} \frac{n}{k}\left(\frac{n}{k} - 1\right) \cdots \left(\frac{n}{k} -|\tau^{-1}(i)| + 1\right)\right] \cdot \\
        &\hspace{1cm} \prod_{j=1}^2\left(\frac{n}{2} - |S_j|\right)\left(\frac{n}{2} - |S_j| - 1\right)\cdots\left(\frac{n}{2} - |V_j(\alpha)| + 1\right).
    \end{align*}
\end{lemma}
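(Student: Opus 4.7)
The plan is a direct combinatorial counting argument that mirrors the proof of Lemma~\ref{lem:planted-edgeless} in the non-bipartite case. First, since $\alpha$ is edgeless, the product $\prod_{\{u,v\} \in E(\alpha)} G_{\phi(u), \phi(v)}$ is an empty product equal to $1$, so
\begin{equation*}
    p_{(\alpha, S, \tau)}(x, G) = \sum_{\substack{\phi: V(\alpha) \hookrightarrow V(G) \\ \text{partition-preserving}}} \prod_{i \in S} x_{\phi(i), \tau(i)}.
\end{equation*}
Because $(x, G)$ encodes a lift of $H$ with underlying partition $\sigma: V(G) \to [k]$, we have $x_{u, j} = \One\{\sigma(u) = j\}$, so the product above equals $1$ precisely when $\sigma \circ \phi|_S = \tau$ and $0$ otherwise. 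Hence $p_{(\alpha, S, \tau)}(x, G)$ is just the number of partition-preserving injections $\phi: V(\alpha) \hookrightarrow V(G)$ whose restriction to $S$ agrees with the prescribed labels.

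Next, I would decompose this count into two independent stages: placing the distinguished vertices $S$ into their prescribed fibers, then extending to the non-distinguished vertices. Because $\tau$ is partition-respecting (sending $S_1$ into $\{1, \dots, k/2\}$ and $S_2$ into $\{k/2 + 1, \dots, k\}$) and the vertices of $G$ are arranged so that fibers for labels $i \le k/2$ lie in $A = \{1, \dots, n/2\}$ and fibers for $i > k/2$ lie in $B = \{n/2 + 1, \dots, n\}$, the label constraints force $\phi(V_1(\alpha)) \subseteq A$ and $\phi(V_2(\alpha)) \subseteq B$ whenever either $S_1$ or $S_2$ is nonempty; the ``flipped'' orientation allowed by the definition of partition-preserving is then incompatible with the labels. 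For each label $i \in [k]$, the number of ways to inject $\tau^{-1}(i)$ into the fiber $\sigma^{-1}(i)$, which has exactly $n/k$ vertices for any lift of $H$, is $(n/k)(n/k - 1) \cdots (n/k - |\tau^{-1}(i)| + 1)$, producing the first bracketed product in $N_{(\alpha, S, \tau)}$. For each side $j \in \{1, 2\}$, the number of ways to inject the remaining vertices $V_j(\alpha) \setminus S_j$ into the $n/2$ vertices of that side while avoiding the $|S_j|$ positions already occupied by $\phi(S_j)$ is $(n/2 - |S_j|)(n/2 - |S_j| - 1) \cdots (n/2 - |V_j(\alpha)| + 1)$, producing the second product. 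Multiplying gives $N_{(\alpha, S, \tau)}$.

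The main obstacle I foresee is the orientation check in the partition-preserving definition: one must argue carefully that the flipped orientation sending $V_1(\alpha) \to B$ and $V_2(\alpha) \to A$ is excluded by the labels whenever $S \ne \emptyset$, and handle the convention for the corner case $S_1 = S_2 = \emptyset$ with both $V_j(\alpha)$ nonempty (where both orientations would superficially contribute). The rest of the argument is elementary once this is settled, because the count depends only on the fiber cardinalities $n/k$ and side cardinalities $n/2$, both of which are intrinsic to any lift of $H$. This is precisely the deterministic, lift-independent structure that will let the Local Statistics SDP enforce these values as hard ``label constraints'' in the subsequent analysis.
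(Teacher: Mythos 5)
Your combinatorial strategy is the right one and is essentially what the (unproved, by analogy to Lemma~\ref{lem:planted-edgeless}) statement intends: the edgeless condition reduces the polynomial to a count of partition-preserving injections agreeing with $\tau$ on $S$, and you decompose this into placing $S$ into fibers and then filling in the rest by side. Two issues, one minor and one substantive. \textbf{Minor:} you assert that ``fibers for labels $i \le k/2$ lie in $A$,'' but for an arbitrary lift the bijection $\sigma$ may map $A$ to $\{k/2+1, \dots, k\}$ instead; the labels then force the flipped orientation $V_1(\alpha) \to B, V_2(\alpha) \to A$. The count is the same either way by the $A \leftrightarrow B$ symmetry of the formula, so your final answer is unaffected, but you should argue via $\sigma$ rather than assuming the standard orientation.

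\textbf{Substantive:} the corner case $S_1 = S_2 = \emptyset$ with $V(\alpha) \ne \emptyset$ that you flag is not merely ``a convention to handle.'' With the paper's definition of partition-preserving (both orientations allowed) and no labels to break the symmetry, the two orientations contribute disjoint families of injections, so the true count is $2\,N_{(\alpha,S,\tau)}$, not $N_{(\alpha,S,\tau)}$ --- for example a single undistinguished vertex on one side gives $n$ injections while $N = n/2$. Your proposal leaves this as an unresolved loose end, and it cannot be resolved as stated because the formula is simply off by a factor of $2$ there. You should either note explicitly that the lemma requires $S \ne \emptyset$ (equivalently $V(\alpha) = \emptyset$ when $S = \emptyset$), and observe that this is all the paper's label constraints actually invoke (they are applied to fully labelled edgeless graphs such as $(S_1, S_1, \{i\})$ and $(S_2, S_2, \{i,i\})$), or state the corrected formula $2 N_{(\alpha,S,\tau)}$ for the degenerate case. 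Once this restriction is made explicit, the rest of your stage-1/stage-2 count (fiber placements giving $\prod_{i}(n/k)_{|\tau^{-1}(i)|}$, then side-by-side falling factorials) is a complete proof.
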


Last but not the least, we add to the ideal $\mathcal{I}_k$ generated by our set of hard constraints the following additional constraints that are always satisfied by $(x,G)$ encoding the lift of a bipartite $H$, where the vertices of $G$ are arranged so that $G$ is bipartite on $A = \{1, \dots, \frac{n}{2}\}$ and $B = \{\frac{n}{2} +1, \dots, n\}$, and the vertices of $H$ are arranged so that $H$ is bipartite on $\{1, \dots, \frac{k}{2}\}$ and $\{\frac{k}{2} +1, \dots, k\}$:

\begin{align*}
    x_{u,i}x_{v,j} &= 0 \text{ for all } u\in \{1, \dots, \frac{n}{2}\}, v\in \{\frac{n}{2}+1, \dots, n\}, \text{and } i,j\in \{1, \dots, \frac{k}{2}\} \\
    x_{u,i}x_{v,j} &= 0 \text{ for all } u\in \{1, \dots, \frac{n}{2}\}, v\in \{\frac{n}{2}+1, \dots, n\}, \text{and } i,j\in \{\frac{k}{2}+1, \dots, k\},
\end{align*}
which encodes the condition that vertices from different parts of the lifted graph $G$ do not get assigned labels from the same part of $H$.

\subsubsection{Defining Local Statistics SDP}

Now that we have defined the suitable basis of symmetric polynomials to work with for both the non-bipartite case and the bipartite case, let us define the Local Statistics SDP.

For any partially labelled forest $(\alpha,S,\tau)$, we have $\zeta(\alpha) = |V(\alpha)| - |E(\alpha)| = \text{cc}(\alpha)$. Therefore, by Lemma~\ref{lem:planted-expectation} and Lemma~\ref{lem:variance-partially-labelled-graph}, for any partially labelled forest $(\alpha, S, \tau)$, w.h.p.~the value of $p_{(\alpha,S,\tau)}(x,G)$ on input $(x,G)$ drawn from $\mathbb{P}$ is close to $(n/k)^{\zeta(\alpha)} (M)_{\tau}^{(\alpha,S)}$, up to an error of $\delta n^{\text{cc}(\alpha)}$ for any constant $\delta > 0$. For any partially labelled graph $(\alpha,S,\tau)$ with $O(1)$ edges and at least one cycle, $\zeta(\alpha) = |V(\alpha)| - |E(\alpha)| < \text{cc}(\alpha)$, and w.h.p.~the value of $p_{(\alpha,S,\tau)}(x,G)$ on input $(x,G)$ drawn from $\mathbb{P}$ is at most $O(n^{\text{cc}(\alpha) - 1/2})$ by Lemma~\ref{lem:expected-cycle}. Lastly, we note that the polynomials $p_{(\alpha, S, \tau)}$ corresponding to edgeless partially labelled graphs do not involve the $G$ variables and only involve the $x$ variables, and by Lemma~\ref{lem:planted-edgeless} and Lemma~\ref{lem:planted-edgeless-bipartite} they equal $N_{(\alpha, S, \tau)}$. This can be thought of as reflecting that the $x$ variables encode the underlying balanced partition in any lifted graph of $H$. Putting everything together, we give the precise definition of Local Statistics SDP with error tolerance $\delta > 0$ in the moment constraints as follows:

\begin{definition}
    The \emph{degree-$(D_x, D_G)$ Local Statistics Algorithm with error tolerance $\delta$} is the following feasibility SDP: given an input graph $G_0$, find $\Tilde{\mathbb{E}}: \mathbb{R}[x]_{\le D_x} \to \mathbb{R}$ such that
    \begin{enumerate}
        \item (Positivity) $\Tilde{\mathbb{E}}[p(x)^2] \ge 0$ whenever $\deg p(x)^2 \le D_x$.
        \item (Hard Constraints) $\Tilde{\mathbb{E}}[p(x, G_0)] = 0$ for every $p \in \mathcal{I}_k$.
        \item (Label Constraints) For every edgeless $(\alpha, S, \tau)$ with at most $D_x$ distinguished vertices,
        \begin{align*}
            \Tilde{\mathbb{E}}[p_{(\alpha, S, \tau)}(x, G_0)] = N_{(\alpha, S, \tau)}.
        \end{align*}
        \item (Moment Constraints) For every $(\alpha, S, \tau )$ with at most $D_x$ distinguished vertices and at most $D_G$ edges,
        \begin{align*}
            \Tilde{\mathbb{E}}[p_{(\alpha,S,\tau)}(x, G_0)] \in (n/k)^{\zeta(\alpha)}(M)_{\tau}^{(\alpha,S)} + [-\delta n^{\text{cc}(\alpha)}, \delta n^{\text{cc}(\alpha)}].
        \end{align*}
    \end{enumerate}
\end{definition}
We remark that, with the scalings of $n^{\zeta(H)}$ and $n^{\text{cc}(H)}$ in the moment constraints, for any $\delta > 0$, this SDP is w.h.p.~feasible on input $G$ drawn from $\mathbb{P}$ by our argument above:

\begin{lemma}\label{lem:planted-feasible}
    Fix $D_x, D_G$ constant, and $\delta > 0$. With high probability, the degree-$(D_x, D_G)$ Local Statistics Algorithm with error tolerance $\delta$ is feasible on input $G \sim \mathbb{P}$.
\end{lemma}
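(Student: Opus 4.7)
The plan is to exhibit a concrete feasible pseudoexpectation by setting $\widetilde{\mathbb{E}}$ to be the evaluation functional at the true underlying labelling. Concretely, since $G \sim \mathbb{P} = \mathcal{L}_m(H)$ comes together with an underlying fiber assignment $\sigma: [n] \to [k]$, let $x^* \in \{0,1\}^{n \times k}$ be the indicator encoding of $\sigma$, and define
\begin{equation*}
    \widetilde{\mathbb{E}}[p(x)] \colonequals p(x^*) \quad \text{for all } p \in \mathbb{R}[x]_{\leq D_x}.
\end{equation*}
This is obviously a well-defined linear functional on the required space; the remaining work is to verify the four classes of constraints.

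Positivity and hard constraints are essentially free. For positivity, $\widetilde{\mathbb{E}}[p(x)^2] = p(x^*)^2 \geq 0$. For the ideal $\mathcal{I}_k$, the vector $x^*$ satisfies the Boolean and partition identities by construction, and in the bipartite case also satisfies the added identities $x_{u,i}^*\, x_{v,j}^* = 0$ across the wrong bipartition pairs because the lift was drawn so that $\sigma$ respects the bipartition of $H$. The label constraints $\widetilde{\mathbb{E}}[p_{(\alpha,S,\tau)}(x, G)] = N_{(\alpha,S,\tau)}$ for edgeless $(\alpha,S,\tau)$ follow immediately from Lemma~\ref{lem:planted-edgeless} (or its bipartite counterpart Lemma~\ref{lem:planted-edgeless-bipartite}), which holds deterministically for every lift.

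The interesting step is the moment constraints, which we verify with high probability by splitting into two cases based on the shape of $\alpha$. If $\alpha$ is a forest, then $\zeta(\alpha) = \mathrm{cc}(\alpha)$, and Lemma~\ref{lem:variance-partially-labelled-graph} applied with $f(n) = \tfrac{\delta}{2} n^{\mathrm{cc}(\alpha)}$ gives
\begin{equation*}
    \Pr\bigl[\,|p_{(\alpha,S,\tau)}(x^*, G) - (n/k)^{\mathrm{cc}(\alpha)} (M)_\tau^{(\alpha,S)}| > \tfrac{\delta}{2} n^{\mathrm{cc}(\alpha)}\,\bigr] = O(n^{-1}),
\end{equation*}
using multiplicativity of $(M)_\tau^{(\alpha,S)}$ on disjoint unions. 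If $\alpha$ contains at least one cycle, then $\zeta(\alpha) \leq \mathrm{cc}(\alpha) - 1$, so the target value $(n/k)^{\zeta(\alpha)} (M)_\tau^{(\alpha,S)}$ is $O(n^{\mathrm{cc}(\alpha)-1})$, and Lemma~\ref{lem:expected-cycle} applied with $f(n) = \tfrac{\delta}{2} n^{\mathrm{cc}(\alpha)}$ shows $p_{(\alpha,S,\tau)}(x^*, G) = o(n^{\mathrm{cc}(\alpha)})$ w.h.p.; thus both sides are within the required $\delta n^{\mathrm{cc}(\alpha)}$ tolerance.

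The final step is a union bound. Since $D_x, D_G$ are constants and the base graph $H$ has a fixed number of labels $k$, there are only $O(1)$ isomorphism classes of partially labelled (bipartite) graphs $(\alpha, S, \tau)$ with $|S| \leq D_x$ and $|E(\alpha)| \leq D_G$ to consider. Taking a union bound over this finite collection, all moment constraints hold simultaneously with probability $1 - O(n^{-1}) \to 1$, so $\widetilde{\mathbb{E}}$ is feasible w.h.p. I do not anticipate a serious obstacle: the only subtlety is keeping careful track of the bipartite normalizations and checking that the $\zeta(\alpha) < \mathrm{cc}(\alpha)$ case is subsumed by the $\delta n^{\mathrm{cc}(\alpha)}$ slack in the moment constraints, both of which are direct.
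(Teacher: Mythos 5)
Your proposal is correct and follows the same argument the paper sketches in the paragraph just before the lemma statement: evaluate the pseudoexpectation at the planted labelling $x^*$, handle the label constraints with Lemma~\ref{lem:planted-edgeless} (and its bipartite version), and for the moment constraints split into the forest case (handled by Lemma~\ref{lem:variance-partially-labelled-graph} at scale $n^{\mathrm{cc}(\alpha)}$) and the cycle case (handled by Lemma~\ref{lem:expected-cycle}, where $\zeta(\alpha) < \mathrm{cc}(\alpha)$ makes the $\delta n^{\mathrm{cc}(\alpha)}$ slack absorb both the target value and the polynomial's w.h.p.\ value), followed by a union bound over the $O(1)$ constraints. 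Your write-up is a more explicit rendering of the paper's argument but is identical in substance.
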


\subsection{Proof of Local Statistics Hardness}

Now we restate our result of local statistics hardness for distinguishing $\mathbb{P}$ and $\mathbb{Q}$, in the absence of noise $\mathcal{S}_{\varepsilon}$.
\begin{theorem}[Local statistics hardness without noise]
    \label{thm:local-statistics-noiseless}
    Let $H$ be a d-regular multigraph on $k$ vertices. Consider the infinite sequence of $n = km, m\in \mathbb{N}$ for which the relevant distributions are well-defined.
    Suppose $H$ is not bipartite. Then, the following hold:
    \begin{itemize}
        \item If $H$ is Ramanujan, then for any constant $D$ and $\delta > 0$, the degree $(2,D)$ Local Statistics algorithm with error tolerance $\delta$ cannot distinguish $\mathcal{L}_m(H)$ and $\mathcal{G}(n, d)$ for any $\epsilon > 0$.
        \item If $H$ is not Ramanujan, then for sufficiently small noise parameter $\epsilon > 0$, there exists a constant $D$ and $\delta > 0$ such that the degree $(2,D)$ Local Statistics algorithm with error tolerance $\delta$ can distinguish $\mathcal{L}_m(H)$ and $\mathcal{G}(n, d)$.
    \end{itemize}
    Suppose $H$ is bipartite. Then, the following hold:
    \begin{itemize}
        \item If $H$ is Ramanujan, then for any constant $D$ and $\delta > 0$, the degree $(2,D)$ Local Statistics algorithm with error tolerance $\delta$ cannot distinguish $\mathcal{L}_m(H)$ and $\mathcal{G}\left(\left(\frac{n}{2},\frac{n}{2} \right), d\right)$ for any $\epsilon > 0$.
        \item If $H$ is not Ramanujan, then for sufficiently small noise parameter $\epsilon > 0$, there exists a constant $D$ and $\delta > 0$ such that the degree $(2,D)$ Local Statistics algorithm with error tolerance $\delta$ can distinguish $\mathcal{L}_m(H)$ and $\mathcal{G}\left(\left(\frac{n}{2},\frac{n}{2} \right), d\right)$.
    \end{itemize}
\end{theorem}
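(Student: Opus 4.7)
The plan is to reduce Theorem~\ref{thm:local-statistics-noiseless} directly to the Path Statistics hardness already established in Theorem~\ref{thm:PS-main}, by showing that degree-$(2, D)$ LoSt feasibility is essentially equivalent to degree-$D$ Path Statistics feasibility (with a comparable error tolerance). The bridge in both directions is the assignment
\[
    \tilde{P}_{u,v} \colonequals \sum_{i=1}^k \tilde{\mathbb{E}}[x_{u,i}\, x_{v,i}],
\]
which extracts a candidate pseudo-partition matrix from any LoSt pseudo-expectation, together with a canonical $S_k$-symmetric lift in the reverse direction.

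For the infeasibility (non-Ramanujan) direction, I would argue by contradiction: suppose $\tilde{\mathbb{E}}$ is a feasible degree-$(2,D)$ LoSt solution on $G \sim \mathbb{Q}$, form $\tilde{P}$ as above, and verify that $\tilde{P}$ solves the degree-$D$ Path Statistics SDP. The constraint $\tilde{P}_{u,u}=1$ is immediate from the hard constraints $x_{u,i}^2 = x_{u,i}$ and $\sum_i x_{u,i} = 1$; the constraint $\langle \tilde{P}, J\rangle = n^2/k$ is an application of the label constraints to $p_{(\alpha,S,\tau)}$ where $\alpha$ is two isolated vertices both distinguished and labelled $i$, summed over $i$; the path statistic constraint follows by summing the LoSt moment constraints for $p_{(P_s, \{v_0, v_s\}, \tau_i)}$ with $\tau_i(v_0) = \tau_i(v_s) = i$ over $i$ and identifying $\sum_i M_{\tau_i}^{(P_s,\{v_0, v_s\})}$ with $\operatorname{tr}(q_s(M)) = \sum_i q_s(\lambda_i)$, with the $O(\log n)$ gap between self-avoiding and non-backtracking walks absorbed into $\delta n$; and the PSD constraint $\tilde{P} \succeq J/k$ follows by an explicit diagonalization of the $S_k$-symmetrization of the degree-$2$ moment matrix, which splits as $\tilde{P}/k \otimes I_k + (J_n-\tilde{P})/(k(k-1)) \otimes (J_k-I_k)$, whose eigenblocks are $J_n/k \succeq 0$ and $(k\tilde{P}-J)/(k(k-1))$. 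By Theorem~\ref{thm:PS-main} the latter is infeasible w.h.p.\ when $\rho(H) > 2\sqrt{d-1}$, giving the desired contradiction.

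For the feasibility (Ramanujan) direction, I would invert this reduction. Given a feasible Path Statistics solution $\tilde{P}$ on $G \sim \mathbb{Q}$, furnished by Theorem~\ref{thm:PS-main}, define on degree-$\leq 2$ monomials
\begin{align*}
    \tilde{\mathbb{E}}[1] &= 1, \qquad \tilde{\mathbb{E}}[x_{u,i}] = \tfrac{1}{k}, \qquad \tilde{\mathbb{E}}[x_{u,i} x_{u,j}] = 0 \text{ for } i\neq j,\\
    \tilde{\mathbb{E}}[x_{u,i} x_{v,i}] &= \tfrac{1}{k}\tilde{P}_{u,v}, \qquad \tilde{\mathbb{E}}[x_{u,i} x_{v,j}] = \tfrac{1 - \tilde{P}_{u,v}}{k(k-1)} \text{ for } i \neq j,\ u \neq v.
\end{align*}
The same block decomposition as above shows PSDness of the moment matrix reduces exactly to $\tilde{P} \succeq J/k$, which holds. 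The hard and label constraints hold by construction (using $\tilde{P}_{u,u}=1$ and $\langle \tilde{P}, J\rangle = n^2/k$). It remains to check the moment constraints for every $p_{(\alpha,S,\tau)}$ with $|S|\leq 2$ and $|E(\alpha)|\leq D$; the cases $|S|=0,1$ reduce to concentration of subgraph counts under $\mathbb{Q}$, and for $|S|=2$ I would split into: (i) $\alpha$ containing a cycle, handled by Lemma~\ref{lem:expected-cycle}, which makes both sides $o(n)$ and hence within $\delta n^{\mathrm{cc}(\alpha)}$ of each other; (ii) $\alpha$ a forest, handled by the multiplicativity of $M_\tau^{(\alpha,S)}$ across connected components combined with the verified path identity applied to the (single) component containing the two distinguished vertices and the edgeless label-count identity for each other component.

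The main obstacle will be the bookkeeping in case (ii) above: ensuring that the $\delta n$ errors at the Path Statistics level compound into the correct $\delta n^{\mathrm{cc}(\alpha)}$ error at the LoSt level, uniformly over the constant-size collection of forest $(\alpha,S,\tau)$ with $|S|\leq 2$ and $|E(\alpha)|\leq D$, and verifying that the ``label component'' pieces $N_{(\beta,\emptyset,\emptyset)}$ for the other connected components contribute the factor $(n/k)^{\zeta(\alpha)}$ on the nose. The bipartite case then goes through with cosmetic modifications: replace the $S_k$-symmetrization by $S_{k/2}\wr \mathbb{Z}_2$, substitute partition-respecting labellings, and in the PSD computation the resulting moment matrix picks up the additional block $\tilde{P} \succeq (1/k)\bigl[\begin{smallmatrix} J_{n/2} & -J_{n/2}\\ -J_{n/2} & J_{n/2}\end{smallmatrix}\bigr]$, exactly matching the bipartite Path Statistics SDP. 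Finally, robustness to a small amount of adversarial or random noise follows because the Path Statistics invariants---the spectral bulk of $A_G$ and the path counts $\langle \tilde{P}, A_G^{(s)}\rangle$---shift by only $o(n)$ under changing $\epsilon n$ edges for small $\epsilon$, which is absorbed by enlarging the error tolerance $\delta$.
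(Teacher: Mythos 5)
Your infeasibility direction (non-Ramanujan $\Rightarrow$ distinguish) matches the paper: form $\Tilde{P} = \sum_i Q_{i,i}$, verify the diagonal, $J$, moment, and PSD constraints (the $S_k$-averaged trace identity $\sum_i (M)_{\{i,i\}}^{(P_s,\{0,s\})} = \tr q_s(M)$ and the Schur complement $\Tilde{P} - J/k \succeq 0$ are exactly how the paper does it), then apply Theorem~\ref{thm:PS-main}. That part is sound and essentially identical to the paper's argument.

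The feasibility direction (Ramanujan $\Rightarrow$ cannot distinguish) has a genuine gap. You construct the degree-$2$ pseudo\-expectation from a \emph{single} Path Statistics solution $\Tilde{P}$ via the $S_k$-symmetric ansatz $\Tilde{\mathbb{E}}[x_{u,i}x_{v,i}] = \Tilde{P}_{u,v}/k$ and $\Tilde{\mathbb{E}}[x_{u,i}x_{v,j}] = (1-\Tilde{P}_{u,v})/(k(k-1))$ for $i\neq j$. This only controls the $S_k$-averaged statistics: for a labelled path $(P_s,\{0,s\},\{i,j\})$, your $\Tilde{\mathbb{E}}$ gives $\frac{1}{k}\langle\Tilde{P}, A_G^{(s)}\rangle \approx \frac{n}{k^2}\tr q_s(M)$ when $i=j$, independent of $i$. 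But the moment constraint demands $\Tilde{\mathbb{E}}[p_{(P_s,\{0,s\},\{i,j\})}] \in \frac{n}{k}q_s(M)_{i,j} + [-\delta n, \delta n]$, which depends on the individual entries of $q_s(M)$, not just the trace. For a generic Ramanujan $H$ that is not vertex-transitive (e.g., with varying numbers of triangles through different vertices, so $q_3(M)_{i,i}$ is non-constant in $i$), these entries differ by $\Omega(1)$, and the constraint fails for every $n$ once $\delta$ is small enough. Since the theorem is claimed for all $\delta>0$, the construction does not work. This is precisely why the paper introduces the \emph{Symmetric} Path Statistics SDP, parametrized by $\lambda$, and builds the pseudomoment matrix from $k-1$ distinct pseudo-partition matrices via the spectral decomposition of $M$:
\begin{equation*}
    Q = \frac{1}{k-1}\sum_{i:\,|\lambda_i|<d} \Tilde{P}^{(\lambda_i)} \otimes v_i v_i^\top + \frac{1}{k(k-1)} J_n \otimes (J_k - I_k),
\end{equation*}
whose projection onto each $v_i v_i^\top$ direction recovers the full $q_s(M)_{i,j}$ rather than its trace average. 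Your observation that the $S_k$-symmetrized $Q$ diagonalizes into the two eigenblocks $J_n/k$ and $(k\Tilde{P}-J)/(k(k-1))$ is correct as far as PSDness goes, but PSDness alone is not what fails; the moment constraints do.
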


The following lemma will be useful for proving for proving Theorem \ref{thm:local-statistics-noiseless}, which holds under both the non-bipartite case and the bipartite case.
\begin{lemma}[\cite{BBKMW-2020-SpectralPlantingColoring}]\label{lem:path-M}
    Let $(P_s, \{0,s\}, \{i,j\})$ be a partially labelled path of length $s$, with one endpoint labelled $i$ and the other labelled $j$. Then,
    \begin{align*}
        (M)_{\{i,j\}}^{(P_s, \{0,s\})} = q_s(M)_{i,j}.
    \end{align*}
\end{lemma}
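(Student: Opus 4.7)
The plan is to verify that, after algebraic simplification, the summand in the definition of $(M)^{(P_s, \{0,s\})}_{\{i,j\}}$ counts precisely the non-backtracking walks in the multigraph $H$ that visit a prescribed sequence of vertices $i = a_0, a_1, \ldots, a_s = j$. Summing over all such sequences then yields the total number of length-$s$ non-backtracking walks from $i$ to $j$ in $H$, which by the recurrence defining the non-backtracking polynomials equals $A_H^{(s)}(i,j) = q_s(M)_{i,j}$.

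Label the vertices of $P_s$ as $v_0, v_1, \ldots, v_s$ with $S = \{v_0, v_s\}$, so that an extension $\hat{\tau}$ of $\tau$ is encoded by $(a_0, a_1, \ldots, a_s)$ with $a_0 = i$, $a_s = j$, and $a_\ell \colonequals \hat{\tau}(v_\ell)$. For each interior vertex $v_\ell$ with $1 \le \ell \le s-1$, the two neighbors receive labels $a_{\ell-1}$ and $a_{\ell+1}$, so the corresponding numerator block $\prod_{c \in [k]} M_{a_\ell, c}(M_{a_\ell, c}-1)\cdots(M_{a_\ell, c} - \deg_{\hat{\tau}, c}(v_\ell) + 1)$ collapses, handling both the cases $a_{\ell-1} \ne a_{\ell+1}$ and $a_{\ell-1} = a_{\ell+1}$ uniformly, to $M_{a_\ell, a_{\ell-1}}\bigl(M_{a_\ell, a_{\ell+1}} - \One\{a_{\ell-1} = a_{\ell+1}\}\bigr)$. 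The boundary vertices $v_0$ and $v_s$ contribute the single factors $M_{a_0, a_1}$ and $M_{a_{s-1}, a_s}$. Using the symmetry of $M$, the auxiliary product $\prod_{\ell=1}^{s-1} M_{a_\ell, a_{\ell-1}}$ reindexes to $\prod_{\ell=0}^{s-2} M_{a_\ell, a_{\ell+1}}$; combining this with the two boundary factors yields $M_{a_0, a_1}$ times the full denominator $\prod_{\ell=0}^{s-1} M_{a_\ell, a_{\ell+1}}$. After cancellation the summand reduces to
\[
M_{a_0, a_1}\prod_{\ell=1}^{s-1}\bigl(M_{a_\ell, a_{\ell+1}} - \One\{a_{\ell-1} = a_{\ell+1}\}\bigr).
\]
This expression has an immediate combinatorial reading: in building a walk through the vertex sequence $a_0, \ldots, a_s$, one has $M_{a_0, a_1}$ choices for the first edge, and at each subsequent step $\ell+1$ may pick any of the $M_{a_\ell, a_{\ell+1}}$ parallel edges subject only to not reusing the preceding edge, a constraint that only bites when $a_{\ell-1} = a_{\ell+1}$ (since otherwise the two successive edges have distinct endpoint pairs in $H$ and are automatically different).

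Summing over $(a_1, \ldots, a_{s-1}) \in [k]^{s-1}$ thus recovers the total count of non-backtracking walks of length $s$ from $i$ to $j$ in $H$, which by the recurrence $A_H^{(s+1)} = A_H A_H^{(s)} - (d-1) A_H^{(s-1)}$ (satisfied by the walk counts and characterizing the polynomials $q_s$) equals $q_s(M)_{i,j}$, completing the proof. The base case $s = 1$ is degenerate but correct, giving the summand $M_{a_0, a_1} = M_{i,j}$ directly. The only delicate point in the argument is the algebraic bookkeeping in the second paragraph: one has to track carefully how each denominator factor $M_{a_p, a_{p+1}}$ is produced exactly once as a numerator contribution from its two endpoint vertices in $P_s$ after invoking the symmetry of $M$. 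Once that simplification is carried out, the matching with the walk count is transparent, and the identification with $q_s(M)_{i,j}$ is simply the defining property of the non-backtracking polynomials applied to the $d$-regular multigraph $H$.
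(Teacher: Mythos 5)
The paper cites this identity to \cite{BBKMW-2020-SpectralPlantingColoring} and gives no proof of its own, so there is no in-paper argument to compare against; what you have supplied is a self-contained derivation. I checked it carefully and it is correct. The algebraic collapse of the falling-factorial numerator is right: the endpoint vertices each contribute a single factor $M_{a_0,a_1}$ and $M_{a_s,a_{s-1}}$, each interior vertex $v_\ell$ contributes $M_{a_\ell,a_{\ell-1}}\bigl(M_{a_\ell,a_{\ell+1}}-\One\{a_{\ell-1}=a_{\ell+1}\}\bigr)$ (the two sub-cases $a_{\ell-1}\neq a_{\ell+1}$ and $a_{\ell-1}=a_{\ell+1}$ do unify as you claim), and after the reindexing by symmetry of $M$ the product $M_{a_0,a_1}\cdot M_{a_{s-1},a_s}\cdot\prod_{\ell=1}^{s-1}M_{a_\ell,a_{\ell-1}}$ is exactly $M_{a_0,a_1}$ times the denominator, giving the cleaned-up summand. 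Summing that over $(a_1,\dots,a_{s-1})$ does recover the number of non-backtracking walks of length $s$ from $i$ to $j$ in the multigraph $H$, and that count satisfies the recurrence $q_{s+1}=zq_s-(d-1)q_{s-1}$ for any $d$-regular multigraph (with the half-loop convention the paper uses, where $e^*=e$ for loops), so it equals $q_s(M)_{i,j}$.

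The one step you gloss over a little is the last one: the recurrence for non-backtracking walk counts is stated in the paper only for ($d$-regular simple) graphs, and you are silently applying it to the multigraph $H$, where "walk" must be read at the level of directed edges and where parallel edges and half-loops require some care. The claim is true -- the standard Ihara--Bass bookkeeping extends to $d$-regular multigraphs with half-loops, and it is in fact exactly what makes your indicator $\One\{a_{\ell-1}=a_{\ell+1}\}$ the correct correction -- but since your entire proof hinges on identifying the cleaned summand with a multigraph walk count, it would be cleaner to state explicitly that $q_s(M)_{i,j}$ counts non-backtracking walks in $H$ for any $d$-regular multigraph, or alternatively to bypass the combinatorics entirely by showing directly that the right-hand side $R_s(i,j):=\sum_{a_1,\dots,a_{s-1}}M_{a_0,a_1}\prod_{\ell=1}^{s-1}\bigl(M_{a_\ell,a_{\ell+1}}-\One\{a_{\ell-1}=a_{\ell+1}\}\bigr)$ satisfies $R_0=I$, $R_1=M$, $R_2=M^2-dI$, and $R_{s+1}=MR_s-(d-1)R_{s-1}$, which pins it down as $q_s(M)$ by the definition of the $q_s$ with no detour through multigraph walks at all. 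Either route closes the gap; as written the argument is correct but relies on an unstated (true) fact about multigraphs.
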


\begin{proof}[Proof of Upper Bounds in Theorem \ref{thm:local-statistics-noiseless}]
    Note that by Lemma~\ref{lem:planted-feasible}, the Local Statistics SDP is w.h.p.~feasible on $G_0$ drawn from $\mathbb{P}$. To show that Local Statistics SDP does distinguish $\mathbb{P}$ and $\mathbb{Q}$, we just need to show that it is w.h.p.~infeasible on $G_0$ drawn from $\mathbb{Q}$.

    \paragraph{(a) $H$ is not bipartite:}
    We first consider the non-bipartite case. We will show that when $G_0$ is drawn from $\mathbb{Q}$, w.h.p.~the Local Statistics is infeasible when $\rho(H) > 2\sqrt{d-1}$. We will do so by a reduction to the Path Statistics SDP.

    Suppose for the sake of contradiction that $\Tilde{\mathbb{E}}$ is a feasible pseudoexpectation to the degree-$(2, D)$ Local Statistics Algorithm with error tolerance $\delta$ on input $G_0$.
    We would like to show this implies feasibility of the Path Statistics SDP.

    Recall that a degree $2$ pseudoexpectation $\Tilde{\mathbb{E}}: \mathbb{R}[x]_{\le 2} \to \mathbb{R}$ can be specified by the \emph{pseudomoment matrix}
    \begin{align*}
        \begin{bmatrix}
            1 & \ell^\top\\
            \ell & Q
        \end{bmatrix} = \begin{bmatrix}
            1 & \ell_1^\top & \cdots & \ell_k^\top\\
            \ell_1 & Q_{1,1} & \cdots & Q_{1,k}\\
            \vdots & \vdots & \ddots & \vdots\\
            \ell_k & Q_{k,1} & \cdots & Q_{k,k}
        \end{bmatrix},
    \end{align*}
    where $(\ell_i)_u = \Tilde{\mathbb{E}}[x_{u,i}]$ for $u\in [n], i\in [k]$, and $(Q_{i,j})_{u,v} = \Tilde{\mathbb{E}}[x_{u,i}x_{v,j}]$ for $u,v\in [n], i,j\in [k]$. Consider $\Tilde{P} = \sum_{i \in [k]} Q_{i,i}$. We will show that $\Tilde{P}$ is a feasible solution to the Path Statistics SDP.

    We verify the diagonal constraints are satisfied:
    \begin{align*}
        \Tilde{P}_{u,u} &= \sum_{i \in [k]} \Tilde{\mathbb{E} }[x_{u,i}^2]\\
        &= \Tilde{\mathbb{E} }\left[ \sum_{i \in [k]} x_{u,i}\right]\\
        &= 1.
    \end{align*}

    For inner product constraint against $J$, let $(S_1, S_1, \{i\})$ and $(S_2, S_2, \{i,i\})$ be partially labelled graphs with $1$ and $2$ isolated vertices respectively such that all vertices receive label $i$. Then,
    \begin{align*}
        \langle \Tilde{P}, J\rangle &= \sum_{i = 1}^k \langle Q_{i,i}, J\rangle\\
        &= \sum_{i=1}^k \sum_{u,v \in [n]} \Tilde{\mathbb{E}}[x_{u,i}x_{v,i}]\\
        &= \sum_{i=1}^k \Tilde{\mathbb{E} }[p_{(S_2, S_2, \{i,i\})}(x, G_0)] + \Tilde{\mathbb{E} }[p_{(S_1, S_1, \{i\})}(x, G_0)]\\
        &= \sum_{i=1}^k N_{(S_2, S_2, \{i,i\})} + \sum_{i=1}^k \sum_{u \in [n]} \Tilde{\mathbb{E} }[x_{u,i}]\\
        &= k\left(\frac{n}{k}\right)\left(\frac{n}{k}-1\right) + n\\
        &= \frac{n^2}{k},
    \end{align*}
    where we use Lemma~\ref{lem:planted-edgeless} and that $p_{(S_1, S_1, \{i\})} = \sum_{u\in [n]} x_{u,i}$ in the third to last line.

    For the moment constraints, let $(P_s, \{0, s\}, \{i,i\})$ be a partially labelled path of length $s$, whose endpoints both receive label $i$, and let $(C_s, \{0\}, \{i\})$ be a partially labelled cycle of length $s$, one of whose vertex receives label $i$. For any $0\le s\le D$, we have
    {\allowdisplaybreaks
    \begin{align*}
        \langle \Tilde{P}, A_{G_0}^{(s)} \rangle &= \sum_{i=1}^k \langle Q_{i,i}, A_{G_0}^{(s)}\rangle\\
        &= \sum_{i=1}^k \left(\langle Q_{i,i}, A_{G_0}^{\langle s\rangle}\rangle + o(n)\right),
        \intertext{where we use that w.h.p.~the self-avoiding matrix $A_G^{\langle s \rangle}$ satisfies $\| A_G^{\langle s \rangle} - A_G^{(s)}\|_{F}^2 \le O(\log n)$ as a corollary of Proposition \ref{prop:bad-v}, and $\Tilde{P} = \sum_{i=1}^k Q_{i,i}$ has its entries bounded by $1$ since it is positive semidefinite with ones on the diagonal,}
        &= \sum_{i=1}^k \sum_{u,v \in [n]} \Tilde{\mathbb{E} }[x_{u,i}x_{v,i}] (A_{G_0}^{\langle s\rangle})_{u,v} + o(n)\\
        &= \sum_{i=1}^k \Tilde{\mathbb{E}}[p_{(P_s, \{0,s\}, \{i,i\})}(x, G_0)] + \sum_{i=1}^k \Tilde{\mathbb{E} }[p_{(C_s, \{0\}, \{i\})}(x, G_0)] + o(n)\\
        &\in \sum_{i=1}^k \left(\frac{n}{k} \left(M\right)_{\{i,i\}}^{(P_s, \{0,s\}) } + [-\delta n, \delta n]\right) + o(n)\\
        &= \sum_{i=1}^k \frac{n}{k} q_s(M)_{i,i} + [-k\delta n, k\delta n] + o(n)\\
        &= \frac{1}{k} \langle q_s(M), I_k\rangle n + [-k\delta n, k\delta n] + o(n)\\
        &= \left(\frac{1}{k} \sum_{i=1}^k q_s(\lambda_i)\right)n + [-k\delta n, k\delta n] + o(n),
    \end{align*}}
    which satisfies the moment constraints with error tolerance $\delta' = (k+1)\delta$.

    Finally, let us verify the PSD constraint. For any $u \in [n]$, we have
    \begin{align*}
        \sum_{i\in [k]}(\ell_i)_u = \sum_{i\in [k]}\Tilde{\mathbb{E}}[x_{u,i}]
        = 1,
    \end{align*}
    so $(\sum_i \ell_i)(\sum_i \ell_i)^\top = J$. Since the pseudomoment matrix is PSD, the following matrix, which is a sum over $k$ principal submatrices of the pseudomoment matrix, is again PSD:
    \begin{align*}
        \begin{bmatrix}
            k & (\sum_{i\in [k]} \ell_i)^\top\\
            \sum_{i\in [k]} \ell_i & \sum_{i\in [k]} Q_{i,i}.
        \end{bmatrix}
    \end{align*}
    By Schur complement's criterion for positive semidefiniteness, we have
    \begin{align*}
        \sum_{i\in [k]}Q_{i,i} - \frac{1}{k}\left(\sum_i \ell_i\right)\left(\sum_i \ell_i\right)^\top
        &= \Tilde{P} - \frac{1}{k} J \succeq 0.
    \end{align*}
    This completes the proof that $\Tilde{P}$ is a feasible solution to the level-$D$ Path Statistics SDP if $\Tilde{\mathbb{E}}$ is a feasible solution to the degree-$(2,D)$ Local Statistics SDP.

    However, by Theorem \ref{thm:PS-main}, w.h.p.~the level-$D$ Path Statistics SDP is infeasible on input drawn from $\mathbb{Q}$ for $D$ large enough. Therefore, we reach a contradiction, and conclude that there exists a constant $D$ and $\delta > 0$ such that the degree-$(2,D)$ Local Statistics Algorithm is w.h.p.~infeasible on $\mathbb{Q}$ when $\rho(M) > 2\sqrt{d-1}$, and thus can be used to distinguish $\mathbb{P}$ and $\mathbb{Q}$.

    \paragraph{(b) $H$ is bipartite:}
    Now we explain how to adapt the previous argument to the bipartite case. We again use the reduction to the Path Statistics SDP.

    We use the same construction from $\textbf{(a)}$. Recall that in the bipartite case we arrange the vertices of $G_0$ into two parts so that all edges are between the vertices $A \colonequals \{1, \dots, \frac{n}{2}\}$ and $B \colonequals \{\frac{n}{2} +1, \dots, n\}$. Consider
    \begin{align*}
        \Tilde{P} = \sum_{i\in [k]}\begin{bmatrix}
        Q_{i,i}^{(1,1)} & Q_{i,i}^{(1,2)}\\
        Q_{i,i}^{(2,1)} & Q_{i,i}^{(2,2)}
    \end{bmatrix},
    \end{align*}
    where the matrices $Q_{i,i}$ are written in the block form $\begin{bmatrix}
        Q_{i,i}^{(1,1)} & Q_{i,i}^{(1,2)}\\
        Q_{i,i}^{(2,1)} & Q_{i,i}^{(2,2)}
    \end{bmatrix}$ partitioned by $A$ and $B$. Clearly, the diagonal constraints $\Tilde{P}_{u,u} = 1$ are still satisfied. Note that the off-diagonal blocks $Q_{i,i}^{(1,2)} = Q_{i,i}^{(2,1)} = 0$, since for any $u\in A, v\in B$, the hard constraints enforce
    \begin{align*}
        (Q_{i,i}^{(1,2)})_{u,v} = \Tilde{\mathbb{E}}[x_{u,i}x_{v,i}]
        = 0.
    \end{align*}
    Thus, we may equivalently express $\Tilde{P}$ as
    \begin{align*}
        \Tilde{P} = \sum_{i\in [k]}\begin{bmatrix}
        Q_{i,i}^{(1,1)} & 0\\
        0 & Q_{i,i}^{(2,2)}
    \end{bmatrix},
    \end{align*}
    and $\Tilde{P}$ satisfies the hard constraints that the off-diagonal blocks are zero.

    Now let us verify the inner product constraints. Let $(S_1, S_1, \{i\})$ and $(S_2, S_2, \{i,i\})$ be partially labelled graphs with isolated vertices as defined before.  We have
    \begin{align*}
        \left\langle \sum_{i\in [k]}\begin{bmatrix}
            Q_{i,i}^{(1,1)} & 0\\
            0 & Q_{i,i}^{(2,2)}
        \end{bmatrix}, J\right\rangle
    &= \sum_{i=1}^k \left(\sum_{u,v \in A}\Tilde{\mathbb{E}}[x_{u,i}x_{v,i}]  + \sum_{u,v\in B}\Tilde{\mathbb{E}}[x_{u,i}x_{v,i}]\right)\\
    &= \sum_{i=1}^k \Tilde{\mathbb{E} }[p_{(S_2, S_2, \{i,i\})}(x, G_0)]  + \Tilde{\mathbb{E} }[p_{(S_1, S_1, \{i\})}(x, G_0)]\\
    &= \sum_{i=1}^k N_{(S_2, S_2, \{i,i\})} + \sum_{i=1}^k \sum_{u \in [n]} \Tilde{\mathbb{E}}[x_{u,i}]\\
    &= k\left(\frac{n}{k}\right)\left(\frac{n}{k}-1\right) + n\\
    &= \frac{n^2}{k},
    \end{align*}
    where we use Lemma~\ref{lem:planted-edgeless-bipartite} and that $p_{(S_1, S_1, \{i\})} = \sum_{u\in [n]} x_{u,i}$ in the third to last line.

    For the moment constraints, let $(P_s, \{0, s\}, \{i,i\})$ be a partially labelled even path of length $s$ that is even, whose endpoints both receive label $i$, and let $(C_s, \{0\}, \{i\})$ be a partially labelled even cycle of length $s$ that is even, one of whose vertex receives label $i$. For even $s$, we again have
    \begin{align*}
        \langle \Tilde{P}, A_{G_0}^{(s)} \rangle &= \sum_{i=1}^k \left\langle \begin{bmatrix}
            Q_{i,i}^{(1,1)} & 0\\
            0 & Q_{i,i}^{(2,2)}
        \end{bmatrix}, A_{G_0}^{(s)}\right\rangle\\
        &= \sum_{i=1}^k \left(\left\langle \begin{bmatrix}
            Q_{i,i}^{(1,1)} & 0\\
            0 & Q_{i,i}^{(2,2)}
        \end{bmatrix}, A_{G_0}^{\langle s\rangle}\right\rangle + o(n)\right),\\
        &= \sum_{i=1}^k \left(\sum_{u,v \in A} \Tilde{\mathbb{E} }[x_{u,i}x_{v,i}] (A_{G_0}^{\langle s\rangle})_{u,v}+ \sum_{u,v\in B} \Tilde{\mathbb{E} }[x_{u,i}x_{v,i}] (A_{G_0}^{\langle s\rangle})_{u,v}\right) + o(n)\\
        &= \sum_{i=1}^k \Tilde{\mathbb{E}}[p_{(P_s, \{0,s\}, \{i,i\})}(x, G_0)] + \sum_{i=1}^k \Tilde{\mathbb{E} }[p_{(C_s, \{0\}, \{i\})}(x, G_0)] + o(n)\\
        &\in \sum_{i=1}^k \left(\frac{n}{k} \left(M\right)_{\{i,i\}}^{(P_s, \{0,s\}) } [-\delta n, \delta n]\right) + o(n)\\
        &= \sum_{i=1}^k \frac{n}{k} q_s(M)_{i,i} + [-k\delta n, k\delta n] + o(n)\\
        &= \frac{1}{k} \langle q_s(M), I_k\rangle n + [-k\delta n, k\delta n]  + o(n)\\
        &= \left(\frac{1}{k} \sum_{i=1}^k q_s(\lambda_i)\right)n + [-k\delta n, k\delta n] + o(n),
    \end{align*}
    which satisfies the moment constraints with error tolerance $\delta' = (k+1)\delta$. Note that for bipartite $H$ and odd $s$, $(\frac{1}{k}\sum_{i=1}^k q_s(\lambda_i))n = 0$, since the spectrum of a bipartite graph is symmetric and $q_s$ is an odd function for odd $s$. We thus do not need to worry about path and cycles of odd lengths in this case, because the matrix $\Tilde{P}$ is block diagonal and the nonzero entries of $A_G^{(s)}$ are all supported on the off-diagonal blocks for odd $s$, so the inner product $\langle \Tilde{P}, A_G^{(s)}\rangle$ is automatically $0$.

    Finally, we verify the PSD constraints. By the same argument as in part \textbf{(a)}, the following matrix is PSD:
    \begin{align*}
        X_1 = \begin{bmatrix}
            k & (\sum_{i\in [k]} \ell_i^{(1)})^\top & (\sum_{i\in [k]} \ell_i^{(2)})^\top\\
            \sum_{i\in [k]} \ell_i^{(1)} & \sum_{i\in [k]} Q_{i,i}^{(1,1)} & \sum_{i\in [k]} Q_{i,i}^{(1,2)}\\
            \sum_{i\in [k]} \ell_i^{(2)} & \sum_{i\in [k]} Q_{i,i}^{(2,1)} & \sum_{i\in [k]} Q_{i,i}^{(2,2)}
        \end{bmatrix} = \begin{bmatrix}
            k & \boldsymbol{1}_{n/2}^\top & \boldsymbol{1}_{n/2}^\top\\
            \boldsymbol{1}_{n/2} & \sum_{i\in [k]} Q_{i,i}^{(1,1)} & 0\\
            \boldsymbol{1}_{n/2} & 0 & \sum_{i\in [k]} Q_{i,i}^{(2,2)}
        \end{bmatrix},
    \end{align*}
    where $\ell_i = \begin{bmatrix}
        \ell_i^{(1)}\\
        \ell_i^{(2)}
    \end{bmatrix}$ is again in the block form partitioned by $A$ and $B$ and we use that for any $u \in [n]$, $(\sum_{i\in [k]} \ell_i)_u = 1$. We also know the following matrix is again PSD since it is obtained from $X_1$ by conjugating with a specific diagonal matrix:
    \begin{align*}
        X_2 &= \begin{bmatrix}
            k & -\boldsymbol{1}_{n/2}^\top & \boldsymbol{1}_{n/2}^\top\\
            -\boldsymbol{1}_{n/2} & \sum_{i\in [k]} Q_{i,i}^{(1,1)} & 0\\
            \boldsymbol{1}_{n/2} & 0 & \sum_{i\in [k]} Q_{i,i}^{(2,2)}
        \end{bmatrix}.
    \end{align*}
    By the Schur complement criterion applied to $X_1$, we get as in part \textbf{(a)} that
    $\Tilde{P} - \frac{1}{k}J \succeq 0$. By Schur complement's criterion applied to $X_2$, we get $\Tilde{P} - \frac{1}{k}\begin{bmatrix}
        J_{n/2} & -J_{n/2}\\
        -J_{n/2} & J_{n/2}
        \end{bmatrix} \succeq 0$. This completes the proof of the PSD constraints.

    So, $\Tilde{P}$ is a feasible solution to the level-$D$ Path Statistics SDP if $\Tilde{\mathbb{E}}$ is a feasible solution to the degree-$(2,D)$ Local Statistics SDP for the bipartite situation. However, by Theorem \ref{thm:PS-main}, w.h.p.~the level-$D$ Path Statistics SDP is infeasible on input drawn from $\mathbb{Q}$ for $D$ large enough. Therefore, we conclude that there exists a constant $D$ and $\delta > 0$ such that the degree-$(2,D)$ Local Statistics Algorithm is w.h.p.~infeasible on $\mathbb{Q}$ when $\rho(M) > 2\sqrt{d-1}$, and thus can be used to distinguish $\mathbb{P}$ and $\mathbb{Q}$.
\end{proof}

We now turn to the proofs of the hardness results in Theorem \ref{thm:local-statistics-noiseless}. Recall that by Lemma~\ref{lem:planted-feasible}, the Local Statistics SDP is w.h.p.~feasible on $G_0$ drawn from $\mathbb{P}$. To show that Local Statistics SDP fails to distinguish $\mathbb{P}$, and $\mathbb{Q}$, we just need to show the Local Statistics SDP is again w.h.p.~feasible on input drawn from $\mathbb{Q}$.

Before we set off to prove the lower bounds, we state the following result that will simplify the analysis of the moment constraints.

\begin{definition}
    Let $(\alpha, S, \tau)$ be a partially labelled forest. We say $(\alpha, S, \tau)$ is \emph{pruned} if all the leaves of the forest belong to the set $S$ of distinguished vertices.
\end{definition}

\begin{lemma}[{\cite[Lemmas 5.19 and 5.23]{BBKMW-2020-SpectralPlantingColoring}}]\label{lem:minimal-graphs}
    Let $G \sim \mathbb{Q}$ and $\Tilde{\mathbb{E}}$ be a degree-$D_x$ pseudoexpectation that depends on $G$. If $\Tilde{\mathbb{E}}$ w.h.p.~satisfies the hard constraints, the positive semidefiniteness constraint, and the moment constraints for all pruned partially labelled forests with at most $D_G$ edges and $D_x$ distinguished vertices, then w.h.p.~it satisfies all the moment constraints for the degree-$(D_x, D_G)$ Local Statistics SDP.
\end{lemma}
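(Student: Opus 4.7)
The plan is to reduce verifying the moment constraints for arbitrary $(\alpha, S, \tau)$ to the constraints for pruned partially labelled forests by induction on $|V(\alpha)|$. The base case consists of pruned forests, handled by hypothesis; the inductive step splits into two subcases based on whether $\alpha$ contains a cycle.

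When $\alpha$ contains a cycle, both sides of the target moment constraint are negligible compared to the tolerance $\delta n^{\text{cc}(\alpha)}$. Since a cyclic $\alpha$ has $\zeta(\alpha) < \text{cc}(\alpha)$, the target value $(n/k)^{\zeta(\alpha)}(M)_\tau^{(\alpha,S)}$ is $O(n^{\text{cc}(\alpha)-1})$. For the pseudoexpectation, positivity together with the hard constraints $x_{u,i}^2 = x_{u,i}$ and $\sum_i x_{u,i} = 1$ yields, via a pseudo-Cauchy--Schwarz argument, that $|\Tilde{\mathbb{E}}[\prod_{i \in S} x_{\phi(i), \tau(i)}]|$ is bounded by a constant uniformly in the injection $\phi$. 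This bounds $|\Tilde{\mathbb{E}}[p_{(\alpha,S,\tau)}(x, G_0)]|$ by a constant multiple of the number of subgraph occurrences of $\alpha$ in $G_0$; for $G_0 \sim \mathbb{Q}$, the w.h.p.~sparsity of short cycles in random $d$-regular graphs makes this count $O(n^{\text{cc}(\alpha) - 1})$.

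When $\alpha$ is a non-pruned forest, pick a leaf $v \notin S$ with unique neighbor $u$ and let $\alpha' \colonequals \alpha \setminus \{v\}$. Summing over the image of $v$ in the defining sum of $p_{(\alpha, S, \tau)}$ and exploiting the $d$-regularity identity $\sum_{w} (G_0)_{u,w} = d$ (together with the observation that edges of $\alpha'$ incident to $u$ automatically contribute a factor of $1$ at occurrences of $\alpha'$) yields the exact polynomial identity
\begin{equation*}
p_{(\alpha, S, \tau)}(x, G_0) = (d - \deg_{\alpha'}(u)) \, p_{(\alpha', S, \tau)}(x, G_0) - \sum_{\substack{a \in V(\alpha') \\ a \neq u, \, \{u,a\} \notin E(\alpha')}} p_{(\alpha'_a, S, \tau)}(x, G_0),
\end{equation*}
where $\alpha'_a$ denotes $\alpha'$ with the added edge $\{u,a\}$. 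A parallel computation, unpacking the falling-factorial definition of $(M)_\tau^{(\alpha, S)}$ and summing out the label of $v$ using $\sum_j M_{i, j} = d$, yields the exact identity $(M)_\tau^{(\alpha, S)} = (d - \deg_{\alpha'}(u)) \, (M)_\tau^{(\alpha', S)}$. Applying the inductive hypothesis to $\alpha'$ (a forest with one fewer vertex and the same number of components) and to each $\alpha'_a$ (either cyclic, handled by the previous paragraph, or a smaller forest, handled inductively), and using that $\zeta(\alpha'_a) = \zeta(\alpha) - 1$ implies each $\alpha'_a$ correction contributes at most $O(n^{\text{cc}(\alpha) - 1})$, we obtain the desired constraint for $\alpha$ with accumulated error $O(\delta n^{\text{cc}(\alpha)})$. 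The constant factor in the $O(\cdot)$ depends only on $D_x$, $D_G$, and $H$, and can be absorbed by running the pruned-forest hypothesis at a correspondingly smaller tolerance.

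The main obstacle is establishing the parallel identity for $(M)_\tau^{(\alpha, S)}$: computing the ratio of contributions for a fixed extension $\hat\tau'$ of $\tau$ requires carefully tracking how removing the leaf $v$ (labelled $j$) simultaneously removes the factor $M_{j, \hat\tau'(u)}$ at $v$, shortens the falling factorial for label $j$ at $u$ by one, and removes the edge factor $M_{\hat\tau'(u), j}$ from the denominator. After cancellation these combine to give $M_{\hat\tau'(u), j} - \deg_{\hat\tau', j}(u)$, which sums across $j \in [k]$ to $d - \deg_{\alpha'}(u)$ by $d$-regularity of $H$. Once this identity is in hand the rest of the argument is bookkeeping, and the bipartite case requires only the additional observation that both leaf removal and edge addition preserve the partition-respecting structure of labellings, since the classes of distinguished vertices are unchanged by these operations.
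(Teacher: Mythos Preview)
The paper does not give its own proof of this lemma; it is quoted from \cite{BBKMW-2020-SpectralPlantingColoring} (Lemmas~5.19 and~5.23 there). Your proposal reconstructs that argument correctly: the leaf-removal identity for $p_{(\alpha,S,\tau)}$ together with the matching combinatorial identity $(M)_\tau^{(\alpha,S)} = (d-\deg_{\alpha'}(u))\,(M)_\tau^{(\alpha',S)}$ is precisely what reduces arbitrary forests to pruned ones, and the direct occurrence bound disposes of cyclic $\alpha$. One small imprecision worth flagging: for cyclic $\alpha$, a Markov-type argument (as in Lemma~\ref{lem:expected-cycle}) only yields that the occurrence count is $o(n^{\text{cc}(\alpha)})$ with high probability, not $O(n^{\text{cc}(\alpha)-1})$; but the weaker bound is all the tolerance $\delta n^{\text{cc}(\alpha)}$ requires, so this does not affect the argument.
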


\begin{remark}
    We note that although \cite{BBKMW-2020-SpectralPlantingColoring} only prove the statement of Lemma~\ref{lem:minimal-graphs} for the case when $\mathbb{Q} = \mathcal{G}(n, d)$, it is easy to see that the same analysis immediately extends to the bipartite case $\Tilde{\mathbb{Q}}$ associated to $\mathbb{Q} = \mathcal{G}\left(\left(\frac{n}{2}, \frac{n}{2}\right), d\right)$.
\end{remark}

\begin{proof}[Proof of Lower Bounds in Theorem \ref{thm:local-statistics-noiseless}]
    Let $M$ be the $k \times k$ adjacency matrix of $H$. We will show that if $\rho(M) \le 2\sqrt{d-1}$, we may construct a pseudoexpectation $\Tilde{\mathbb{E}}$ for any degree-$(2, D)$ Local Statistics Algorithm with error tolerance $\delta > 0$ that is w.h.p.~feasible on input $G_0$ drawn from $\mathbb{Q}$. Let us first consider the non-bipartite case.

    \paragraph{(1) $H$ is not bipartite:}
    Let $\Tilde{P}^{(\lambda)}$ be a pseudo-partition matrix to the level-$D$ Symmetric Path Statistics SDP parametrized by $\lambda$ with $|\lambda| \le 2\sqrt{d-1}$, whose asymptotic almost sure existence on $G_0 \sim \mathbb{Q}$ is guaranteed by Theorem \ref{thm:SPS-main}. Recall this matrix satisfies:
    \begin{enumerate}
        \item $\Tilde{P}^{(\lambda)}_{u,u} = 1$ for every $u \in V(G_0)$,
        \item $\langle \Tilde{P}^{(\lambda)}, J\rangle = \frac{n^2}{k}$,
        \item $\langle \Tilde{P}^{(\lambda)}, A_G^{(s)} \rangle \in \frac{k-1}{k}q_s(\lambda)n + \frac{1}{k}q_s(d)n + [-\delta n, \delta n]$,
        \item $\Tilde{P}^{(\lambda)} \succeq \frac{1}{k}J$.
    \end{enumerate}

    As before, we will use that a degree-$2$ pseudoexpectation $\Tilde{\mathbb{E}}$ for the Local Statistics SDP may be specified by a PSD degree-$2$ pseudomoment matrix
    \begin{align*}
        \begin{bmatrix}
            1 & \ell^\top\\
            \ell & Q
        \end{bmatrix} = \begin{bmatrix}
            1 & \ell_1^\top & \cdots & \ell_n^\top\\
            \ell_1 & Q_{1,1} & \cdots & Q_{1,n}\\
            \vdots & \vdots & \ddots & \vdots\\
            \ell_n & Q_{n,1} & \cdots & Q_{n,n}
        \end{bmatrix},
    \end{align*}
    where $(\ell_u)_i = \Tilde{\mathbb{E}}[x_{u,i}]$ and $(Q_{u,v})_{i,j} = \Tilde{\mathbb{E}}[x_{u,i}x_{v,j}]$. Note that here we use the other block representation of $Q$, where $Q_{u,v}$ are indexed by vertices $u,v \in [n]$ of $G_0$.

    Let the spectral decomposition of $M$ be $M = \sum_{i=1}^k \lambda_i v_iv_i^\top = d J_k/k + \sum_{i: |\lambda_i| < d} \lambda_i v_i v_i^\top$. Our construction sets $(\ell_u)_i = \frac{1}{k}$ for every $u$ and $i$, and the remaining block to
     \[ Q = \frac{1}{k-1} \sum_{i: |\lambda_i| < d} \Tilde{P}^{(\lambda_i)} \otimes v_iv_i^\top + \frac{1}{k(k-1)}J_n \otimes (J_k - I_k). \]

    First we verify the PSD constraint:
    \begin{align*}
        & Q - \ell \ell^\top\\
        &= \frac{1}{k-1} \sum_{i: |\lambda_i| < d} \Tilde{P}^{(\lambda_i)} \otimes v_iv_i^\top + \frac{1}{k(k-1)}J_n \otimes (J_k - I_k) - \frac{1}{k^2} J_n \otimes J_k\\
        &= \frac{1}{k-1} \sum_{i: |\lambda_i| < d} \Tilde{P}^{(\lambda_i)} \otimes v_iv_i^\top + \frac{1}{k}J_n \otimes \left(\frac{1}{k(k-1)}J_k - \frac{1}{k-1}I_k\right)\\
        &= \frac{1}{k-1} \sum_{i: |\lambda_i| < d} \left(\Tilde{P}^{(\lambda_i)} - \frac{1}{k}J_n\right) \otimes v_iv_i^\top + \frac{1}{k(k-1)} J_n \otimes \left(\sum_{i:|\lambda_i| < d} v_iv_i^\top\right) \\
        &\hspace{1cm} + \frac{1}{k(k-1)}J_n \otimes \left(\frac{1}{k}J_k - I_k\right)\\
        &=  \frac{1}{k-1} \sum_{i: |\lambda_i| < d} \left(\Tilde{P}^{(\lambda_i)} - \frac{1}{k}J_n \right) \otimes v_iv_i^\top + \frac{1}{k(k-1)}J_n \otimes \left(I_k - \frac{1}{k}J_k\right) +\frac{1}{k(k-1)} J_n \otimes \left(\frac{1}{k}J_k - I_k\right)\\
        &= \frac{1}{k-1} \sum_{i: |\lambda_i| < d} \left(\Tilde{P}^{(\lambda_i)} - \frac{1}{k}J_n \right) \otimes v_iv_i^\top \succeq 0,
    \end{align*}
    where we use $\Tilde{P}^{(\lambda_i)} - \frac{1}{k}J_n \succeq 0$, $v_iv_i^\top \succeq 0$, and that tensor product of positive semidefinite matrices is again positive semidefinite. By the Schur complement criterion, we conclude that the pseudomoment matrix is PSD.

    Next we verify the hard constraints:
    \begin{align*}
        \Tilde{\mathbb{E}}[x_{v,i}^2] &= (Q_{v,v})_{i,i}\\
        &= \frac{1}{k-1} \sum_{k: |\lambda_k| < d} \Tilde{P}^{(\lambda_k)}_{v,v} (v_kv_k^\top)_{i,i}\\
        &= \frac{1}{k-1} (I - J_k/k)_{i,i}\\
        &= \frac{1}{k}\\
        &= \Tilde{\mathbb{E}}[x_{v,i}]
    \end{align*}
    and
    \begin{align*}
        \Tilde{\mathbb{E} }[(x_{u,1} + \cdots + x_{u,k}) x_{v,i}] &= \sum_{j = 1}^k \frac{1}{k-1} \left(\sum_{r: |\lambda_r| < d} \Tilde{P}^{(\lambda_r)}_{u,v} \left(v_rv_r^\top\right)_{j,i}  + \frac{1}{k} \One\{j \ne i\}\right)\\
        &= \frac{1}{k} + \frac{1}{k-1}\sum_{r: |\lambda_r| < d} \Tilde{P}^{(\lambda_r)}_{u,v} \sum_{j=1}^k \left(v_rv_r^\top\right)_{j,i}\\
        &= \frac{1}{k} + \frac{1}{k-1}\sum_{r: |\lambda_r| < d} \Tilde{P}^{(\lambda_r)}_{u,v} (v_r)_i\sum_{j=1}^k (v_r)_j\\
        &= \frac{1}{k}\\
        &= \Tilde{\mathbb{E} }[x_{v,i}],
    \end{align*}
    where we use the fact that $\sum_{j} (v_r)_j = \langle v_r, (1, \dots, 1)^\top \rangle = 0$ in the second to last line, since $v_r$ is an eigenvector orthogonal to the constant vector, the eigenvector of the eigenvalue $d$.

    We also verify the label constraints, i.e., the hard moment constraints on the edgeless labelled graphs. We have
    {\allowdisplaybreaks
    \begin{align*}
        \sum_{u \in [n]} \Tilde{\mathbb{E}}[x_{u,i}] &= \frac{n}{k},\\
        \sum_{u, v\in [n]: u\ne v} \Tilde{\mathbb{E}}[x_{u,i}x_{v,j}] &= \sum_{u, v\in [n]: u\ne v} \left(\frac{1}{k-1} \left(\sum_{r: |\lambda_r| < d} \Tilde{P}^{(\lambda_r)}_{u,v} (v_rv_r^\top)_{i,j} + \frac{1}{k} \One\{i \ne j\}\right)\right)\\
        &= \frac{n(n-1)}{k(k-1)}\One\{i \ne j\}+ \frac{1}{k-1} \sum_{r: |\lambda_r| < d} \left( (v_rv_r^\top)_{i,j} \sum_{u, v\in [n]: u\ne v}\Tilde{P}^{(\lambda_r)}_{u,v}\right)\\
        &= \frac{n(n-1)}{k(k-1)} \One\{i \ne j\} + \frac{1}{k-1} \sum_{r: |\lambda_r| < d} (v_rv_r^\top)_{i,j} \left(\langle \Tilde{P}^{(\lambda_r)}, J_n \rangle - \sum_{u \in [n]} \Tilde{P}^{(\lambda_r)}_{u,u}\right)\\
        &= \frac{n(n-1)}{k(k-1)} \One\{i \ne j\} + \frac{1}{k-1} \sum_{r: |\lambda_r| < d} (v_rv_r^\top)_{i,j} \left(\frac{n^2}{k} - n\right)\\
        &= \frac{n(n-1)}{k(k-1)} \One\{i \ne j\} + \frac{1}{k-1} \left(\frac{n^2}{k} - n\right) \left(I_k - \frac{J_k}{k}\right)_{i,j}\\
        &= \begin{cases}
            \frac{n}{k}\left(\frac{n}{k} -1\right) & \quad \text{ if } i = j,\\
            \frac{n^2}{k^2} & \quad \text{ if } i\ne j,
        \end{cases}
    \end{align*}}
    as desired.

    Finally, we verify the moment constraints. By Lemma~\ref{lem:minimal-graphs}, we only need verify the moment constraints on the class of pruned partially labelled forests. So long as these constraints are satisfied, the remaining moment constraints are automatically satisfied with high probability.

    A minimal partially labelled forest with $1$ distinguished vertex is just a single labelled vertex, which we have already dealt with in the label constraints.

    A minimally partially labelled forest with $2$ distinguished vertices is either two isolated labelled vertices, or a path of length $s$ with endpoints labelled by $i, j$. The case of $2$ isolated labelled vertices is again dealt with in the label constraints, so we only need to verify the moment constraints imposed by the paths with labelled endpoints. Recall $A_{G_0}^{\langle s\rangle}$ is the self-avoiding matrix of $G_0$, which is useful for writing down the polynomial associated with partially labelled paths. For a path of length $s$ with two endpoints labelled $i$ and $j$, we verify
    {\allowdisplaybreaks
    \begin{align*}
        &\Tilde{\mathbb{E}}[p_{(P_s, \{0, s\}, \{i,j\})}]\\
        &=\langle Q_{i,j}, A_{G_0}^{\langle s \rangle}\rangle\\
        &= \langle Q_{i,j}, A_{G_0}^{(s)}\rangle + o(n)
        \intertext{where we use that w.h.p.~the self-avoiding matrix $A_{G_0}^{\langle s \rangle}$ satisfies $\| A_{G_0}^{\langle s \rangle} - A_{G_0}^{(s)}\|_{F}^2 \le O(\log n)$ as a corollary of Proposition \ref{prop:bad-v}, and $Q_{i,j}$ has its entries bounded by $\frac{1}{k}$ since $Q$ is PSD with $\frac{1}{k}$ on the diagonal,}
        &= \left\langle \frac{1}{k-1}\left(\sum_{r: |\lambda_r| < d} \Tilde{P}^{(\lambda_r)} \cdot \left(v_rv_r^\top\right)_{i,j} + \frac{1}{k} J_n \cdot (J_k - I_k)_{i,j}\right), A_{G_0}^{(s)} \right\rangle + o(n)\\
        &= \frac{1}{k-1}\sum_{r: |\lambda_r| < d} \left(v_rv_r^\top\right)_{i,j} \langle \Tilde{P}^{(\lambda_r)}, A_G^{(s)} \rangle + \frac{1}{k(k-1)}(J_k - I_k)_{i,j} \langle J_n, A_G^{(s)}\rangle + o(n) \\
        &= \sum_{r: |\lambda_r| < d} \left(v_rv_r^\top\right)_{i,j} \left( \frac{1}{k} q_s(\lambda_r)n + \frac{1}{k(k-1)} q_s(d) n \pm \frac{1}{k-1}\delta n \right) + \frac{1}{k(k-1)} (J_k - I_k)_{i,j} q_s(d) n + o(n)
        \intertext{note that $|v_rv_r^\top|_{i,j} \le 1$ since $v_r$ are unit vectors, and thus the total error term is bounded by $\pm \delta n$,}
        &\in \frac{1}{k}q_s\left(M - d\frac{J_k}{k}\right)_{i,j}n + \frac{1}{k(k-1)}q_s(d)n \cdot \left(I_k - \frac{J_k}{k} + J_k - I_k\right)_{i,j} + [-\delta n, \delta n]  + o(n)\\
        &=  \frac{1}{k}q_s\left(M - d\frac{J_k}{k}\right)_{i,j}n + \frac{1}{k}q_s\left(d \frac{J_k}{k}\right)_{i,j} n + [-\delta n, \delta n] + o(n)\\
        &= \frac{n}{k}q_s(M)_{i,j} + [-\delta n, \delta n] + o(n),
    \end{align*}}
    which satisfies the moment constraints with error tolerance $2\delta$.

    Thus, using the solutions from the level-$D$ Symmetric Path Statistics SDP, we have constructed a degree-$2$ pseudomoment $\Tilde{\mathbb{E}}$ for the degree-$(2, D)$ Local Statistics SDP that is w.h.p.~feasible on input $G_0 \sim \mathbb{Q}$.

    \paragraph{(2) $H$ is bipartite:}
    Now we address the bipartite case. Again let $\Tilde{P}^{(\lambda)}$ be a pseudo-partition matrix to the level-$D$ Bipartite Symmetric Path Statistics SDP parametrized by $\lambda$ with $|\lambda| \le 2\sqrt{d-1}$, whose asymptotic almost sure existence on $G_0 \sim \Tilde{\mathbb{Q}}$ is guaranteed by Theorem \ref{thm:SPS-main}. Recall that this matrix satisfies:
    \begin{enumerate}
        \item $\Tilde{P}^{(\lambda)}_{u,u} = 1$ for every $u \in V(G_0)$,
        \item $\langle \Tilde{P}^{(\lambda)}, J\rangle = \frac{n^2}{k}$,
        \item $\Tilde{P}^{(\lambda)}_{u,v} = \Tilde{P}^{(\lambda)}_{v,u} = 0 \text{ for all } u \in \{1, 2, \dots, \frac{n}{2}\}$ and $v\in \{\frac{n}{2} +1, \frac{n}{2} + 2, \dots, n\}$,
        \item $\langle \Tilde{P}^{(\lambda)}, A_G^{(s)} \rangle \in \frac{k-2}{k}q_s(\lambda)n + \frac{1}{k}q_s(d)n + \frac{1}{k}q_s(-d) + [-\delta n, \delta n]$,
        \item $\Tilde{P}^{(\lambda)} \succeq \frac{1}{k}J$, and $\Tilde{P}^{(\lambda)} \succeq \frac{1}{k} \begin{bmatrix}
            J_{n/2} & -J_{n/2}\\
            -J_{n/2} & J_{n/2}
        \end{bmatrix}.$
    \end{enumerate}
    We further note that Conditions 2, 3, and 5 together imply that \[ \Tilde{P}^{(\lambda)} \succeq \frac{1}{k}J + \frac{1}{k} \begin{bmatrix}
            J_{n/2} & -J_{n/2}\\
            -J_{n/2} & J_{n/2}
        \end{bmatrix}, \] since these conditions enforce that $(1, \dots, 1, 1, \dots,  1)^\top$ and $(1, \dots, 1, -1, \dots, -1)^\top$ are two orthogonal eigenvectors of $\Tilde{P}^{(\lambda)}$ with eigenvalue $\frac{n}{k}$.

    Let the spectral decomposition of $M$, the adjacency matrix of $H$, be \[M = \sum_{i=1}^k \lambda_i v_iv_i^\top = \frac{d}{k} J_k - \frac{d}{k} \begin{bmatrix}
            J_{k/2} & -J_{k/2}\\
            -J_{k/2} & J_{k/2}
        \end{bmatrix} + \sum_{i: |\lambda_i| < d} \lambda_i v_iv_i^\top.\]
        We similarly construct the degree-$2$ pseudomoment matrix that represents a degree-$2$ pseudoexpectation $\Tilde{\mathbb{E}}$
        \begin{align*}
        \begin{bmatrix}
            1 & \ell^\top\\
            \ell & Q
        \end{bmatrix} = \begin{bmatrix}
            1 & \ell_1^\top & \cdots & \ell_n^\top\\
            \ell_1 & Q_{1,1} & \cdots & Q_{1,n}\\
            \vdots & \vdots & \ddots & \vdots\\
            \ell_n & Q_{n,1} & \cdots & Q_{n,n}
        \end{bmatrix},
    \end{align*}
    by setting $(\ell_u)_i = \frac{1}{k}$ for every $u$ and $i$, and
    \begin{align*}
        Q &= \frac{1}{k-2} \Bigg(\sum_{i: |\lambda_i| < d} \Tilde{P}^{(\lambda_i)} \otimes v_iv_i^\top + \frac{1}{k} J_n \otimes \left(\frac{k-1}{k}J_k + \frac{1}{k}\begin{bmatrix}
            J_{k/2} & -J_{k/2}\\
            -J_{k/2} & J_{k/2}
        \end{bmatrix} - I_k\right)\\
        &\hspace{2cm} + \frac{1}{k}\begin{bmatrix}
            J_{n/2} & -J_{n/2}\\
            -J_{n/2} & J_{n/2}
        \end{bmatrix} \otimes \left(\frac{k-1}{k} \begin{bmatrix}
            J_{k/2} & -J_{k/2}\\
            -J_{k/2} & J_{k/2}
        \end{bmatrix} + \frac{1}{k} J_k - I_k\right)\Bigg) .
    \end{align*}

    First we verify the positive semidefiniteness constraint:
    {\allowdisplaybreaks
    \begin{align*}
        &Q - \ell \ell^\top\\
        &= \frac{1}{k-2} \sum_{i: |\lambda_i| < d} \Tilde{P}^{(\lambda_i)} \otimes v_iv_i^\top + \frac{1}{k(k-2)}J_n \otimes \left(\frac{k-1}{k}J_k + \frac{1}{k}\begin{bmatrix}
            J_{k/2} & -J_{k/2}\\
            -J_{k/2} & J_{k/2}
        \end{bmatrix} - I_k\right)\\
        &\quad + \frac{1}{k(k-2)}\begin{bmatrix}
            J_{n/2} & -J_{n/2}\\
            -J_{n/2} & J_{n/2}
        \end{bmatrix} \otimes \left(\frac{k-1}{k} \begin{bmatrix}
            J_{k/2} & -J_{k/2}\\
            -J_{k/2} & J_{k/2}
        \end{bmatrix} + \frac{1}{k} J_k - I_k\right)- \frac{1}{k^2} J_n \otimes J_k\\
        &= \frac{1}{k-2} \sum_{i: |\lambda_i| < d}\Tilde{P}^{(\lambda_i)} \otimes v_iv_i^\top + \frac{1}{k(k-2)}J_n \otimes \left(\frac{k-1}{k}J_k - \frac{k-2}{k} J_k + \frac{1}{k}\begin{bmatrix}
            J_{k/2} & -J_{k/2}\\
            -J_{k/2} & J_{k/2}
        \end{bmatrix} - I_k\right)\\
        &\quad + \frac{1}{k(k-2)}\begin{bmatrix}
            J_{n/2} & -J_{n/2}\\
            -J_{n/2} & J_{n/2}
        \end{bmatrix} \otimes \left(\frac{k-1}{k} \begin{bmatrix}
            J_{k/2} & -J_{k/2}\\
            -J_{k/2} & J_{k/2}
        \end{bmatrix} + \frac{1}{k} J_k - I_k\right)\\
        &= \frac{1}{k-2} \sum_{i: |\lambda_i| < d}\left(\Tilde{P}^{(\lambda_i)} - \frac{1}{k}J_n - \frac{1}{k}\begin{bmatrix}
            J_{n/2} & -J_{n/2}\\
            -J_{n/2} & J_{n/2}
        \end{bmatrix} \right) \otimes v_iv_i^\top\\
        &\quad + \frac{1}{k(k-2)} \left(J_n + \begin{bmatrix}
            J_{n/2} & -J_{n/2}\\
            -J_{n/2} & J_{n/2}
        \end{bmatrix}\right) \otimes \left(\sum_{i:|\lambda_i| < d} v_iv_i^\top\right)\\
        &\quad + \frac{1}{k(k-2)}J_n \otimes \left(\frac{1}{k}J_k + \frac{1}{k}\begin{bmatrix}
            J_{k/2} & -J_{k/2}\\
            -J_{k/2} & J_{k/2}
        \end{bmatrix} - I_k\right)\\
        &\quad + \frac{1}{k(k-2)}\begin{bmatrix}
            J_{n/2} & -J_{n/2}\\
            -J_{n/2} & J_{n/2}
        \end{bmatrix} \otimes \left(\frac{k-1}{k} \begin{bmatrix}
            J_{k/2} & -J_{k/2}\\
            -J_{k/2} & J_{k/2}
        \end{bmatrix} + \frac{1}{k} J_k - I_k\right)\\
        &= \frac{1}{k-2} \sum_{i: |\lambda_i| < d}\left(\Tilde{P}^{(\lambda_i)} - \frac{1}{k}J_n - \frac{1}{k}\begin{bmatrix}
            J_{n/2} & -J_{n/2}\\
            -J_{n/2} & J_{n/2}
        \end{bmatrix} \right) \otimes v_iv_i^\top\\
        &\quad + \frac{1}{k(k-2)} \left(J_n + \begin{bmatrix}
            J_{n/2} & -J_{n/2}\\
            -J_{n/2} & J_{n/2}
        \end{bmatrix}\right) \otimes \left(I_k - \frac{1}{k}J_k - \frac{1}{k}\begin{bmatrix}
            J_{k/2} & -J_{k/2}\\
            -J_{k/2} & J_{k/2}
        \end{bmatrix} \right)\\
        &\quad + \frac{1}{k(k-2)}J_n \otimes \left(\frac{1}{k}J_k + \frac{1}{k}\begin{bmatrix}
            J_{k/2} & -J_{k/2}\\
            -J_{k/2} & J_{k/2}
        \end{bmatrix} - I_k\right)\\
        &\quad + \frac{1}{k(k-2)}\begin{bmatrix}
            J_{n/2} & -J_{n/2}\\
            -J_{n/2} & J_{n/2}
        \end{bmatrix} \otimes \left(\frac{k-1}{k} \begin{bmatrix}
            J_{k/2} & -J_{k/2}\\
            -J_{k/2} & J_{k/2}
        \end{bmatrix} + \frac{1}{k} J_k - I_k\right)\\
        &= \frac{1}{k-2} \sum_{i: |\lambda_i| < d}\left(\Tilde{P}^{(\lambda_i)} - \frac{1}{k}J_n - \frac{1}{k}\begin{bmatrix}
            J_{n/2} & -J_{n/2}\\
            -J_{n/2} & J_{n/2}
        \end{bmatrix} \right) \otimes v_iv_i^\top\\
        &\quad + \frac{1}{k^2}\begin{bmatrix}
            J_{n/2} & -J_{n/2}\\
            -J_{n/2} & J_{n/2}
        \end{bmatrix} \otimes \begin{bmatrix}
            J_{k/2} & -J_{k/2}\\
            -J_{k/2} & J_{k/2}
        \end{bmatrix} \\
        &\succeq 0,
    \end{align*}}
    where we use the properties
    \begin{align*}
        \Tilde{P}^{(\lambda_i)} - \frac{1}{k}J_n - \frac{1}{k}\begin{bmatrix}
            J_{k/2} & -J_{k/2}\\
            -J_{k/2} & J_{k/2}
        \end{bmatrix} &\succeq 0, \\
        v_iv_i^\top &\succeq 0, \\
        \begin{bmatrix}
            J_{n/2} & -J_{n/2}\\
            -J_{n/2} & J_{n/2}
        \end{bmatrix} &\succeq 0, \\
        \begin{bmatrix}
            J_{k/2} & -J_{k/2}\\
            -J_{k/2} & J_{k/2}
        \end{bmatrix} &\succeq 0,
    \end{align*}
    and that tensor product of positive semidefinite matrices is again positive semidefinite. By the Schur complement criterion, we conclude that the pseudomoment matrix is PSD.

    Next we verify the hard constraints:
    \begin{align*}
        \Tilde{\mathbb{E}}[x_{v,i}^2] &= (Q_{v,v})_{i,i}\\
        &= \frac{1}{k-2} \sum_{k: |\lambda_k| < d} \Tilde{P}^{(\lambda_k)}_{v,v} (v_kv_k^\top)_{i,i}\\
        &= \frac{1}{k-2} \left(I - \frac{1}{k}J_k - \frac{1}{k} \begin{bmatrix}
            J_{k/2} & -J_{k/2}\\
            -J_{k/2} & J_{k/2}
        \end{bmatrix}\right)_{i,i}\\
        &= \frac{1}{k}\\
        &= \Tilde{\mathbb{E}}[x_{v,i}]
    \end{align*}
    and
    \begin{align*}
        &\Tilde{\mathbb{E} }[(x_{u,1} + \cdots + x_{u,k}) x_{v,i}] \\
        &= \sum_{j = 1}^k \frac{1}{k-2} \Bigg(\sum_{r: |\lambda_r| < d} \Tilde{P}^{(\lambda_r)}_{u,v} \left(v_rv_r^\top\right)_{j,i}  + \frac{1}{k} \left(\frac{k-1}{k}J_k + \frac{1}{k}\begin{bmatrix}
            J_{k/2} & -J_{k/2}\\
            -J_{k/2} & J_{k/2}
        \end{bmatrix} - I_k\right)_{i,j}\\
        &\quad + \frac{1}{k} \begin{bmatrix}
            J_{k/2} & -J_{k/2}\\
            -J_{k/2} & J_{k/2}
        \end{bmatrix}_{u,v} \left(\frac{k-1}{k}\begin{bmatrix}
            J_{k/2} & -J_{k/2}\\
            -J_{k/2} & J_{k/2}
        \end{bmatrix} + \frac{1}{k}J_k - I_k\right)_{i,j}\Bigg)
        \intertext{notice that $\frac{k-1}{k}J_k + \frac{1}{k}\begin{bmatrix}
            J_{k/2} & -J_{k/2}\\
            -J_{k/2} & J_{k/2}
        \end{bmatrix} - I_k$ has row sum equal to $k-2$, and $\frac{k-1}{k}\begin{bmatrix}
            J_{k/2} & -J_{k/2}\\
            -J_{k/2} & J_{k/2}
        \end{bmatrix} + \frac{1}{k}J_k - I_k$ has row sum equal to $0$, and thus}
        &= \frac{1}{k} + \frac{1}{k-2}\sum_{r: |\lambda_r| < d} \Tilde{P}^{(\lambda_r)}_{u,v} \sum_{j=1}^k \left(v_rv_r^\top\right)_{j,i}\\
        &= \frac{1}{k} + \frac{1}{k-2}\sum_{r: |\lambda_r| < d} \Tilde{P}^{(\lambda_r)}_{u,v} (v_r)_i\sum_{j=1}^k (v_r)_j\\
        &= \frac{1}{k}\\
        &= \Tilde{\mathbb{E} }[x_{v,i}],
    \end{align*}
    where we use the fact that $\sum_{j} (v_r)_j = \langle v_r, (1, \dots, 1)^\top \rangle = 0$ in the second to last line, since $v_r$ is an eigenvector orthogonal to the constant vector, the eigenvector of eigenvalue of $d$.

    We moreover verify the label constraints, i.e., the hard moment constraints on the edgeless labelled graphs. For two labels $i,j \in [k]$, we say they are in the same group if either $i,j \in \{1, \dots, \frac{k}{2}\}$ or $i,j \in \{\frac{k}{2} + 1, \dots, k\}$, and we say they are in different groups otherwise. We have
    \[ \sum_{u \in [n]} \Tilde{\mathbb{E}}[x_{u,i}] = \frac{n}{k} \]
    and
    {\allowdisplaybreaks
    \begin{align*}
        &\sum_{u, v\in [n]: u\ne v} \Tilde{\mathbb{E}}[x_{u,i}x_{v,j}] \\
        &= \sum_{u, v\in [n]: u\ne v} \frac{1}{k-2} \sum_{r: |\lambda_r| < d} \Tilde{P}^{(\lambda_r)}_{u,v} (v_rv_r^\top)_{i,j}\\
        &\quad +\sum_{u,v\in [n]: u\ne v} \frac{1}{k(k-2)}\left(\frac{k-1}{k}J_k + \frac{1}{k}\begin{bmatrix}
            J_{k/2} & -J_{k/2}\\
            -J_{k/2} & J_{k/2}
        \end{bmatrix} - I_k\right)_{i,j}\\
        &\quad + \sum_{u,v\in [n]: u\ne v} \frac{1}{k(k-2)}\begin{bmatrix}
            J_{n/2} & -J_{n/2}\\
            -J_{n/2} & J_{n/2}
        \end{bmatrix}_{u,v}\left(\frac{k-1}{k}\begin{bmatrix}
            J_{k/2} & -J_{k/2}\\
            -J_{k/2} & J_{k/2}
        \end{bmatrix} + \frac{1}{k}J_k - I_k\right)_{i,j}\\
        &= \frac{n(n-1)}{k(k-2)}\left
        \{ \begin{array}{ll}
             0 & \quad \text{ if } i=j,\\
            1 & \quad \text{ if } i\ne j \text{ in the same group}\\
            \frac{k-2}{k} & \quad \text{ if } i\ne j \text{ in different groups}
        \end{array}\right\}\\
        &\quad + \frac{-n}{k(k-2)}\left
        \{ \begin{array}{ll}
             0 & \quad \text{ if } i=j,\\
            1 & \quad \text{ if } i\ne j \text{ in the same group}\\
            -\frac{k-2}{k} & \quad \text{ if } i\ne j \text{ in different groups}
        \end{array}\right\}\\
        &\quad + \frac{1}{k-2} \sum_{r: |\lambda_r| < d} \left( (v_rv_r^\top)_{i,j} \sum_{u, v\in [n]: u\ne v}\Tilde{P}^{(\lambda_r)}_{u,v}\right)\\
        &= \left
        \{ \begin{array}{ll}
             0 & \quad \text{ if } i=j,\\
            \frac{n(n-2)}{k(k-2)} & \quad \text{ if } i\ne j \text{ in the same group}\\
            \frac{n^2}{k^2} & \quad \text{ if } i\ne j \text{ in different groups}
        \end{array}\right\}\\
        &\quad + \frac{1}{k-2} \sum_{r: |\lambda_r| < d}  (v_rv_r^\top)_{i,j} \left(\langle \Tilde{P}^{(\lambda_r)}, J_n \rangle - \sum_{u \in [n]} \Tilde{P}^{(\lambda_r)}_{u,u}\right)\\
        &= \left
        \{ \begin{array}{ll}
             0 & \quad \text{ if } i=j,\\
            \frac{n(n-2)}{k(k-2)} & \quad \text{ if } i\ne j \text{ in the same group}\\
            \frac{n^2}{k^2} & \quad \text{ if } i\ne j \text{ in different groups}
        \end{array}\right\}\\
        &\quad + \frac{1}{k-2} \left(I_k - \frac{1}{k}J_k - \frac{1}{k}\begin{bmatrix}
            J_{k/2} & -J_{k/2}\\
            -J_{k/2} & J_{k/2}
        \end{bmatrix}\right)_{i,j} \left(\frac{n^2}{k} - n\right)\\
        &= \left
        \{ \begin{array}{ll}
             0 & \quad \text{ if } i=j,\\
            \frac{n(n-2)}{k(k-2)} & \quad \text{ if } i\ne j \text{ in the same group}\\
            \frac{n^2}{k^2} & \quad \text{ if } i\ne j \text{ in different groups}
        \end{array}\right\} + \left
        \{ \begin{array}{ll}
             \frac{n}{k}\left(\frac{n}{k} - 1\right) & \quad \text{ if } i=j,\\
            -\frac{2n(n-k)}{k^2(k-2)} & \quad \text{ if } i\ne j \text{ in the same group}\\
            0 & \quad \text{ if } i\ne j \text{ in different groups}
        \end{array}\right\}\\
        &= \left\{\begin{array}{ll}
            \frac{n}{k}\left(\frac{n}{k} -1\right) & \quad \text{ if } i = j,\\
            \frac{n^2}{k^2} & \quad \text{ if } i\ne j
        \end{array}\right\},
    \end{align*}}
    as desired.

    Finally, we verify the moment constraints. By Lemma~\ref{lem:minimal-graphs}, we only need verify the moment constraints on the class of pruned partially labelled forests. So long as these constraints are satisfied, the remaining moment constraints are automatically satisfied with high probability.

    As before, the moment constraints of pruned partially labelled forests with isolated distinguished vertices are subsumed by the label constraints.

    Now the only other pruned partially labelled forests with two distinguished vertices are paths with two labelled endpoints. It remains to verify the moment constraints imposed by these paths. For a path of length $s$ with two endpoints labelled $i$ and $j$, we verify
    {\allowdisplaybreaks
    \begin{align*}
        &\quad \Tilde{\mathbb{E}}[p_{(P_s, \{0, s\}, \{i,j\})}]\\
        &=\langle Q_{i,j}, A_{G_0}^{\langle s \rangle}\rangle\\
        &= \langle Q_{i,j}, A_{G_0}^{(s)}\rangle + o(n)
        \intertext{again by the same reason that $A_{G_0}^{\langle s \rangle}$ and $A_{G_0}^{(s)}$ are close and $Q_{i,j}$ has its entries bounded by $\frac{1}{k}$,}
        &= \left\langle \frac{1}{k-2}\sum_{r: |\lambda_r| < d} \Tilde{P}^{(\lambda_r)} \cdot \left(v_rv_r^\top\right)_{i,j}, A_{G_0}^{(s)} \right\rangle\\
        &\quad+ \left\langle \frac{1}{k(k-2)} J_n \cdot \left(\frac{k-1}{k}J_k + \frac{1}{k}\begin{bmatrix}
            J_{k/2} & -J_{k/2}\\
            -J_{k/2} & J_{k/2}
        \end{bmatrix} - I_k\right)_{i,j} , A_{G_0}^{(s)}\right\rangle\\
        &\quad + \left\langle \frac{1}{k(k-2)} \begin{bmatrix}
            J_{n/2} & -J_{n/2}\\
            -J_{n/2} & J_{n/2}
        \end{bmatrix} \cdot \left(\frac{k-1}{k}\begin{bmatrix}
            J_{k/2} & -J_{k/2}\\
            -J_{k/2} & J_{k/2}
        \end{bmatrix} + \frac{1}{k}J_k - I_k\right)_{i,j} , A_{G_0}^{(s)}\right\rangle +  o(n)\\
        &= \frac{1}{k-2}\sum_{r: |\lambda_r| < d} \left(v_rv_r^\top\right)_{i,j} \langle \Tilde{P}^{(\lambda_r)}, A_{G_0}^{(s)} \rangle \\
        &\quad + \frac{1}{k(k-2)}\left(\frac{k-1}{k}J_k + \frac{1}{k}\begin{bmatrix}
            J_{k/2} & -J_{k/2}\\
            -J_{k/2} & J_{k/2}
        \end{bmatrix} - I_k\right)_{i,j} \langle J_n, A_{G_0}^{(s)}\rangle\\
        &\quad + \frac{1}{k(k-2)}\left(\frac{k-1}{k}\begin{bmatrix}
            J_{k/2} & -J_{k/2}\\
            -J_{k/2} & J_{k/2}
        \end{bmatrix} + \frac{1}{k}J_k - I_k\right)_{i,j} \left\langle\begin{bmatrix}
            J_{n/2} & -J_{n/2}\\
            -J_{n/2} & J_{n/2}
        \end{bmatrix}, A_{G_0}^{(s)} \right\rangle+ o(n) \\
        &= \sum_{r: |\lambda_r| < d} \left(v_rv_r^\top\right)_{i,j} \left( \frac{1}{k} q_s(\lambda_r)n + \frac{1}{k(k-2)} q_s(d) n + \frac{1}{k(k-2)} q_s(-d)n \pm \frac{1}{k-2}\delta n \right)\\
        &\quad + \frac{1}{k(k-2)}\left(\frac{k-1}{k}J_k + \frac{1}{k}\begin{bmatrix}
            J_{k/2} & -J_{k/2}\\
            -J_{k/2} & J_{k/2}
        \end{bmatrix} - I_k\right)_{i,j} q_s(d)n\\
        &\quad + \frac{1}{k(k-2)} \left(\frac{k-1}{k}\begin{bmatrix}
            J_{k/2} & -J_{k/2}\\
            -J_{k/2} & J_{k/2}
        \end{bmatrix} + \frac{1}{k}J_k - I_k\right)_{i,j} q_s(-d) n + o(n)
        \intertext{note that $|v_rv_r^\top|_{i,j} \le 1$ since $v_r$ are unit vectors, and thus the total error term is bounded by $\pm \delta n$,}
        &\in \frac{n}{k} \left(\sum_{r: |\lambda_r| < d} q_s(\lambda_r) v_rv_r^\top\right)_{i,j} \\
        &\quad + \frac{n}{k(k-2)} q_s(d)\cdot \left(\sum_{r: |\lambda_r| < d} v_rv_r^\top + \frac{k-1}{k}J_k + \frac{1}{k}\begin{bmatrix}
            J_{k/2} & -J_{k/2}\\
            -J_{k/2} & J_{k/2}
        \end{bmatrix} - I_k \right)_{i,j} \\
        &\quad + \frac{n}{k(k-2)}q_s(-d)\cdot\left(\sum_{r: |\lambda_r| < d} v_rv_r^\top + \frac{k-1}{k}\begin{bmatrix}
            J_{k/2} & -J_{k/2}\\
            -J_{k/2} & J_{k/2}
        \end{bmatrix} + \frac{1}{k}J_k - I_k \right)_{i,j} + [-\delta n, \delta n] + o(n) \\
        &= \frac{n}{k}q_s\left(M - d\frac{1}{k}J_k + d\frac{1}{k}\begin{bmatrix}
            J_{k/2} & -J_{k/2}\\
            -J_{k/2} & J_{k/2}
        \end{bmatrix}\right)_{i,j} + [-\delta n, \delta n]  + o(n)\\
        &\quad + \frac{n}{k(k-2)}q_s(d) \cdot \left(I_k - \frac{1}{k}J_k - \frac{1}{k}\begin{bmatrix}
            J_{k/2} & -J_{k/2}\\
            -J_{k/2} & J_{k/2}
        \end{bmatrix} + \frac{k-1}{k}J_k + \frac{1}{k}\begin{bmatrix}
            J_{k/2} & -J_{k/2}\\
            -J_{k/2} & J_{k/2}
        \end{bmatrix} - I_k\right)_{i,j}\\
        &\quad + \frac{n}{k(k-2)}q_s(-d) \cdot \left(I_k - \frac{1}{k}J_k - \frac{1}{k}\begin{bmatrix}
            J_{k/2} & -J_{k/2}\\
            -J_{k/2} & J_{k/2}
        \end{bmatrix} + \frac{k-1}{k}\begin{bmatrix}
            J_{k/2} & -J_{k/2}\\
            -J_{k/2} & J_{k/2}
        \end{bmatrix} + \frac{1}{k}J_k - I_k\right)_{i,j} \\
        &=  \frac{n}{k}\left[q_s\left(M - d\frac{J_k}{k} + d\frac{1}{k}\begin{bmatrix}
            J_{k/2} & -J_{k/2}\\
            -J_{k/2} & J_{k/2}
        \end{bmatrix} \right) +q_s\left(d \frac{1}{k}J_k\right) + q_s\left(-d \frac{1}{k}\begin{bmatrix}
            J_{k/2} & -J_{k/2}\\
            -J_{k/2} & J_{k/2}
        \end{bmatrix}\right)\right]_{i,j} \\
        &\quad + [-\delta n, \delta n] + o(n)\\
        &= \frac{n}{k}q_s(M)_{i,j} \pm \delta n + o(n),
    \end{align*}}
    which satisfies the moment constraints with error tolerance $2\delta$.

    Thus, using the solutions from the level-$D$ Bipartite Symmetric Path Statistics SDP, we have constructed a degree-$2$ pseudomoment $\Tilde{\mathbb{E}}$ for the degree-$(2, D)$ Local Statistics SDP that is w.h.p.~feasible on input $G_0 \sim \Tilde{\mathbb{Q}}$.

\end{proof}

\subsection{Robustness of Local Statistics SDP}

In this section, we show that our results in Theorem \ref{thm:local-statistics-noiseless} is furthermore robust against adversarial noise. More specifically, we finish the proof of Theorem \ref{thm:local-statistics-informal} by analyzing the effect of the adversarial noise $\mathcal{S}_{\varepsilon}$.

Let $G_0$ be a $d$-regular input graph to the Local Statistics SDP, and let $\Tilde{G}_0$ be the $d$-regular graph obtained from $G_0$ after the application of the adversarial noise $\mathcal{S}_{\varepsilon}$. By definition of the noise model in Section \ref{sec:prelim:noise-conjectures} $\mathcal{S}_{\varepsilon}$, the graphs $G_0$ and $\Tilde{G}_0$ differ by a set of at most $\varepsilon n$ edges. In particular, for $\varepsilon$ small enough, the local neighborhoods of most vertices remain unaffected by the noise operator, and this essentially guarantees that the matrices associated to $G_0$ and $\Tilde{G}_0$ of interest in our analysis will remain ``close" after the perturbation.

To show the robustness, we need to show
\begin{enumerate}[(a)]
    \item When $G_0 \sim \mathcal{P}$ with $\rho(H) \le 2\sqrt{d-1}$, the Local Statistics SDP w.h.p.~remains feasible after applying the adversarial noise operator $\mathcal{S}_{\varepsilon}$ to $G$, for a small enough $\varepsilon$.
    \item When $G_0 \sim \mathcal{P}$ with $\rho(H) > 2\sqrt{d-1}$, the Local Statistics SDP w.h.p.~remains infeasible after applying the adversarial noise operator $\mathcal{S}_{\varepsilon}$ to $G$, for a small enough $\varepsilon$.
\end{enumerate}

To prove (a), we note that $A_{G_0}^{\langle s\rangle}$ and $A_{\Tilde{G}_0}^{\langle s\rangle}$ will be close after the perturbation of a small linear number of edges. More precisely, the sum of entrywise absolute value of the difference $A_{G_0}^{\langle s\rangle} - A_{\Tilde{G}_0}^{\langle s\rangle}$ is bounded by $O(\varepsilon n)$ with the hidden constant depending only on $d$ and $s$; we refer the reader to \cite[Section 4.4]{banks2021local} for a detailed argument of this claim.\footnote{\cite[Section 4.4]{banks2021local} proves the analogous claim for the non-backtracking matrices $A_{G_0}^{(s)}$ and $A_{\Tilde{G}_0}^{(s)}$, but it is clear that the same argument would work for the self-avoiding matrices.} Moreover, we note that in the computation of the moment constraints $\langle Q_{i,j}, A_{\Tilde{G}_0}^{\langle s\rangle} \rangle$, the matrix $Q_{i,j}$ have all entries bounded by $1$ in magnitude, since the pseudomoment matrix is PSD whose diagonal entries can be easily verified to be bounded by $1$. Together with the entry-wise bound of $A_{G_0}^{\langle s\rangle} - A_{\Tilde{G}_0}^{\langle s\rangle}$, we conclude that $\langle Q_{i,j}, A_{\Tilde{G}_0}^{\langle s\rangle} \rangle = \langle Q_{i,j}, A_{G_0}^{\langle s\rangle} \rangle + O(\varepsilon n)$, which will still satisfy the moment constraints for an appropriately chosen error tolerance $\delta$.

To prove (b), we again use that $A_{G_0}^{\langle s\rangle}$ and $A_{\Tilde{G}_0}^{\langle s\rangle}$ are close as discussed above. We can similarly repeat the upper bound proof of Theorem \ref{thm:local-statistics-noiseless} by swapping $A_{\Tilde{G}_0}^{\langle s\rangle}$ with $A_{G_0}^{\langle s\rangle}$ to show that the moment constraints are still approximately satisfied after the perturbation of a small linear number of edges, and conclude the infeasibility of the Local Statistics SDP by reducing to the infeasibility of the Path Statistics SDP in Theorem~\ref{thm:PS-main}.

\section{Applications}

\subsection{General Theory}
\label{sec:certification-theory}

We first develop some general statements about the implications that Conjecture~\ref{conj:hardness-formal} (the formal version of Conjecture~\ref{conj:hardness}) can have for certification.
We hope that these will prove useful in future work.

\begin{definition}[Ramanujan supremum]
    Let $f$ be a real-valued function of a $d$-regular graph.
    When $G$ is a lift of a base graph $H$ with adjacency matrix $M$, we say that another graph $G^{\prime}$ on the same vertex set as $G$ \emph{$H$-respects} $G$ if it contains no edges between vertices in fibers of $i, j \in V(H)$ with $M_{ij} = 0$.
    And, we call $G$ and $G^{\prime}$ \emph{cobipartite} if they are both bipartite on the same bipartition.
    We define the following:
     {\allowdisplaybreaks
    \begin{align*}
        M(f) &\colonequals \sup_c\bigg\{\text{there exists $d$-regular Ramanujan } H \text{ such that }
        \min_{\Delta(G, G^{\prime}) \leq \epsilon} f(G^{\prime}) \geq c \\ &\hspace{1.45cm} \text{ with high probability as } m \to \infty \text{ when } G \sim \sL_m(H) \text{ for all } \epsilon \text{ sufficiently small} \bigg\}, \\
        \widetilde{M}(f) &\colonequals \sup_c\bigg\{\text{there exists $d$-regular Ramanujan } H \text{ such that }
        \min_{\substack{\Delta(G, G^{\prime}) \leq \epsilon \\ G^{\prime} \text{ } H\text{-respects } G}} f(G^{\prime}) \geq c \\ &\hspace{1.45cm} \text{ with high probability as } m \to \infty \text{ when } G \sim \sL_m(H) \text{ for all } \epsilon \text{ sufficiently small} \bigg\}, \\
        M_{\bi}(f) &\colonequals \sup_c\bigg\{\text{there exists $d$-regular bipartite Ramanujan } H \text{ such that }
        \min_{\substack{\Delta(G, G^{\prime}) \leq \epsilon \\ G, G^{\prime} \text{ cobipartite}}} f(G^{\prime}) \geq c \\ &\hspace{1.45cm} \text{ with high probability as } m \to \infty \text{ when } G \sim \sL_m(H) \text{ for all } \epsilon \text{ sufficiently small} \bigg\}.\\
        \widetilde{M}_{\bi}(f) &\colonequals \sup_c\bigg\{\text{there exists $d$-regular bipartite Ramanujan } H \text{ such that }
        \min_{\substack{\Delta(G, G^{\prime}) \leq \epsilon \\ G^{\prime} \text{ } H\text{-respects } G}} f(G^{\prime}) \geq c \\ &\hspace{1.45cm} \text{ with high probability as } m \to \infty \text{ when } G \sim \sL_m(H) \text{ for all } \epsilon \text{ sufficiently small} \bigg\}.
    \end{align*}}
    Note in the last two definitionsthat if $H$ is bipartite and $G$ $H$-respects $G^{\prime}$ a lift of $H$, then $G$ and $G^{\prime}$ are also cobipartite.
\end{definition}
\noindent
We state an abstract result on certification using these quantities.
\begin{theorem}
    \label{thm:certification}
    Let $f$ be a real-valued function of a $d$-regular graph and $\delta > 0$.
    \begin{enumerate}
    \item If Conjecture~\ref{conj:hardness-formal} holds for Ramanujan base graphs with (non-bipartite, non-respectful) random noise or adversarial noise, then there is no polynomial-time algorithm that certifies an upper bound of $M(f) - \delta$ on $f(G)$ with high probability when $G \sim \sG(n, d)$.
    \item If Conjecture~\ref{conj:hardness-formal} holds for Ramanujan base graphs with respectful random or adversarial noise, then there is no polynomial-time algorithm that certifies an upper bound of $\widetilde{M}(f) - \delta$  on $f(G)$ with high probability when $G \sim \sG(n, d)$.
    \item If Conjecture~\ref{conj:hardness-formal} holds for bipartite Ramanujan base graphs with bipartite random or adversarial noise, then there no polynomial-time algorithm that certifies an upper bound of $M_{\bi}(f) - \delta$ on $f(G)$ with high probability when $G \sim \sG((\frac{n}{2}, \frac{n}{2}), d)$.
    \item If Conjecture~\ref{conj:hardness-formal} holds for bipartite Ramanujan base graphs with respectful bipartite random or respectful adversarial noise, then there no polynomial-time algorithm that certifies an upper bound of $\widetilde{M}_{\bi}(f) - \delta$ on $f(G)$ with high probability when $G \sim \sG((\frac{n}{2}, \frac{n}{2}), d)$.
    \end{enumerate}
\end{theorem}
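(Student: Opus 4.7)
The plan is to execute a quiet-planting reduction from certification to detection: in each of the four cases, an alleged polynomial-time certifier that beats the threshold $M(f) - \delta$ (or its bipartite/respectful analog) would be turned into a strong-detection algorithm contradicting the corresponding version of Conjecture~\ref{conj:hardness-formal}. The four cases are structurally identical, so I will describe case (1) carefully and indicate the uniform modifications needed for (2)--(4).

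For case (1), suppose for contradiction that a polynomial-time algorithm $c$ certifies $f(G) \leq c(G)$ for every $d$-regular $G$ and satisfies $c(G) \leq M(f) - \delta$ with high probability under $G \sim \sG(n, d)$. Using the definition of $M(f)$, choose a Ramanujan base graph $H$ and a constant $c_0$ with $M(f) - \delta/2 \leq c_0 \leq M(f)$ such that, for all sufficiently small $\epsilon > 0$, $\min_{\Delta(G, G^{\prime}) \leq \epsilon} f(G^{\prime}) \geq c_0$ with high probability as $m \to \infty$ when $G \sim \sL_m(H)$. Fix such an $\epsilon$. Then for $G^{\prime} \sim \sS_{\epsilon}^{\rand} \sL_m(H)$ (or any $G^{\prime}$ drawn from any adversarial distribution in $\sS_{\epsilon}^{\adv} \sL_m(H)$), with high probability $G^{\prime}$ is within distance $\epsilon$ of the underlying lift $G$, so $f(G^{\prime}) \geq c_0 \geq M(f) - \delta/2$, and hence $c(G^{\prime}) \geq f(G^{\prime}) \geq M(f) - \delta/2$. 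Define the test $T(G^{\prime}) = \texttt{p}$ if $c(G^{\prime}) \geq M(f) - 3\delta/4$, else $T(G^{\prime}) = \texttt{q}$. Then $T$ is computable in polynomial time and achieves strong detection between $\sS_{\epsilon}^{\rand} \sL_m(H)$ (or $\sS_{\epsilon}^{\adv} \sL_m(H)$) and $\sG(n, d)$, contradicting Conjecture~\ref{conj:hardness-formal}.

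For case (2), replace $M(f)$ by $\widetilde{M}(f)$, restrict the $\min$ to $H$-respecting perturbations, and use the respectful noise operators $\widetilde{\sS}_{\epsilon}^{\rand}, \widetilde{\sS}_{\epsilon}^{\adv}$; because respectful noise guarantees that the output graph $H$-respects the original lift, the same chain $c(G^{\prime}) \geq f(G^{\prime}) \geq c_0$ goes through. Cases (3) and (4) are verbatim the same, but with $H$ bipartite Ramanujan, the null distribution $\sG((\frac{n}{2}, \frac{n}{2}), d)$, and the bipartite noise operators; one uses $M_{\bi}(f)$ or $\widetilde{M}_{\bi}(f)$, respectively, and invokes the observation recorded in the definition that a respectful perturbation of a lift of a bipartite $H$ is automatically cobipartite with the lift.

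The argument is essentially a clean reduction; there is no substantive technical obstacle, and the only points requiring care are (a) choosing $c_0$ strictly greater than the decision threshold $M(f) - 3\delta/4$ by invoking the supremum defining $M(f)$ (for which any value strictly less than the supremum is attained by some Ramanujan $H$), (b) fixing $\epsilon$ small enough that both the ``robust planted'' bound $\min_{\Delta(G, G^{\prime}) \leq \epsilon} f(G^{\prime}) \geq c_0$ holds with high probability (guaranteed by the ``for all $\epsilon$ sufficiently small'' clause in the definitions) and the hypothesis of Conjecture~\ref{conj:hardness-formal} applies, and (c) verifying in the adversarial setting that the contradiction only requires \emph{some} adversary to make the testing problem hard, which is precisely the quantifier structure built into the definitions of $\sS_{\epsilon}^{\adv}$ and $\widetilde{\sS}_{\epsilon}^{\adv}$.
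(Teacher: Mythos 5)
Your proof is correct and is exactly the argument the paper has in mind; the paper itself dismisses this as ``a straightforward combination of the Conjecture (in its various forms) with the definitions of the quantities involved,'' and your write-up fills in precisely that combination (contrapositive quiet-planting reduction, threshold test on $c(G')$, and the observation that the guarantee $f(G') \geq c_0$ holds for every admissible perturbation, which handles both the random and adversarial noise cases and trades $M$ for $\widetilde{M}$, $M_{\bi}$, $\widetilde{M}_{\bi}$ as appropriate). One cosmetic point: the perturbed graph $G'$ is within distance $\epsilon$ of the underlying lift $G$ deterministically, not merely with high probability, since the noise operators change at most $\lfloor \epsilon n\rfloor$ edges by definition.
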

\noindent
The proof is a straightforward combination of the Conjecture (in its various forms) with the definitions of the quantities involved.

In practice, what will be more useful for most situations than this abstract statement is the following simpler corollary.

\begin{definition}[Lift-monotone property]
    We say that a function $f(G)$ of a $d$-regular graph $G$ is \emph{lift-monotone} if for any $d$-regular graph $H$, whenever $L$ is a lift of $H$, then $f(L) \geq f(H)$.
\end{definition}

\begin{definition}[Robustly lift-monotone property]
    We say that a function $f(G)$ of a $d$-regular graph $G$ is \emph{robustly lift-monotone} if, for any fixed $d$-regular multigraph $H$, there is some $\delta_H: \RR \to \RR$ such that $\delta_H(\epsilon) \to 0$ as $\epsilon \to 0$, and, whenever $L$ is a lift of $H$ and $\Delta(L^{\prime}, L) \leq \eps$, then $f(L^{\prime}) \geq f(H) - \delta_H(\epsilon)$.
\end{definition}

\begin{proposition}
    \label{prop:ramanujan-bounds}
    Suppose that $f$ satisfies one of the following:
    \begin{enumerate}
    \item $f$ is lift-monotone and $C$-Lipschitz for some constant $C > 0$ with respect to the metric $\Delta(\cdot, \cdot)$.
    \item $f$ is robustly lift-monotone.
    \item Whenever $G$ is a lift of $H$ and $G^{\prime}$ $H$-respects $G$, then $f(G^{\prime}) \geq f(H)$.
    \end{enumerate}
    Then,
    \begin{align*}
        M(f), \widetilde{M}(f) &\geq \sup_{H \text{ $d$-regular Ramanujan}} f(H), \\
        M_{\bi}(f), \widetilde{M}_{\bi}(f) &\geq \sup_{H \text{ $d$-regular bipartite Ramanujan}} f(H).
    \end{align*}
\end{proposition}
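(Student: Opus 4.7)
The plan is to fix a $d$-regular Ramanujan base graph $H$ (respectively bipartite Ramanujan) and a threshold $c < f(H)$, and then exhibit some $\epsilon_0 > 0$ such that for every $0 < \epsilon \le \epsilon_0$, every $G \sim \sL_m(H)$, and every admissible perturbation $G'$ with $\Delta(G,G') \le \epsilon$ one has the deterministic inequality $f(G') \ge c$. Since a deterministic bound trivially holds with high probability as $m \to \infty$, $c$ will then lie inside the supremum defining the relevant Ramanujan supremum, and taking the sup over such $c$ followed by a sup over $H$ yields the claimed inequalities.

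First I would dispatch condition (1). Lift-monotonicity gives $f(G) \ge f(H)$ for any lift $G$ of $H$, and the $C$-Lipschitz bound then gives $f(G') \ge f(G) - C\,\Delta(G,G') \ge f(H) - C\epsilon$. Setting $\epsilon_0 \colonequals (f(H) - c)/C$ yields $f(G') \ge c$ uniformly over admissible $G'$, simultaneously establishing the bound for $M(f), \widetilde{M}(f), M_{\bi}(f)$, and $\widetilde{M}_{\bi}(f)$, since the bound is oblivious to whether $G'$ is $H$-respecting or cobipartite with $G$. Condition (2) is handled identically, using robust lift-monotonicity to supply a function $\delta_H$ with $\delta_H(\epsilon) \to 0$ satisfying $f(G') \ge f(H) - \delta_H(\epsilon)$ for every lift $G$ and every $G'$ within graph distance $\epsilon$, and choosing $\epsilon_0$ small enough that $\delta_H(\epsilon_0) < f(H) - c$.

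Condition (3) is qualitatively different: its hypothesis $f(G') \ge f(H)$ applies only to $G'$ that $H$-respect $G$, which is exactly the class of perturbations appearing in the definitions of $\widetilde{M}(f)$ and $\widetilde{M}_{\bi}(f)$, so there we directly obtain $f(G') \ge f(H) > c$ for every $\epsilon > 0$. The observation recorded after the definition of the Ramanujan suprema, that an $H$-respecting perturbation of a lift of a bipartite base graph $H$ is automatically cobipartite with that lift, ensures compatibility with the bipartite set-up. There is no substantial technical obstacle in any of the three cases; the only subtlety worth highlighting is that the Lipschitz constant $C$ and the function $\delta_H$ depend on $H$ alone and not on the lift $G$ or on $m$, which is exactly what the hypotheses grant, so the deterministic bounds transfer uniformly across all lifts and all $m \to \infty$.
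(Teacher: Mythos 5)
Your argument is correct and is exactly the ``straightforward'' argument the paper intends (no explicit proof is given in the paper): fix a Ramanujan (or bipartite Ramanujan) base graph $H$ and a threshold $c < f(H)$, derive a deterministic lower bound on $f(G')$ for every lift $G$ of $H$ and every admissible perturbation $G'$ by combining lift-monotonicity with the relevant perturbation control, and unwind the suprema.

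One worth-flagging point, which you implicitly handle correctly but do not call out: under condition (3) your argument establishes only $\widetilde{M}(f)$ and $\widetilde{M}_{\bi}(f) \geq \sup_H f(H)$, not the bounds on $M(f)$ or $M_{\bi}(f)$, since the hypothesis of condition (3) gives no control over perturbations $G'$ that fail to $H$-respect $G$, and the unconstrained minimum in the definition of $M(f)$ can only be smaller than the $H$-respecting one (so $M(f) \leq \widetilde{M}(f)$, and the inequality cannot be transported). A literal reading of the proposition's conclusion, which lists all four quantities for each of the three hypotheses, would therefore over-claim in case (3). This is consistent with the paper's own usage: in Section~\ref{sec:pf:chromatic-number}, where $-\chi$ satisfies only condition (3), the authors explicitly note that their argument ``only gives a lower bound on $\widetilde{M}(-\chi)$'' and invoke the respectful-noise version of the conjecture. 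It would strengthen your write-up to state explicitly which subset of the four suprema is controlled by each hypothesis, rather than leaving that to inference. A second, very minor point: in case (2) you should select $\epsilon_0$ so that $\delta_H(\epsilon) < f(H) - c$ for \emph{all} $\epsilon \le \epsilon_0$ (which the limit $\delta_H(\epsilon) \to 0$ guarantees), not merely at the single value $\epsilon_0$, since $\delta_H$ is not assumed monotone.
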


\begin{remark}[Beyond lift-monotone properties]
    \label{rem:beyond-lift-monotone}
    As the title of this paper indicates, we are mostly focusing on lift-monotonicity as a tool for proving lower bounds against certification.
    However, this is not at all necessary, and we have intentionally formulated Theorem~\ref{thm:certification} in a more abstract way to emphasize that any means of controlling properties of lifts suffices to execute the quiet planting strategy.
    Relatively little is known about this question for random lifts of general base graphs for the quantities we will be focusing on, but some preliminary results may be found in the early line of work \cite{ALMR-2001-RandomLifts,ALM-2002-RandomLiftsIndependenceChromatic,AL-2006-RandomLiftsExpansion}.
\end{remark}

Finally, we point out the following intriguing phenomenon, which suggests that proving lower bounds against certification for general graphs is harder than for bipartite graphs (a phenomenon reminiscent of how it is easier, per the work of \cite{MSS-2013-InterlacingFamiliesBipartiteRamanujan}, to construct bipartite Ramanujan graphs than general Ramanujan graphs).

\begin{theorem}
    \label{thm:bipartite-unipartite}
    If $H$ is a Ramanujan graph, then there exists $H^{\prime}$ a lift of $H$ that is a bipartite Ramanujan graph.
    Consequently, if $f$ is lift-monotone, then
    \begin{equation}
         \sup_{H \text{ $d$-regular bipartite Ramanujan}} f(H) \geq \sup_{H \text{ $d$-regular Ramanujan}} f(H).
    \end{equation}
\end{theorem}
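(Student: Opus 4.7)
My plan is to construct $H'$ explicitly as the bipartite double cover (tensor product with $K_2$): set $H' = H \times K_2$ with vertex set $V(H) \times \{0, 1\}$ and edges $\{(u, 0), (v, 1)\}$ whenever $u \sim v$ in $H$. The first step is to verify that $H'$ is a $2$-lift of $H$ in the sense of Definition~\ref{def:random-lift}, with fibers $\{(u, 0), (u, 1)\}$ for $u \in V(H)$: for each edge $\{u, v\} \in E(H)$ the induced bipartite graph between the corresponding fibers is exactly the perfect matching $(u, 0)\text{--}(v, 1), (u, 1)\text{--}(v, 0)$, and the within-fiber requirements are vacuous since a simple Ramanujan graph has no self-loops.

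Next I would compute the spectrum of $H'$. Its adjacency matrix factors as $A_{H'} = A_H \otimes A_{K_2}$, where $A_{K_2}$ has eigenvalues $\pm 1$; concretely, given an eigenvector $v$ of $A_H$ with eigenvalue $\lambda$, the two vectors $w_{\pm}$ defined by $w_{\pm}((u, i)) = (\pm 1)^i v_u$ are eigenvectors of $A_{H'}$ with eigenvalues $\pm \lambda$. Therefore $\spec(A_{H'}) = \{\pm \lambda : \lambda \in \spec(A_H)\}$ as a multiset. The graph $H'$ is bipartite by construction with bipartition $V(H) \times \{0\}$ versus $V(H) \times \{1\}$. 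Crucially, the paper's Ramanujan condition demands $|\lambda_n(H)| \leq 2\sqrt{d-1} < d$ and so excludes bipartite $H$ (which would have $\lambda_n(H) = -d$); hence $d$ is a simple eigenvalue of $H$, the trivial eigenvalues of $H'$ are $\pm d$ each with multiplicity one, and every remaining eigenvalue of $H'$ has the form $\pm \lambda$ for some non-trivial $\lambda \in \spec(A_H)$ and so lies in $[-2\sqrt{d-1}, 2\sqrt{d-1}]$ by hypothesis. Thus $H'$ is bipartite Ramanujan.

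The consequence about suprema is then immediate: given any $d$-regular Ramanujan $H$, the graph $H' = H \times K_2$ is a lift of $H$ and is bipartite Ramanujan, so lift-monotonicity of $f$ gives $f(H') \geq f(H)$, and taking the supremum over Ramanujan $H$ yields the claimed inequality. I do not anticipate a substantive obstacle here---the bipartite double cover is a classical object and the spectrum calculation is routine linear algebra. The only subtle point worth highlighting is that this paper's definition of ``Ramanujan'' rules out bipartite base graphs, and this is precisely what ensures $\pm d$ remain simple trivial eigenvalues of $H'$ (equivalently, that $H'$ is connected); without this observation one would worry about $d$ reappearing as a non-trivial eigenvalue of $H'$ and breaking the Ramanujan bound.
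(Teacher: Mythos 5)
Your proof is correct and takes essentially the same approach as the paper: the paper's one-line argument invokes the ``canonical'' or Kronecker double cover of $H$, which is precisely the bipartite double cover $H \times K_2$ you construct, and you have simply spelled out the routine verification (that it is a $2$-lift, that its spectrum is $\{\pm\lambda\}$, and that non-bipartiteness of a Ramanujan $H$ keeps $\pm d$ simple).
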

\begin{proof}
    The ``canonical'' or Kronecker double cover of $H$ gives the requisite lift \cite{BHM-1980-BigraphsDigraphs}, and the other result follows immediately.
\end{proof}
\noindent
As a consequence, hardness results in the style of those below for ordinary graphs drawn from $\sG(n, d)$ translate immediately to the same results for bipartite graphs drawn from $\sG((\frac{n}{2}, \frac{n}{2}), d)$.

\begin{remark}
    Similarly, a statement like $\widetilde{M}_{\bi}(f) \geq \widetilde{M}(f)$ will be true provided that $f$ is robustly lift-monotone, but only with alternative definitions that ask for a large value of $f(G)$ to hold over $G \sim \sL_m(H)$ with probability 1, rather than with high probability.
\end{remark}

\subsection{Maximum $t$-Cut}
\label{sec:max-cut}

We recall that $\MC_t(G)$ is defined as the normalized size of the maximum $t$-cut of $G$:
\begin{equation*}
   \MC_t(G) \colonequals \max_{\kappa : V \to [t]} \frac{|\{\{u, v\} \in E: \kappa(u) \neq \kappa(v)\}|}{|E|} \in [0, 1].
\end{equation*}

\begin{proposition}
    $\MC_t(G)$ is robustly lift-monotone.
\end{proposition}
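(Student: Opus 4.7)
The plan is to verify the two ingredients of robust lift-monotonicity separately: first ordinary lift-monotonicity ($\MC_t(L) \geq \MC_t(H)$ for every lift $L$ of $H$), then a uniform Lipschitz bound on $\MC_t$ in the graph-distance metric $\Delta$, and finally combine them. The only subtle point to keep in mind is how to account for loops in the base multigraph $H$.

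For lift-monotonicity, I would begin with an optimal $t$-coloring $\kappa : V(H) \to [t]$ achieving $\MC_t(H)$ and pull it back along the fiber projection $\sigma : V(L) \to V(H)$ to the candidate coloring $\kappa_L \colonequals \kappa \circ \sigma$. The key bookkeeping is that each non-loop edge of $H$ with multiplicity $M_{ij}$ contributes exactly $m M_{ij}$ edges between the fibers $\sigma^{-1}(i)$ and $\sigma^{-1}(j)$ in $L$, each cut under $\kappa_L$ iff the corresponding edge is cut under $\kappa$ in $H$; meanwhile edges lifted from loops of $H$ lie inside a single fiber and are therefore never cut under $\kappa_L$, precisely mirroring that loops in $H$ themselves are not cut. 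By $d$-regularity, $|E(L)| = d \cdot km/2 = m \cdot |E(H)|$, so the ratio of cut edges to total edges is preserved and $\MC_t(L) \geq \MC_t(H)$.

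For Lipschitz robustness, I would note that if $L$ and $L'$ are $d$-regular graphs on the same vertex set with $\Delta(L, L') \leq \epsilon$, then $|E(L) \triangle E(L')| \leq 2\epsilon n$ while $|E(L)| = |E(L')| = dn/2$. For any fixed $\kappa : V \to [t]$, the cut count under $\kappa$ can be affected only by edges in the symmetric difference, giving $|\mathrm{cut}_\kappa(L) - \mathrm{cut}_\kappa(L')| \leq 2\epsilon n$; taking the maximum over $\kappa$ in both directions and dividing by $dn/2$ yields
\begin{equation*}
    |\MC_t(L) - \MC_t(L')| \leq \frac{4\epsilon}{d}.
\end{equation*}

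Combining the two bounds, whenever $L$ is a lift of $H$ and $\Delta(L, L') \leq \epsilon$, we obtain $\MC_t(L') \geq \MC_t(L) - 4\epsilon/d \geq \MC_t(H) - 4\epsilon/d$, which establishes robust lift-monotonicity with the (in fact $H$-independent) modulus $\delta_H(\epsilon) \colonequals 4\epsilon/d$. I do not anticipate any serious obstacle; the only detail to verify carefully is the edge-count accounting for loops in $H$ within the lift-monotonicity step, which is what makes the cut fractions match on the nose.
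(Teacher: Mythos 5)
Your proof is correct and follows essentially the same route as the paper's: pull back an optimal $t$-coloring along the fiber map, observe that the normalized cut value is preserved under lifting (your explicit loop accounting is a nice addition that the paper leaves implicit with ``easy to see''), and then control the effect of changing $O(\epsilon n)$ edges. The only difference is organizational — you factor the argument cleanly into lift-monotonicity plus a Lipschitz bound before combining, which gives a modulus of $4\epsilon/d$ where the paper's single combined calculation gets $2\epsilon/d$ (your bound is loose by a factor of two because you use the full symmetric difference $2\epsilon n$ where a one-sided count of $\epsilon n$ suffices), but both constants vanish as $\epsilon \to 0$ and the conclusion is the same.
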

\begin{proof}
    Let $H$ be a $d$-regular graph, and $L$ be any lift of $H$. Suppose $\kappa: V(H) \to [t]$ encodes the maximum $t$-cut of $H$. Now define $\kappa: V(L) \to [k]$ by assigning $\kappa(v)$ to vertices in the fiber of $v \in V(H)$. It is easy to see that $\kappa_L$ encodes a $t$-cut in $L$ of the same normalized size as $\kappa$.

    Suppose $L'$ is a $d$-regular graph such that $\Delta(L, L') \le \epsilon$. Let $n = |V(L)|$, and $E' \colonequals E(L)\setminus E(L') \subseteq E(L)$ be the set of edges in $L$ but not in $L'$. $|E'| \le \epsilon n$. We note that \[\{\{u,v\} \in E(L): \kappa_L(u) \ne \kappa_L(v)\} \subseteq \{\{u,v\} \in E(L'): \kappa_L(u) \ne \kappa_L(v)\} \cup E',\] since every edge that exists in both graphs receives the same pair of labels given by $\kappa_L$. Then,
    \begin{align*}
        \MC_t(H) &= \frac{\left| \{\{u,v\} \in E(H): \kappa(u) \ne \kappa(v)\}\right|}{|E(H)|}\\
        &= \frac{\left| \{\{u,v\} \in E(L): \kappa_L(u) \ne \kappa_L(v)\}\right|}{|E(L)|}\\
        &\le \frac{\left| \{\{u,v\} \in E(L'): \kappa_L(u) \ne \kappa_L(v)\}\right|}{|E(L')|} + \frac{|E'|}{|E(L')|}\\
        &\le \MC_t(L') + \frac{\epsilon n}{\frac{dn}{2}}\\
        &= \MC_t(L') + \frac{2}{d}\epsilon,
    \end{align*}
    which finishes the proof.
\end{proof}

To prove Theorem~\ref{thm:max-cut}, we then invoke Theorem~\ref{thm:certification} together with Proposition~\ref{prop:ramanujan-bounds}, which together imply that it suffices to produce 3- and 4-regular Ramanujan graphs with suitably large cuts.
These are given in Figures~\ref{fig:d3-k2-example} and \ref{fig:d4-k2-example}, respectively.

\subsection{Chromatic Number}
\label{sec:pf:chromatic-number}

\begin{proposition}
    $-\chi(G)$ satisfies Condition 3 of Proposition~\ref{prop:ramanujan-bounds}.
\end{proposition}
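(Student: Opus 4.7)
The plan is to reformulate the condition as $\chi(G') \le \chi(H)$, and then to lift any proper coloring of $H$ to a proper coloring of $G'$ fiber-by-fiber. If $\chi(H) = \infty$ (which happens exactly when $H$ contains a self-loop, since we are following the convention that a vertex with a self-loop is uncolorable), the inequality is vacuous, so we may assume $H$ has no self-loops and fix a proper coloring $c : V(H) \to [\chi(H)]$.

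Define $c_{G'} : V(G') \to [\chi(H)]$ by $c_{G'}(v) \colonequals c(\sigma(v))$, where $\sigma : V(G') \to V(H)$ records the fiber containing each vertex (well-defined because $G'$ is on the same vertex set as the lift $G$, which comes with its fiber partition). I would then verify that $c_{G'}$ is a proper coloring of $G'$ by picking an arbitrary edge $\{u, v\} \in E(G')$, letting $i = \sigma(u)$ and $j = \sigma(v)$, and using the $H$-respecting property: since $G'$ contains the edge $\{u, v\}$ between the fibers of $i$ and $j$, the definition of $H$-respecting forces $M_{ij} \neq 0$.

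The only mild subtlety, which I would flag explicitly, is ruling out $i = j$. If $i = j$ then $M_{ii} \geq 1$ means $H$ has a self-loop at $i$, contradicting our assumption that $\chi(H)$ is finite. Hence $i \neq j$, and so $\{i,j\} \in E(H)$, which gives $c(i) \neq c(j)$ and therefore $c_{G'}(u) \neq c_{G'}(v)$. This shows $c_{G'}$ is a proper $\chi(H)$-coloring of $G'$, so $\chi(G') \leq \chi(H)$, i.e.\ $-\chi(G') \geq -\chi(H)$, verifying Condition~3.

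I do not anticipate a serious obstacle here: the argument is essentially the standard observation that lifts (and their $H$-respecting perturbations) can only inherit adjacencies already present in $H$. The only point where care is needed is bookkeeping around self-loops, to make sure the $H$-respecting clause rules out within-fiber edges in exactly the cases where $H$ admits a proper coloring in the first place.
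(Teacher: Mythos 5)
Your proof is correct and takes essentially the same approach as the paper: lift a proper coloring of $H$ to $G'$ fiber-by-fiber, then use the $H$-respecting condition to argue that any edge of $G'$ joins fibers of adjacent vertices of $H$, so the lifted coloring is proper. The paper phrases the last step slightly differently (observing that $H$-respecting graphs introduce no edges within the lifted color classes), but the content is identical; your explicit handling of the self-loop case, where $\chi(H) = \infty$ makes the inequality vacuous, is a welcome bit of care that the paper leaves implicit.
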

\begin{proof}
    Let $H$ be a $d$-regular graph, and $L$ be any lift of $H$. Suppose $\chi(H) = c$ and $h: V(H) \to [c]$ is a proper coloring of $H$ using $c$ colors. Now consider $g: V(L) \to [c]$ defined by assigning color $h(v)$ to the vertices in the fiber of $v\in V(H)$. It is easy to verify that $g$ is a proper coloring of $L$. Thus,
    \begin{align*}
        - \chi(H) = - c \le -\chi(L).
    \end{align*}
    The same holds for any modification of $L$ that does not introduce edges within the color classes, which is true of any graph that $H$-respects $L$.
\end{proof}

Again, to prove Theorem~\ref{thm:chromatic} we may use Theorem~\ref{thm:certification} and Proposition~\ref{prop:ramanujan-bounds}, whereby it suffices to produce a 3-colorable, 7-regular Ramanujan graph, which we do in Figure~\ref{fig:d7-chrom-example}.
We note in this case that our argument only gives a lower bound on $\widetilde{M}(-\chi)$, so we obtain hardness conditional on the version of Conjecture~\ref{conj:hardness-formal} using the respectful noise operator.

\subsection{Maximum Independent Set}
\label{sec:ind-set}

Recall that we define the \emph{normalized independence number} of a graph $G$ to be $\what{\alpha}(G) \colonequals \alpha(G) / |V(G)|$.

\begin{proposition}
    \label{prop:ind-set-rlm}
    $\what{\alpha}$ is robustly lift-monotone.
\end{proposition}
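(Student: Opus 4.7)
The plan is to break the proof into two pieces: establishing plain lift-monotonicity, $\what{\alpha}(L) \geq \what{\alpha}(H)$ for every lift $L$ of $H$, and then showing a uniform Lipschitz-type robustness of $\what{\alpha}$ with respect to the metric $\Delta$, so that $\what{\alpha}$ cannot drop by more than $\eps$ under an $\eps$-perturbation. Combining the two gives the robust monotonicity with $\delta_H(\eps) = \eps$, in fact uniformly over $H$.

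For lift-monotonicity, I would take a maximum independent set $I \subseteq V(H)$ and define the ``pullback'' $\widetilde{I} \colonequals \bigcup_{i \in I} \sigma^{-1}(i) \subseteq V(L)$, where $\sigma : V(L) \to V(H)$ is the projection associated to the lift. To check independence of $\widetilde{I}$ in $L$, take $u \neq v \in \widetilde{I}$ with $\sigma(u) = i$ and $\sigma(v) = j$, both in $I$. If $i \neq j$, then $M_{ij} = 0$ (since $I$ is independent in $H$), so by Definition~\ref{def:random-lift} no edges of $L$ lie between the fibers of $i$ and $j$. If $i = j$, then $i \in I$ implies $i$ has no self-loop in $H$ (otherwise $\{i\}$ would not be independent), so $M_{ii} = 0$ and the fiber of $i$ carries no edges. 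Thus $\widetilde{I}$ is independent, and $|\widetilde{I}| = m|I| = |V(L)| \cdot \what{\alpha}(H)$.

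For robustness, I would observe that if $L'$ is $d$-regular on the same vertex set as $L$ with $\Delta(L, L') \leq \eps$, then because both graphs have exactly $dn/2$ edges, $|E(L') \setminus E(L)| = |E(L) \setminus E(L')| \leq \eps n$. Starting from a maximum independent set $I$ of $L$, for each added edge of $L' \setminus E(L)$ whose two endpoints both lie in $I$, delete one endpoint; this yields an independent set in $L'$ of size at least $\alpha(L) - \eps n$. Dividing by $n$ gives $\what{\alpha}(L') \geq \what{\alpha}(L) - \eps$. Chaining with the previous step yields $\what{\alpha}(L') \geq \what{\alpha}(H) - \eps$, which is robust lift-monotonicity with the uniform choice $\delta_H(\eps) = \eps$.

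Neither step is technically hard; the only subtlety worth flagging is the careful handling of self-loops in the multigraph $H$ when defining the pullback, since the definition of $\sL_m(H)$ allows $H$ to have loops and the random matching on each fiber of a looped vertex could spoil independence. Because membership in an independent set forbids self-loops at that vertex, this subtlety is automatically resolved in the argument above.
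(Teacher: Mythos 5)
Your proof is correct and takes essentially the same route as the paper's: pull back a maximum independent set of $H$ to the union of the corresponding fibers in $L$, then account for how much a perturbation to $L'$ can shrink it. The one place you are more careful is the self-loop case, which the paper dismisses with ``it is easy to verify'' but which genuinely needs the observation you make (a looped vertex cannot belong to an independent set, hence its fiber never appears in the pullback). Your robustness bookkeeping is also slightly sharper: rather than the paper's deleting from the pullback all $O(\eps n)$ vertices whose incident edge set changed (giving $\delta_H(\eps) = 2\eps$), you delete one endpoint per added edge landing inside the independent set (giving $\delta_H(\eps) = \eps$). Both yield a valid $\delta_H(\eps) \to 0$, so the distinction is cosmetic for the proposition's purposes, but your constant is tighter. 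One minor note: you invoke $|E(L')\setminus E(L)| = |E(L)\setminus E(L')|$ from $d$-regularity, which is fine, and combined with $\Delta(L,L')\le\eps$ correctly gives $\le \eps n$ added edges.
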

\begin{proof}
    Let $H$ be a $d$-regular graph, and $L$ be any lift of $H$. Suppose $S \subseteq V(H)$ is the maximum independent set of $H$. Now consider $T \subseteq V(L)$ which is the union of all the fibers of vertices $v \in S$. It is easy to verify that $T$ is an independent set of $L$.

    Now suppose $L'$ is a $d$-regular graph such that $\Delta(L, L') \le \varepsilon$. Let $n = |V(L)|$, and $N \subseteq V(L')$ be the set of vertices in $L'$ whose incident edges differ from those in $L$. Clearly, $|N| \le 2\epsilon n$. We note that $T \setminus N \subseteq V(L')$ is an independent set of $L'$, since the incident edges of any $v \in T \setminus N$ are the same as those in $L$. Thus,
    \begin{align*}
        \what{\alpha}(H) = \frac{|S|}{|V(H)|} = \frac{|T|}{|V(L)|} \le \frac{|T \setminus N|}{|V(L')|} + \frac{|N|}{|V(L')|} \le \what{\alpha}(L') + 2\epsilon,
    \end{align*}
    which finishes the proof.
\end{proof}

\begin{proof}[Proof of Theorem~\ref{thm:ind-set-asymp}]
For this asymptotic result, we consider a family of $d$-regular base multigraphs $H = H^{(k, d)}$ on $k$ vertices, where $(k - 1) \mid d$, having adjacency matrix
\begin{equation}
    A = \left[
    \begin{array}{ccccc}
    0 & \frac{d}{k - 1} & \frac{d}{k - 1} & \cdots & \frac{d}{k - 1} \\
    \frac{d}{k - 1} & 0 & \frac{d}{k - 1} & \cdots & \frac{d}{k - 1} \\
    \frac{d}{k - 1} & \frac{d}{k - 1} & 0 & \cdots & \frac{d}{k - 1} \\
    \vdots & \vdots & \vdots & \ddots & \vdots \\
    \frac{d}{k - 1} & \frac{d}{k - 1} & \frac{d}{k - 1} & \cdots & 0
    \end{array}
    \right] = \frac{d}{k - 1} 1_k1_k^{\top} - \frac{d}{k - 1}I_k.
\end{equation}
All non-trivial eigenvalues of this adjacency matrix are $-\frac{d}{k - 1}$, so $H$ is Ramanujan provided that $k \geq \frac{d}{2\sqrt{d - 1}} + 1$, while $\what{\alpha}(H) = \frac{1}{k}$.
Thus by Proposition~\ref{prop:ramanujan-bounds} we have that $\what{\alpha}$ satisfies
    \begin{equation}
        M(\what{\alpha}) \geq \frac{2}{\sqrt{d}} - O_{d \to \infty}\left(\frac{1}{d}\right).
    \end{equation}
Therefore, by Theorem~\ref{thm:certification}, conditional on Conjecture~\ref{conj:hardness-formal}, we find the stated optimality of the Hoffman bound.
\end{proof}

\begin{proof}[Proof of Theorem~\ref{thm:ind-set}]
    In this case, we will take fuller advantage of the power of Theorem~\ref{thm:certification} instead of merely relying on robust lift-monotonicity.
    Note that, if a base graph $H$ has a self-loop, then the entire associated fiber in $G \sim \sL_m(H)$ is excluded from the lifted independent set that shows $\what{\alpha}(G) \geq \what{\alpha}(H)$.
    This is excessive: if there is only a single loop attached to a vertex in $H$, then the induced subgraph on that fiber in $G$ will be a perfect matching, half of whose vertices can be included in the independent set in $G$.

    Let us formalize this idea.
    Define $H^{\prime}$ to be $H$ with all self-loops removed.
    Let $\ell(v)$ be the number of self-loops that $v$ has in $H$ for each $v \in V(H) = V(H^{\prime})$.
    Define a modified normalized independence number:
    \begin{equation}
        \what{\alpha}^{\prime}(H) \colonequals \max\left\{\#\{v \in I: \ell(v) = 0 \} + \frac{\#\{v \in I: \ell(v) = 1\}}{2}: I \text{ independent in } H^{\prime}\right\} \geq \what{\alpha}(H).
    \end{equation}
    (Of course, one may continue in the same fashion and allow for smaller fractions of fibers coming from vertices with more self-loops, but this simple observation is all we will need.)
    The argument above together with the robustness part of the argument for Proposition~\ref{prop:ind-set-rlm} implies
    \begin{equation}
        M(\what{\alpha}) \geq \sup_{H \text{ $d$-regular Ramanujan}} \what{\alpha}^{\prime}(H).
    \end{equation}

    Finally, we use this observation together with Theorem~\ref{thm:certification} to prove our lower bounds for $d \in \{3, 4\}$.
    As usual, it suffices to exhibit particular Ramanujan graphs.
    For $d = 3$, we use the same graph as for Theorem~\ref{thm:max-cut}, which is depicted in Figure~\ref{fig:d3-k2-example}.
    For $d = 4$, we use a different graph, given in Figure~\ref{fig:d4-ind-example} (this case does not require the treatment of loops above).
\end{proof}

We have seen in the course of the proof that, for the independence number, lift monotonicity alone does not always give a tight characterization of the size of the largest independent set in a random lift.
We observed a rather trivial instance of this phenomenon due to self-loops, but a more ranging version of it occurs as well, where even in simple graphs sometimes large independent sets in a random lift do not typically occupy a small number of entire fibers, but rather a small fraction of many fibers.
See \cite{ALMR-2001-RandomLifts,ALM-2002-RandomLiftsIndependenceChromatic} for initial results characterizing this phenomenon.

\subsection{Minimum Dominating Set}

Recall that we define the \emph{normalized domination number} of a graph $G$ to be     \begin{equation*}
        \dom(G) \colonequals \min_{\substack{S \subseteq V(G)
        \\ \text{a dominating set}}} \frac{|S|}{|V(G)|}.
    \end{equation*}

\begin{proposition}
    $-\dom(G)$ is robustly lift-monotone.
\end{proposition}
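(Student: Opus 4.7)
My plan is to prove robust lift-monotonicity directly by lifting an optimal dominating set of $H$ into a dominating set of $L$, and then patching it to obtain a dominating set of $L'$ by adding the vertices whose neighborhoods have been modified by the noise.

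First, let $H$ be a $d$-regular multigraph on $k$ vertices, and fix a minimum dominating set $S \subseteq V(H)$, so $|S| = \dom(H) \cdot k$. Let $L$ be any $m$-lift of $H$ with associated partition $\sigma: V(L) \to V(H)$, and set $T \colonequals \sigma^{-1}(S) \subseteq V(L)$, the union of the fibers over $S$. I claim $T$ is a dominating set of $L$ of normalized size $\dom(H)$. To see this, take any $v \in V(L)$ and let $u = \sigma(v)$. If $u \in S$ then $v \in T$; otherwise, $S$ being dominating in $H$ provides some $u' \in S$ adjacent to $u$ in $H$ (necessarily $u' \neq u$ since $u \notin S$), and by the definition of an $m$-lift the fibers of $u$ and $u'$ are joined by an $M_{u,u'}$-regular bipartite graph with $M_{u,u'} \geq 1$, so $v$ has a neighbor in $\sigma^{-1}(u') \subseteq T$. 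Since $|T| = m|S|$ and $|V(L)| = mk$, this gives $\dom(L) \leq \dom(H)$.

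Next, for the robust version, let $L'$ be any $d$-regular graph with $\Delta(L,L') \leq \epsilon$, so $|E(L) \triangle E(L')| \leq 2\epsilon n$ where $n = |V(L)| = |V(L')| = km$. Let $N \subseteq V(L')$ be the set of vertices incident to at least one edge in $E(L) \triangle E(L')$; then $|N| \leq 4 \epsilon n$. I claim $T \cup N$ is a dominating set in $L'$: for any vertex $v \notin T \cup N$, the $L$-neighborhood of $v$ coincides with its $L'$-neighborhood (since $v \notin N$), and this neighborhood intersects $T$ by the previous paragraph. Therefore
\begin{equation*}
    \dom(L') \leq \frac{|T| + |N|}{n} \leq \dom(H) + 4\epsilon,
\end{equation*}
so taking $\delta_H(\epsilon) \colonequals 4\epsilon$ (independent of $H$) verifies robust lift-monotonicity of $-\dom$.

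No significant obstacle arises here: the argument is entirely combinatorial, the lifting step uses only the bipartite regularity between fibers guaranteed by the definition of lift (so self-loops and multi-edges in $H$ do not cause issues, since we only need the existence of some matching edge between adjacent fibers), and the robustness step is a standard "pay for changed vertices'' argument. The only mild subtlety is checking that dominating across the fiber structure works when $u \notin S$, which is handled by picking any dominator $u' \in S$ of $u$ in $H$ and noting that $u' \neq u$.
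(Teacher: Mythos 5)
Your proof is correct and takes essentially the same approach as the paper's: lift a minimum dominating set $S$ of $H$ to the union of fibers $T$ in $L$, then patch with the set $N$ of noise-affected vertices to dominate $L'$. The only difference is your more careful (and in fact more accurate) constant $|N| \leq 4\epsilon n$ versus the paper's claimed $2\epsilon n$, which is immaterial since either gives $\delta_H(\epsilon) \to 0$.
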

\begin{proof}
    Let $H$ be a $d$-regular graph, and $L$ be any lift of $H$. Suppose $S \subseteq V(H)$ is the minimum dominating set of $H$. Now consider $T \subseteq V(L)$ which is the union of all the fibers of vertices $v \in S$. It is easy to verify that $T$ is a dominating set of $L$.

    Suppose $L'$ is a $d$-regular graph such that $\Delta(L, L') \le \epsilon$. Let $n = |V(L)|$, and $N \subseteq V(L')$ be the set of vertices in $L'$ whose incident edges differ from those in $L$. Clearly, $|N| \le 2\epsilon n$. We note that $T \cup N$ is a dominating set of $L'$, since for every vertex $v \in V(L') \setminus N$, its neighborhood is uncorrupted and it is covered by some vertex in $T$, and every vertex $v \in N$ is included in $T \cup N$ by construction. Thus,
    \begin{align*}
        -\dom(H) = -\frac{|S|}{|V(H)|} = -\frac{|T|}{|V(L)|} \le -\frac{|T \cup N|}{|V(L')|} + \frac{|N|}{|V(L')|} \le -\dom(L') + 2\epsilon,
    \end{align*}
    which finishes the proof.
\end{proof}

Recall also, as mentioned in the Introduction, that for any $d$-regular graph $G$ on $n$ vertices, $\frac{n}{d+1}$ is always a trivial lower bound on the size of any dominating set of $G$, since any vertex of a dominating set can only ``cover'' $d+1$ vertices of $G$, namely itself and its $d$ neighbors.
Therefore, for any $d$-regular $G$,
\begin{equation}
   \dom(G) \geq \frac{1}{d + 1}.
\end{equation}
We proceed with our proof that this bound is optimal for each $d$.

\begin{proof}[Proof of Theorem~\ref{thm:dom}]
    We consider the complete $d$-regular graph $H = K_{d+1}$.
    The adjacency matrix of $H$ is $A = J_{d + 1} - I$, so its non-trivial eigenvalues are all $-1$ and $H$ is Ramanujan.
    $H$ also satisfies $\dom(H) = \frac{1}{d + 1}$, since any vertex alone is a dominating set.
    Thus, by Theorem~\ref{thm:certification} and Proposition~\ref{prop:ramanujan-bounds}, we find that the trivial lower bound above is optimal for polynomial-time certification algorithms conditional on Conjecture~\ref{conj:hardness-formal}.
\end{proof}

\subsection{Vertex Expansion}

Since for this problem (and edge expansion below) the definitions are somewhat more involved and were only briefly given in the Introduction, we review the main ingredients below.

\begin{definition}[Vertex boundary]
    Let $G$ be a graph and $S \subseteq V(G)$. The vertex boundary of $S$ is the set of vertices
    \begin{equation*}
        \partial_v S \colonequals \left\{u \in V(G): \{u,v\} \in E(G) \text{ for some } v \in S \right\}.
    \end{equation*}
    Note that the set $\partial_v S$ is not necessarily disjoint from $S$.\footnote{Another definition of vertex boundary that excludes the vertices from $S$ is also common.} When we need to be explicit with the underlying graph $G$, we denote the vertex boundary by $\partial_v^{G} S$.
\end{definition}

Recall that the small-set vertex expansion was then defined as
 \begin{equation}
        \Phi_{\epsilon}^{v}(G) \colonequals \min_{\substack{S \subseteq V \\ 1 \leq |S| \leq \epsilon n}} \frac{|\partial_v S|}{|S|}.
    \end{equation}

\begin{proposition}
    $-\Phi_{\epsilon}^{v}(G)$ is robustly lift-monotone.
\end{proposition}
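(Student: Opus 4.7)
The plan is to follow the pattern of the earlier lift-monotonicity proofs in this section: first lift a witness set from $H$ to $L$, then absorb the noise incurred by passing to a nearby $L'$.

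Fix a $d$-regular base graph $H$ on $k$ vertices and let $L$ be an $m$-lift of $H$ with fiber map $\sigma : V(L) \to V(H)$, so that $|V(L)| = mk$. Let $S \subseteq V(H)$ attain the minimum defining $\Phi^v_\epsilon(H)$; by definition $1 \leq |S| \leq \epsilon k$. Lift $S$ to the full preimage $T \colonequals \sigma^{-1}(S) \subseteq V(L)$, for which $|T| = m|S|$ and hence $1 \leq |T| \leq \epsilon m k = \epsilon |V(L)|$, so $T$ is a valid test set for $\Phi^v_\epsilon(L)$. The central claim is the containment $\partial_v^L T \subseteq \sigma^{-1}(\partial_v^H S)$: if $u \in \partial_v^L T$ then $u$ has some neighbor $v \in T$ in $L$, and the edge $\{u,v\}$ projects under $\sigma$ either to a non-loop edge $\{\sigma(u), \sigma(v)\}$ of $H$ with $\sigma(v) \in S$, or to a loop at $\sigma(u) = \sigma(v) \in S$; in both cases $\sigma(u) \in \partial_v^H S$. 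Dividing by $|T| = m|S|$ gives $|\partial_v^L T|/|T| \leq |\partial_v^H S|/|S| = \Phi^v_\epsilon(H)$, and plain lift-monotonicity $\Phi^v_\epsilon(L) \leq \Phi^v_\epsilon(H)$ follows.

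For the robust version, let $L'$ be any $d$-regular graph on $V(L)$ with $\Delta(L, L') \leq \eta$, so $|E(L) \triangle E(L')| \leq 2\eta m k$. At most $4\eta mk$ vertices of $V(L)$ have neighborhoods that differ between $L$ and $L'$, and any vertex in $\partial_v^{L'} T \setminus \partial_v^L T$ must be one of these, since it witnesses a new edge in $E(L') \setminus E(L)$. As $T$ remains admissible for $\Phi^v_\epsilon(L')$, we obtain
\begin{equation*}
    \Phi^v_\epsilon(L') \leq \frac{|\partial_v^{L'} T|}{|T|} \leq \frac{|\partial_v^L T|}{|T|} + \frac{4\eta mk}{m|S|} \leq \Phi^v_\epsilon(H) + \frac{4k}{|S|}\eta.
\end{equation*}
Setting $\delta_H(\eta) \colonequals (4k/|S|)\eta$, a quantity depending only on $H$ and the fixed parameter $\epsilon$ and tending to $0$ as $\eta \to 0$, yields $-\Phi^v_\epsilon(L') \geq -\Phi^v_\epsilon(H) - \delta_H(\eta)$, as required.

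There is no serious obstacle in this argument; the only mild subtlety is verifying the containment $\partial_v^L T \subseteq \sigma^{-1}(\partial_v^H S)$ in the presence of loops in $H$, since this paper's convention places a perfect matching (rather than a $2$-regular graph) on the fiber of a self-loop. This affects only the bookkeeping of within-fiber edges and leaves the containment, and hence the rest of the argument, intact.
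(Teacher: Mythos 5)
Your proof is correct and follows essentially the same route as the paper's: lift the extremal set $S$ to the union of fibers $T = \sigma^{-1}(S)$, observe that $T$ remains admissible and that $\partial_v^L T$ is contained in (the paper asserts equal to) $\sigma^{-1}(\partial_v^H S)$, and then bound the number of vertices whose boundary-membership can change after an $\eta$-perturbation by a constant times $\eta n$, giving a correction term of order $\eta \cdot |V(H)|/|S|$. The only differences are cosmetic: you use the sufficient containment rather than the paper's stated equality, and your constant ($4k/|S|$) differs from the paper's by a small factor, neither of which affects validity; your explicit remark about the loop-to-matching convention is a nice clarification that the paper leaves implicit.
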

\begin{proof}
    Let $H$ be a $d$-regular graph, and $L$ be any lift of $H$. Suppose $S \subseteq V(H)$ achieves the minimum vertex expansion among sets of size at most $\epsilon |V(H)|$, i.e., $\Phi_{\epsilon}^{v}(H) = \frac{|\partial_v^H S|}{|S|}$. Now consider $T \subseteq V(L)$ which is the union of all the fibers of vertices $v \in S$. Note that the vertex expansion of $T$ in $L$ is the same as the vertex expansion of $S$ in $H$ and that $T$ satisfies $\frac{|T|}{|V(L')|} = \frac{|S|}{|V(H)|} \le \epsilon$, since the vertex boundary of $T \subseteq V(L)$ is the union of the corresponding fibers of vertices in the vertex boundary of $S \subseteq V(H)$.

    Suppose $L'$ is a $d$-regular graph such that $\Delta(L, L') \le \gamma$. Let $n = |V(L)|$, and $N \subseteq V(L')$ be the set of vertices in $L'$ whose incident edges differ from those in $L$. Clearly, $|N| \le 2\gamma n$. We note that $\partial_v^{L} T \subseteq \partial_v^{L'} T \cup N$, since for every vertex $v \in V(L) \setminus N$, its neighborhood is the same as that in $L'$, and $v \in \partial_v^{L'} T$ if $v \in \partial_v^{L} T$. Finally, note that $|T| = |S|\cdot \frac{|V(L)|}{|V(H)|} = \frac{|S|}{|V(H)|} n$. Then,
    \begin{align*}
        -\Phi_{\epsilon}^{v}(H) = - \frac{|\partial_{v}^H S|}{|S|} = - \frac{|\partial_{v}^L T|}{|T|} \le -\frac{|\partial_{v}^{L'} T|}{|T|} + \frac{|N|}{|T|} \le -\Phi_{\epsilon}^{v}(L') + \frac{2\gamma n}{\frac{|S|}{|V(H)|} n} = -\Phi_{\epsilon}^{v}(L') + \frac{|V(H)|}{|S|}\gamma,
    \end{align*}
    which finishes the proof.
\end{proof}

Recall also that our benchmark certificate is Kahale's bound \cite{Kahale-1995-SpectralBoundExpansion} in terms of $\Tilde{\lambda}(G) \colonequals \max(\lambda_2(G), 2\sqrt{d - 1})$.
The bound says that, for an absolute constant $C > 0$,
    \begin{equation}
        \Phi_{\epsilon}^{v}(G) \geq \frac{d}{2}\left(1 - \sqrt{1 - \frac{4(d - 1)}{\Tilde{\lambda}(G)^2}}\right)\left(1 - C \cdot \frac{\log d}{\log \frac{1}{\epsilon}}\right).
    \end{equation}
    In particular, the algorithm outputting this lower bound with high probability certifies a bound of
    \begin{equation}
        \Phi_{\epsilon}^{v}(G) \geq \frac{d}{2}\left(1 - O_d \left(\frac{1}{\log \frac{1}{\epsilon}}\right)\right) - o_{d; n \to \infty}(1)
    \end{equation}
    when $G \sim \mathcal{G}(n,d)$ or $G \sim \mathcal{G}\left(\left(\frac{n}{2}, \frac{n}{2}\right), d\right)$.
    Further, for every fixed $d$, Kahale's spectral bound in the double limit of first taking $n \to \infty$ and then taking $\varepsilon \to 0$, with high probability certifies a lower bound of $\Phi_{\epsilon}^{v}(G) \ge \frac{d}{2} - o_{d; n \to \infty, \varepsilon \to 0}(1)$.

In our proof of Theorem~\ref{thm:vertex-exp}, we will consider a family of bipartite Ramanujan graphs due to Morgenstern \cite{Morgenstern-1994-RamanujanGraphsPrimePowers} and studied in \cite{kamber2022combinatorics}.

\begin{theorem}[{\cite[Theorem 1.1]{kamber2022combinatorics}}]
    \label{thm:kamber}
    For every prime power $q$, there exists an infinite family of $d = (q+1)$-regular bipartite Ramanujan graphs $G$, such that there exists a subset $S \subseteq V(G)$ satisfying $|S| = O(\sqrt{|V(G)|})$, and $\frac{|\partial_v S|}{|S|} = \frac{d}{2}$.
\end{theorem}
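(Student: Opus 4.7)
The plan is to follow Kamber--Kaufman's approach, which leans on Morgenstern's arithmetic construction \cite{Morgenstern-1994-RamanujanGraphsPrimePowers} of $(q+1)$-regular bipartite Ramanujan graphs, together with an explicit exhibition of a small low-expansion set arising from a torus orbit on the Bruhat--Tits tree.

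First I would recall and invoke Morgenstern's family. For each prime power $q$, fix a quaternion division algebra $D$ over the rational function field $F = \mathbb{F}_q(t)$ ramified at two appropriately chosen places, let $v_0$ be a third finite place at which $D$ splits, and let $T = T_{q+1}$ denote the Bruhat--Tits tree of $PGL_2(F_{v_0})$. Taking a tower of cocompact congruence subgroups $\Gamma_n \leq D^\times$, the quotients $X_n \colonequals \Gamma_n \backslash T$ form an infinite family of $(q+1)$-regular graphs. Bipartiteness follows because $D^\times$ preserves the type function on $T$, and the Ramanujan property is established in \cite{Morgenstern-1994-RamanujanGraphsPrimePowers} by transferring Drinfeld--Lafforgue's proof of the Ramanujan conjecture for $GL_2$ over function fields across the Jacquet--Langlands correspondence to $D^\times$.

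Second, I would construct the small set $S_n \subseteq V(X_n)$ as the image in $X_n$ of the orbit, on a fixed base vertex $\tilde{x} \in T$, of a maximal $\mathbb{F}_q[t]$-order $\mathcal{O}_E^\times$ inside a quadratic subfield $E \subset D$ chosen so that $E$ is unramified at $v_0$ (so that $E_{v_0}$ acts on $T$ with an invariant bi-infinite geodesic through $\tilde{x}$). By a class-number/volume computation for $E$ over $F$ combined with the explicit index of $\Gamma_n$ in $D^\times$, one shows $|S_n| = O(\sqrt{|V(X_n)|})$ for the chosen tower.

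Third, I would verify the equality $|\partial_v S_n|/|S_n| = (q+1)/2 = d/2$. The key local fact is that because the $\mathcal{O}_E^\times$-orbit of $\tilde{x}$ in $T$ is a geodesic line (each vertex on it has exactly two neighbors on it, corresponding to the two ends of the apartment $E_{v_0}$ stabilizes), each vertex $u \in \partial_v S_n$ has \emph{exactly} two neighbors in $S_n$. Thus the bipartite subgraph from $S_n$ to $\partial_v S_n$ is $(q+1)$-regular on the $S_n$ side and $2$-regular on the $\partial_v S_n$ side, yielding the double count $d \cdot |S_n| = 2 \cdot |\partial_v S_n|$.

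The main obstacle, and the crux of the argument, is to ensure that the ``$2$-cover'' structure on $\partial_v S_n$ persists globally after quotienting by $\Gamma_n$: no vertex of $X_n$ should pick up a third or fourth preimage-neighbor on the geodesic under distinct $\Gamma_n$-translates. This reduces to showing that the minimum distance in $T$ between two distinct $\Gamma_n$-translates of the chosen $E_{v_0}$-geodesic exceeds $2$, i.e., a ``geodesic injectivity radius'' bound for $\Gamma_n$ on the level of the torus orbit. One controls this by taking the arithmetic level in the tower deep enough and invoking strong approximation for $D^\times$ together with the standard logarithmic girth lower bound available for Ramanujan graphs, thereby obtaining the exact (not merely asymptotic) equality $|\partial_v S_n|/|S_n| = d/2$.
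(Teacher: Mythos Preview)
The paper does not prove this statement at all: it is quoted verbatim as \cite[Theorem~1.1]{kamber2022combinatorics} and used as a black box in the proof of Theorem~\ref{thm:vertex-exp}. So there is no ``paper's own proof'' to compare against; what you have written is a sketch of the Kamber--Kaufman argument itself.

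That said, your sketch has a genuine gap in the step that produces the exact ratio $d/2$. You correctly note that the $\mathcal{O}_E^\times$-orbit of $\tilde{x}$ in the tree is a bi-infinite geodesic, so every vertex \emph{on} the geodesic has exactly two neighbors on it. But you then assert that ``each vertex $u \in \partial_v S_n$ has exactly two neighbors in $S_n$,'' and this is what your double count $d|S_n| = 2|\partial_v S_n|$ needs. For $u \in S_n$ this is fine. For $u \in \partial_v S_n \setminus S_n$, however, in the tree a vertex at distance $1$ from a geodesic line has exactly \emph{one} neighbor on the line, not two. So after quotienting by $\Gamma_n$ such a vertex will still have only one neighbor in $S_n$ unless some additional identification occurs. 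Your fourth paragraph worries only about the opposite failure (a boundary vertex acquiring a \emph{third} neighbor), which shows the local picture has been inverted. With the count as you have written it, the geodesic image would give $|\partial_v S_n|/|S_n| = d-1$, not $d/2$.

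The actual Kamber--Kaufman mechanism that forces each external boundary vertex to be hit twice is more delicate than ``the orbit is a line''; it relies on a specific arithmetic choice of the torus and the level so that the closed geodesic in the quotient has every off-cycle neighbor covered by two distinct cycle vertices (this is exactly the ``not a unique-neighbor expander'' phenomenon their paper is named for). You would need to identify and justify that mechanism, not merely bound the injectivity radius from below.
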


\begin{proof}[Proof of Theorem~\ref{thm:vertex-exp}]
    Using Theorem~\ref{thm:certification} and Proposition~\ref{prop:ramanujan-bounds} together with the bipartite Ramanujan graphs guaranteed by Theorem~\ref{thm:kamber} gives the result immediately, showing in particular that Kahale's bound for vertex expansion is optimal in the limit to leading order..
\end{proof}

\begin{remark}
    We remark here Conjecture \ref{conj:hardness} leaves open the hardness of certifying a bound at the critical value $\Phi_{\varepsilon}^v(G) \ge \frac{d}{2}$ (with no error terms) for the values of $d = q+1$ when $G \sim \mathcal{G}\left(\left(\frac{n}{2},\frac{n}{2}\right), d\right)$. We also leave the hardness of certifying vertex expansion for $G \sim \mathcal{G}(n,d)$ as a future direction; see Theorem~\ref{thm:bipartite-unipartite} and the surrounding discussion for an indication of why this is harder than the bipartite problem, at least using our tools.
\end{remark}

\subsection{Edge Expansion}

We review the versions of the above quantities, bounds, and constructions for edge rather than vertex expansion.

\begin{definition}[Edge boundary]
    Let $G$ be a graph and $S \subseteq V(G)$. The \emph{edge boundary} of $S$ is the set of edges
    \begin{equation*}
        \partial_e S \colonequals \left\{\{u,v\} \in E(G): u\in S, v\not\in S\right\}.
    \end{equation*}
    We denote the edge boundary as $\partial_e^{G} S$ when needing to specify the underlying $G$.
\end{definition}
\noindent
Recall that the edge expansion is then defined as
    \begin{equation}
        \Phi_{\epsilon}^{e}(G) \colonequals \min_{\substack{S \subseteq V \\ 1 \leq |S| \leq \epsilon n}} \frac{|\partial_e S|}{|S|}.
    \end{equation}

\begin{proposition}
    $-\Phi_{\epsilon}^{e}(G)$ is robustly lift-monotone.
\end{proposition}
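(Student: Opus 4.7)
The plan is to follow the exact template of the preceding two propositions (for $-\dom$ and $-\Phi_{\epsilon}^{v}$), with the only real difference being the bookkeeping on the edge boundary rather than the vertex boundary. First, let $H$ be a $d$-regular graph on $k$ vertices, and let $L$ be any $m$-lift of $H$, so $|V(L)| = km$ and $V(L)$ is partitioned into fibers $\sigma^{-1}(v)$, each of size $m$. Pick $S \subseteq V(H)$ with $|S| \leq \epsilon |V(H)|$ that achieves $\Phi_{\epsilon}^{e}(H) = |\partial_e^H S|/|S|$, and let $T \subseteq V(L)$ be the union of the fibers of vertices in $S$, so that $|T| = m|S|$ and $|T|/|V(L)| = |S|/|V(H)| \leq \epsilon$. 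Since each edge $\{u,v\} \in E(H)$ with $u \in S, v \notin S$ lifts to a perfect matching of $m$ edges between the fibers of $u$ and $v$, and every such lifted edge is in $\partial_e^L T$, while edges of $H$ with both endpoints in $S$ or both outside $S$ contribute nothing to $\partial_e^L T$, one gets $|\partial_e^L T| = m\,|\partial_e^H S|$. Hence $|\partial_e^L T|/|T| = \Phi_{\epsilon}^e(H)$, and since $T$ is a valid candidate for the minimization defining $\Phi_{\epsilon}^e(L)$, we conclude $-\Phi_{\epsilon}^e(H) \leq -\Phi_{\epsilon}^e(L)$.

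Second, I will upgrade this to robustness. Suppose $L'$ is a $d$-regular graph on $V(L)$ with $\Delta(L,L') \leq \gamma$, so $|E(L) \triangle E(L')| \leq 2\gamma n$ where $n = |V(L)|$. For the same set $T$ as above, every edge that lies in the edge boundary in $L'$ but not in $L$ must belong to the symmetric difference $E(L) \triangle E(L')$, and likewise for edges lost from the boundary under the modification, so
\[
\left| |\partial_e^{L'} T| - |\partial_e^L T| \right| \leq |E(L) \triangle E(L')| \leq 2\gamma n.
\]
Using $|T| = (|S|/|V(H)|)\,n$, this yields
\[
\Phi_{\epsilon}^e(L') \;\leq\; \frac{|\partial_e^{L'} T|}{|T|} \;\leq\; \frac{|\partial_e^L T| + 2\gamma n}{|T|} \;=\; \Phi_{\epsilon}^e(H) + \frac{2|V(H)|}{|S|}\,\gamma.
\]
Rearranging gives $-\Phi_{\epsilon}^e(H) \leq -\Phi_{\epsilon}^e(L') + \delta_H(\gamma)$ with $\delta_H(\gamma) \colonequals \frac{2|V(H)|}{|S|}\gamma$, which is a constant (depending only on $H$) times $\gamma$ and therefore tends to $0$ as $\gamma \to 0$, verifying the definition of robust lift-monotonicity.

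There is no genuine obstacle here; the only subtle point is the exact multiplicity count that makes the edge-expansion ratio of the lifted set $T$ agree with that of $S$ on the nose, which is a direct consequence of the perfect-matching structure of a lift between fibers corresponding to adjacent vertices of $H$. The constant in $\delta_H$ may depend on $H$ through $|S|$, which is permitted by the definition of robust lift-monotonicity given in the paper.
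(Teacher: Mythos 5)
Your proof is correct and follows essentially the same route as the paper: lift the minimizing set $S$ to $T$ as a union of fibers, observe that the edge-expansion ratio is preserved exactly, and use $T$ as a test set in the perturbed graph $L'$ after bounding how much $|\partial_e T|$ can change. The only cosmetic difference is that you bound the change in boundary size by the full symmetric difference $|E(L)\triangle E(L')| \le 2\gamma n$, yielding $\delta_H(\gamma) = \frac{2|V(H)|}{|S|}\gamma$, whereas the paper uses the one-sided set difference $|E(L)\setminus E(L')| \le \gamma n$ to get the constant $\frac{|V(H)|}{|S|}$; both are valid robustness moduli.
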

\begin{proof}
    Let $H$ be a $d$-regular graph, and $L$ be any lift of $H$. Suppose $S \subseteq V(H)$ achieves the minimum edge expansion among sets of size at most $\epsilon |V(H)|$, i.e., $\Phi_{\epsilon}^{e}(H) = \frac{|\partial_e^H S|}{|S|}$. Now consider $T \subseteq V(L)$ which is the union of all the fibers of vertices $v \in S$. Note that the edge expansion of $T$ in $L$ is the same as the edge expansion of $S$ in $H$ and that $T$ satisfies $\frac{|T|}{|V(L')|} = \frac{|S|}{|V(H)|} \le \epsilon$, since the edge boundary of $T \subseteq V(L)$ is the union of the edges between the corresponding fibers of edges in the edge boundary of $S \subseteq V(H)$.

    Suppose $L'$ is a $d$-regular graph such that $\Delta(L, L') \le \gamma$. Let $n = |V(L)|$, and $E' \colonequals E(L)\setminus E(L') \subseteq E(L)$ be the set of edges in $L$ but not in $L'$. $|E'| \le \gamma n$. We note that $\partial_e^{L} T \subseteq \partial_e^{L'} T \cup E'$, and $|T| = |S|\cdot \frac{|V(L)|}{|V(H)|} = \frac{|S|}{|V(H)|} n$. Then,
    \begin{align*}
        -\Phi_{\epsilon}^{e}(H) = - \frac{|\partial_{e}^H S|}{|S|} = - \frac{|\partial_{e}^L T|}{|T|} \le -\frac{|\partial_{e}^{L'} T|}{|T|} + \frac{|E'|}{|T|} \le -\Phi_{\epsilon}^{e}(L') + \frac{\gamma n}{\frac{|S|}{|V(H)|} n} = -\Phi_{\epsilon}^{e}(L') + \frac{|V(H)|}{|S|}\gamma,
    \end{align*}
    which finishes the proof.
\end{proof}

Our benchmark certificate here is a variant of Kahale's bound for vertex expansion.
The bound says that, for an absolute constant $C > 0$,
    \begin{equation}
        \Phi_{\epsilon}^{e}(G) \geq d - \left(1 + \frac{\Tilde{\lambda}(G)}{2} + \sqrt{\frac{\Tilde{\lambda}(G)^2}{4} - (d-1)}\right)\left(1 + C \cdot \frac{\log d}{\log \frac{1}{\epsilon}}\right)
    \end{equation}
    In particular, the algorithm outputting this lower bound with high probability certifies a bound of
    \begin{equation}
        \Phi_{\epsilon}^{e}(G) \geq d - (\sqrt{d-1} + 1)\left(1 + O_d \left(\frac{1}{\log \frac{1}{\epsilon}}\right)\right) - o_{d; n \to \infty}(1)
    \end{equation}
    when $G \sim \mathcal{G}(n,d)$ or $G \sim \mathcal{G}\left(\left(\frac{n}{2}, \frac{n}{2}\right), d\right)$.
    Further, for every fixed $d$, Kahale's spectral bound in the double limit of first taking $n \to \infty$ and then taking $\varepsilon \to 0$ with high probability certifies a lower bound of $\Phi_{\epsilon}^{e}(G) \ge d -1-\sqrt{d-1} - o_{d; n \to \infty, \varepsilon \to 0}(1)$.

The same references as in the previous section again provide the Ramanujan graphs that we will use in our argument.
\begin{theorem}[{\cite[Theorem 1.7]{kamber2022combinatorics}}]
    \label{thm:kamber-2}
    For every prime power $q$, there exists an infinite family of $d = (q^2+1)$-regular bipartite Ramanujan graphs $G$, such that there exists a subset $S \subseteq V(G)$ satisfying $|S| = O(\sqrt{|V(G)|})$, and $\frac{|\partial_e S|}{|S|} = d - 1 - \sqrt{d-1}$.
\end{theorem}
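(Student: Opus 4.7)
The theorem should follow from an explicit arithmetic construction paralleling the approach behind Theorem~\ref{thm:kamber} for the $(q+1)$-regular case, but working with Bruhat--Tits buildings of higher-rank groups appropriate to degree $q^2+1$. My plan has three steps.

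First, I would set up the graphs. For each prime power $q$, $(q^2+1)$-regular bipartite Ramanujan graphs can be realized as quotients $\Gamma \backslash \mathcal{B}$, where $\mathcal{B}$ is a natural bipartite graph built from the Bruhat--Tits building of a rank-one form of $\mathrm{PGU}_3$ (or a closely related unitary group) over a non-archimedean local field with residue field of order $q^2$, and $\Gamma$ is an arithmetic lattice. The Ramanujan property here reduces to the Ramanujan conjecture for automorphic representations on $\mathrm{U}_3$, established by Rogawski and by Laumon--Rapoport--Stuhler in the function-field setting. The bipartite structure arises from the two types of special vertices in the building.

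Second, I would exhibit the small subset $S$ with low expansion using a natural $(q+1)$-regular substructure inside $\mathcal{B}$. The key combinatorial identity is $(q^2 + 1) - (q+1) = q^2 - q = d - 1 - \sqrt{d - 1}$: if $S \subseteq V(G)$ induces a $(q+1)$-regular subgraph, then $|\partial_e S|/|S| = q^2 - q$ exactly. Such substructures occur inside $\mathrm{PGU}_3$-type buildings as sub-buildings associated with embedded $\mathrm{PGL}_2$-type subgroups (coming, for instance, from stabilizers of an appropriate non-isotropic vector), which are $(q+1)$-regular subtrees sitting geodesically convex inside the ambient building. Taking $S$ to be the image of a connected component of such a subtree in $\Gamma \backslash \mathcal{B}$ then yields the required induced-$(q+1)$-regular structure, provided the girth of the quotient is large enough to prevent extraneous identifications.

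Third, I would control the size $|S| = O(\sqrt{|V(G)|})$ by taking $\Gamma$ in a suitable tower of principal congruence subgroups and comparing the covolume of $\Gamma$ in the ambient $p$-adic group with that of $\Gamma$ intersected with the subgroup stabilizing the sub-building. The scaling is governed by the ratio of the two covolumes, and to realize exactly the square-root scaling one picks an explicit family (analogous to the LPS or Morgenstern constructions in the $(q+1)$-regular case) for which this ratio can be computed directly. For such families, the sub-building descends to a finite subgraph whose size is naturally of the ``square-root'' type familiar from analogous counting arguments in arithmetic settings. The main obstacle I anticipate is obtaining precisely the exponent $1/2$ in $|S| = O(\sqrt{|V(G)|})$, rather than some other power: a naive choice of congruence tower can easily give the wrong exponent, so the arithmetic setup must be calibrated (both in the choice of the global unitary group and of the tower) so that the covolume ratio lands at exactly $1/2$, and one must simultaneously ensure by choosing $\Gamma$ deep enough that the girth of $\Gamma \backslash \mathcal{B}$ exceeds the diameter of the sub-building. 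Both of these require committing to a carefully engineered arithmetic family rather than an arbitrary $(q^2+1)$-regular Ramanujan graph.
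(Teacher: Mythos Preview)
The paper does not prove this theorem. It is quoted verbatim as an external input from \cite{kamber2022combinatorics} (specifically, their Theorem~1.7), exactly as Theorem~\ref{thm:kamber} is quoted from the same source for the vertex-expansion case. The paper's only use of it is in the one-line proof of Theorem~\ref{thm:edge-exp}: plug the graphs supplied by Theorem~\ref{thm:kamber-2} into Theorem~\ref{thm:certification} and Proposition~\ref{prop:ramanujan-bounds}. So there is nothing in the paper to compare your proposal against.

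As for the proposal itself, your combinatorial observation is the right one: if $d = q^2+1$ and $S$ induces a $(q+1)$-regular subgraph, then $|\partial_e S|/|S| = d - (q+1) = q^2 - q = d - 1 - \sqrt{d-1}$, which is exactly the target. Your instinct to look for such a substructure inside an arithmetic Ramanujan quotient, coming from an embedded rank-one subgroup, is in the spirit of how Kamber and Thomas actually proceed. The main caution is that your description of the ambient group and building is somewhat speculative (the relevant constructions in \cite{kamber2022combinatorics} use specific arithmetic quotients, and the precise groups and embeddings matter for both the Ramanujan property and the square-root size bound), and your third step correctly flags the real difficulty: getting $|S| = O(\sqrt{|V(G)|})$ rather than some other power is not automatic and requires the specific arithmetic setup of that paper. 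But all of this lies outside the present paper, which treats Theorem~\ref{thm:kamber-2} as a black box.
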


\begin{proof}[Proof of Theorem~\ref{thm:edge-exp}]
    The proof is again immediate using Theorem~\ref{thm:certification}, Proposition~\ref{prop:ramanujan-bounds}, and the graphs provided above by Theorem~\ref{thm:kamber-2}.
    In particular, conditional on Conjecture~\ref{conj:hardness-formal}, Kahale's bound for edge expansion in the limit of $\varepsilon \to 0$ is optimal to leading order.
\end{proof}

\begin{remark}
    As for vertex expansion, our argument leaves open the hardness of certifying a bound at the critical value $\Phi_{\varepsilon}^e(G) \ge d - 1 - \sqrt{d-1}$ for the values of $d = q^2+1$ when $G \sim \mathcal{G}\left(\left(\frac{n}{2},\frac{n}{2}\right), d\right)$.
    The hardness of certifying edge expansion for $G \sim \mathcal{G}(n,d)$ is again an interesting question and, per Theorem~\ref{thm:bipartite-unipartite}, likely harder to establish theoretical lower bounds for than the bipartite case we have treated here.
\end{remark}

\clearpage

\begin{figure}[!h]
    \begin{center}
        \includegraphics[scale=0.65]{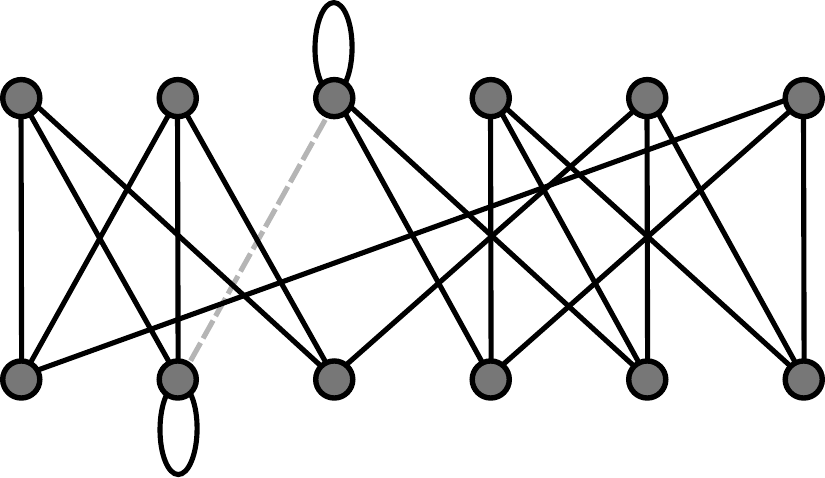}
    \end{center}

    \vspace{1em}

    \[ \mathlarger{\left[ \begin{array}{cccccccccccc}
1 & 0 & 0 & 0 & 0 & 0 & 0 & 1 & 1 & 0 & 0 & 0 \\
0 & 0 & 0 & 0 & 0 & 0 & 1 & 0 & 0 & 1 & 1 & 0 \\
0 & 0 & 0 & 0 & 0 & 0 & 1 & 0 & 0 & 1 & 0 & 1 \\
0 & 0 & 0 & 0 & 0 & 0 & 0 & 1 & 1 & 0 & 1 & 0 \\
0 & 0 & 0 & 0 & 0 & 0 & 0 & 1 & 0 & 1 & 0 & 1 \\
0 & 0 & 0 & 0 & 0 & 0 & 0 & 0 & 1 & 0 & 1 & 1 \\
0 & 1 & 1 & 0 & 0 & 0 & 1 & 0 & 0 & 0 & 0 & 0 \\
1 & 0 & 0 & 1 & 1 & 0 & 0 & 0 & 0 & 0 & 0 & 0 \\
1 & 0 & 0 & 1 & 0 & 1 & 0 & 0 & 0 & 0 & 0 & 0 \\
0 & 1 & 1 & 0 & 1 & 0 & 0 & 0 & 0 & 0 & 0 & 0 \\
0 & 1 & 0 & 1 & 0 & 1 & 0 & 0 & 0 & 0 & 0 & 0 \\
0 & 0 & 1 & 0 & 1 & 1 & 0 & 0 & 0 & 0 & 0 & 0
\end{array} \right]} \]

    \vspace{1em}

    \caption{The 3-regular Ramanujan graph used in the proof of Theorems~\ref{thm:max-cut} and \ref{thm:ind-set} on the maximum cut and maximum independent set of 3-regular graphs, respectively. The graph is formed from a bipartite 3-regular graphs by replacing one edge (shown dashed in gray) with a pair of loops. This graph has $d = 3, |V(H)| = 12$, $\max\{|\lambda_2(H)|, |\lambda_n(H)|\} \approx \num{2.825} < \num{2.828} \approx 2\sqrt{2}$, $\MC_2(H) = \frac{17}{18} \approx \num{0.944}$, and $\what{\alpha}^{\prime}(H) = \frac{5 + 1/2}{12} \approx \num{0.458}$, where $\what{\alpha}^{\prime}$ is the modified independence number discussed in the proof in Section~\ref{sec:ind-set}.}
    \label{fig:d3-k2-example}
\end{figure}

\begin{figure}[!h]
    \begin{center}
        \includegraphics[scale=0.65]{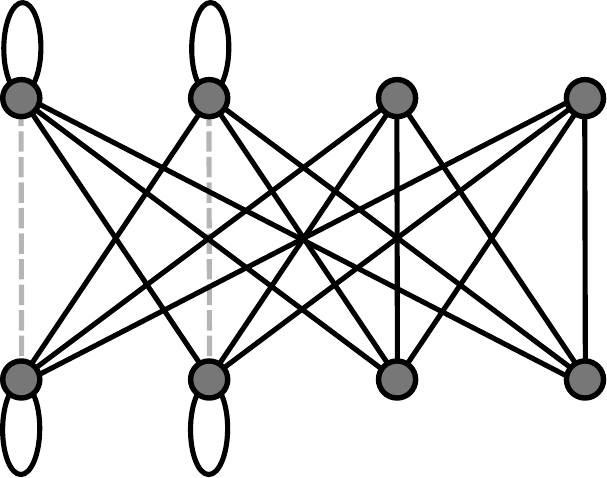}
    \end{center}

    \vspace{1em}
    \[
    \mathlarger{\left[
    \begin{array}{cccccccc}
1 & 0 & 0 & 0 & 0 & 1 & 1 & 1 \\
0 & 1 & 0 & 0 & 1 & 0 & 1 & 1 \\
0 & 0 & 0 & 0 & 1 & 1 & 1 & 1 \\
0 & 0 & 0 & 0 & 1 & 1 & 1 & 1 \\
0 & 1 & 1 & 1 & 1 & 0 & 0 & 0 \\
1 & 0 & 1 & 1 & 0 & 1 & 0 & 0 \\
1 & 1 & 1 & 1 & 0 & 0 & 0 & 0 \\
1 & 1 & 1 & 1 & 0 & 0 & 0 & 0
\end{array}
    \right]}
    \]

    \vspace{1em}

    \caption{The 4-regular Ramanujan graph used in the proof of Theorem~\ref{thm:max-cut} on the maximum cut of 4-regular graphs. The graph is formed from the complete bipartite graph $K_{4, 4}$ by replacing two edges (shown dashed in gray) with pairs of loops. This graph has $d = 4, |V(H)| = 8, \max\{|\lambda_2(H)|, |\lambda_n(H)|\} \approx \num{3.236} < \num{3.464} \approx 2\sqrt{3}$, and $\MC_2(H) = \frac{14}{16} = \num{0.875}$.}
    \label{fig:d4-k2-example}
\end{figure}

\begin{figure}[!h]
    \begin{center}
        \includegraphics[scale=0.65]{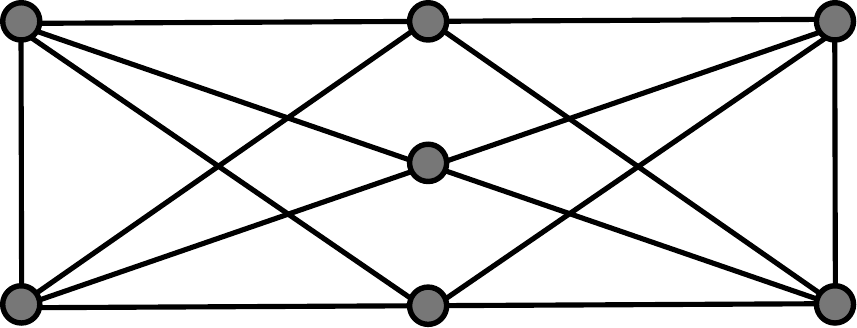}
    \end{center}

    \vspace{1em}

    \[ \mathlarger{\left[\begin{array}{ccccccc}
0 & 1 & 1 & 1 & 1 & 0 & 0 \\
1 & 0 & 1 & 1 & 1 & 0 & 0 \\
1 & 1 & 0 & 0 & 0 & 1 & 1 \\
1 & 1 & 0 & 0 & 0 & 1 & 1 \\
1 & 1 & 0 & 0 & 0 & 1 & 1 \\
0 & 0 & 1 & 1 & 1 & 0 & 1 \\
0 & 0 & 1 & 1 & 1 & 1 & 0
\end{array} \right]} \]

    \vspace{1em}

    \caption{The 4-regular Ramanujan graph used in the proof of Theorem~\ref{thm:ind-set} on the maximum independent set of 4-regular graphs. This graph appears in Figure 3 of \cite{JHSWZ-2024-4RegularGraphsRigidity} in a different context. This graph has $d = 4$, $|V(H)| = 7$, $\max\{|\lambda_2(H)|, |\lambda_n(H)|\} = 3 < \num{3.464} \approx 2\sqrt{3}$, and $\what{\alpha}(H) = \frac{3}{7} \approx \num{0.428}$.}
    \label{fig:d4-ind-example}
\end{figure}

\begin{figure}[!h]
    \begin{center}
        \includegraphics[scale=0.65]{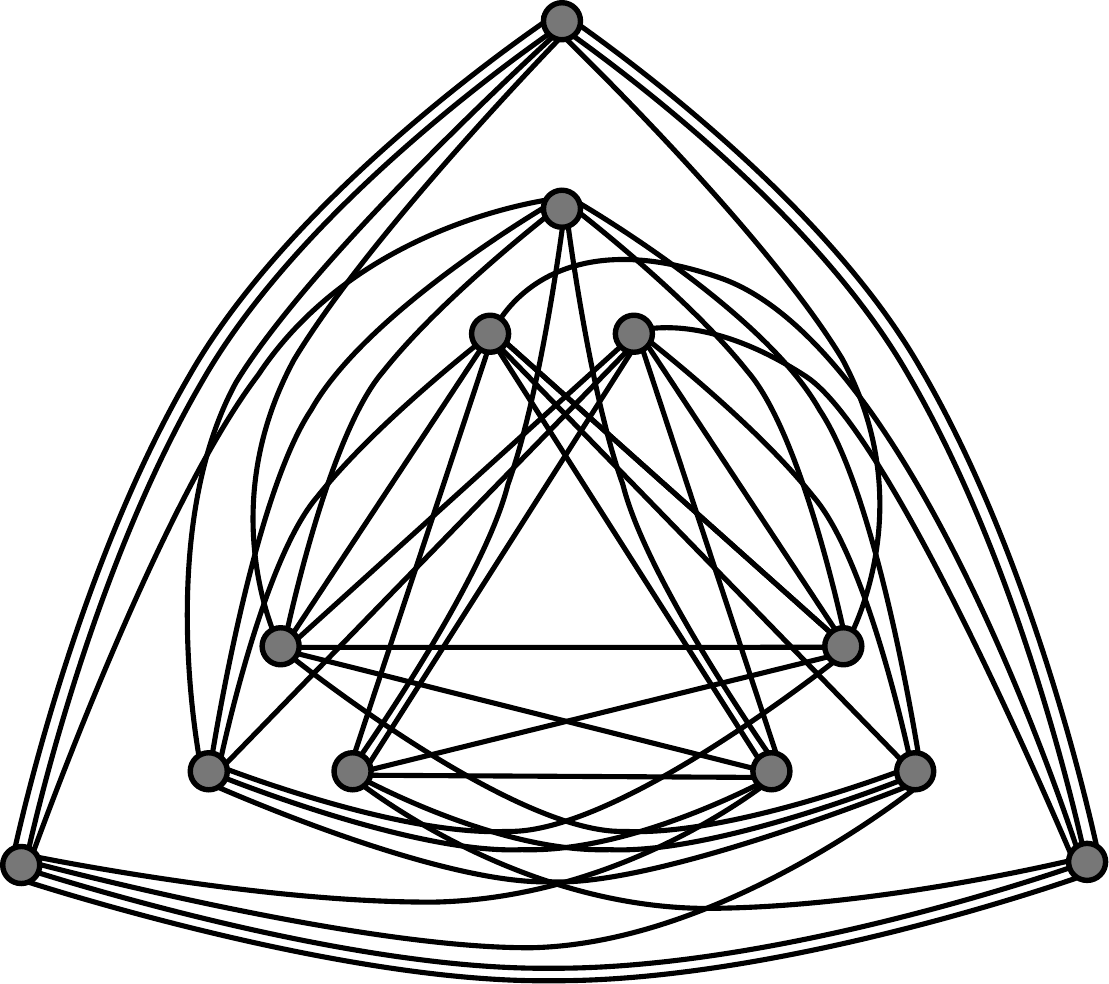}
    \end{center}

    \vspace{1em}

    \[ \mathlarger{\left[\begin{array}{cccccccccccc}
0 & 0 & 0 & 0 & 1 & 1 & 1 & 0 & 1 & 1 & 1 & 1 \\
0 & 0 & 0 & 0 & 1 & 1 & 1 & 1 & 1 & 1 & 1 & 0 \\
0 & 0 & 0 & 0 & 1 & 1 & 1 & 0 & 1 & 1 & 1 & 1 \\
0 & 0 & 0 & 0 & 0 & 1 & 1 & 2 & 1 & 0 & 0 & 2 \\
1 & 1 & 1 & 0 & 0 & 0 & 0 & 0 & 1 & 1 & 1 & 1 \\
1 & 1 & 1 & 1 & 0 & 0 & 0 & 0 & 1 & 1 & 1 & 0 \\
1 & 1 & 1 & 1 & 0 & 0 & 0 & 0 & 1 & 1 & 1 & 0 \\
0 & 1 & 0 & 2 & 0 & 0 & 0 & 0 & 0 & 1 & 1 & 2 \\
1 & 1 & 1 & 1 & 1 & 1 & 1 & 0 & 0 & 0 & 0 & 0 \\
1 & 1 & 1 & 0 & 1 & 1 & 1 & 1 & 0 & 0 & 0 & 0 \\
1 & 1 & 1 & 0 & 1 & 1 & 1 & 1 & 0 & 0 & 0 & 0 \\
1 & 0 & 1 & 2 & 1 & 0 & 0 & 2 & 0 & 0 & 0 & 0
\end{array}\right]} \]

    \vspace{1em}

    \caption{The 7-regular Ramanujan graph used in the proof of Theorem~\ref{thm:chromatic} on the chromatic number. The graph is formed by modifying the complete multipartite graph $K_{3, 3, 3}$ (a Ramanujan 6-regular graph), which appears as the induced subgraph on the ``inner'' nine vertices. Note that there are three pairs of repeated edges, forming a triangle on the ``outer'' three vertices. This graph has $d = 7, |V(H)| = 12, \max\{|\lambda_2(H)|, |\lambda_n(H)|\} \approx \num{3.791} < \num{4.898} \approx 2\sqrt{6}$, and $\chi(H) = 3$.}
    \label{fig:d7-chrom-example}
\end{figure}

\clearpage

\section*{Acknowledgments}
\addcontentsline{toc}{section}{Acknowledgments}

We thank Dan Spielman for helpful discussions.
We also thank the authors of \cite{BBKMW-2020-SpectralPlantingColoring}, Jess Banks, Alex Wein, Afonso Bandeira, and Cris Moore; it is in collaboration with them that the idea for Conjecture~\ref{conj:hardness} and its uses in proving lower bounds for certification came about.
We have explored this strategy and its consequences with their gracious permission to continue this investigation independently and to publicize the Conjecture.

\addcontentsline{toc}{section}{References}
\bibliographystyle{alpha}
\bibliography{main}

\end{document}